\documentclass[letterpaper,12pt]{article}
\usepackage{amsmath}
\usepackage{amssymb}
\usepackage{amsfonts}
\usepackage{setspace}
\usepackage{bbm}
\usepackage{color}
\usepackage{enumerate}
\usepackage{mathrsfs}
\usepackage[longnamesfirst]{natbib}
\usepackage[bottom]{footmisc}
\usepackage{graphicx}
\usepackage{subcaption}
\usepackage{hyperref}
\usepackage[shortlabels]{enumitem}
\usepackage[usenames,dvipsnames,svgnames,table]{xcolor}
\usepackage{multirow}
\usepackage[margin=1.1in]{geometry}
\usepackage{tikz}
\usepackage[utf8]{inputenc}
\usepackage{pgfplots}
\usepackage{pgfgantt}
\usepackage{pdflscape}
\pgfplotsset{compat=newest}
\pgfplotsset{plot coordinates/math parser=false}
\usetikzlibrary{topaths,calc}
\onehalfspacing
\frenchspacing
\usepackage{enumitem}

\newlist{steps}{enumerate}{1}
\setlist[steps, 1]{label = Step \arabic*:}

\newtheorem{theorem}{Theorem}[section]
\newtheorem{lemma}{Lemma}[section]

\newtheorem{assumption}{Assumption}

\newtheorem{corollary}{Corollary}[section]

\newtheorem{remark}{Remark}[section]

\newenvironment{proof}[1][Proof]{\noindent \textbf{#1.} }{\  \rule{0.5em}{0.5em}}
\setlength {\footnotesep }{10pt}\frenchspacing

\newcommand{\mb}[1]{\mathbb{#1}}

\newcommand{\mf}[1]{\mathbf{#1}}
\newcommand{\mr}[1]{\mathrm{#1}}

\newcommand{\mc}[1]{\mathcal{#1}}

\newcommand{\ul}[1]{\underline{#1}}
\newcommand{\ol}[1]{\overline{#1}}

\newcommand{\T}[0]{\mathcal{T}}

\hypersetup{
     colorlinks   = true,
     citecolor    = Blue,
     urlcolor     = Black,
     linkcolor    = Blue
}

\makeatletter
\renewcommand\paragraph{\@startsection{paragraph}{4}{\z@}%
                                    {0pt \@plus1ex \@minus.2ex}%
                                    {-1em}%
                                    {\normalfont\normalsize\bfseries}}
\makeatother

\usepackage{float}

\newcommand{\E}{\mathbb{E}}
\newcommand{\R}{\mathbb{R}}
\graphicspath{{pic/}}
\DeclareGraphicsExtensions{.pdf,.jpg,.png}
\newcommand{\bpm}{\begin{pmatrix}}
\newcommand{\epm}{\end{pmatrix}}

\usepackage{bibunits}
\usepackage[longnamesfirst]{natbib}

\begin{document}

\defaultbibliography{refecon}
\defaultbibliographystyle{chicago}
\begin{bibunit}

\author{%
{Xiaohong Chen\thanks{%
 Cowles Foundation for Research in Economics, Yale University. \texttt{xiaohong.chen@yale.edu}}
 \quad \quad
 Timothy Christensen\thanks{%
 Department of Economics, University College London. \texttt{t.christensen@ucl.ac.uk}}
 \quad \quad
 Sid Kankanala\thanks{%
 Department of Economics, Yale University. \texttt{sid.kankanala@yale.edu}}
 }
}

\title{%
Adaptive Estimation and Uniform Confidence Bands for~Nonparametric~Structural~Functions~and~Elasticities\footnote{Authors are in alphabetical order. We are grateful to Francesca Molinari, two anonymous referees, Richard Nickl and Yixiao Sun for helpful suggestions, and to Rodrigo Ad{a}o, Costas Arkolakis and Sharat Ganapati for sharing their data. We thank participants of numerous workshops and the 2022 IAAE, 2022 AMES China, KEA2022, 2022 Toulouse Conference on Estimation and Inference in Econometric Models, and 2022 CIREQ Montreal Econometrics Conferences for comments. The data-driven choice of sieve dimension in this paper is based on and supersedes Section 3 of the preprint \href{https://arxiv.org/abs/1508.03365v1}{\texttt{arXiv:1508.03365v1}} \citep{chen2015arxiv}. The research is partially supported by the Cowles Foundation Research Funds (Chen) and the National Science Foundation under Grant No.~SES-1919034 (Christensen).}
}

\date{First version: July 24, 2021; Revised version: November 22, 2023}

\maketitle

\begin{abstract}
\singlespacing
\noindent
We introduce two data-driven procedures for optimal estimation and inference in nonparametric models using instrumental variables.
The first is a data-driven choice of sieve dimension for a popular class of sieve two-stage least squares estimators.
When implemented with this choice, estimators of both the structural function $h_0$ and its derivatives (such as elasticities) converge at the fastest possible (i.e., minimax) rates in sup-norm.
The second is for constructing uniform confidence bands (UCBs) for $h_0$ and its derivatives.
Our UCBs guarantee coverage over a generic class of data-generating processes and contract at the minimax rate, possibly up to a logarithmic factor.
As such, our UCBs are asymptotically more efficient than UCBs based on the usual approach of undersmoothing.
As an application, we estimate the elasticity of the intensive margin of firm exports in a monopolistic competition model of international trade.
Simulations illustrate the good performance of our procedures in empirically calibrated designs.
Our results provide evidence against common parameterizations of the distribution of unobserved firm heterogeneity.

\medskip

\noindent \textbf{Keywords:} Honest and adaptive uniform confidence bands, minimax rate-adaptive estimation, nonparametric instrumental variables, nonparametric estimation of elasticities, international trade.

\end{abstract}

\thispagestyle{empty}

\pagenumbering{arabic}

\newpage

\section{Introduction}

With easier access to large data sets, there is increasing interest in estimating flexible, nonparametric structural functions and their derivatives, such as elasticities or other marginal effects. In many applications, the structural function $h_0$ is identified by a conditional moment restriction
\begin{equation}\label{eq:npiv}
 \E[Y - h_0(X) |W]  = 0 \:\: (\mbox{almost surely}) ,
\end{equation}
where $Y$ (a scalar) and/or some elements of $X$ (a vector) are endogenous, $W$ is a vector of instrumental variables, and the conditional distribution of $(X,Y)$ given $W$ is otherwise unspecified. Examples include consumer demand \citep{blundell2007semi,BHP}, demand for differentiated products \citep{BerryHaile2014,Compiani}, and international trade \citep{ACD,AAG}.\footnote{Other applications include causal inference \citep{WangZhiTT2018} and reinforcement learning \citep{chen2022well,grettonRL2021}. Model (\ref{eq:npiv}) also nests nonparametric regression when $W=X$, in which case $h_0$ is the conditional mean of $Y$ given $X$.} Uniform confidence bands (UCBs) are very helpful for inferring the true shape, slope, or curvature of $h_0$, as they graphically convey sampling uncertainty about the estimated structural function and its derivatives.

In applications involving policy counterfactuals, researchers care about estimating and constructing UCBs for $h_0$ or its derivatives. For instance, \citeauthor{AAG} (\citeyear{AAG}, AAG hereafter) derive (\ref{eq:npiv}) via a semiparametric gravity equation for the intensive margin of firm exports in a monopolistic competition model based on \cite{Melitz2003}. In that context, the derivative of $h_0$ is the elasticity of the intensive margin of firm-level exports to changes in bilateral trade costs. Moreover, \cite{Compiani} performs policy experiments using nonparametric estimates of price elasticities in differentiated product demand models.

As is the case for almost all nonparametric and machine learning (ML) methods, researchers must choose tuning parameters---such as bandwidths, sieve dimensions, or penalty parameters---when estimating or performing inference on $h_0$ and its derivatives.
Poor choice of tuning parameters can lead to estimators that converge unnecessarily slowly and confidence bands with poor coverage. But ``good'' choices of tuning parameters typically require knowledge of key model regularities, such as the smoothness of $h_0$ and the strength of the instruments, which are unknown ex ante. It is therefore important to have data-driven methods that \emph{adapt} to unknown model regularities and yield estimators and confidence bands with desirable properties.
Data-driven methods for choosing tuning parameters also help to improve the transparency of nonparametric and ML methods, removing a degree of freedom with which the researcher can manipulate results.
Unfortunately, popular methods for choosing tuning parameters for nonparametric regression, such as standard cross validation, may not be valid in models with endogeneity---see Section~\ref{sec:cv}.

In this paper, we propose simple, data-driven procedures for choosing tuning parameters for estimating and constructing UCBs for $h_0$ and its derivatives. Our methods are developed for the popular class of sieve nonparametric IV estimators.\footnote{See \cite{ai2003efficient}, \cite{newey2003instrumental}, \cite{blundell2007semi}, and \cite{horowitz2011}.} That is, $h_0$ is approximated by a linear combination of several basis functions (e.g., B-splines), with the coefficients estimated by Two Stage Least Squares (TSLS) regression of $Y$ on the basis functions of $X$, using functions of $W$ as instruments (see Section~\ref{sec:npiv} for a detailed description). The key tuning parameter to be chosen by a researcher is the number of basis functions, say $J$, used to approximate $h_0$. If $J$ is too small, then estimators may be badly biased and UCBs may under-cover. But if $J$ is too large, estimators may be very noisy and UCBs
may be uninformatively wide. Before precisely stating our theoretical results in Section~\ref{s:theory}, we describe our methods and their practical importance.

\medskip

\paragraph{Our Methods and the Practical Implications.}

Our first contribution is a data-driven choice of sieve dimension, which we denote by $\tilde J$. This choice is simple to compute.
Under suitable regularity conditions, we show that sieve estimators implemented with $\tilde J$, which we denote $\hat h_{\tilde J}$, converge at the fastest possible (i.e., minimax) rate in sup-norm.\footnote{We focus on the sup-norm rather than $L^2$ norm (i.e., mean-square error)  primarily because our objective is to construct UCBs for $h_0$ and its derivatives. The sup-norm is essential for this purpose, as we require the entire function (or its derivatives) to lie inside the bands with desired coverage probability. The sup-norm also provides a stronger, more informative sense in which the estimator is converging as it measures the maximal, rather than average, error over the support of $X$.}
That is, the maximum error over the support of $X$, namely
\[
 \sup_x |\hat h_{\tilde J}(x) - h_0(x)|,
\]
vanishes as fast as possible---among \emph{all} estimators of $h_0$---as the sample size increases, uniformly over a class of data-generating processes (DGPs), for both nonparametric IV and nonparametric regression models. Formally, we refer to $\tilde J$ as \emph{sup-norm rate-adaptive}: it adapts to features of the DGP that are unknown ex ante, such as the smoothness of $h_0$ and strength of the instruments, so that the resulting estimator $\hat h_{\tilde J}$ converges as fast as possible in sup-norm.
We further show that the same data-driven choice $\tilde J$ is sup-norm rate-adaptive for estimating \emph{derivatives} of $h_0$ as well.\footnote{This is in contrast to kernel estimation, in which different bandwidths must be used for rate-adaptive estimation of a function and its derivatives.} Hence, $\tilde J$ should be very useful for researchers interested in estimating elasticities or other marginal effects. We illustrate this usefulness in our empirical application revisiting AAG, where we use $\tilde J$ to estimate the elasticity of the intensive margin of firm-level exports from aggregate bilateral trade data. 
We also demonstrate the good performance of $\tilde J$ across a variety of simulation designs for both nonparametric IV estimation and nonparametric regression.

Our second main contribution is a data-driven approach to constructing UCBs for $h_0$ and its derivatives. The term ``uniform'' indicates that the entire function lies within the bands with desired asymptotic coverage probability. The UCBs for $h_0$ and its derivatives are also simple to compute and have strong theoretical justification. They are \emph{honest} in the sense that they guarantee coverage for $h_0$ and its derivatives uniformly over a generic class of DGPs, and \emph{adaptive} in the sense that they contract at, or within a logarithmic factor of, the minimax rate.
As such, they provide efficiency improvements relative to UCBs based on the usual approach of undersmoothing, in which a sub-optimally large $J$ is chosen in the hope that bias is negligible relative to sampling variation. Of course, in empirical work, a researcher does not know the true function, and therefore doesn't know which  $J$ is truly large enough that sampling uncertainty dominates bias.

Our UCBs for $h_0$ and its derivatives are useful for inferring the true shape of the structural function and its derivatives. They complement existing approaches for testing shape restrictions, as they allow the researcher to read off the shape of the function without imposing a specific null (e.g. monotone increasing) a priori. In our empirical application to AAG we construct UCBs for the elasticity of the intensive margin of firm exports. As emphasized by AAG, this is an important, policy-relevant function yet its shape is not restricted by theory in a nonparametric setting. Our UCBs exclude constant functions and downwards-sloping functions. Hence, they provide evidence against the Pareto specification for unobserved firm productivity used by \cite{Chaney}, which leads to a constant elasticity, as well as other parameterizations used, e.g., by \cite{EKK}, \cite{HMT}, and \cite{MelitzRedding}, for which the elasticity is downwards-sloping. Empirically-calibrated simulation studies based on the models of \cite{Chaney} and \cite{HMT} demonstrate valid coverage of our UCBs for $h_0$ and its derivatives and efficiency improvements relative to undersmoothing.

\medskip

\paragraph{Related Literature and our Theoretical Contributions.}

Early work on nonparametric IV estimation includes \cite{newey2003instrumental}, \cite{hall2005nonparametric}, \cite{blundell2007semi}, \cite{darolles2011nonparametric}, \cite{horowitz2011} and others.

We complement prior work by \cite{horowitz2014adaptive} for near-adaptive estimation of $h_0$ in $L^2$ norm, \cite{breunig2016adaptive} for near-adaptive estimation of linear functionals of $h_0$, and \cite{breunigchen2021} for adaptive estimation of quadratic functionals of $h_0$. Our procedure builds on the bootstrap-based implementation of Lepski's method of \cite{chernozhukov2014anti} for kernel density estimation and \cite{spokoiny2019} for linear regression with Gaussian errors. But our procedure does not follow easily from theirs due to several challenges present in the conditional moment restriction (\ref{eq:npiv}), in which $h_0$ is identified by $\E[Y|W] = \E[h_0(X)|W]$ (a.s.).
The degree of difficulty of inverting $\E[h_0(X) |W]$ to recover $h_0$ is a nonparametric notion of instrument strength and plays an important role in determining minimax rates for estimators of $h_0$ and its derivatives.\footnote{See \cite{hall2005nonparametric}, \cite{chen2011rate}, and \cite{chen2018optimal} for minimax rates for nonparametric IV estimation. When the conditional density of $X$ given $W$ is continuous, these rates are slower than the corresponding rates for nonparametric regression.}
While adaptive procedures for nonparametric density estimation or regression deal only with unknown smoothness of the estimand, our procedures must also deal with the unknown degree of difficulty of the inversion problem. The literature has typically classified the difficulty of the inversion problem into ``mild'' and ``severe'' regimes. Minimax rates in the mild regime are achieved by a choice of sieve dimension that balances bias and sampling uncertainty, much like standard nonparametric problems. But minimax rates in the severe regime are obtained by a bias-dominating choice of sieve dimension. Our procedure for data-driven choice of sieve dimension delivers the minimax sup-norm rate for $h_0$ and its derivatives across the whole spectrum of models, from nonparametric regression to nonparametric IV models in the severe regime.

Our procedure improves significantly on and supersedes a modified Lepski procedure from Section 3 of \cite{chen2015arxiv} on sup-norm rate-adaptive estimation of~(\ref{eq:npiv}). Ours uses a multiplier bootstrap to avoid selection of several constants and performs much better in practice. Moreover, our rate-adaptivity guarantees encompass nonparametric regression and nonparametric IV in both mild and severe regimes.

Recent work on (non data-driven) UCBs for $h_0$ and functionals thereof via undersmoothing includes \cite{horowitz2012uniform}, \cite{chen2018optimal} and \cite{babii2020}. Our UCBs build on prior work on honest, adaptive UCBs for nonparametric density estimation \citep{gine2010confidence,chernozhukov2014anti} and Gaussian white noise models \citep{bull2012honest,gine2016mathematical}. But none of these works allows for nonparametric models with endogeneity, and our procedures do not follow easily from these existing methods due to the above-mentioned challenges present in model (\ref{eq:npiv}). Our UCBs for $h_0$ and its derivatives apply to nonparametric regression with non-Gaussian, heteroskedastic errors as a special case, which appears to be a new contribution.

Finally, our work also compliments several recent papers on (non data-driven) estimation and inference for nonparametric IV models with shape constraints; see for example \cite{BHP}, \cite{chetverikov2017}, \cite{FreybergerReeves} and \cite{CNS}.
These works all assume a deterministic sequence of tuning parameters satisfying regularity conditions that depend on unknown model features such as the smoothness of $h_0$ and instrument strength.
An exception is \cite{breunig2020adaptive} who study $L^2$ rate-adaptive testing of a \emph{specific} null hypothesis (e.g., monotone increasing, or a parametric functional form). Our approach is conceptually different from theirs: our UCBs graphically convey sampling uncertainty about an estimate of $h_0$ and its derivatives.
Hence, our UCBs are very useful for inferring the true shape of $h_0$ in situations---such as our trade application---where there are no specific prior shape 
restrictions suggested by economic theory.

\medskip

\paragraph{Outline.}
Section~\ref{s:procedure} introduces our methods. Section~\ref{s:application} presents the application to international trade. Section~\ref{s:theory} contains the main theoretical results. Section~\ref{s:mcs} provides additional simulation results for difficult designs. Section~\ref{sec:extensions} presents extensions to additive and partially linear models, and Section~\ref{s:conclusion} concludes.  Appendix~\ref{sec:regression} presents a simplified version of our procedures for nonparametric regression. Appendix~\ref{ax:application} provides additional details for the trade application and simulations. In the online supplement, Appendix~\ref{appsec:engel} presents additional simulations to an empirically calibrated Engel curve design, Appendix~\ref{ax:basis} gives details on basis functions and nonparametric function classes, and Appendix~\ref{ax:proofs} contains technical results and proofs.

\medskip

\paragraph{Notation.} Let $\mc X$ be the support of $X$, $d$ the dimension of $X$, and $L^2_X$ and $L^2_W$ the space of functions of $X$ and $W$ with finite second moments. Let $\|h\|_{\infty} :=\sup_{x\in \mc X} |h(x)|$ be the sup-norm of $h : \mc X \to \mb R$. Let $\mb N$ be the set of integers and $\mb N_0 := \mb N \cup \{0\}$ the non-negative integers. Let $\lceil a \rceil = \min\{n \in \mathbb N : n \geq a\}$ and $\lfloor a \rfloor = \max\{n \in \mathbb N_0 : n < a\}$. For a multi-index $a=(a_1,...,a_d) \in (\mb N_0)^d$ with order $|a| = \sum_{i=1}^d a_i$, the $a$-derivative of $h$ is defined as
\[
 \partial^a h(x) = \frac{\partial^{|a|} h(x)}{\partial^{a_1} x_1 \ldots \partial^{a_d} x_d} \,.
\]
Let $A^-$ denote the generalized (or Moore--Penrose) inverse of a matrix $A$ and $A^{-1/2}$ the inverse of the positive-definite square root of $A$.

\section{Procedures}\label{s:procedure}

We begin in Section~\ref{sec:npiv} by briefly reviewing sieve nonparametric IV estimation and UCBs with a deterministic sieve dimension. Section~\ref{sec:cv} explains why standard cross validation for regression fails in models with endogeneity. Section~\ref{sec:Jtilde} presents our data-driven choice of sieve dimension and Section~\ref{sec:ucbs} presents our data-driven UCBs. These methods extend naturally to partially linear and partially additive models (see Section~\ref{sec:extensions}). Both procedures apply to nonparametric regression as well (see Appendix~\ref{sec:regression}).

\subsection{Review: Estimators and UCBs with a Deterministic $J$}\label{sec:npiv}

\paragraph{Estimators.} Consider approximating $h_0$  by a linear combination of $J$ basis functions:
\begin{equation}\label{eq:h0_approx}
 h_0(x) \approx
 (\psi^J(x))'c_J\,,
\end{equation}
where $\psi^J(x) = (\psi_{J1}(x),\ldots,\psi_{JJ}(x))'$ is a vector of basis functions and $c_J = (c_{J1},\ldots,c_{JJ})'$ is a vector of coefficients. Combining (\ref{eq:npiv}) and (\ref{eq:h0_approx}), we obtain
\[
 Y = (\psi^J(X))'c_J + \mathrm{bias}_J + u\,,~~~\E[u|W] = 0\,,
\]
where $u = Y - h_0(X)$ and $\mathrm{bias}_J = h_0(X) - (\psi^J(X))'c_J$. Provided the bias term is ``small'' relative to $u$ in an appropriate sense, we have an approximate linear IV model where $\psi^J(X)$ is a $J\times 1$ vector of ``endogenous variables'' and $c_J$ is a vector of unknown ``parameters''. One can then estimate $c_J$ using TSLS or GMM using a $K\times 1$ vector of basis functions $b^K(W)= (b_{K1}(W),\ldots,b_{KK}(W))'$ of $W$ as instruments.
Evidently, $K\geq J$ is necessary to estimate $c_J$.

Given data $(X_i,Y_i,W_i)_{i=1}^n$, the TSLS estimator of $c_J$ is simply
\[
 \hat c_J = \left(\mf \Psi_J' \mf P_K^{\phantom \prime} \mf \Psi_J^{\phantom \prime} \right)^{-} \mf \Psi_J' \mf P_K^{\phantom \prime} \mf Y \,,
\]
where $\mf \Psi_J  = (\psi^J({X_1}),\ldots,\psi^J({X_n}))'$ and $\mf B_K  = (b^K({W_1}),\ldots,b^K({W_n}))'$ are $n \times J$ and $n \times K$ matrices,
$\mf P_K = \mf B_K^{\phantom \prime} (\mf B_K' \mf B_K^{\phantom \prime})^{-} \mf B_K^{\prime}$
is the projection matrix onto the instrument space, and $\mf Y = (Y_1,\ldots,Y_n)'$ is a $n \times 1$ vector.
Estimators of $h_0$ and its derivative $\partial^a h_0$ are given by
\[
 \hat h_J(x) = (\psi^J(x))' \hat c_J \,,~~~~\mbox{and}~~~~\partial^a \hat h_J(x) =(\partial^a \psi^J(x))' \hat c_J \,,
\]
where $\partial^a \psi^J(x) = (\partial^a \psi_{J1}(x),\ldots,\partial^a \psi_{JJ}(x))'$.

\underline{Sieve Bases}. Many linear sieves, such as polynomial splines, B-splines, wavelets, Fourier series, and various polynomials, can be used as the instrument basis $\{b_{Kk}\}_{k=1}^K$. However, only B-splines and Cohen--Daubechies--Vial (CDV) wavelet bases for $\{\psi_{Jj}\}_{j=1}^J$ have been shown to achieve the optimal minimax sup-norm rates under a suitable choice of $J$ \citep{chen2018optimal}.\footnote{Bases for $h_0$ must have bounded Lebesgue constant to attain the minimax sup-norm rate for nonparametric regression (see, e.g., \cite{belloni2015some} and \cite{chen2015optimal}). B-splines and CDV wavelets have this property. Bases without this property, such as polynomials and Fourier series, cannot attain the minimax sup-norm rate and hence cannot lead to sup-norm rate-adaptive estimators or UCBs.} As our objective is to have estimators that converge as fast as possible in sup-norm---which is essential for constructing UCBs that are as narrow and informative as possible---we restrict attention to B-splines and CDV wavelets for $\{\psi_{Jj}\}_{j=1}^J$ in our theory that follows. Moreover, since B-splines are easy to compute, much less collinear than polynomials and polynomial splines, and available in standard software packages, we  confine our presentation to B-spline bases for both $\{\psi_{Jj}\}_{j=1}^J$ and $\{b_{Kk}\}_{k=1}^K$ in the main text.

\underline{Key tuning parameter $J$}. Based on simulations and theoretical studies in  \cite{blundell2007semi}, \cite{chen2018optimal} and others, the performance of the sieve TSLS estimator for $h_0$ is sensitive to the choice of $J$ and not sensitive to $K$ as long as $K\geq J$. We introduce a data-driven method for choosing $J$ in Section~\ref{sec:Jtilde}. The choice of $K$ is pinned down by $J$ in our procedure, so we write $K(J)\geq J$, $b^{K(J)}(W)$, $\mf B_{K(J)}$ and $\mf P_{K(J)}$ in what follows. Let $\mf M_J = (\mf \Psi_J' \mf P_{K(J)}^{\phantom \prime} \mf \Psi_J^{\phantom \prime} )^{-} \mf \Psi_J' \mf P_{K(J)}^{\phantom \prime}$ be a $J \times n$ matrix. We can equivalently write
\begin{equation}\label{J-TSLS}
 \hat h_J(x) = (\psi^J(x))' \mf M_J \mf Y \,,~~~~\partial^a \hat h_J(x) =(\partial^a \psi^J(x))' \mf M_J \mf Y \,.
 \end{equation}

\medskip

\paragraph{``Undersmoothed'' UCBs.}
We now review the usual approach of constructing ``undersmoothed'' UCBs for $h_0$ and its derivatives based on a deterministic $J$.
Let $\hat{\mf u}_J = (\hat u_{1,J},\ldots,\hat u_{n,J})'$ denote the $n \times 1$ vector of residuals whose $i$th element is $\hat u_{i,J} = Y_i - \hat h_J(X_i)$. Then $\hat h_J (x) - h_0 (x)$ and $\partial^a \hat h_J (x) - \partial^a h_0 (x)$ can be estimated by
\begin{equation}\label{eq:DJ}
D_J (x)=(\psi^J(x))' \mf M_J \hat{\mf  u}_J~,~~~~D_J^{a}(x) = (\partial^a \psi^J(x))' \mf M_J^{\phantom \prime} \hat{\mf u}_J~,
\end{equation}
and their variances can be estimated by
\begin{equation}\label{eq:DJ-var}
\hat \sigma_{J}^{2}(x)  =  (\psi^J(x))' \mf M_J^{\phantom \prime} \widehat{\mf U}_{J,J}^{\phantom \prime} \mf M_J' \psi^J(x),~~~~\hat \sigma_{J}^{a2}(x) =(\partial^a \psi^J(x))' \mf M_J^{\phantom \prime} \widehat{\mf U}_{J,J}^{\phantom \prime} \mf M_J' (\partial^a  \psi^J(x))~
\end{equation}
where $\widehat{\mf U}_{J,J}$ is a $n\times n$ diagonal matrix whose $i$th diagonal entry is $\hat u_{i,J} \hat u_{i,J}$.

Let $\hat{\mf u}_J^* = (\hat u_{1,J}\varpi_1,\ldots,\hat u_{n,J}\varpi_n)'$ denote a multiplier bootstrap version of $\hat{\mf u}_J$, where $(\varpi_i)_{i=1}^n$ are IID  $N(0,1)$ draws independent of the data.
Then
\begin{equation}\label{eq:DJ*}
D_J^*(x) = (\psi^J(x))' \mf M_J^{\phantom \prime} \hat{\mf u}_J^*~,~~~~D_J^{a*}(x) = (\partial^a \psi^J(x))' \mf M_J^{\phantom \prime} \hat{\mf u}_J^*
\end{equation}
are bootstrap versions of $D_J (x)$ and $D_J^{a}(x)$.
For each independent draw of $(\varpi_i)_{i=1}^n$, compute the sup $t$-statistics:
\begin{equation} \label{eq:z_star-J-UCB}
 \sup_{x\in \mathcal{X}} \left| \frac{D_J^*(x)}{\hat \sigma_J(x)} \right|~,~~~~\sup_{x \in \mathcal{X}} \left| \frac{D_J^{a*}(x)}{\hat \sigma_J^a(x)} \right|.
\end{equation}
Let $z_{1-\alpha,J}^*$ and $z_{1-\alpha,J}^{a*}$ denote the $(1-\alpha )$ quantile of
these sup statistics across a large number (say $1000$) independent draws of $(\varpi_i)_{i=1}^n$.
\cite{chen2018optimal} construct 100$(1-\alpha)$\% UCBs for $h_0$ and $\partial^a h_0$ as follows:
\[
 C_{n,J}(x) = \bigg[ \hat{h}_{J}(x) -  z_{1-\alpha,J}^* \hat \sigma_{J}(x) , ~  \hat{h}_{J}(x) + z_{1-\alpha,J}^* \hat \sigma_{J}(x) \bigg] \,,
\]
\[
 C_{n,J}^a(x) = \bigg[ \partial^a \hat{h}_{J}(x) - z_{1-\alpha,J}^{a*} \hat \sigma_{J}^a(x) ,~ \partial^a \hat{h}_{J}(x) +  z_{1-\alpha,J}^{a*} \hat \sigma_{J}^a(x) \bigg] \,.
\]
The above UCBs are theoretically justified provided $J$ increases faster than the oracle $J_0$ (the optimal sieve dimension for estimating $h_0$ or its derivatives in sup-norm), so that the bias is of smaller order than sampling uncertainty. Unfortunately, $J_0$ is unknown in practice since it depends on the unknown smoothness of $h_0$ and other unknown model regularities of (\ref{eq:npiv}). This motivates us to propose the new data-driven UCBs in Section~\ref{sec:ucbs}.

\subsection{Problems with Standard Cross Validation}\label{sec:cv}

We briefly explain why the usual approach of cross validation (CV) for regression is not a valid method for choosing $J$ in models with endogeneity. Consider the standard CV criterion
\begin{equation}\label{eq:cv}
 \mathrm{CV}(J) = \frac{1}{n} \sum_{i=1}^n (Y_i - \hat h_{-i, J}(X_i))^2,
\end{equation}
where $n$ is the sample size and $\hat h_{-i,J}$ denotes version of $\hat h_J$ computed from a sub-sample that excludes the $i$th observation. Let $u_i = Y_i - h_0(X_i)$. We may then expand (\ref{eq:cv}) as
\[
 \mathrm{CV}(J) = \frac{1}{n} \sum_{i=1}^n (h_0(X_i) - \hat h_{-i, J}(X_i))^2 + \frac{1}{n} \sum_{i=1}^n u_i^2  +   \frac{2}{n} \sum_{i=1}^n u_i (h_0(X_i) - \hat h_{-i, J}(X_i)).
\]
The first term in the expansion is an estimate of the MSE $\E[ (h_0(X) - \hat h_J(X))^2]$ of $\hat h_J$ and the second term is independent of $J$. The third term is an estimate of $\E[u(h_0(X) - \hat h_J(X))]$. This term is asymptotically negligible without endogeneity (i.e., when $\E[u|X] = 0$) as is the case for nonparametric regression, making $\mathrm{CV}(J)$ a suitable sample analogue of the mean-square error of $\hat h_J$ in that case (see, e.g., \cite{Li1987}). But in models with endogeneity (i.e., when $\E[u|X] \neq 0$), there is no guarantee that $\E[u(h_0(X) - \hat h_J(X))] = 0$ and so this third term---which depends on $J$---may be non-negligible even asymptotically. If so, cross validation gives a biased estimate of the MSE of $\hat h_J$ and is therefore not a meaningful criterion by which to choose $J$ in models with endogeneity. Indeed, a cross-validated choice of $J$ may not even lead to a consistent estimator of $h_0$ in model (\ref{eq:npiv}).

In addition, even for nonparametric regression, the $J$ chosen by CV balances bias and sampling uncertainty in $L^2$ norm. Such as choice is not optimal for estimation of $h_0$ and its derivatives in sup-norm, nor is it sutiable for adaptive UCBs for $h_0$ and its derivatives.

\subsection{Procedure 1: Data-driven Choice of Sieve Dimension}\label{sec:Jtilde}

We now present our data-driven choice $\tilde J$ of sieve dimension using B-spline bases.
B-splines are characterized by their order $r$. In the simulations and empirical application, we use a cubic B-spline ($r = 4$) for $\{\psi_{Jj}\}_{j=1}^J$ and a quartic B-spline ($r = 5$) for $\{b_{Kk}\}_{k=1}^K$.\footnote{In the first submitted version we also used a quadratic B-spline ($r=3$) for $\{\psi_{Jj}\}_{j=1}^J$. In additional simulations we obtained very similar results with a Fourier basis for $\{b_{Kk}\}_{k=1}^K$.}

Let $\mc T = \{J=(2^l + r-1)^d : l \in \mb N_0 \}$ denote a dyadic grid of candidate values of $J$, where the integer $r$ is the order of the B-spline basis for $\{\psi_{Jj}\}_{j=1}^J$ (i.e., each $\psi_{Jj}$ is a piecewise polynomial of degree $r -1$). For example, $\mc T = \{J=2^l + 3: l \in \mb N_0 \} \equiv \{4,5,7,11,19,35,\ldots\}$ for a scalar $X$ ($d=1$) and cubic B-splines $(r=4)$.\footnote{
Letting $J$ vary over $\mc T$ ensures there is enough separation that we can accurately compare the bias and variance of estimators with different $J \in \mc T$. This helps improve the numerical stability of the method, coherent with implementations of Lepski's method in other nonparametric contexts.}
The index $l$ is the \emph{resolution level}. We construct $\{b_{Kk}\}_{k=1}^K$ similarly, using B-splines of order $(r + 1)$ because the reduced form is smoother than $h_0$. Given the resolution level $l$ for the basis for $X$, the resolution level for the basis for $W$ is $l_w = \lceil (l + q) d/d_w \rceil$ for some $q \in \mb N_0$ where $d_w$ is the dimension of $W$. Linking $l_w$ to $l$ in this manner defines a mapping $K(J)$ that satisfies $\lim_{J \to \infty} K(J)/J = c \in [1,\infty)$.
We recommend taking $q$ as the second- or third-smallest value for which $K(J) \geq J$ holds for all $J$ (i.e., $q = 1$ or $q = 2$ if both $X$ and $W$ are of the same dimension). We advise against choosing $q$ any larger, as the number of basis functions increases exponentially in the resolution level.
Let $J^+ = \min\{j \in \T : j > J\}$ be the smallest sieve dimension in $\mc T$ exceeding $J$.

For $J,J_2 \in \mc T$ with $J_2 > J$, the contrast $D_{J}(x)-D_{J_2}(x)$ is an estimate of $\hat h_J(x) - \hat h_{J_2}(x)$, whose
variance can be estimated by
\begin{equation}\label{eq:hat-var-J2}
 \hat \sigma_{J,J_2}^2(x) := \hat \sigma_{J}^2(x) + \hat \sigma_{J_2}^2(x) - 2 \tilde \sigma_{J,J_2}(x),~~\tilde \sigma_{J,J_2}(x)  = (\psi^J(x))' \mf M_J^{\phantom \prime} \widehat{\mf U}_{J,J_2}^{\phantom \prime} \mf M_{J_2}' \psi^{J_2}(x),
\end{equation}
where $\hat \sigma_{J}^2(x)$ is defined in (\ref{eq:DJ-var}) and
$\widehat{\mf U}_{J,J_2}$ is a $n\times n$ diagonal matrix whose $i$th diagonal entry is $\hat u_{i,J} \hat u_{i,J_2}$. Moreover, the multiplier bootstrap version of $D_{J}(x)-D_{J_2}(x)$ is
\[
D_{J}^*(x)-D_{J_2}^*(x) = (\psi^J(x))'\mf M_J^{\phantom \prime}  \hat{\mf u}_J^* - (\psi^{J_2}(x))' \mf M_{J_2}^{\phantom \prime} \hat{\mf u}_{J_2}^*.
\]
Finally let $\hat s_J$ be the smallest singular value of $(\mf B_{K(J)}'\mf B_{K(J)}^{\phantom \prime})^{-1/2} (\mf B_{K(J)}' \mf \Psi_J^{\phantom \prime}) (\mf \Psi_J'\mf \Psi_J^{\phantom \prime})^{-1/2}$.

\bigskip

 \centerline{\underline{\bf Procedure 1: Data-driven Choice of Sieve Dimension}}

\begin{enumerate}
\item Compute
\begin{align}
 \hat{J}_{\max} & = \min \bigg \{ J \in \T :   J \sqrt{\log J}  \hat{s}_J^{-1}     \leq 10 \sqrt{n}  <  J^{+} \sqrt{\log J^{+}}   \hat{s}_{J^{+}}^{-1}  \bigg \} \, \label{eq:J_hat_max} \\
 \hat{\mc J} & = \left\{ J \in \T : 0.1 ( \log \hat J_{\max})^2 \leq J \leq \hat J_{\max}\right\} \,. \label{eq:index_set}
\end{align}
\item Let $\hat \alpha = \min\{ 0.5 , (\log(\hat{J}_{\max})/\hat{J}_{\max})^{1/2}\}$.  For each draw of $(\varpi_i)_{i=1}^n$, compute
\begin{equation}\label{eq:sup-stat}
 \sup_{\{ (x,J,J_2) \in \mathcal{X} \times \hat{\mc J} \times \hat{\mc J} : J_2 > J \}} \left| \frac{D_{J}^*(x)-D_{J_2}^*(x)}{\hat \sigma_{J,J_2}(x)} \right|.
\end{equation}
Let $\theta^*_{1-\hat \alpha}$ denote the $(1- \hat \alpha )$ quantile of (\ref{eq:sup-stat}) across independent draws of $(\varpi_i)_{i=1}^n$.
\item Let $\hat J_n = \max\{J \in \hat{\mc J} : J < \hat J_{\max}\}$ and
\begin{equation}\label{eq:J_lepski}
 \hat{J} = \min \left \{ J \in \hat{\mc J} : \sup_{(x, J_2) \in \mathcal{X} \times \hat{\mathcal{J}} : J_2 > J } \left| \frac{\hat h_{J}(x)-\hat h_{J_2}(x)}{\hat \sigma_{J,J_2}(x)} \right| \leq 1.1 \theta^*_{1 - \hat \alpha} \right \} \,.
\end{equation}
The data-driven choice of sieve dimension is
\begin{equation} \label{eq:J-choice}
 \tilde{J} = \min\{\hat{J},\hat J_n\}\,.
\end{equation}
\end{enumerate}

\begin{remark}\normalfont
In practice, the supremums over $x$ in Steps 2 and 3 can be replaced by the maximum over a fine grid of $x$ values as the functions are continuous in $x$. We have used 1000 draws of $(\varpi_i)_{i=1}^n$ in our empirical and simulation studies. Note the $(\varpi_i)_{i=1}^n$ are held fixed when computing the supremum over $(x,J,J_2)$ for each draw. Our theory allows for constants other than 10 and 0.1 in Step~1 as long as they ensure $\hat{\mathcal J}$ contains several values of $J$ to search over. Our theory also allows for any constant larger than 1 in Step~3; the value 1.1 performed well in simulations and is used in other implementations of Lepski's method (see, e.g., \cite{chernozhukov2014anti}).
\end{remark}

We present the theoretical results on the adaptivity of $\tilde J$ in Section~\ref{sec:4estimation}.

\subsection{Procedure 2: Data-driven UCBs}\label{sec:ucbs}

Let $\ul p >d/2$ denote the minimal degree of smoothness assumed for $h_0$. For instance, if $X$ is scalar and $h_0$ is Lipschitz, then one could take $\ul p = 1$ even through the true smoothness of $h_0$ is unknown.
Let $\hat A = \log \log \tilde J$ and
\[
 \hat{\mc J}_{-} =
 \begin{cases}
 \{J \in \hat{\mc J} : J < \hat J_n\} & \mbox{~if $\tilde J = \hat J$}, \\
 \hat{\mc J} & \mbox{~if $\tilde J = \hat J_n$}.
 \end{cases}
\]

\bigskip

 \centerline{\underline{\bf Procedure 2: Data-driven UCBs for $h_0$}}
\begin{enumerate}
\item[4.] For each $(\varpi_i)_{i=1}^n$, compute
\begin{equation} \label{eq:z_star-UCB}
 \sup_{(x,J) \in \mathcal{X} \times \hat{\mc J}_{-}} \left| \frac{D_J^*(x)}{\hat \sigma_J(x)} \right|.
\end{equation}
Let $z_{1-\alpha}^*$ denote the $(1-\alpha )$ quantile of (\ref{eq:z_star-UCB}) across independent draws of $(\varpi_i)_{i=1}^n$.
\item[5.] Construct the 100$(1-\alpha)$\% UCB
\begin{equation} \label{band}
 C_n(x) = \bigg[ \hat{h}_{\tilde{J}}(x) - \mbox{cv}^*(x) \, \hat \sigma_{\tilde J}(x) , ~  \hat{h}_{\tilde{J}}(x) + \mbox{cv}^*(x) \, \hat \sigma_{\tilde J}(x) \bigg] \,,
\end{equation}
where
\begin{equation}\label{cv-h}
 \mbox{cv}^*(x) =
 \begin{cases}
 z_{1-\alpha}^* + \hat A \theta^*_{1-\hat \alpha}  & \mbox{~if $\tilde J = \hat J$}, \\
 z_{1-\alpha}^* + \hat A \max\{\theta^*_{1-\hat \alpha}\,,\, \tilde {J}^{-\ul p/d} /\hat \sigma_{\tilde J}(x) \} & \mbox{~if $\tilde J = \hat J_n$}.
 \end{cases}
\end{equation}
\end{enumerate}

\bigskip

 \centerline{\underline{\bf Procedure 2$^\prime$: Data-driven UCBs for $\partial^a h_0$ ($0<|a|<\ul p$)}}

\begin{enumerate}
\item[4$^\prime$.] For each $(\varpi_i)_{i=1}^n$, compute
\begin{equation} \label{eq:z_star-UCB-derivative}
 \sup_{(x,J) \in \mathcal{X} \times \hat{\mc J}_{-}} \left| \frac{D_J^{a*}(x)}{\hat \sigma_J^a(x)} \right|.
\end{equation}
Let $z_{1-\alpha}^{a*}$ denote the $(1-\alpha )$ quantile of (\ref{eq:z_star-UCB-derivative})  across independent draws of $(\varpi_i)_{i=1}^n$.
\item[5$^\prime$.] Construct the 100$(1-\alpha)$\% UCB
\begin{equation} \label{band-derivative}
 C_n^a(x) = \bigg[ \partial^a \hat{h}_{\tilde{J}}(x) - \mbox{cv}^{a*}(x) \, \hat \sigma_{\tilde J}^a(x) ,~ \partial^a \hat{h}_{\tilde{J}}(x) + \mbox{cv}^{a*}(x) \, \hat \sigma_{\tilde J}^a(x) \bigg] ,
\end{equation}
where
\begin{equation}\label{cv-dev}
 \mbox{cv}^{a*}(x) =
 \begin{cases}
 z_{1-\alpha}^{a*} + \hat A \theta^*_{1-\hat \alpha} & \mbox{~if $\tilde J = \hat J$}, \\
 z_{1-\alpha}^{a*} + \hat A \max\{\theta^*_{1-\hat \alpha}, \tilde {J}^{(|a|-\ul p)/d} /\hat \sigma_{\tilde J}^a(x) \} & \mbox{~if $\tilde J = \hat J_n$}.
 \end{cases}
\end{equation}
\end{enumerate}

\begin{remark}\normalfont
Procedures~1 and~2 require choosing the B-spline order $r$ and Procedure~2 requires specifying the minimal degree of smoothness $\ul p$. For sup-norm estimation and UCBs for first derivatives one can take $r \geq 3$ and $\ul p \geq 1$; for second derivatives and cross elasticities one can take $r \geq 4$ and $\ul p \geq 2$.
\end{remark}

\begin{remark}\normalfont
We establish that $\tilde J =\hat J$ with probability approaching one (wpa1) in the mild regime; and that $\tilde J , \hat J \in [c \hat{J}_n,\hat{J}_n]$ wpa1 in the severe regime (for a constant $c\in (0,1)$). Nevertheless, we find $\tilde J = \hat J$ in the empirical application and in the vast majority (between 99.6\% and 100\% depending on the design and sample size) of all simulations. In particular, $\tilde J = \hat J$ across all simulations in the Engel curve design which is in the severe regime (see Appendix~\ref{appsec:engel}).
\end{remark}

\noindent
Theoretical properties of these UCBs are presented in Sections~\ref{sec:4UCB} and~\ref{sec:4UCBde}. We show that the Procedures 2 and~2$^\prime$ UCBs are honest and adaptive for models in the mild regime (including nonparametric regression as a special case).
For models in the severe regime, we show that the Procedures 2 and~2$^\prime$ UCBs with critical values corresponding to $\tilde J =\hat{J}_n$ have valid (actually conservative) coverage.
Nevertheless, the Engel curve simulation in Appendix~\ref{appsec:engel} shows that the Procedure~2 UCBs still have valid (actually conservative) coverage for a severe regime design.

\section{International Trade: Simulations and Application}\label{s:application}

\citeauthor*{AAG} (\citeyear{AAG}, hereafter AAG) derive semiparametric gravity equations for the extensive and intensive margins of firm exports in a monopolistic competition model of international trade. Importantly, and in sharp contrast with the existing literature \citep{Melitz2003,Chaney,EKK,HMT,MelitzRedding}, AAG do not impose any parametric assumptions on the distribution of unobserved firm heterogeneity. The gravity equations identify functions which characterize the elasticities of the extensive and intensive margins of firm-level exports to changes in bilateral trade costs. AAG emphasize the importance of these elasticities for counterfactuals.

In this section, we apply our procedures to estimate and construct UCBs for the intensive margin and its elasticity using AAG's baseline model and data. We also present simulation studies based on empirical calibrations of two workhorse trade models to illustrate the sound performance of our procedures.

\subsection{Model and Data}

We begin by briefly summarizing the empirical framework of AAG. They use a monopolistic competition model of international trade---see \cite{MelitzRedding2014} for a review. There are a continuum of firms in each country. Firm $\omega$ in country $i$ is characterized by an entry potential $e_{ij}(\omega)$ and a revenue potential $r_{ij}(\omega)$ for selling in country $j$. Firms draw $e_{ij}(\omega)$
from a distribution $H_{ij}^e(e)$ then $r_{ij}(\omega)$
from a (possibly degenerate) distribution $H_{ij}^r(r|e)$. Firm $\omega$ in country $i$ exports to country $j$ if and only if $e_{ij}(\omega)$ exceeds a threshold. The proportion of firms in country $i$ that export to country $j$ is denoted $\pi_{ij}$.

The extensive margin is characterized by the inverse distribution of entry potential, i.e., $\epsilon_{ij}(\pi_{ij}) = (H_{ij}^e)^{-1}(1-\pi_{ij})$. Assuming homogeneity (so $H_{ij}^e = H^e$ and $\epsilon_{ij} = \epsilon$), AAG's gravity equation for the extensive margin is
\[
 \log \epsilon(\pi_{ij}) = \log (\bar f_{ij} \bar \tau_{ij}^{\tilde \sigma}) +  \delta_i^\epsilon +  \zeta_j^\epsilon,
\]
where $\bar \tau_{ij}$ and $\bar f_{ij}$ are variable and fixed trade costs from $i$ to $j$ and $ \delta_i^\epsilon$ and $ \zeta_j^\epsilon$ are exporter and importer fixed effects (FEs). Costs depend linearly on a cost shifter $z_{ij}$:
\[
 \begin{aligned}
 \log \bar \tau_{ij} & = \kappa^\tau z_{ij} + \delta_i^\tau + \zeta_j^\tau + \eta_{ij}^\tau, \\
 \log \bar f_{ij} & = \kappa^f z_{ij} + \delta_i^f + \zeta_j^f + \eta_{ij}^f,
 \end{aligned}
\]
where the idiosyncratic error terms $\eta_{ij}^\tau$ and $\eta_{ij}^f$ are conditionally mean-zero and independent of $z_{ij}$ and the FEs. This yields the estimating equation
\begin{equation}
 \log \epsilon(\pi_{ij})
 = (\kappa^f + \tilde \sigma \kappa^\tau) z_{ij} + (\delta_i^f + \tilde \sigma \delta_i^\tau +  \delta_i^\epsilon) + (\zeta_j^f + \tilde \sigma \zeta_j^\tau +  \zeta_j^\epsilon) + \eta_{ij}^f + \tilde \sigma \eta_{ij}^\tau. \label{eq:aag.log_eps}
\end{equation}
Note that $\pi_{ij}$ depends (possibly nonlinearly) on  $z_{ij}$ and the error terms $\eta_{ij}^f$ and $\eta_{ij}^\tau$.

The intensive margin is characterized by the average revenue potential of exporting firms:
\[
 \rho_{ij}(\pi) = \frac{1}{\pi} \int_0^\pi \E[r|e = \epsilon_{ij}(v)] \, \mathrm d v\,,
\]
where the expectation is taken under $H_{ij}^r(r|e)$. Assuming homogeneity (so $H^r_{ij} = H^r$ and $\rho_{ij} = \rho$), AAG's gravity equation for the intensive margin is
\[
 \log \bar x_{ij} - \log \rho (\pi_{ij}) = \log (\bar \tau_{ij}^{\tilde \sigma})  +  \delta_i^\rho +  \zeta_j^\rho,
\]
where $\bar x_{ij}$ are average firm exports and $ \delta_i^\rho$ and $ \zeta_j^\rho$ are FEs. With $\bar \tau_{ij}$ as above, AAG obtain
\begin{equation}\label{eq:aag.log_rho}
 \log \bar x_{ij} + \tilde \sigma \kappa^\tau z_{ij} = \log \rho(\pi_{ij}) + ( \delta_i^\rho - \tilde \sigma \delta_i^\tau) + ( \zeta_j^\rho - \tilde \sigma \zeta_j^\tau) - \tilde \sigma \eta_{ij}^\tau.
\end{equation}
More concisely,
\begin{equation}\label{eq:aag.npiv}
 y_{ij} = \log \tilde \rho(\tilde \pi_{ij}) + \delta_i + \zeta_j + u_{ij} \,,
\end{equation}
where $y_{ij} := \log \bar x_{ij} + \tilde \sigma \kappa^\tau z_{ij}$ is the dependent variable,\footnote{AAG construct $y_{ij}$ from data on $\bar x_{ij}$ and $z_{ij}$ based on external estimates of $\tilde \sigma$ and $\kappa^\tau$.} $\tilde \pi_{ij} := \log \pi_{ij}$ is the endogenous regressor, $\log \tilde \rho(\tilde \pi) := \log \rho(e^{\tilde \pi})$ is the unknown structural function, $\delta_i :=  \delta_i^\rho - \tilde \sigma \delta_i^\tau$ and $\zeta_j :=  \zeta_j^\rho - \tilde \sigma \zeta_j^\tau$ are exporter and importer FEs, and the idiosyncratic error term $u_{ij} :=  - \tilde \sigma \eta_{ij}^\tau$ is conditionally mean-zero and independent of the instrumental variable $z_{ij}$.

Our goal is to use (\ref{eq:aag.npiv}) to estimate $\log \tilde \rho$ and its derivative, as $\frac{\partial \log \tilde \rho(\tilde \pi)}{\partial \tilde \pi}\equiv \frac{\partial \log \rho(\pi)}{\partial \log \pi}$ characterizes the elasticity of the intensive margin of firm-level exports to changes in bilateral trade costs. We use the same data that AAG use for their baseline estimates, which consists of $\bar x_{ij}$, $z_{ij}$, and $\tilde \pi_{ij}$ for a sample of 1522 country pairs for the year 2012. We refer the reader to AAG for a detailed description of the data and its construction.

\subsection{Implementation}

Model (\ref{eq:aag.npiv}) differs from model (\ref{eq:npiv}) due to the presence of FEs.
AAG estimate $\log \tilde \rho$ and FEs jointly, using both $z_{ij}$ and exporter and importer country dummies as instruments. As such, they estimate a partially linear  model with a large number of linear regressors  (due to the country dummies) and, similarly, a large number of instrumental variables.\footnote{These comments are based on the November 2020 version of AAG, which is currently under revision. Some of their implementation and findings may differ in future versions.} Our methods and theoretical results are not formally developed for such a setting.\footnote{Our approach extends to partially linear models---see Section~\ref{sec:extensions}. But with bilateral trade data the number of dummy variables representing origin and destination FEs is increasing with the sample size $n$. This ``many regressors/many instruments'' asymptotic framework falls outside the scope of our analysis.} Therefore, we maintain their assumption that $z_{ij}$ and origin and destination FEs are exogenous, but we further assume that $\E[\log \tilde \rho(\tilde \pi_{ij})|z_{ij}, \delta_i, \zeta_j] = \E[\log \tilde \rho(\tilde \pi_{ij})|z_{ij}]$ (a.s.). That is, the intensive margin is conditional mean independent of exporter- and importer-specific factors given cost shifters. Note, however, that we are not imposing that average firm exports are conditional mean independent of exporter- and importer-specific factors. The reduced form for $y_{ij}$ is
\begin{equation}\label{eq:aag_1}
 y_{ij} = g(z_{ij}) + \delta_i + \zeta_j + e_{ij} \,,
\end{equation}
where $g(z_{ij}) = \E[\log \tilde \rho(\tilde \pi_{ij})|z_{ij}]$ and $\E[e_{ij} | z_{ij}, \delta_i, \zeta_j] = 0$. We estimate $\delta_i$ and $\zeta_j$ from (\ref{eq:aag_1}) by partially linear series regression. That is, we regress $y_{ij}$ on origin and destination dummies and functions $b_{K1},\ldots,b_{KK}$ of $z_{ij}$ at dimension $K(\hat J_{\max})$. We then apply our procedures using $Y_{ij} = y_{ij} - \hat \delta_i - \hat \zeta_j$ as the dependent variable ($Y$), $\tilde \pi_{ij}$ as the endogenous regressor ($X$), and $z_{ij}$ as the instrumental variable ($W$). We present simulations below for models with and without FEs and show that this first-stage estimation of $\delta_i$ and $\zeta_j$ does not affect the performance of our procedures.
Appendix~\ref{ax:application} provides further details on implementation.

\subsection{Empirical Results}

We implement our procedures using AAG's data. Our data-driven choice of sieve dimension is $\tilde J = 4$ for this sample. Figure~\ref{fig:aag} plots our estimate of $\log \rho$ and the elasticity of the intensive margin, together with their 95\% UCBs that are constructed as in displays (\ref{band}) and (\ref{band-derivative}), respectively. We report results over the interval $[0.1\%,50\%]$, as in AAG.

UCBs for $\log \rho$ and the elasticity of $\rho$ are both narrow and informative. Figure~\ref{fig:aag} also plots a linear IV estimate of $\log \rho$ and the corresponding (constant) elasticity estimate.\footnote{For the linear IV estimates, we estimate $\log \rho$ jointly with the FEs as in AAG.} These both lie outside the UCBs for much of the support of $\pi_{ij}$. As such, our UCBs for the elasticity provide evidence against the Pareto specification for unobserved firm productivity used, e.g., by \cite{Chaney}, under which the elasticity of $\rho$ is constant. Whereas Figure 1 of AAG shows that several conventional parameterizations of the distribution of unobserved firm heterogeneity used by \cite{EKK}, \cite{HMT}, and \cite{MelitzRedding} all imply a \emph{decreasing} elasticity over $[0.1\%,50\%]$. By contrast, 
decreasing elasticities necessarily fall outside our 95\% UCBs over $[0.1\%,50\%]$,
as the right-most point of the lower UCB lies above the upper UCB for smaller values of $\tilde \pi_{ij}$.

\begin{figure}[t]
\begin{center}

\begin{subfigure}[t]{0.49\textwidth}
\begin{center}
\includegraphics[width=\linewidth]{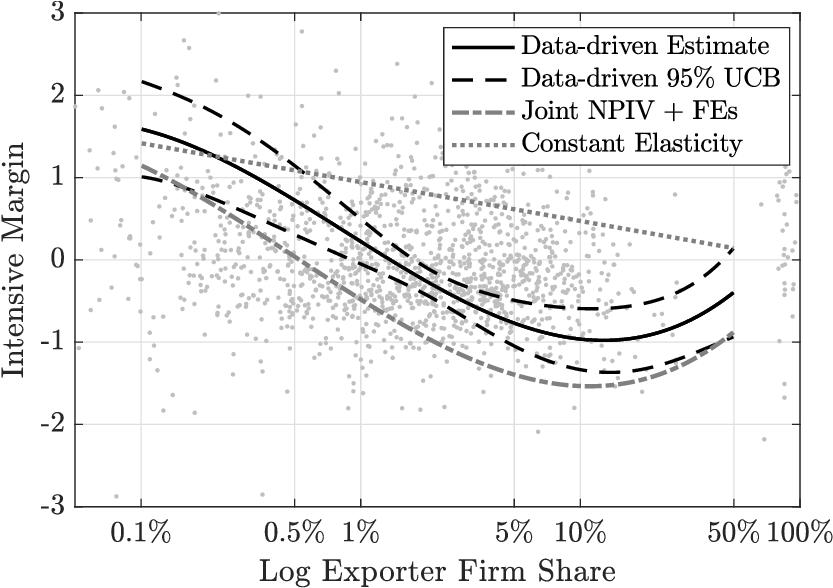}
\end{center}
\end{subfigure}
\begin{subfigure}[t]{0.49\textwidth}
\begin{center}
\includegraphics[width=\linewidth]{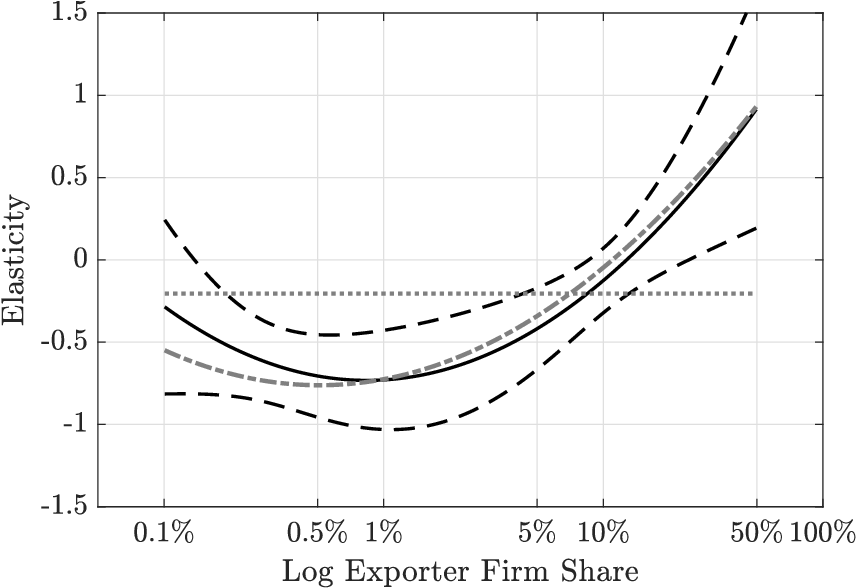}
\end{center}
\end{subfigure}

\caption{\label{fig:aag} Estimates of the intensive margin $\log \rho$ (left panel) and its elasticity (right panel) using AAG's data set ($1522$ observations). \emph{Note}: Solid black lines are estimates; dashed black lines are 95\% UCBs; dot-dash grey lines are nonparametric estimates with FEs estimated jointly with $\log \rho$ as in AAG; dotted grey lines are linear IV estimates.}

\end{center}
\vskip -20pt
\end{figure}

To show that our results are not sensitive to first-stage elimination of fixed effects, we also estimate $\log \rho$ and the FEs jointly, using our data-driven choice $\tilde J = 4$ and instrumenting with $b_{K(\tilde J)1}(z_{ij}),\ldots,b_{K(\tilde J)K(\tilde J)}(z_{ij})$ and the origin and destination dummies, and using $y_{ij}$ as the dependent variable. Estimates using this approach are also shown in Figure~\ref{fig:aag} (labeled Joint NPIV + FEs).
There is a vertical shift in the estimate of $\log \rho$ between the two approaches due to the different treatment of FEs, but the estimated elasticity---which is the focus of AAG---lies entirely within our 95\% UCB for the elasticity and is very close to our data-driven elasticity estimate over the whole range $[0.1\%,50\%]$.

\subsection{Simulation Results}

We now present simulation studies based on empirical calibrations of two workhorse trade models. The first design is based on \cite{HMT} who assume a log-normal distribution for latent firm productivity. The second design is based on \cite{Chaney} who assumes a Pareto distribution. In the first design the elasticity of $\rho$ is decreasing whereas in the second design $\log \rho(\pi) = \rho \log \pi$ and hence the elasticity is constant. For brevity we only present results for elasticity estimates in the log-normal design here. Additional results for the Pareto design and estimation of $\log \rho$ are deferred to Appendix~\ref{ax:simulations}.

We generate data by first sampling $z_{ij}$ independently with replacement from its empirical distribution. We then generate data on $\tilde \pi_{ij}$ and $\bar x_{ij}$ by simulating from equations (\ref{eq:aag.log_eps}) and (\ref{eq:aag.log_rho}), using the expressions for $\log \epsilon(\pi)$ and $\log \rho(\pi)$ implied by the log-normal assumption---see Appendix~\ref{ax:simulations}. As the empirical application has $n=1522$, we investigate the performance of our procedures across 1000 samples of size $761$, $1522$, $3044$, and $6088$.

Plots for a representative sample of size 1522 are presented in Figure~\ref{fig:trade.lognormal.deriv}(a). We generate the results in Table~\ref{tab:trade.lognormal.deriv} and Figure~\ref{fig:trade.lognormal.deriv} by implementing our procedures as in the empirical application. That is, the dependent variable is $Y_{ij} = y_{ij} - \hat \delta_i - \hat \zeta_j$, where $\hat \delta_i$ and $\hat \zeta_j$ are first-stage estimates of the exporter and importer fixed effects. We construct basis functions as in the  application; see Appendix~\ref{ax:application} for details. We also compute estimates and confidence bands over the range 0.1\% to 50\% for $\pi_{ij}$ as reported in the  application.

The first panel in Table~\ref{tab:trade.lognormal.deriv} presents the average and median (across simulations) of
\[
 \sup_{\pi \in [0.001, 0.5]} \left| \frac{d\, \widehat{\log \rho}(\pi)}{d \log \pi} - \frac{d \log \rho(\pi)}{d \log \pi} \right|,
\]
which is the maximal error of estimates of the elasticity of $\rho$ for $\pi_{ij}$ over $[0.1\%,50\%]$. We compare estimates using $\tilde J$ to estimates that use a deterministic choice of sieve dimension, namely $J = 4$, $5$, $7$, and $11$ (these are the first few values of $J$ over which our procedure searches). In each simulation, the maximal error is generally smallest with $J = 4$ or $J = 5$. The average $\tilde J$ is between 4.1 and 4.2 depending on the sample size. The maximal error of $\tilde J$ is at least half that with $J = 7$, and ten times smaller than with $J = 11$.

Turning to the coverage properties of UCBs for the elasticity, the second panel of Table~\ref{tab:trade.lognormal.deriv} shows our data-driven UCBs have correct but somewhat conservative coverage. Some conservativeness is to be expected, as our UCBs have uniform coverage guarantees over a class of DGPs. We also present coverage of UCBs based on the usual approach of ``undersmoothing'' from Section~\ref{sec:npiv}. These UCBs use a deterministic $J$ and have valid coverage provided $J$ is chosen sufficiently large that bias is negligible relative to sampling uncertainty. Of course, in any empirical application a researcher does not know the true function, and therefore doesn't know which values of $J$ are sufficiently large that sampling uncertainty dominates bias. As can be seen from Table~\ref{tab:trade.lognormal.deriv}, $J = 4$ or $J = 5$ seems too small, and consequently these bands under-cover. Bands with $J = 7$ have coverage closer to nominal coverage, but these bands are more than 70\% wider 
than the data-driven bands. Comparing the UCBs in Figures~\ref{fig:trade.lognormal.deriv}(a) and~\ref{fig:trade.lognormal.deriv}(c), we see the efficiency improvement of our bands relative to undersmoothed bands with $J = 7$, for estimating both $\rho$ and its elasticity.

\begin{table}[t]
\begin{center}
\caption{\label{tab:trade.lognormal.deriv} Simulation Results for the Elasticity of $\rho$, Log-normal Design}
{\small
\begin{tabular}{ccccccccccccc} \hline \hline \\[-10pt]
  & & \multicolumn{2}{c}{Data-driven}  & & \multicolumn{8}{c}{Deterministic} \\\cline{3-4} \cline{6-13} \\[-10pt]
  & & & & & \multicolumn{2}{c}{$J = 4$} & \multicolumn{2}{c}{$J = 5$} & \multicolumn{2}{c}{$J = 7$} & \multicolumn{2}{c}{$J = 11$} \\ \\[-10pt] \hline \\[-10pt]
  \multicolumn{13}{c}{Sup-norm Loss} \\ \\[-10pt]
$n$ & & mean & med. & & mean & med. & mean & med. & mean & med. & mean & med. \\ \\[-10pt]
 \phantom{0}761 & & 0.268 & 0.187 & & 0.207 & 0.178 & 0.314 & 0.281 & 0.579 & 0.472 & 2.063 & 1.902 \\
           1522 & & 0.184 & 0.129 & & 0.144 & 0.125 & 0.216 & 0.191 & 0.382 & 0.339 & 1.823 & 1.650 \\
           3044 & & 0.143 & 0.099 & & 0.106 & 0.095 & 0.149 & 0.139 & 0.283 & 0.254 & 1.562 & 1.385 \\
           6088 & & 0.111 & 0.071 & & 0.076 & 0.068 & 0.105 & 0.096 & 0.202 & 0.185 & 1.367 & 1.218 \\  \\[-10pt] \hline \\[-10pt]
  \multicolumn{13}{c}{UCB Coverage} \\ \\[-10pt]
  & & 90\% & 95\% & & 90\% & 95\% & 90\% & 95\% & 90\% & 95\% & 90\% & 95\% \\ \\[-10pt]
 \phantom{0}761 & & 0.989 & 0.997 & & 0.861 & 0.921 & 0.841 & 0.911 & 0.871 & 0.930 & 0.906 & 0.965 \\
           1522 & & 0.994 & 0.997 & & 0.872 & 0.924 & 0.857 & 0.921 & 0.889 & 0.936 & 0.940 & 0.976 \\
           3044 & & 0.993 & 0.998 & & 0.833 & 0.899 & 0.869 & 0.929 & 0.899 & 0.943 & 0.947 & 0.979 \\
           6088 & & 0.993 & 0.994 & & 0.800 & 0.890 & 0.868 & 0.936 & 0.899 & 0.952 & 0.949 & 0.982 \\ \\[-10pt] \hline \\[-10pt]
  & & \multicolumn{2}{c}{Frequency} & & \multicolumn{8}{c}{95\% UCB Relative Width (Deterministic/Data-driven)} \\\cline{3-4} \cline{6-13} \\[-10pt]
  & &  \multicolumn{2}{c}{reject} & & mean & med. & mean & med. & mean & med. & mean & med. \\ \\[-10pt]
 \phantom{0}761 & & \multicolumn{2}{c}{0.088} & & 0.624 & 0.651 & 0.922 & 0.932 & 1.750 & 1.568 & \phantom{1}6.398 & \phantom{1}6.140 \\
           1522 & & \multicolumn{2}{c}{0.344} & & 0.632 & 0.657 & 0.906 & 0.911 & 1.739 & 1.599 & \phantom{1}8.295 & \phantom{1}8.098 \\
           3044 & & \multicolumn{2}{c}{0.822} & & 0.638 & 0.657 & 0.888 & 0.902 & 1.746 & 1.625 & 10.340 & 10.071 \\
           6088 & & \multicolumn{2}{c}{0.959} & & 0.634 & 0.660 & 0.865 & 0.893 & 1.722 & 1.690 & 12.783 & 12.665 \\ \\[-10pt] \hline
\end{tabular}
}
\parbox{\textwidth}{\small \emph{Note:}
Column ``reject'' reports the proportion of simulations in which constant functions are excluded from data-driven 95\% UCBs for the  elasticity.}
\end{center}
\vskip -20pt
\end{table}

\begin{figure}[h!]
\begin{center}

\begin{subfigure}[t]{\textwidth}
\caption{Data-driven Estimates and UCBs}
\begin{center}
\vskip -10pt
\begin{subfigure}[t]{0.49\textwidth}
\begin{center}
\includegraphics[width=\linewidth]{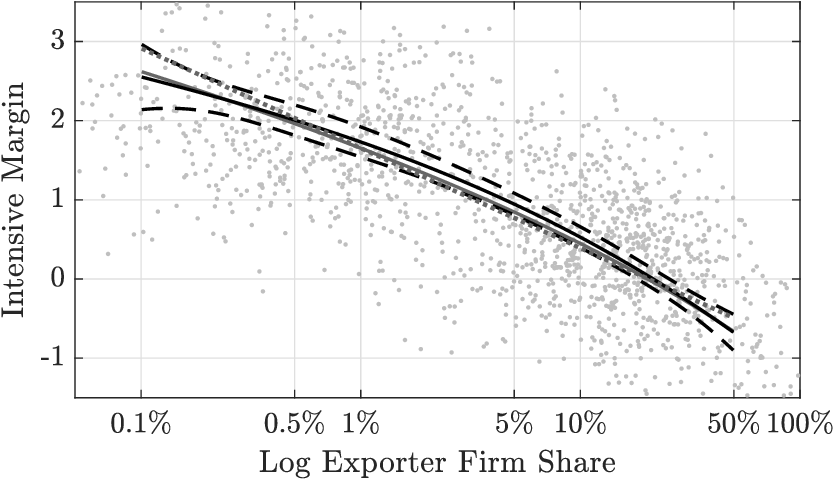}
\end{center}
\end{subfigure}
\begin{subfigure}[t]{0.49\textwidth}
\begin{center}
\includegraphics[width=\linewidth]{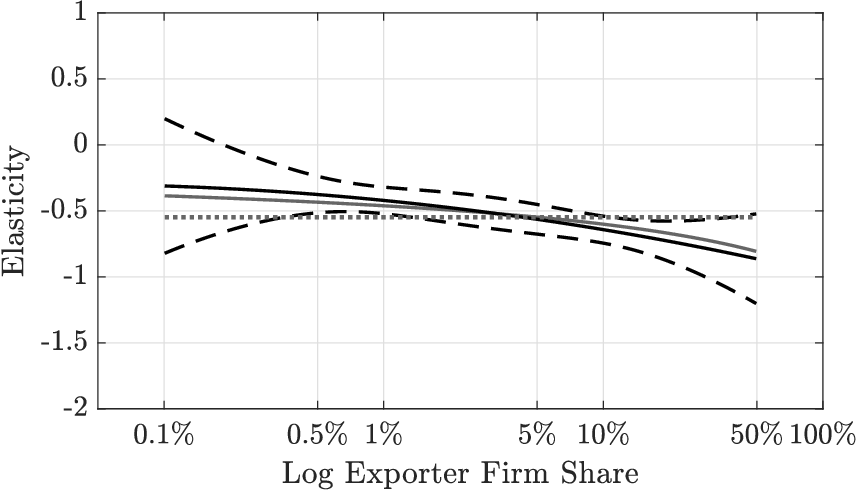}
\end{center}
\end{subfigure}

\end{center}
\end{subfigure}

\begin{subfigure}[t]{\textwidth}
\caption{Estimates and UCBs with $J = 5$}
\begin{center}
\vskip -10pt
\begin{subfigure}[t]{0.49\textwidth}
\begin{center}
\includegraphics[width=\linewidth]{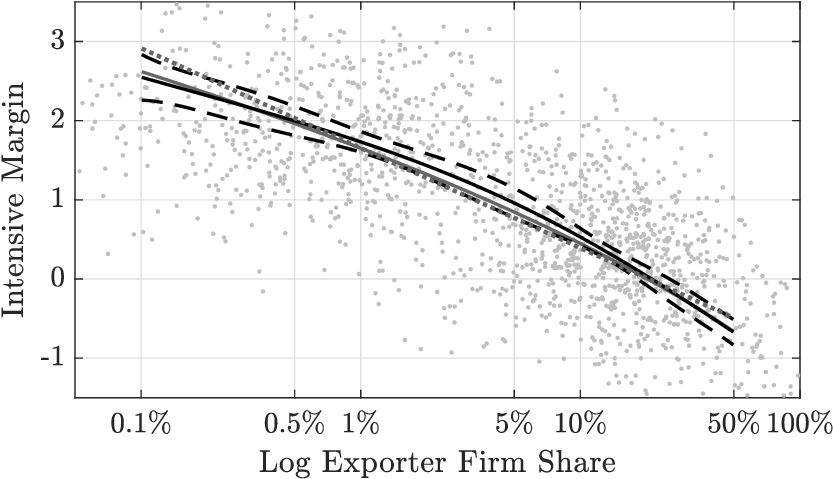}
\end{center}
\end{subfigure}
\begin{subfigure}[t]{0.49\textwidth}
\begin{center}
\includegraphics[width=\linewidth]{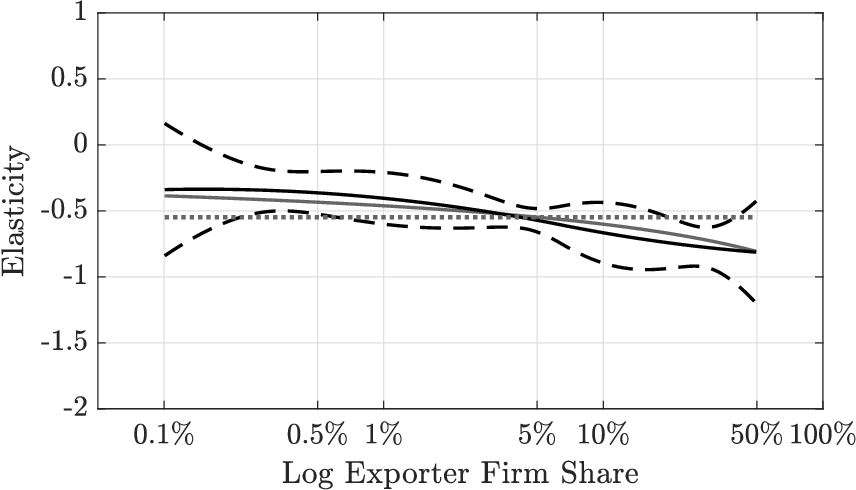}
\end{center}
\end{subfigure}

\end{center}
\end{subfigure}

\begin{subfigure}[t]{\textwidth}
\caption{Estimates and UCBs with $J = 7$}
\begin{center}
\vskip -10pt
\begin{subfigure}[t]{0.49\textwidth}
\begin{center}
\includegraphics[width=\linewidth]{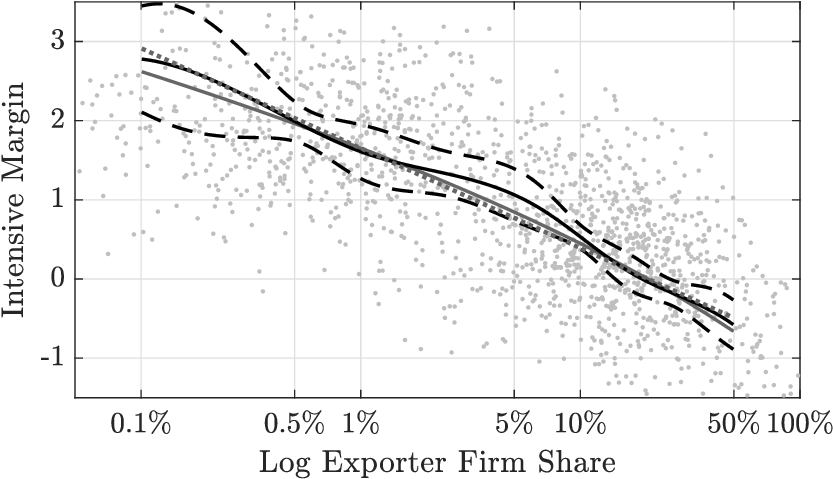}
\end{center}
\end{subfigure}
\begin{subfigure}[t]{0.49\textwidth}
\begin{center}
\includegraphics[width=\linewidth]{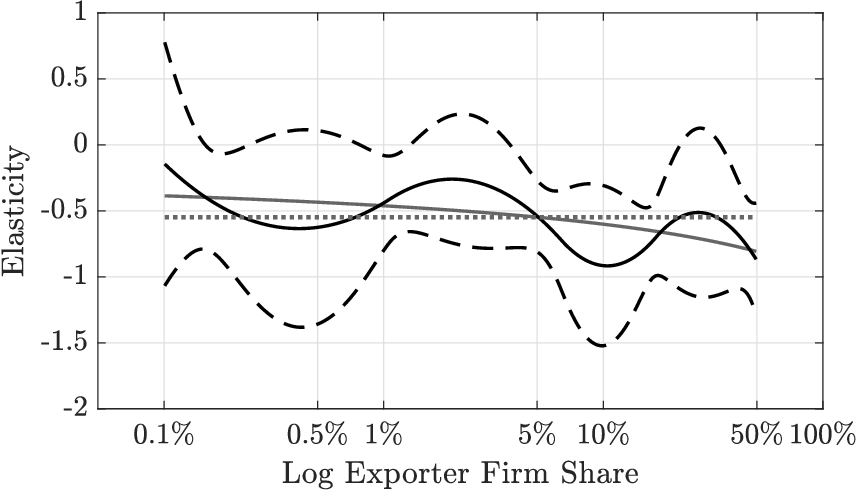}
\end{center}
\end{subfigure}

\end{center}
\end{subfigure}

\vskip -6pt

\caption{\label{fig:trade.lognormal.deriv} {Log-normal design: Plots for a representative sample of size $1522$. Left panels correspond to the intensive margin, right panels correspond to its elasticity. \emph{Note:} Solid grey lines are the true curves; solid black lines are estimates; dashed black lines are 95\% UCBs; dotted grey lines are linear IV estimates.}}
\end{center}
\vskip -20pt
\end{figure}

The fact that our UCBs are based on an optimal choice of $J$, and therefore contract faster than bands based on undersmoothing, has important practical consequences. Consider the data-driven UCBs for the elasticity of $\rho$ reported in Figure~\ref{fig:trade.lognormal.deriv}(a). These bands do not contain any constant function because the upper limit of the lower band exceeds the lower limit of the upper band. This provides evidence against the Pareto specification of productivity used by \cite{Chaney}, for which the elasticity of $\rho$ is constant.\footnote{Table~\ref{tab:trade.lognormal.deriv} presents the frequency that such a test rejects the constant elasticity specification.} Note this is despite the fact that our bands tend to be a bit conservative. The undersmoothed bands with $J = 7$ have coverage closer to nominal coverage. But for the sample shown in Figure~\ref{fig:trade.lognormal.deriv}, the undersmoothed bands with $J = 7$ are sufficiently wide that constant functions lie entirely within the bands. Hence, the researcher could not reject a constant elasticity specification on the basis of the undersmoothed bands in this sample. In fact, the undersmoothed bands with $J = 7$ only reject the constant elasticity specification in 15.8\% of simulations with $1522$ observations whereas the rejection rate for the data-driven bands is 34.4\%. This difference in rejection rates illustrates the general phenomenon that undersmoothed bands sacrifice efficiency for coverage. The undersmoothed bands are also quite wiggly, making it difficult to infer the shape of the true elasticity.

We note in closing that our procedures can equally be applied to other IV-based nonparametric analyses in international trade; see, e.g., \cite{ACD}.

\section{Theory}\label{s:theory}

We first outline the main regularity conditions in Section~\ref{sec:4assumptions}. Section~\ref{sec:4estimation} shows that $\tilde J$ leads to minimax convergence rates for estimators of both $h_0$ and its derivatives. We then present the main results for UCBs in Sections~\ref{sec:4UCB} and \ref{sec:4UCBde}.

\subsection{Assumptions}\label{sec:4assumptions}

We first state and then discuss the assumptions that we impose on the model and sieve space. We require these to hold for some constants $a_f, \ul c, \ol C, C_T, C_Q, \underline{\sigma}, \overline{\sigma} > 0$ and $\gamma \in (0,1)$.
Let $T : L^2_X \to L^2_W$ denote the operator $Th(w) = \E[h(X)|W = w]$. For nonparametric regression  we have $W \equiv X$ and so $T$ reduces to the identity.

\begin{assumption} \label{a-data}
(i)~$X$ has  support $ \mathcal{X} = [0,1]^d$ and its distribution has Lebesgue density $f_X$ which satisfies $a_{f}^{-1} < f_X(x) < a_{f}$ on $\mathcal X$; (ii)~$W$ has support $\mathcal{W} = [0,1]^{d_w} $ and its distribution has Lebesgue density $f_W$ which satisfies $a_{f}^{-1} < f_W(w) < a_f$ on $\mathcal W$; (iii)~$T$ is injective.
\end{assumption}

\begin{assumption} \label{a-residuals}
(i)~$\mathbb{P} \big( \E[u^4|W] \leq \overline \sigma^2  \big)  = 1$;
(ii)~$\mathbb{P} \big( \E[u^2 |W] \geq \underline \sigma^2 \big) = 1$.
\end{assumption}

Let $ \Psi_J$ and $ B_K$ be the closed linear subspaces of $L^2_X$ and $L^2_W$ spanned by $\psi_{J1},\ldots,\psi_{JJ}$ and $b_{K1},\ldots,b_{KK}$, respectively. Define
\[
 \tau_J = \sup_{h \in \Psi_J : \| h \|_{L^2_X} \neq 0} \frac{\|h \|_{L^2_X}}{\|  Th \|_{L^2_W}}\,,
\]
where $\|\cdot \|_{L^2_X}$ and $\|\cdot\|_{L^2_W}$ denote the $L^2_X$ and $L^2_W$ norms. The \emph{sieve measure of ill-posedness} $\tau_J$ quantifies the degree of difficulty of inverting $Th_0$ to recover $h_0$. As conditional expectations are (weakly) contractive, we have $\tau_J \geq 1$. Large 
$\tau_J$ indicate a more difficult inversion problem. The model (\ref{eq:npiv}) is said to be \emph{mildly ill-posed} (or in the \emph{mild} regime) if $\tau_J \asymp J^{\varsigma/d}$ for some $\varsigma \geq 0$  and \emph{severely ill-posed} (or in the \emph{severe} regime) if $\tau_J \asymp \exp(C J^{\varsigma/d}) $ for some $C,\varsigma > 0$, where $d = \dim(X)$. For nonparametric regression models we have $\tau_J = 1$ for all $J$. Hence, nonparametric regression is a special case of the mild regime with $\varsigma = 0$.

Let $\Pi_J : L^2_X \to \Psi_J$ and $\Pi_{K(J)} : L^2_W \to B_{K(J)}$ denote LS projections onto $\Psi_J$ and $B_{K(J)}$:
\[
\begin{aligned}
 \Pi_J f & = \mathrm{arg}\min_{g \in \Psi_J} \|f - g\|_{L^2_X} \,, &
 \Pi_{K(J)} f & = \mathrm{arg} \min_{g \in B_{K(J)}} \|f - g\|_{L^2_W} \,.
\end{aligned}
\]
Also let $Q_J : L^2_X \to \Psi_J$ denote the TSLS projection onto $\Psi_J$:
\begin{align*}
  Q_J f & = \mathrm{arg}\min_{h \in \Psi_J} \|\Pi_{K(J)}T(f - h)\|_{L^2_W}.
\end{align*}

\begin{assumption}\label{a-approx} (i)~$ \sup_{h \in \Psi_{J}, \|h \|_{L^2_X}=1} \tau_J \|\Pi_{K(J)} Th - T h\|_{L^2_W} \leq v_J$ where $v_J < 1$ for all $J \in \T$ and $v_J \to 0$ as $J \to \infty$; \\
(ii)~$\tau_J   \|T(h_0 - \Pi_J h_0)\|_{L^2_W} \leq C_T \|h_0 - \Pi_J h_0\|_{L^2_X}$ for all $ J \in \T$; \\
(iii)~$\|Q_J (h_0 - \Pi_J h_0) \|_\infty \leq C_Q \|h_0 - \Pi_J h_0 \|_{\infty}$ for all $ J \in \T$.
\end{assumption}

Denote the ``population'' sieve variance of $\hat h_J(x)$ as
 $\|{\sigma}_{x,J}\|^2_{sd} = L_{J,x}^{\phantom \prime} \Omega_J^{\phantom \prime} L_{J,x}'$ where $L_{J,x} =  (\psi^J(x))' [S_J' G_{b,J}^{-1} S_J^{\phantom \prime} ]^{-1} S_J' G_{b,J}^{-1}$ and $\Omega_J =  \E [ u^2 b^{K(J)}(W) (b^{K(J)}(W))' ]$ with $u = Y - h_0(X)$, $G_{b,J} = \E [ b^{K(J)}(W) (b^{K(J)}(W))' ]$, and $S_J = \E [ b^{K(J)}(W) (\psi^J(X))' ]$. Also let $\|\sigma_{x,J}\|^2 =  (\psi^J(x))'[S_J' G_{b,J}^{-1} S_J^{\phantom \prime}]^{-1} (\psi^J(x))$, which satisfies $\|\sigma_{x,J} \| \asymp \|{\sigma}_{x,J}\|_{sd}$ uniformly in $x$ by Assumption~\ref{a-residuals}.

\begin{assumption}\label{a-var}
(i)~$\underline{c} \tau_J^2 J \leq  \inf_{x \in \mathcal{X}} \|  \sigma_{x,J}  \|^2 \leq  \sup_{x \in \mathcal{X}} \| \sigma_{x,J} \|^2 \leq \overline{C} \tau_J^2 J$ for all $J \in \T$; \\
(ii) $\limsup_{J \to \infty} \sup_{x \in \mc X,J_2 \in \T : J_2 > J } ( \|{\sigma}_{x,J}\|_{sd}  / \|{\sigma}_{x,J_2}\|_{sd} ) < \gamma$.
\end{assumption}

Assumptions~\ref{a-data}(i)(ii) and~\ref{a-residuals} are standard conditions on the support of $X$ and $W$ and the conditional variance of the errors (see, e.g.,  \cite{chen2018optimal}) that can be relaxed. Assumption~\ref{a-data}(iii) is an identification condition that is generically satisfied under endogeneity (see \cite{andrews2017examples}) and is trivially satisfied for nonparametric regression because $T$ reduces to the identity in that case. Assumption~\ref{a-approx} is also trivially satisfied for nonparametric regression with $C_T,C_Q = 1$. Assumption~\ref{a-approx}(i) is imposed to ensure that $\hat s_J^{-1}$ is a suitable sample analog of $\tau_J$.
Assumption~\ref{a-approx}(ii) is the usual $L^2$ ``stability condition'' imposed in the NPIV literature to derive $L^2$-norm rates. Assumption~\ref{a-approx}(iii) is a  $L^{\infty}$-norm analogue used to control the bias in sup-norm. \cite{chen2018optimal} provide a thorough discussion of Assumption~\ref{a-var}(i) and derive primitive sufficient conditions for it in the context of nonparametric demand estimation.
Assumption \ref{a-var}(ii) says that $\|{\sigma}_{x,J}\|^2_{sd}$ is increasing in $J \in \T$, uniformly in $x$. We view this as mild because $J$ increases exponentially over $\T$.
Indeed, by Assumption~\ref{a-residuals} and \ref{a-var}(i) and the fact that $J \asymp 2^{Ld}$ for some $L \in \mathbb{N}$, for any $J,J_2 \in \T$ with $J_2 > J$ we have
\[
 \sup_{x \in \mc X} \frac{\|{\sigma}_{x,J}\|_{sd}}{\|{\sigma}_{x,J_2}\|_{sd}} \asymp \frac{ \tau_J \sqrt{J}}{\tau_{J_2} \sqrt{J_2}} \leq \frac{\tau_{2^{Ld}}}{\tau_{2^{(L+1)d}}}  2^{-d/2}\leq 2^{-d/2}<1\,.
\]

\subsection{Main Results: Adaptive Estimation in Sup-norm}\label{sec:4estimation}

We now show
$\tilde J$ leads to minimax rate-adaptive estimators of both the structural function $h_0$ and its derivatives. Our results encompass nonparametric regression as a special case.

We first define the parameter space for $h_0$. Let $B_{\infty,\infty}^p(M)$ denote the H\"older ball of smoothness $p$ and radius $M$ (see Appendix \ref{ax:besov} for a formal definition). For given constants $C_T,C_Q,M > 0 $ and $\ol p > \ul p > \frac d2$ with $r \geq \lfloor \ol p \rfloor + 1$, let  $ \mathcal{H}^p = \mathcal{H}^p(M,C_T,C_Q)$ denote the subset of $B_{\infty,\infty}^p (M)$ that satisfies Assumption \ref{a-approx}(ii)(iii) for any distribution of $(X,W,u)$ satisfying Assumptions \ref{a-data}-\ref{a-var}, and let $\mc H = \bigcup_{p \in [\ul p,\ol p]} \mc H^p$.
For each $h_0 \in \mc H$, we let $\mb P_{h_0}$ denote the distribution of $(X_i,Y_i,W_i)_{i=1}^\infty$ where each observation is generated by an IID draw from a distribution of $(X,W,u)$ satisfying Assumptions \ref{a-data}-\ref{a-var} with $Y = h_0(X) + u$.

\begin{theorem}\label{lepski2}
Let Assumptions \ref{a-data}-\ref{a-var} hold.
\begin{enumerate}[nosep]
\item[(i)] Suppose the model is mildly ill-posed. Then: there is a universal constant $C_{\ref{lepski2}}$ for which
\[
 \sup_{p \in [\ul p,\ol p]} \sup_{h_0 \in \mc H^p} \mathbb{P}_{h_0} \bigg( \| \hat{h}_{\tilde{J}} - h_0   \|_{\infty} > C_{\ref{lepski2}} \bigg( \frac{\log n}{n} \bigg)^{\frac{p}{2(p + \varsigma) + d}} \bigg) \rightarrow 0.
\]
\item[(ii)] Suppose the model is severely ill-posed.  Then: there is a universal constant $C_{\ref{lepski2}}$ for which
\[
 \sup_{p \in [\ul p,\ol p]} \sup_{h_0 \in \mc H^p} \mathbb{P}_{h_0} \big(  \| \hat{h}_{\tilde{J}} - h_0    \|_{\infty} > C_{\ref{lepski2}} (\log n)^{-p / \varsigma}   \big) \rightarrow 0.
\]
\end{enumerate}
\end{theorem}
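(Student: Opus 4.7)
The plan is to adapt Lepski's bias-variance balancing argument to the sieve NPIV setting, where the main novelty is that the ``variance'' involves the unknown sieve measure of ill-posedness $\tau_J$ and must be calibrated via the bootstrap quantile $\theta^*_{1-\hat\alpha}$. The backbone is to identify an \emph{oracle} dimension $J^\circ \in \hat{\mc J}$ such that (i) $\|\hat h_{J^\circ} - h_0\|_\infty$ attains the minimax sup-norm rate, and (ii) the Lepski comparison at $J^\circ$ passes the $1.1\theta^*_{1-\hat\alpha}$ threshold with high probability, forcing $\tilde J \leq J^\circ$. In the mildly ill-posed regime, take $J^\circ$ to be the smallest $J \in \hat{\mc J}$ for which the projection bias $\|h_0 - \Pi_J h_0\|_\infty \lesssim J^{-p/d}$ is dominated by $\sqrt{\log n}\,\tau_J \sqrt{J/n}$; this yields $J^\circ \asymp (n/\log n)^{d/(2(p+\varsigma)+d)}$ and the rate $(\log n/n)^{p/(2(p+\varsigma)+d)}$. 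In the severely ill-posed regime, take $J^\circ$ to be the smallest $J \in \hat{\mc J}$ with $J^{-p/d} \lesssim (\log n)^{-p/\varsigma}$; here the sampling fluctuation is (poly)logarithmically negligible and the rate is driven entirely by bias.

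The first step is to control the estimated primitives: using Assumption~\ref{a-approx}(i) and concentration of empirical sieve Gram matrices, $\hat s_J^{-1} \asymp \tau_J$ uniformly over $J$ up to $\hat J_{\max}$, so with high probability $\hat{\mc J}$ is sandwiched between deterministic index sets of the same order and $J^\circ \in \hat{\mc J}$. The second step is bootstrap calibration: a Gaussian multiplier coupling in the spirit of \cite{chernozhukov2014anti} applied to~(\ref{eq:sup-stat}) shows that $\theta^*_{1-\hat\alpha} \lesssim \sqrt{\log n}$, with a matching lower bound by anti-concentration, uniformly in the data-dependent $\hat\alpha$. The third step is the ``not-too-large'' direction of Lepski: at $J = J^\circ$, decompose each contrast $D_{J^\circ}(x) - D_{J_2}(x)$ for $J_2 > J^\circ$ into a bias piece, shown to be $o(\theta^*_{1-\hat\alpha}\hat\sigma_{J^\circ,J_2}(x))$ uniformly in $(x,J_2)$ by Assumptions~\ref{a-approx}(ii)(iii) and the definition of $J^\circ$, and a stochastic piece controlled by the multiplier-bootstrap coupling; consequently the supremum in~(\ref{eq:J_lepski}) at $J^\circ$ falls below $1.1\theta^*_{1-\hat\alpha}$ and $\tilde J \leq J^\circ$ on this event. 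The fourth step applies the Lepski selection rule at $(\tilde J, J^\circ)$ to get
\[
 \|\hat h_{\tilde J} - \hat h_{J^\circ}\|_\infty \leq 1.1\,\theta^*_{1-\hat\alpha} \sup_{x \in \mc X} \hat\sigma_{\tilde J, J^\circ}(x) \lesssim \sqrt{\log n}\,\tau_{J^\circ} \sqrt{J^\circ/n},
\]
and combining with the sup-norm rate for $\hat h_{J^\circ}$ (available from Theorem~4.1 / Corollary~4.1 of \cite{chen2018optimal} under Assumptions~\ref{a-data}--\ref{a-var}) via the triangle inequality delivers both parts of the theorem.

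The hardest step will be the uniform Gaussian multiplier coupling over the triple index $(x, J, J_2)$ with $J, J_2$ ranging over the \emph{random} set $\hat{\mc J}$. I would handle this by first proving the coupling over a deterministic enlargement of $\hat{\mc J}$---whose cardinality is controlled by the sandwich on $\hat J_{\max}$ combined with the lower cutoff $0.1(\log \hat J_{\max})^2$ in~(\ref{eq:index_set})---and then transferring to $\hat{\mc J}$ on the event where the sandwich is tight. A secondary subtlety, specific to the severely ill-posed regime, is that variance no longer balances bias at $J^\circ$: one must verify separately that $J^\circ$ exceeds the lower cutoff of $\hat{\mc J}$ and that the bound from the fourth step is genuinely of smaller order than $(\log n)^{-p/\varsigma}$ so that bias remains the dominant contribution to the sup-norm error.
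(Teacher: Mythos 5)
Your plan for the mildly ill-posed case is essentially the paper's own argument (oracle dimension at the bias/threshold balance point, show the Lepski test passes there so $\tilde J$ is not too large, then use the selection criterion plus a uniform-in-$J$ bias/variance bound and $\theta^*_{1-\hat\alpha}\asymp\sqrt{\log n}$), and the issues you flag there are fixable. But your treatment of the severely ill-posed case has a genuine gap. You claim that at $J^\circ$ (the smallest $J\in\hat{\mc J}$ with $J^{-p/d}\lesssim(\log n)^{-p/\varsigma}$) the bias piece of each contrast is $o(\theta^*_{1-\hat\alpha}\hat\sigma_{J^\circ,J_2}(x))$ uniformly in $J_2>J^\circ$, so that $\tilde J\leq J^\circ$. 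This fails: by Lemma~\ref{varest3}, $\hat\sigma_{J^\circ,J_2}(x)\asymp\tau_{J_2}\sqrt{J_2/n}$, and for any $J_2$ strictly between $J^\circ$ and $\hat J_{\max}$ the exponential growth of $\tau_J$ makes $\tau_{J_2}\sqrt{J_2/n}$ \emph{polynomially} small in $n$ (only at the cap itself is it of order $1/\sqrt{J_2\log J_2}$), whereas your bias bound $(J^\circ)^{-p/d}\asymp(\log n)^{-p/\varsigma}$ is only logarithmically small. So the normalized bias term $\sqrt{n}\,(J^\circ)^{-p/d}/\hat\sigma_{J^\circ,J_2}$ can diverge polynomially, the Lepski comparison at $J^\circ$ need not pass, and $\tilde J\leq J^\circ$ cannot be forced; indeed for self-similar $h_0$ the test typically fails at every $J$ below the second-largest grid point, so the relevant case is $\tilde J=\hat J_n$. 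Your stated "secondary subtlety" (checking $J^\circ$ exceeds the lower cutoff and that the step-4 bound is small enough) does not address this.

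The paper's proof restructures the severe case precisely to avoid this: it proves the rate \emph{directly} for $\hat h_{\hat J_n}$, where $\hat J_n$ is one grid step below $\hat J_{\max}$ — the key input being Lemma~\ref{lem:J0_severe}, which shows that because $\tau_J\asymp\exp(CJ^{\varsigma/d})$, the cutoff defined by $\tau_JJ\sqrt{\log J}\leq R\sqrt n$ and the one defined by $\tau_JJ^{p/d+1/2}\sqrt{\log J}\leq R\sqrt n$ differ by at most one grid step, so the standard-deviation term at $2^{-d}\bar J^*_{\max}$ is already below the bias level $J^{-p/d}$ — and then handles $\tilde J=\min\{\hat J,\hat J_n\}$ by noting that when $\hat J<\hat J_n$ the selection criterion at $\hat J$ with $J_2=\hat J_n$ bounds $\|\hat h_{\hat J}-\hat h_{\hat J_n}\|_\infty$ by $\theta^*_{1-\hat\alpha}\tau_{\hat J_n}\sqrt{\hat J_n/n}\lesssim(\log n)^{-p/\varsigma}$. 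Relatedly, your step 4 invokes the selection rule "at $(\tilde J,J^\circ)$", which is only legitimate when $\tilde J=\hat J$; you never address the $\min\{\hat J,\hat J_n\}$ structure of $\tilde J$ (in the mild case the paper shows $\hat J_n>J_0^+$ so $\tilde J=\hat J$ wpa1, while in the severe case $\tilde J=\hat J_n$ is the typical outcome). Finally, a smaller point: the rate for $\hat h_{J^\circ}$ cannot simply be cited from the fixed-$J$ theorems of \cite{chen2018optimal}, since you need it uniformly over $h_0\in\mc H^p$, $p\in[\ul p,\ol p]$ and over (random or $p$-dependent) dimensions up to $\bar J_{\max}$; this uniform-in-$J$ strengthening is exactly what the paper's Theorem~\ref{unifbiasvar} supplies and would have to be proved, not assumed.
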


We now show $\tilde J$ also leads to adaptive estimation of derivatives of $h_0$. Intuitively, estimating the derivative of $h_0$ inflates convergence rate of the (squared) bias and variance terms by the same factor (a power of $J$). Therefore, a rate-optimal choice of $J$ for estimating $h_0$ is also rate-optimal for estimating derivatives of $h_0$.

\begin{corollary}\label{cor:lepski2}
Let Assumptions \ref{a-data}-\ref{a-var} hold and let $a \in (\mb N_0)^d$ with $0 < |a| < \ul p$.
\begin{enumerate}[nosep]
\item[(i)] Suppose the model is mildly ill-posed. Then: there is a universal constant $C_{\ref{lepski2}}'$ for which
\[
 \sup_{p \in [\ul p,\ol p]} \sup_{h_0 \in \mc H^p} \mathbb{P}_{h_0} \bigg( \| \partial^a \hat{h}_{\tilde{J}} - \partial^a h_0   \|_{\infty} > C_{\ref{lepski2}}' \bigg( \frac{\log n}{n} \bigg)^{\frac{p-|a|}{2(p + \varsigma) + d}} \bigg) \rightarrow 0.
\]
\item[(ii)] Suppose the model is severely ill-posed.  Then: there is a universal constant $C_{\ref{lepski2}}'$ for which
\[
 \sup_{p \in [\ul p,\ol p]} \sup_{h_0 \in \mc H^p} \mathbb{P}_{h_0} \big(  \| \partial^a \hat{h}_{\tilde{J}} - \partial^a h_0  \|_{\infty} > C_{\ref{lepski2}}' (\log n)^{-(p-|a|) / \varsigma}   \big) \rightarrow 0.
\]
\end{enumerate}
\end{corollary}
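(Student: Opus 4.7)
The plan is to leverage the probabilistic sandwich estimate on $\tilde{J}$ that underlies Theorem~\ref{lepski2} and combine it with deterministic sup-norm risk bounds for the derivative estimator at a fixed sieve dimension. Concretely, I would proceed in three steps.

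First, I would extract from the proof of Theorem~\ref{lepski2} the key fact that, uniformly in $p \in [\ul p, \ol p]$ and $h_0 \in \mc H^p$, the data-driven choice $\tilde{J}$ lies in a deterministic interval $[\ul J(p), \ol J(p)]$ with $\ul J(p) \asymp \ol J(p) \asymp J^*(p)$, with probability approaching one. Here $J^*(p)$ denotes the ``oracle'' sieve dimension that balances sup-norm bias $\asymp J^{-p/d}$ and sup-norm standard deviation $\asymp \tau_J \sqrt{J \log n / n}$ for estimating $h_0$; in the mild regime $J^*(p) \asymp (n/\log n)^{d/(2(p+\varsigma)+d)}$, and in the severe regime it is defined by balancing an exponential $\tau_J$ against a polynomial bias. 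This sandwich estimate is the output of the Lepski-type comparison step in Procedure~1.

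Second, I would record deterministic bias and variance bounds for $\partial^a \hat h_J$ at a fixed $J$. For the bias, I would combine the sup-norm approximation properties of B-splines and CDV wavelets from Appendix~\ref{ax:basis} with Assumption~\ref{a-approx}(iii) (applied to the derivative class) together with the standard Markov-type inequality for basis derivatives, $\|\partial^a \psi^J\|_\infty \lesssim J^{|a|/d} \|\psi^J\|_\infty$, to obtain
\[
 \|\partial^a Q_J h_0 - \partial^a h_0\|_\infty \lesssim J^{|a|/d} \, \|h_0 - \Pi_J h_0\|_\infty \lesssim J^{-(p-|a|)/d}
\]
for $h_0 \in \mc H^p$. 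For the sampling error, a sup-norm Bernstein argument (identical in structure to that used in the proof of Theorem~\ref{lepski2}) combined with Assumption~\ref{a-var}(i) yields
\[
 \|\partial^a \hat h_J - \partial^a Q_J h_0\|_\infty = O_{\mb P}\!\left( J^{|a|/d} \, \tau_J \sqrt{J \log n / n}\right) \,,
\]
uniformly over $J$ in the relevant range. The decisive structural point is that the same Markov factor $J^{|a|/d}$ inflates bias and standard deviation simultaneously, so that the bias/variance trade-off is unchanged by differentiation.

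Third, I would combine the previous two steps. Since $\tilde J$ concentrates around $J^*(p)$ with high probability, inserting $\tilde J$ into the displays above gives a total sup-norm error of order
\[
 J^*(p)^{|a|/d} \, \Big( J^*(p)^{-p/d} + \tau_{J^*(p)}\sqrt{J^*(p) \log n / n} \Big) \asymp J^*(p)^{|a|/d} \cdot \mr{rate}(h_0;p),
\]
which reduces to $(\log n / n)^{(p-|a|)/(2(p+\varsigma)+d)}$ in the mild case and $(\log n)^{-(p-|a|)/\varsigma}$ in the severe case, as required. The main obstacle is not the bias-variance accounting, which is essentially forced once the Markov inequality for basis derivatives is in hand, but rather the verification that the constants in the sandwich interval for $\tilde J$ and in the Bernstein bound do not depend on $p$, so that the conclusion is genuinely uniform over $p \in [\ul p, \ol p]$. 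This uniformity is inherited from the corresponding uniformity in Theorem~\ref{lepski2}, provided one checks that $|a| < \ul p$ leaves the derivative-inflated variance still controlled by the same oracle $J^*(p)$ used for $h_0$ itself.
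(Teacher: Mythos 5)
There is a genuine gap, and it is in your Step 1. The ``sandwich'' you invoke --- that $\tilde J$ lies in $[\ul J(p),\ol J(p)]$ with both endpoints $\asymp J^*(p)$, with probability approaching one \emph{uniformly over $h_0\in\mc H^p$} --- is not what Theorem~\ref{lepski2} delivers, and it is in fact false over $\mc H^p$. The proof of Theorem~\ref{lepski2} only establishes the one-sided bound $\hat J\leq J_0^+$ (and $\tilde J=\hat J$) wpa1 over $\mc H^p$; the matching lower bound $\hat J\gtrsim J_0(p,E)$ is proven only over the self-similar class $\mc G^p$ (part (i), step 1 of the proof of Theorem~\ref{confmild}), where the condition $\ul B J^{-p/d}\leq \|h_0-\Pi_J h_0\|_\infty$ is exactly what prevents $\tilde J$ from being too small. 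Over $\mc H^p$ no such lower bound can hold: take $h_0$ lying exactly in a fixed low-dimensional sieve space (e.g.\ $h_0\equiv 0$); then all contrasts are pure noise, the Lepski criterion is satisfied at the smallest $J\in\hat{\mc J}$ with probability at least $1-\hat\alpha\to 1$, and $\tilde J$ stays bounded while $J^*(p)\to\infty$. Because of this, your Step 3 breaks down: the variance term is increasing in $J$ and so is controlled by the upper bound on $\tilde J$, but the bias term $\tilde J^{-(p-|a|)/d}$ is decreasing in $J$, so plugging $\tilde J$ into a generic bias bound requires precisely the missing lower bound. (In the example above the conclusion is still true because the actual bias is zero, but your argument, which replaces the actual bias by the worst-case rate $\tilde J^{-(p-|a|)/d}$, cannot see this.)

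The paper's proof avoids any lower bound on $\tilde J$ by anchoring the risk at the oracle dimension rather than at $\tilde J$: on the event $\{\hat J\leq J_0^+\}$ it writes $\|\partial^a\hat h_{\hat J}-\partial^a h_0\|_\infty \leq \|\partial^a(\hat h_{\hat J}-\hat h_{J_0^+})\|_\infty + \|\partial^a(\hat h_{J_0^+}-\tilde h_{J_0^+})\|_\infty + \|\partial^a\tilde h_{J_0^+}-\partial^a h_0\|_\infty$, applies the Bernstein/inverse estimate at level $J_0^+$ (all the relevant differences lie in $\Psi_{J_0^+}$ by nestedness, so the inflation factor is $(J_0^+)^{|a|/d}$, not $\tilde J^{|a|/d}$), bounds $\|\hat h_{\hat J}-\hat h_{J_0^+}\|_\infty$ by $\xi\theta^*_{1-\hat\alpha}$ times the contrast standard deviation at $J_0^+$ using the Lepski stopping rule (this is how small $\tilde J$ is handled without a lower bound), and evaluates the derivative bias at $J_0^+$, where $\|\partial^a\tilde h_{J_0^+}-\partial^a h_0\|_\infty\lesssim (J_0^+)^{(|a|-p)/d}$; the severe case is handled analogously by comparing with $\hat h_{\hat J_n}$ and using $\bar J^*_{\max}(R_2)\lesssim(\log n)^{d/\varsigma}$. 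Your Step 2 (the observation that the factor $J^{|a|/d}$ inflates bias and standard deviation alike, so the optimal $J$ is unchanged) is the right structural point and matches the paper's heuristic, but to make the proof go through over $\mc H^p$ you must replace the concentration-of-$\tilde J$ argument with this contrast-based comparison to $J_0^+$ (or restrict to $\mc G^p$, which would prove a weaker statement than the corollary claims).
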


\begin{remark} \normalfont
The convergence rates in Theorem~\ref{lepski2} and Corollary~\ref{cor:lepski2} are the minimax rates for estimating $h_0$ and $\partial^a h_0$ under sup-norm loss; see \cite{chen2018optimal}. Hence, $\hat h_{\tilde J}$ and $\partial^a \hat h_{\tilde J}$ converge at the minimax rate in both the mildly and severely ill-posed cases. Case (i) encompasses nonparametric regression as a special case with $\varsigma = 0$. To the best of our knowledge, Theorem \ref{lepski2} and Corollary \ref{cor:lepski2} are the first results on adaptive estimation in sup-norm for NPIV and, more generally, ill-posed inverse problems with unknown operator.
\end{remark}

\begin{remark} \normalfont
Our procedure requires the B-spline order $r$ to satisfy $r \geq \lfloor p \rfloor + 1$ for exact minimax rate adaptivity.
If the true $p$ is larger so that $r < \lfloor p \rfloor + 1$, then our method is still ``adaptive'' in the sense that it yields consistent estimates of $h_0$ and its derivatives without requiring prior knowledge of the true smoothness of $h_0$ or the strength of the instruments. In this case the data-driven estimators $\hat h_{\tilde J}$ and $\partial^a \hat h_{\tilde J}$ will converge at the rates presented in Theorem~\ref{lepski2} and Corollary~\ref{cor:lepski2} with $p = r$. 
Thus, our procedure should be attractive to applied researchers who often use a relatively low choice of $r$ in applications. For instance, \cite{ABB} use linear splines ($r = 2$). While in principle our method could be extended to let $r$ become large, known results from approximation theory imply that the basis becomes ill-conditioned (i.e., collinear) as $r$ increases (see, e.g., \cite{Lyche1978} and \cite{Scherer1999}). As a consequence, the resulting procedure would be less numerically stable than with smaller $r$.
\end{remark}

\subsection{Main Results: UCBs for $h_0$}\label{sec:4UCB}

It is known since \cite{low1997} that it is impossible to construct confidence bands that are simultaneously honest and adaptive over H\"older classes of different smoothness. As is standard following \cite{picard2000}, \cite{gine2010confidence}, \cite{bull2012honest}, \cite{chernozhukov2014anti}, and many others, we establish coverage guarantees over a ``generic'' subclass $\mc G$ of $\mc H$.
To describe $\mc G$, first note by the discussion in Appendix \ref{ax:besov} that there exists a constant $\ol B < \infty$ for which $\sup_{h \in \mc H^p} \|h - \Pi_J h\|_\infty \leq \ol B J^{- \frac pd}$ holds for all $J \in \mc T$ and all $p \in [\ul p, \ol p]$. For any small fixed $\ul B \in (0,\ol B)$ and any $\ul J \in \mc T$, we  define
\[
 \mathcal{G}^p = \left\{ h \in \mc H^p : \ul B J^{- \frac pd} \leq \|h - \Pi_J h\|_\infty  \mbox{ for all $J \in \T$ with $J \geq \ul J$}  \right\},~~~\mc G = \bigcup_{p \in [\ul p, \ol p]} \mc G^p\,.
\]
The class $\mc G$ is sometimes called a class of ``self-similar'' functions.
\cite{gine2010confidence,gine2016mathematical} present several results establishing the genericity of $\mc G$ in $\mc H$. Loosely speaking, their results say $\mc H^p \setminus (\cup_{\ul B > 0,\ul J \in \mc T} \mc G^p)$ is nowhere dense in $\mc H^p$ under the norm topology of $\mc H^p$. Thus, the set of functions in $\mc H^p$ but not in $\mc G^p$ for some $\ul B$ and $\ul J$ is topologically meagre. 

We say that a UCB $\{C_n(x) : x \in \mc X\}$ is \emph{honest} over $\mc G$ with level $\alpha$ if
\begin{equation} \label{eq:honest}
 \liminf_{n \to \infty}  \inf_{h_0 \in \mathcal{G}} \mathbb{P}_{h_0} \big(  h_0(x) \in C_n(x) \; \; \; \; \forall \; \; x \in \mathcal{X} \big) \geq 1 - \alpha \,,
\end{equation}
and \emph{adaptive} if for every $\epsilon > 0$ there exists a constant $D$ for which
\[
 \liminf_{n \to \infty} \inf_{p \in [\ul p,\ol p]}  \inf_{h_0 \in \mc G^p} \mathbb{P}_{h_0} \bigg( \sup_{x \in \mathcal{X}} |C_n(x)| \leq D r_n(p) \bigg) \geq 1 - \epsilon \,,
\]
where $|\,\cdot\,|$ is Lebesgue measure and $r_n(p)$ is the minimax sup-norm rate of estimation over $\mc H^p$. Let $C_n(x,A)$ denote the UCB from (\ref{band}) replacing $\hat A$ with a constant $A > 0$. Our first main result is that $C_n(x,A)$ is honest and adaptive in the mildly ill-posed case:

\begin{theorem}\label{confmild}
Let Assumptions \ref{a-data}-\ref{a-var} hold and suppose the model is mildly ill-posed. Then: there is a constant $A^* > 0$ (independent of $\alpha$) such that for all $A \geq A^*$, 
\begin{flalign*}
 \mbox{\textit{(i)}} & ~~~~\liminf_{n \rightarrow \infty} \inf_{h_0 \in \mathcal{G}} \mathbb{P}_{h_0} \big(  h_0(x) \in C_n(x,A) \; \; \; \; \forall \; \; x \in \mathcal{X} \big) \geq 1 - \alpha \,; && \\
 \mbox{\textit{(ii)}} & ~~~~ \inf_{p \in [\ul p,\ol p]}  \inf_{h_0 \in \mc G^p} \mathbb{P}_{h_0} \bigg( \sup_{x \in \mathcal{X}} |C_n(x,A)| \leq C_{\ref{confmild}} (1+A) \bigg(  \frac{\log n}{n} \bigg)^{\frac{p}{2(p + \varsigma) + d}} \bigg) \rightarrow 1 \,, &&
\end{flalign*}
where  $C_{\ref{confmild}}>0$ is a universal constant.
\end{theorem}

\begin{remark}\normalfont
Theorem \ref{confmild} shows that our UCBs are honest and adaptive in mildly ill-posed models (where $\tau_J \asymp J^{\varsigma/d}$) for all $\varsigma \geq 0$. Importantly, the researcher doesn't need to know the true instrument strength as measured by $\varsigma$ to implement our procedures.
\end{remark}

\begin{remark} \normalfont
As the mildly ill-posed case nests nonparametric regression as a special case with $\varsigma = 0$, Theorem \ref{confmild} shows that our UCBs are honest and adaptive for general nonparametric regression models with non-Gaussian, heteroskedastic errors.
\end{remark}

\begin{remark} \normalfont
The constant $A^*$ in Theorem \ref{confmild} depends implicitly on $\underline B$ and becomes large as $\underline B \downarrow 0$, coherent with the findings of \cite{armstrong2021adaptation} for Gaussian white noise models. This constant cannot be chosen in a data-dependent way (i.e., one cannot adapt to unknown $\underline B$). In practice, $A$ can actually be quite small to guarantee coverage for a fixed DGP---see the simulations in Section \ref{s:mcs}. The UCBs in Section~\ref{s:procedure} replace a fixed constant $A$ by $\hat A = \log \log \tilde J$, which increases no faster than $\log \log n$. These UCBs therefore have coverage guarantees over $\mc G$ defined for any small $\underline B > 0$ and contract  within a $\log \log n$ factor of the minimax rate.
\end{remark}

Theorem~\ref{confmild} establishes that the UCBs for $h_0$ in Procedure~2 is honest and adaptive in the mildly ill-posed case. We have found that the UCBs in Procedure~2 perform well in terms of coverage across many simulation designs including the severely ill-posed design in Appendix~\ref{appsec:engel}.
Nevertheless, for the severely ill-posed case, we can only establish valid coverage of the UCBs in Procedure~2 using the critical value $\mbox{cv}^*(x)$ corresponding to $\tilde J = \hat J_n$ case, i.e.,
\begin{equation} \label{eq:band-cv-2}
 \mbox{cv}^*(x) =
 z_{1-\alpha}^* + \hat A \max\{\theta^*_{1-\hat \alpha}\,,\, \tilde {J}^{-\ul p/d} /\hat \sigma_{\tilde J}(x) \}\,.
\end{equation}
The term $\tilde {J}^{-\ul p/d}$ bounds
the order of the bias term $\|\Pi_{\tilde J} h_0 - h_0  \|_{\infty}$, which accounts for the fact that the optimal choice of $J$ in severely ill-posed models is bias-dominating. This band reduces to the Procedure~2 UCB when $\theta^*_{1-\hat \alpha} \geq \tilde {J}^{-\ul p/d} /\hat \sigma_{\tilde J}(x)$ for all $x$.

\begin{remark}\normalfont
In our empirical application to estimating the intensive margin and its elasticity, the UCB (\ref{band}) using critical value (\ref{eq:band-cv-2}) reduces to the Procedure~2 band provided $\underline p \geq 0.7$. The condition $\underline p \geq 0.7$ is naturally satisfied as $h_0$ is assumed to be differentiable in order to estimate the elasticity.
\end{remark}

Let $C_n(x,A)$ denote the UCB (\ref{band}) with the critical value (\ref{eq:band-cv-2}), except replacing $\hat A$ with a constant $A > 0$.

\begin{theorem}\label{confsevere}
Let Assumptions~\ref{a-data}-\ref{a-var} hold and suppose the model is severely ill-posed. Then: there is a constant $A^* > 0$ (independent of $\alpha$) such that for all $A \geq A^*$, 
\begin{flalign*}
 \mbox{\textit{(i)}} & ~~~~\liminf_{n \rightarrow \infty} \inf_{h_0 \in \mathcal{G}} \mathbb{P}_{h_0} \big(  h_0(x) \in C_n(x,A) \; \; \; \; \forall \; \; x \in \mathcal{X} \big) \geq 1 - \alpha \,; && \\
 \mbox{\textit{(ii)}} & ~~~~ \inf_{p \in [\ul p,\ol p]}  \inf_{h_0 \in \mc G^p} \mathbb{P}_{h_0} \bigg( \sup_{x \in \mathcal{X}} |C_n(x,A)| \leq  C_{\ref{confsevere}} (1+A) (\log n)^{-\ul p/\varsigma} \bigg) \rightarrow 1\,, &&
\end{flalign*}
where $C_{\ref{confsevere}}>0$ is a universal constant.
\end{theorem}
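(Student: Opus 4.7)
The plan is to mirror the argument for Theorem~\ref{confmild}, with the crucial modification that the severely ill-posed regime demands additional bias control via the correction term $\tilde J^{-\ul p/d}/\hat \sigma_{\tilde J}(x)$ inside $\vartheta^*(x)$. In the severe regime, the bias $\|Q_{\tilde J} h_0 - h_0\|_\infty \asymp \tilde J^{-p/d}$ need not be negligible relative to the sampling standard deviation at $\tilde J$, so $z^*_{1-\alpha} \hat \sigma_{\tilde J}(x)$ alone cannot guarantee coverage---the extra term is what makes honesty possible here.

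For part (i), coverage, I would decompose $\hat h_{\tilde J}(x) - h_0(x) = (\hat h_{\tilde J}(x) - Q_{\tilde J} h_0(x)) + (Q_{\tilde J} h_0(x) - h_0(x))$. The first piece is a linear functional of the residual vector $\mathbf u$; by a Gaussian multiplier bootstrap strong approximation (applied as in the mild case, uniformly over $(x,J) \in \mc X \times \hat{\mc J}_-$), one obtains $\sup_{(x,J) \in \mc X \times \hat{\mc J}_-}|D_J(x)|/\hat \sigma_J(x) \leq z^*_{1-\alpha}$ with probability at least $1-\alpha+o(1)$. Since $\tilde J \in \hat{\mc J}_-$ by the definition of $\hat{\mc J}_-$, the stochastic part at $\tilde J$ is controlled. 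For the bias piece, Assumption~\ref{a-approx}(iii) together with $h_0 \in B_{\infty,\infty}^p(M)$ yields $\|Q_{\tilde J} h_0 - h_0\|_\infty \leq (1+C_Q) M \tilde J^{-p/d} \leq (1+C_Q) M \tilde J^{-\ul p/d}$ whenever $p \geq \ul p$. The key observation is that $A \vartheta^*(x) \hat \sigma_{\tilde J}(x) \geq A \tilde J^{-\ul p/d}$ for all $x$ by the definition of $\vartheta^*$, so taking $A \geq A^* := (1+C_Q)M$ absorbs the bias uniformly. Combining the two pieces yields (\ref{eq:honest}).

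For part (ii), the half-width is at most $z^*_{1-\alpha} \hat \sigma_{\tilde J}(x) + A \theta^*_{1-\hat\alpha} \hat \sigma_{\tilde J}(x) + A \tilde J^{-\ul p/d}$. I would first show that for $h_0 \in \mc G^p$, $\tilde J$ concentrates near the Lepski balance point with $\tilde J \asymp (\log n)^{d/\varsigma}$ with high probability, by combining Theorem~\ref{lepski2}(ii) (upper bound on the sup-norm error) with the self-similar lower bound $\|h_0 - \Pi_J h_0\|_\infty \geq \ul B J^{-p/d}$ (which prevents $\tilde J$ from being too small). This immediately gives $A \tilde J^{-\ul p/d} = O((\log n)^{-\ul p/\varsigma})$. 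Bounding the bootstrap quantiles $z^*_{1-\alpha}$ and $\theta^*_{1-\hat\alpha}$ via Gaussian maximal inequalities over the index set $\hat{\mc J}_-$, together with the envelope $\hat \sigma_{\tilde J}(x) \asymp \tau_{\tilde J}\sqrt{\tilde J/n}$ from Assumption~\ref{a-var}(i) and the definition of $\hat J_{\max}$, shows that the first two terms are dominated (up to polylogarithmic factors) by $(\log n)^{-p/\varsigma} \leq (\log n)^{-\ul p/\varsigma}$.

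The main obstacle is pinning down $\tilde J$ in the severe regime. Unlike the mild regime, where Lepski's method clearly separates bias- and variance-dominated scales, here the variance grows exponentially in $J^{\varsigma/d}$ near $\hat J_{\max}$, so the selection procedure depends delicately on $\hat s_J^{-1} \asymp \tau_J$ and on the bootstrap threshold $\theta^*_{1-\hat\alpha}$. Proving $\tilde J \asymp (\log n)^{d/\varsigma}$ requires (a) a lower bound showing that contrasts $|D_J - D_{J_2}|/\hat \sigma_{J,J_2}(x)$ exceed $1.1\theta^*_{1-\hat\alpha}$ below the balance point, for which the self-similarity built into $\mc G^p$ is essential, and (b) a matching upper bound above the balance point using the bootstrap strong approximation. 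A secondary subtlety is tracking the dependence of $A^*$ on $\ul B$ and the model constants $(M, C_Q)$ so that it can be taken finite but independent of $\alpha$, paralleling the role of $A^*$ in Theorem~\ref{confmild}.
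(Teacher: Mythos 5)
Your proposal follows essentially the same route as the paper's proof: a bias term of order $\tilde J^{-\ul p/d}$ absorbed by the $A\,\vartheta^*(x)\hat\sigma_{\tilde J}(x)$ correction (with $A^*$ determined by the bias constant), coverage via the multiplier bootstrap over the candidate set, and width controlled by localizing $\tilde J \asymp (\log n)^{d/\varsigma}$ --- lower bound from the self-similarity of $\mc G^p$, upper bound from $\hat J_{\max} \lesssim (\log n)^{d/\varsigma}$ --- together with $\sqrt{\log \hat J_{\max}}$ bounds on $z^*_{1-\alpha}$, $\theta^*_{1-\hat\alpha}$ and $\hat\sigma_{\tilde J}(x) \asymp \tau_{\tilde J}\sqrt{\tilde J/n}$. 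Two small repairs are needed to make your sketch airtight, both supplied by the paper's machinery. First, center the stochastic piece at $\tilde h_{\tilde J}$ (the estimator with $Y_i$ replaced by $h_0(X_i)$) rather than at $Q_{\tilde J}h_0$: $\hat h_J - Q_J h_0$ is \emph{not} a linear functional of the residual vector (it contains the sample-projection error $\tilde h_J - Q_J h_0$, which is of bias order), whereas $\hat h_J - \tilde h_J = (\psi^J_x)'\mf M_J \mf u$ is exactly what the bootstrap process mimics, and Theorem~\ref{unifbiasvar} gives $\|\tilde h_J - h_0\|_\infty \lesssim \|h_0 - \Pi_J h_0\|_\infty \lesssim \tilde J^{-\ul p/d}$ uniformly in $J$, which is the bias you then absorb with $A \geq A^*$. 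Second, in part (i) the step ``bootstrap approximation over $(x,J) \in \mc X \times \hat{\mc J}_-$'' cannot be applied directly to the random index set: as in the mild case one must sandwich by deterministic sets, and in the severe regime this sandwich itself requires the localization $\tilde J \geq c\, \bar J^*_{\max}(R_2)$ that you establish for part (ii), together with the observation that $\bar J^*_{\max}(R_1)$ and $\bar J^*_{\max}(R_2)$ differ by at most one grid point, so that $\tilde J$ lies in a fixed set $\ul{\mc J}_n$ contained in $\hat{\mc J}$ wpa1 and the quantile over $\hat{\mc J}_-$ dominates the quantile over $\ul{\mc J}_n$; this is precisely how the paper closes the coverage argument. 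Finally, note that at $p = \ul p$ you cannot afford genuinely extra polylogarithmic slack in the width bound, but none is needed: the $\sqrt{\log J}$ factors are already built into the definition of $\hat J_{\max}$ (via $M_0(p,R_2)$ in the paper), exactly as your appeal to that definition suggests.
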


Our recommended choice $\hat A =  \log \log \tilde J$ ensures that the UCBs are asymptotically valid over $\mc G$ for any $\underline B > 0$ and contract within a $\log \log n$ factor of the minimax rate if the true smoothness is $p = \ul p$, and within a $\log n$ factor of the minimax rate otherwise.

\begin{remark}\label{rmk:impossibility} \normalfont
If the true $p > \ul p $, then the factor $\tilde{J}^{- \ul p /d}$ is conservative and the UCB does not contract at the minimax rate. This raises the question as to whether it is possible to construct UCBs that are adaptive in severely ill-posed settings. As stated in Chapter $8.3$ of \cite{gine2016mathematical}, the existence of rate-adaptive UCBs implicitly requires the estimation of certain aspects of the unknown function, e.g. smoothness, to be feasible. In mildly ill-posed settings, the condition $h_0 \in \mathcal{G}^p$ is sufficient to ensure that $\tilde{J}$ diverges at the oracle rate $J_{0} \asymp (n/ \log n)^{d/(2(p + \varsigma) + d)}$. As it turns out, $\tilde J$ is sufficiently informative about the unknown smoothness $p$ to facilitate the construction of adaptive UCBs. In severely ill-posed models the oracle choice is $J_{0} =  (a \log n)^{d/\varsigma}$ for $0 < a < (2C)^{-1}$, which is independent of $p$. Therefore, the adaptivity of $\tilde J$ cannot be used to ascertain information about $p$. We conjecture that any UCB that is centered around an adaptive estimator that aims to mimic the oracle $ \hat{h}_{J_{0}} $ 
will likely face the same ``identifiability'' problem of recovering information about $p$ from $J_{0}$.
\end{remark}

\subsection{Main Results: UCBs for Derivatives}\label{sec:4UCBde}

We now present an analogous set of results for data-driven UCBs for derivatives of $h_0$. Here we require an additional regularity condition similar to Assumption \ref{a-var}(i), which is only needed for the results in this subsection. Let  $\|\sigma^a_{x,J}\|^2 =  (\partial^a \psi^J(x))'[S_J' G_{b,J}^{-1} S_J^{\phantom \prime}]^{-1} (\partial^a \psi^J(x))$.

\setcounter{assumption}{3}

\begin{assumption}[continued]
(iii)~There exist constants $\ul c, \ol C > 0$ for which $\underline{c} \tau_J^2 J^{1+2|a|/d} \leq  \inf_{x \in \mathcal{X}} \|  \sigma^a_{x,J}  \|^2 \leq  \sup_{x \in \mathcal{X}} \| \sigma^a_{x,J} \|^2 \leq \overline{C} \tau_J^2 J^{1+2|a|/d}$ for all $J \in \T$.
\end{assumption}

We first present results for the mildly ill-posed case. Let $C_n^a(x,A)$ denote the UCB $C_n^a(x)$ from (\ref{band-derivative}) when $\hat A$ is replaced by a constant $A > 0$.

\begin{theorem}\label{confmild-derivative}
Let Assumptions \ref{a-data}-\ref{a-var} hold, $|a| < \ul p$, and suppose the model is mildly ill-posed. Then: there is a constant $A^* > 0$ (independent of $\alpha$) such that for all $A \geq A^*$,
\begin{flalign*}
 \mbox{\textit{(i)}} & ~~~~\liminf_{n \rightarrow \infty} \inf_{h_0 \in \mathcal{G}} \mathbb{P}_{h_0} \big( \partial^a h_0(x) \in C_n^a(x,A) \; \; \; \; \forall \; \; x \in \mathcal{X} \big) \geq 1 - \alpha \,; && \\
 \mbox{\textit{(ii)}} & ~~~~ \inf_{p \in [\ul p,\ol p]}  \inf_{h_0 \in \mc G^p} \mathbb{P}_{h_0} \bigg( \sup_{x \in \mathcal{X}} |C_n^a(x,A)| \leq C_{\ref{confmild-derivative}}(1+A) \bigg(  \frac{\log n}{n} \bigg)^{\frac{p-|a|}{2(p + \varsigma) + d}} \bigg) \rightarrow 1 \,, &&
\end{flalign*}
where  $C_{\ref{confmild-derivative}}>0$ is a universal constant.
\end{theorem}

\begin{remark} \normalfont
As the mildly ill-posed case nests nonparametric regression as a special case, our UCBs are honest and adaptive for derivatives of $h_0$ in general nonparametric regression models with non-Gaussian, heteroskedastic errors.
\end{remark}

As in the previous subsection,
for the severely ill-posed case, we can only establish valid coverage of the UCB (\ref{band-derivative}) using the critical value $\mbox{cv}^{a*}(x)$ corresponding to $\tilde J = \hat J_n$, i.e.,
\begin{equation} \label{cv-derivative-2}
 \mbox{cv}^{a*}(x) =
 z_{1-\alpha}^{a*} + \hat A \max\{\theta^*_{1-\hat \alpha}\,,\, \tilde {J}^{(|a|-\ul p)/d} /\hat \sigma_{\tilde J}^a(x) \} \,.
\end{equation}
This band reduces to the Procedure~2$^\prime$ UCB when $\theta^*_{1-\hat \alpha} \geq \tilde {J}^{|a|-\ul p/d} /\hat \sigma_{\tilde J}^a(x)$ for all $x$, which is the case in our empirical application.

Let $C_n^a(x,A)$ denote the band (\ref{band-derivative}) with critical value (\ref{cv-derivative-2}) when $\hat A$ is replaced by a constant $A>0$.

\begin{theorem}\label{confsevere-derivative}
Let Assumptions \ref{a-data}-\ref{a-var} hold, $|a| < \ul p$, and suppose the model is severely ill-posed. Then: there is a constant $A^* > 0$ (independent of $\alpha$) such that for all $A \geq A^*$, 
\begin{flalign*}
 \mbox{\textit{(i)}} & ~~~~\liminf_{n \to \infty} \inf_{h_0 \in \mathcal{G}} \mathbb{P}_{h_0} \big( \partial^a h_0(x) \in C_n^a(x,A) \; \; \; \; \forall \; \; x \in \mathcal{X} \big) \geq 1 - \alpha \,; && \\
 \mbox{\textit{(ii)}} & ~~~~ \inf_{p \in [\ul p,\ol p]}  \inf_{h_0 \in \mc G^p} \mathbb{P}_{h_0} \bigg( \sup_{x \in \mathcal{X}} |C_n^a (x,A)| \leq  C_{\ref{confsevere-derivative}} (1+A) (\log n)^{(|a|-\ul p)/\varsigma} \bigg) \rightarrow 1\,, &&
\end{flalign*}
where  $C_{\ref{confsevere-derivative}}>0$  is a universal constant.
\end{theorem}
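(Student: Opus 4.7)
The strategy parallels the proof of Theorem~\ref{confsevere} for $h_0$, but incorporates the derivative-level bias and variance scalings already developed for Theorem~\ref{confmild-derivative} and Corollary~\ref{cor:lepski2}(ii). First I decompose
\[
 \partial^a \hat h_{\tilde J}(x) - \partial^a h_0(x) = D^a_{\tilde J}(x) + R^a_{\tilde J}(x) + \big(\partial^a Q_{\tilde J} h_0(x) - \partial^a h_0(x)\big),
\]
where $R^a_{\tilde J}$ is a negligible stochastic remainder (the gap between $\hat h_J$ and the infeasible best TSLS projection $Q_J h_0$ plus the projection of $u$ onto the instrument space). The deterministic bias is handled by combining Assumption~\ref{a-approx}(iii) with the Bernstein-type inverse inequality $\|\partial^a g\|_\infty \lesssim J^{|a|/d}\|g\|_\infty$ valid on $\Psi_J$ for B-splines and CDV wavelets, together with the standard Besov approximation estimate $\|\partial^a(\Pi_J h_0 - h_0)\|_\infty \lesssim J^{(|a|-p)/d}$. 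This yields $\|\partial^a Q_{\tilde J} h_0 - \partial^a h_0\|_\infty \lesssim \tilde J^{(|a|-p)/d} \leq \tilde J^{(|a|-\ul p)/d}$ uniformly over $h_0 \in \mc H^p$ and $p \in [\ul p,\ol p]$, since $p \geq \ul p$.

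By construction $\vartheta^{a*}(x)\hat\sigma^a_{\tilde J}(x) \geq \tilde J^{(|a|-\ul p)/d}$, so choosing $A^*$ larger than the implied constant (which depends only on the primitive constants together with the ratio $\ol C/\ul c$ from Assumption~\ref{a-var}(iii)) absorbs the bias into the $A\vartheta^{a*}(x)\hat\sigma^a_{\tilde J}(x)$ margin of the UCB. The stochastic piece is handled exactly as in Theorem~\ref{confmild-derivative}: on a high-probability event that $\tilde J$ lies in a deterministic window $[\ul J_n,\ol J_n]\subset\hat{\mc J}_-$, a Gaussian strong approximation for the weighted TSLS residual process couples $\sup_x |D^a_{\tilde J}(x)|/\hat\sigma^a_{\tilde J}(x)$ with $\sup_x |D^{a*}_{\tilde J}(x)|/\hat\sigma^a_{\tilde J}(x)$ uniformly in $J\in\hat{\mc J}_-$. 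Multiplier-bootstrap validity together with Gaussian anti-concentration then yields $\mathbb{P}_{h_0}(\sup_x |D^a_{\tilde J}(x)|/\hat\sigma^a_{\tilde J}(x) > z^{a*}_{1-\alpha}) \leq \alpha + o(1)$, giving part (i).

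For part (ii) I bound the width pointwise by $2(z^{a*}_{1-\alpha} + A\theta^*_{1-\hat\alpha})\hat\sigma^a_{\tilde J}(x) + 2A\tilde J^{(|a|-\ul p)/d}$. In the severely ill-posed regime, the analysis of $\tilde J$ underlying Theorem~\ref{lepski2}(ii) shows that, with probability tending to one uniformly over $h_0 \in \mc G$, $\tilde J^{\varsigma/d} \asymp \alpha_\star \log n$ for some $\alpha_\star$ satisfying $2C\alpha_\star < 1$, so $\tau_{\tilde J}^2 \lesssim n^{2C\alpha_\star}$. Combined with Assumption~\ref{a-var}(iii), $\hat\sigma^a_{\tilde J}(x) \asymp \tau_{\tilde J}\sqrt{\tilde J^{1+2|a|/d}/n}$ decays polynomially in $n$, while $z^{a*}_{1-\alpha}$ and $\theta^*_{1-\hat\alpha}$ grow at most polylogarithmically; hence $(z^{a*}_{1-\alpha} + A\theta^*_{1-\hat\alpha})\hat\sigma^a_{\tilde J}(x) = o((\log n)^{(|a|-\ul p)/\varsigma})$. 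The remaining term satisfies $\tilde J^{(|a|-\ul p)/d} \lesssim (\log n)^{(|a|-\ul p)/\varsigma}$, delivering the claimed width rate.

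The main obstacle, as in Theorem~\ref{confsevere}, will be localizing $\tilde J$ in the severely ill-posed regime with high probability uniformly over $\mc G$: because $\theta^*_{1-\hat\alpha}$ is only of order $\sqrt{\log\log n}$ while the variance decays polynomially and the bias only logarithmically, showing that the Lepski contrasts $(D_J - D_{J_2})/\hat\sigma_{J,J_2}$ are suitably small at $\tilde J$ and suitably large just below it requires careful two-sided control. The self-similarity condition built into $\mc G^p$ provides the lower pinning, and I plan to reuse the deterministic bias/variance decomposition together with the concentration bounds developed for Theorem~\ref{lepski2}(ii), then inflate by the $J^{|a|/d}$ factor for the derivative widths. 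Once $\tilde J$ is localized, the polynomial decay of the variance in the severe regime makes the remaining width bookkeeping largely automatic.
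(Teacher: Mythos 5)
Your proposal follows essentially the same route as the paper: the derivative bias bound of order $\tilde{J}^{(|a|-\ul p)/d}$ (via the Besov approximation plus Bernstein/inverse estimate, i.e.\ the Corollary-3.1-type argument) is absorbed into the $A\,\vartheta^{a*}(x)\hat\sigma^a_{\tilde J}(x)$ margin of the second band, coverage follows from two-sided localization of $\tilde J$ (upper end from the $\hat J_{\max}$ analysis, lower end from self-similarity of $\mc G$) combined with the uniform-in-$J$ multiplier-bootstrap coupling for the derivative process (the paper's Lemma~\ref{consistent-derivative}) and monotonicity of the critical value, and the width combines $\tilde J \asymp (\log n)^{d/\varsigma}$ with the bias cap. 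The only minor difference is in the stochastic part of the width: you argue $\hat\sigma^a_{\tilde J}$ is polynomially small in $n$ because $\tilde J$ sits at least one dyadic step below $\bar J^*_{\max}(R_2)$, whereas the paper bounds that term by $(\bar J^*_{\max}(R_2))^{(|a|-p)/d}$ using Lemma~\ref{lem:J0_severe}; both arguments deliver the claimed $(\log n)^{(|a|-\ul p)/\varsigma}$ rate.
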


\section{Additional Simulations}\label{s:mcs}

In this section we present two additional simulation studies. The first is a nonparametric IV design with a non-monotonic, non-Lipschitz structural function. The second is a very wiggly nonparametric regression design, which shows that $\tilde J$ can choose a relatively high-dimensional model when needed. Finally, Appendix~\ref{appsec:engel} presents a third set of simulations in an empirically calibrated Engel curve design which is severely ill-posed.

\subsection{Nonparametric IV Design}

This design features a non-monotonic, non-Lipschitz structural function. We first draw $(U,V)$ from a bivariate normal distribution with mean zero, unit variance, and correlation $0.75$, and draw $Z \sim N(0,1)$ independent of $(U,V)$. We then set $W = \Phi(Z)$ where $\Phi(\cdot)$ denotes the standard normal CDF, $X = \Phi( D(Z + V) + (1-D) V)$ where $D$ is an independent Bernoulli random variable taking the values $0$ and $1$ each with probability $0.5$, and
\begin{equation} \label{eq:npiv_mc}
 Y = \sin(4 X) \log (X) + U\,.
\end{equation}
The structural function $h_0(x) = \sin (4x) \log(x)$ is plotted in Figure~\ref{fig:npiv_mc}. Note that the derivative of $h_0$ diverges to $-\infty$ as $x \downarrow 0$. Therefore, $h_0$ is H\"older continuous with exponent $p$ for any $p < 1$, but not Lipschitz continuous.

For each simulated data set we compute our data-driven estimator $\hat h_{\tilde J}$ and UCBs from (\ref{band}). We compare these with estimators and UCBs using deterministic choices of sieve dimensions for $J = 4$, $5$, $7$, and $11$ (the first few dimensions over which our procedure searches). We again use a cubic B-spline basis to approximate $h_0$ and a quartic B-spline for the reduced form.

\begin{table}[t]
\begin{center}
\caption{\label{tab:npiv_mc} Simulation Results for the Nonparametric IV Design (\ref{eq:npiv_mc}).}
{\small
\begin{tabular}{ccccccccccccc} \hline \hline \\[-10pt]
  & & \multicolumn{2}{c}{Data-driven}  & & \multicolumn{8}{c}{Deterministic} \\\cline{3-4} \cline{6-13} \\[-10pt]
  & & & & & \multicolumn{2}{c}{$J = 4$} & \multicolumn{2}{c}{$J = 5$} & \multicolumn{2}{c}{$J = 7$} & \multicolumn{2}{c}{$J = 11$} \\ \\[-10pt] \hline \\[-10pt]
  \multicolumn{13}{c}{Sup-norm Loss} \\ \\[-10pt]
$n$ & & mean & med. & & mean & med. & mean & med. & mean & med. & mean & med. \\ \\[-10pt]
 \phantom{1}1250 & & 0.541 & 0.491 & & 0.539 & 0.489 & 0.678 & 0.630 & 1.087 & 1.000 & 1.524 & 1.422 \\
 \phantom{1}2500 & & 0.395 & 0.360 & & 0.393 & 0.359 & 0.486 & 0.451 & 0.890 & 0.835 & 1.342 & 1.283 \\
 \phantom{1}5000 & & 0.323 & 0.292 & & 0.319 & 0.291 & 0.367 & 0.345 & 0.761 & 0.696 & 1.231 & 1.169 \\
10000 & & 0.262 & 0.241 & & 0.256 & 0.239 & 0.270 & 0.255 & 0.623 & 0.556 & 1.186 & 1.136 \\ \\[-10pt] \hline \\[-10pt]
  \multicolumn{13}{c}{UCB Coverage} \\ \\[-10pt]
 & & 90\% & 95\% & & 90\% & 95\% & 90\% & 95\% & 90\% & 95\% & 90\% & 95\% \\ \\[-10pt]
 \phantom{1}1250 & & 0.997 & 0.999 & & 0.816 & 0.892 & 0.930 & 0.974 & 0.951 & 0.978 & 0.967 & 0.984 \\
 \phantom{1}2500 & & 0.995 & 0.997 & & 0.744 & 0.859 & 0.910 & 0.950 & 0.956 & 0.983 & 0.978 & 0.991 \\
 \phantom{1}5000 & & 0.978 & 0.992 & & 0.566 & 0.724 & 0.881 & 0.947 & 0.937 & 0.976 & 0.975 & 0.989 \\
10000 & & 0.908 & 0.949 & & 0.324 & 0.470 & 0.847 & 0.921 & 0.935 & 0.986 & 0.967 & 0.989 \\  \\[-10pt] \hline \\[-10pt]
  & & & & & \multicolumn{8}{c}{95\% UCB Relative Width (Deterministic/Data-driven)} \\ \cline{6-13} \\[-10pt]
  & &  &  & & mean & med. & mean & med. & mean & med. & mean & med. \\ \\[-10pt]
 \phantom{1}1250 & & & & & 0.658 & 0.663 & 0.925 & 0.897 & 1.502 & 1.451 & 2.122 & 2.046 \\
 \phantom{1}2500 & & & & & 0.661 & 0.665 & 0.923 & 0.908 & 1.790 & 1.731 & 2.554 & 2.502 \\
 \phantom{1}5000 & & & & & 0.663 & 0.668 & 0.917 & 0.914 & 2.255 & 2.158 & 3.286 & 3.228 \\
10000 & & & & & 0.661 & 0.668 & 0.913 & 0.914 & 2.830 & 2.757 & 4.515 & 4.445 \\ \\[-10pt] \hline
\end{tabular}
}
\end{center}
\vskip -14pt
\end{table}

\begin{table}[t]
\begin{center}
\caption{\label{tab:npiv_2_A} Coverage of 95\% UCBs $C_n(x, A)$, Nonparametric IV Design (\ref{eq:npiv_mc}).}
{\small
\begin{tabular}{ccccccccccccccc}
\hline \hline \\[-10pt]
 &  \multicolumn{13}{c}{$A$} \\  \cline{3-15} \\[-10pt]
 $n$  & & $0.00$ & $0.01$ & $0.05$ & $0.10$ & $0.20$ & $0.30$ & $0.40$ & $0.50$ & $0.60$ & $0.70$ & $0.80$ & $0.90$ & $1.00$ \\ \\[-10pt]
\phantom{1}1250 & & 0.96 & 0.97 & 0.98 & 0.99 & 1.00 & 1.00 & 1.00 & 1.00 & 1.00 & 1.00 & 1.00 & 1.00 & 1.00 \\
\phantom{1}2500 & & 0.94 & 0.94 & 0.95 & 0.97 & 0.98 & 0.99 & 1.00 & 1.00 & 1.00 & 1.00 & 1.00 & 1.00 & 1.00 \\
\phantom{1}5000 & & 0.87 & 0.88 & 0.91 & 0.94 & 0.97 & 0.99 & 0.99 & 1.00 & 1.00 & 1.00 & 1.00 & 1.00 & 1.00 \\
10000 & & 0.72 & 0.73 & 0.78 & 0.83 & 0.91 & 0.95 & 0.97 & 0.99 & 0.99 & 1.00 & 1.00 & 1.00 & 1.00 \\ \\[-10pt] \hline
\end{tabular}
}
\end{center}
\vskip -14pt
\end{table}

The first panel of Table~\ref{tab:npiv_mc} presents the average sup-norm loss of $\hat h_{\tilde J}$ across simulations. These are of similar magnitude to the loss for deterministic-$J$ estimates with $J = 4$ and $5$ and are much smaller than the loss with $J = 7$ and $11$. Our data-driven UCBs demonstrate valid but slightly conservative coverage for smaller $n$ and coverage close to nominal coverage for $n = 10000$. Bands with $J = 4$ have poor coverage while bands with $J = 5$ have valid coverage for the smaller sample sizes but under-cover for $n = 10000$. It seems $J = 7$ or $J = 11$ is required to have valid coverage for $n = 10000$ in this design. Note that while our bands are slightly conservative for smaller $J$, they are only about 10\% wider than the $J = 5$ bands, and less than half the width of the $J = 7$ bands.

In Figure~\ref{fig:npiv_mc} we plot data-driven estimates and UCBs for $h_0$ and its derivative over $[0.01, 0.99]$ for a sample of size 2500, alongside deterministic-$J$ estimates and UCBs. In this sample, $\tilde J = 4$ and our data-driven UCBs contain the true structural function. The data-driven bands are narrower and more accurately convey the shape of $h_0$ than the $J = 7$ bands, which are much more wiggly. Our bands are also of a similar width to (but are less wiggly than) the $J = 5$ bands. Panel~(d) of Figure~\ref{fig:npiv_mc} also presents data-driven estimates and UCBs for the conditional mean of $Y$ given $X$. Here the data-driven choice is again $\tilde J =4$. The true structural function falls outside the UCBs for the conditional mean function over almost all of the support of $X$, highlighting the importance of estimating $h_0$ using IV methods in this design.

Finally, in Table~\ref{tab:npiv_2_A} we present the coverage of our data-driven UCBs $C_n(x,A)$ where we replace $\hat A = \log \log \tilde J$ with a deterministic choice $A$ ranging over $[0,1]$. For this design, $A \geq 0.3$ suffices for correct coverage. In particular,  $\hat A = \log \log \tilde J$ yields correct coverage.

\begin{figure}[p]
\begin{center}

\begin{subfigure}[t]{\textwidth}
\caption{Data-driven Estimates and UCBs}
\begin{center}
\vskip -10pt
\begin{subfigure}[t]{0.45\textwidth}
\begin{center}
\includegraphics[width=\linewidth]{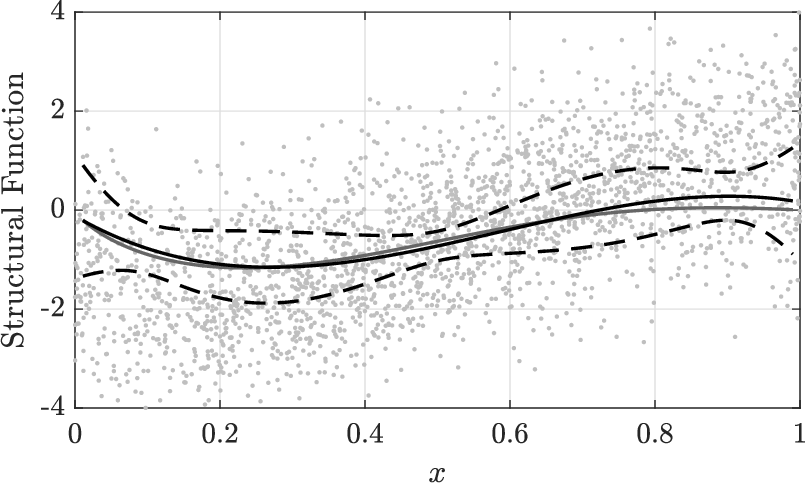}
\end{center}
\end{subfigure}
\begin{subfigure}[t]{0.45\textwidth}
\begin{center}
\includegraphics[width=\linewidth]{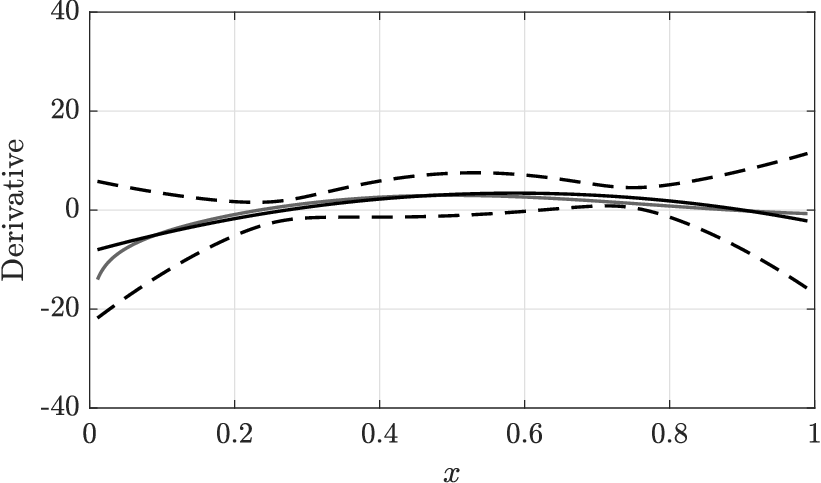}
\end{center}
\end{subfigure}
\end{center}
\end{subfigure}
\vskip -4pt

\begin{subfigure}[t]{\textwidth}
\caption{Estimates and UCBs with $J = 5$}
\begin{center}
\vskip -10pt
\begin{subfigure}[t]{0.45\textwidth}
\begin{center}
\includegraphics[width=\linewidth]{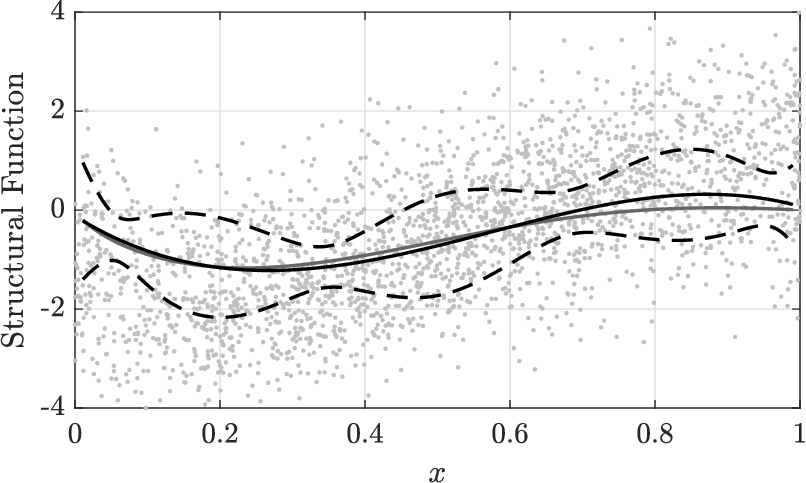}
\end{center}
\end{subfigure}
\begin{subfigure}[t]{0.45\textwidth}
\begin{center}
\includegraphics[width=\linewidth]{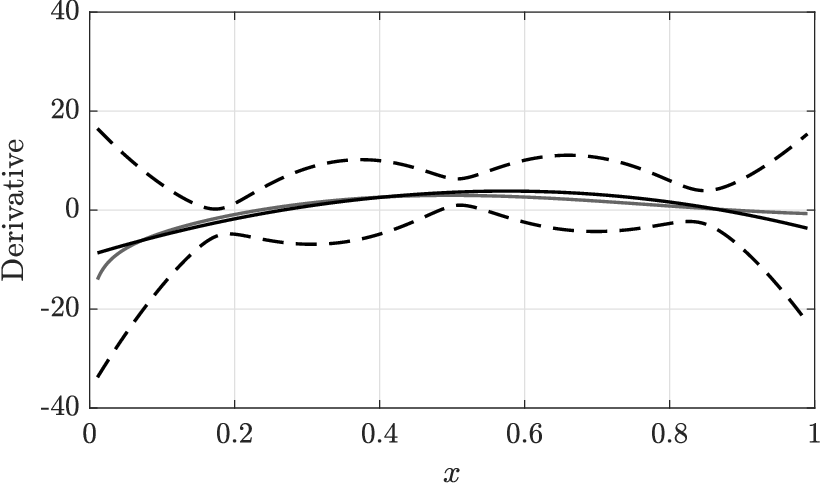}
\end{center}
\end{subfigure}
\end{center}
\end{subfigure}
\vskip -4pt

\begin{subfigure}[t]{\textwidth}
\caption{Estimates and UCBs with $J = 7$}
\begin{center}
\vskip -10pt
\begin{subfigure}[t]{0.45\textwidth}
\begin{center}
\includegraphics[width=\linewidth]{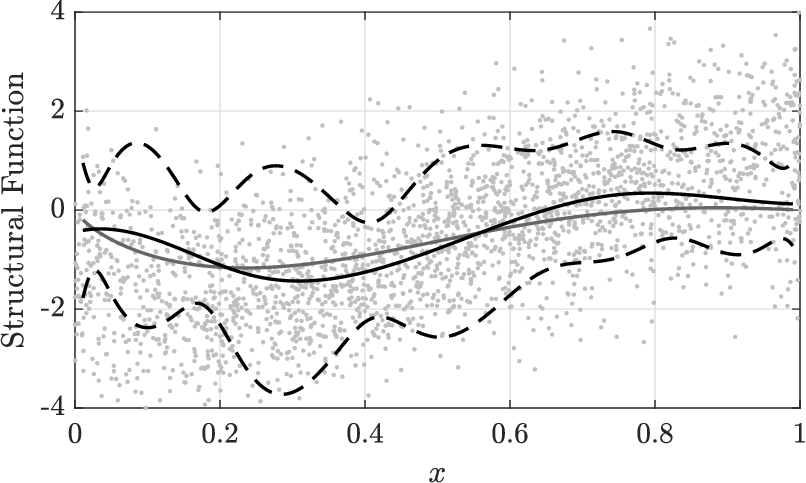}
\end{center}
\end{subfigure}
\begin{subfigure}[t]{0.45\textwidth}
\begin{center}
\includegraphics[width=\linewidth]{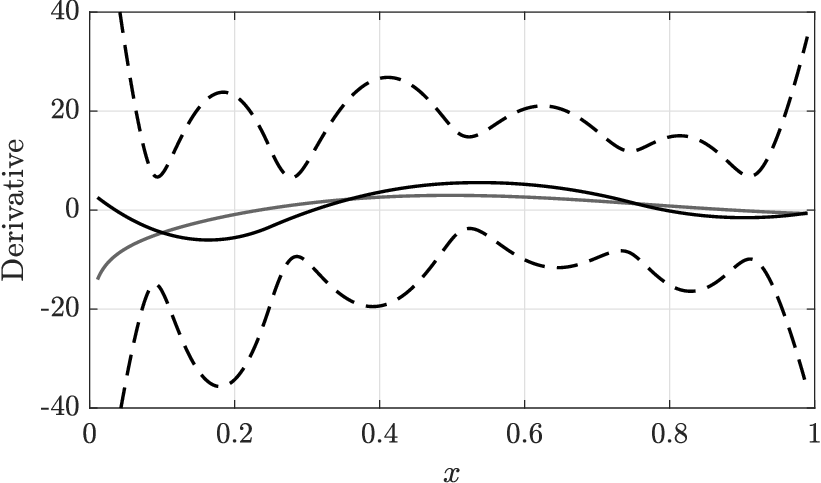}
\end{center}
\end{subfigure}
\end{center}
\end{subfigure}
\vskip -4pt

\begin{subfigure}[t]{\textwidth}
\caption{Data-driven Estimates and UCBs for the Conditional Mean of $Y$ given $X$}
\begin{center}
\vskip -10pt
\begin{subfigure}[t]{0.45\textwidth}
\begin{center}
\includegraphics[width=\linewidth]{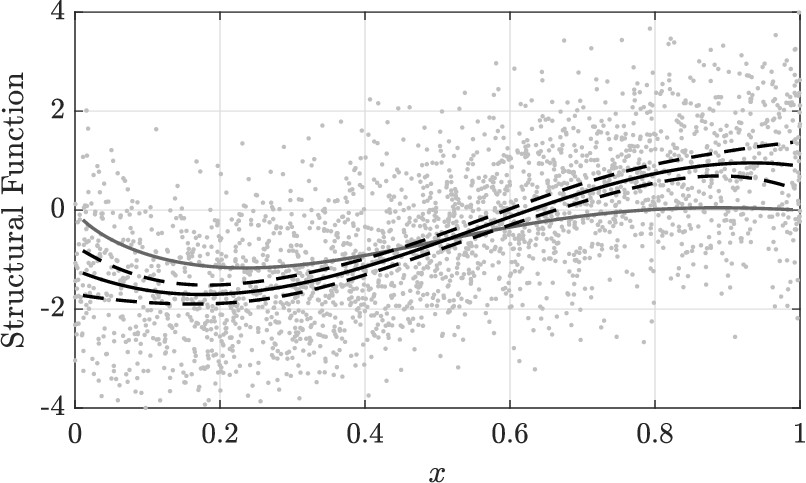}
\end{center}
\end{subfigure}
\begin{subfigure}[t]{0.45\textwidth}
\begin{center}
\includegraphics[width=\linewidth]{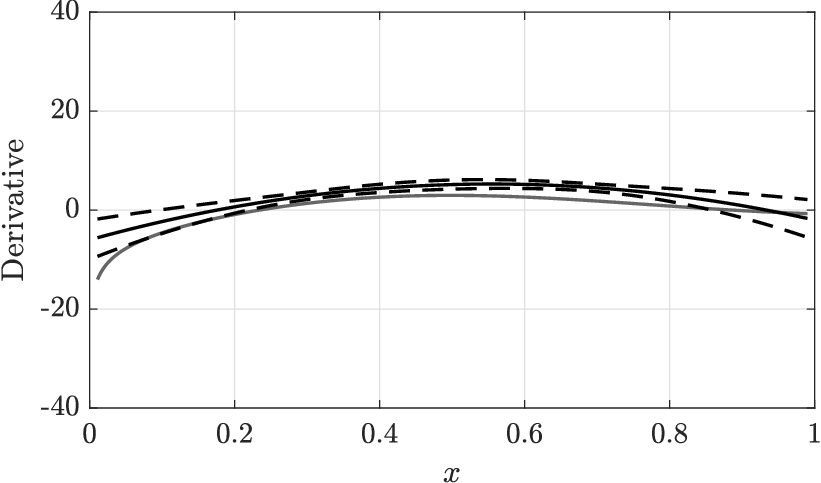}
\end{center}
\end{subfigure}
\end{center}
\end{subfigure}

\vskip -6pt

\caption{\label{fig:npiv_mc} Nonparametric IV design (\ref{eq:npiv_mc}): Plots for a sample of size $n = 2500$. Left panels correspond to the structural function, right panels correspond to its derivative. \emph{Note:} Solid grey lines are the true structural function and derivative; solid black lines are estimates, dashed black lines are 95\% UCBs. Supports are truncated to $[0.01, 0.99]$ as the derivative is unbounded as $x \downarrow 0$.}
\vskip -15pt
\end{center}
\end{figure}

\subsection{Nonparametric Regression Design}\label{s:mc_npr}

For this design we simulate $X \sim U[0,1]$ and $U \sim N(0,1)$ independently, then set
\begin{equation} \label{eq:reg_mc}
 Y = \sin(15 \pi X) \cos (X) + U \,.
\end{equation}
Here $h_0(x) = \sin(15 \pi x) \cos (x)$ is very wiggly over $[0,1]$ and requires a high value of $J$ to be selected in order to well approximate $h_0$ (see Figure~\ref{fig:regression}). While $h_0$ is infinitely differentiable, its Lipschitz constant is at least $47.1$, the Lipschitz constant of its derivative is at least $2220$, and Lipschitz constants grow rapidly for higher derivatives.

We again compare our data-driven estimator and UCBs using the procedures described in Appendix \ref{sec:regression} with estimators and UCBs that use deterministic choices of $J$ for $J = 11$, $19$, $35$, and $67$ (these are a subset of values over which our procedure searches). We again use cubic B-splines to approximate $h_0$.

\begin{table}[t]
\begin{center}
\caption{\label{tab:regression} Simulation Results for the Nonparametric Regression Design (\ref{eq:reg_mc}).}
{\small
\begin{tabular}{ccccccccccccc} \hline \hline \\[-10pt]
  & & \multicolumn{2}{c}{Data-driven}  & & \multicolumn{8}{c}{Deterministic} \\\cline{3-4} \cline{6-13} \\[-10pt]
  & & & & & \multicolumn{2}{c}{$J = 11$} & \multicolumn{2}{c}{$J = 19$} & \multicolumn{2}{c}{$J = 35$} & \multicolumn{2}{c}{$J = 67$} \\ \\[-10pt] \hline \\[-10pt]
  \multicolumn{13}{c}{Sup-norm Loss} \\ \\[-10pt]
$n$ & & mean & med. & & mean & med. & mean & med. & mean & med. & mean & med. \\ \\[-10pt]
 \phantom{1}1250 & & 0.778 & 0.650 & & 1.242 & 1.175 & 0.808 & 0.732 & 0.671 & 0.591 & 1.111 & 0.898 \\
 \phantom{1}2500 & & 0.490 & 0.423 & & 1.182 & 1.133 & 0.705 & 0.650 & 0.483 & 0.415 & 0.698 & 0.603 \\
 \phantom{1}5000 & & 0.347 & 0.303 & & 1.140 & 1.109 & 0.641 & 0.608 & 0.332 & 0.294 & 0.486 & 0.426 \\
10000 & & 0.236 & 0.209 & & 1.113 & 1.095 & 0.606 & 0.585 & 0.233 & 0.206 & 0.330 & 0.291 \\  \\[-10pt] \hline \\[-10pt]
  \multicolumn{13}{c}{UCB Coverage} \\ \\[-10pt]
 & & 90\% & 95\% & & 90\% & 95\% & 90\% & 95\% & 90\% & 95\% & 90\% & 95\% \\ \\[-10pt]
 \phantom{1}1250 & & 0.999 & 0.999 & & 0.000 & 0.000 & 0.000 & 0.000 & 0.790 & 0.864 & 0.627 & 0.713 \\
 \phantom{1}2500 & & 1.000 & 1.000 & & 0.000 & 0.000 & 0.000 & 0.000 & 0.847 & 0.899 & 0.776 & 0.857 \\
 \phantom{1}5000 & & 1.000 & 1.000 & & 0.000 & 0.000 & 0.000 & 0.000 & 0.857 & 0.909 & 0.845 & 0.910 \\
10000 & & 1.000 & 1.000 & & 0.000 & 0.000 & 0.000 & 0.000 & 0.889 & 0.936 & 0.867 & 0.934 \\ \\[-10pt] \hline \\[-10pt]
  & & & & & \multicolumn{8}{c}{95\% UCB Relative Width (Deterministic/Data-driven)} \\ \cline{6-13} \\[-10pt]
  & & &  & & mean & med. & mean & med. & mean & med. & mean & med. \\ \\[-10pt]
 \phantom{1}1250 & &  &  & & 0.217 & 0.209 & 0.287 & 0.279 & 0.410 & 0.405 & 0.616 & 0.582 \\
 \phantom{1}2500 & &  &  & & 0.206 & 0.206 & 0.279 & 0.279 & 0.401 & 0.405 & 0.603 & 0.599 \\
 \phantom{1}5000 & &  &  & & 0.190 & 0.191 & 0.256 & 0.260 & 0.374 & 0.382 & 0.565 & 0.568 \\
10000 & &  &  & & 0.195 & 0.196 & 0.261 & 0.262 & 0.380 & 0.383 & 0.573 & 0.572 \\  \\[-10pt] \hline
\end{tabular}
}
\end{center}
\vskip -20pt
\end{table}

It is clear from the simulation results presented in Table~\ref{tab:regression} that $J > 19$ is required to well approximate the true $h_0$. The average sup-norm loss of $\hat h_{\tilde J}$ is similar to that of the deterministic-$J$ estimator for $J = 35$, and is smaller than the average loss for all other $J$ presented in the table. Our data-driven UCBs also deliver valid, but conservative, coverage for the true conditional mean function. UCBs based on a deterministic choice of $J$ have zero coverage for $J = 11$ and $J = 19$ as these dimensions are too small to adequately approximate $h_0$, and tend to under-cover for the remaining $J$, except perhaps for $J = 35$ when $n = 10000$.

In this design a much smaller value of $A$ suffices to deliver valid coverage, as seen in Table~\ref{tab:regression_A}. The reason is that the set $\hat{\mc J}_-$ is large and  $\hat h_J$ varies a lot across different $J$ due to the wiggliness of $h_0$. Therefore  $z_{1-\alpha}^*$, which is the quantile of a sup-statistic over $\mc X \times \hat{\mc J}_-$, is relatively more conservative than for the other designs. This extra conservativeness suffices to deliver valid coverage in this design with smaller $A$.

\begin{table}[t]
\begin{center}
\caption{\label{tab:regression_A} Coverage of 95\% UCBs $C_n(x, A)$, Nonparametric Regression Design (\ref{eq:reg_mc})}
{\small
\begin{tabular}{ccccccccccccccc}
\hline \hline \\[-10pt]
 &  \multicolumn{13}{c}{$A$} \\  \cline{3-15} \\[-10pt]
 $n$  & & $0.00$ & $0.01$ & $0.05$ & $0.10$ & $0.20$ & $0.30$ & $0.40$ & $0.50$ & $0.60$ & $0.70$ & $0.80$ & $0.90$ & $1.00$ \\ \\[-10pt]
\phantom{1}1250 & & 0.88 & 0.88 & 0.90 & 0.93 & 0.96 & 0.98 & 0.98 & 0.99 & 0.99 & 0.99 & 1.00 & 1.00 & 1.00 \\
\phantom{1}2500 & & 0.93 & 0.94 & 0.96 & 0.98 & 0.99 & 1.00 & 1.00 & 1.00 & 1.00 & 1.00 & 1.00 & 1.00 & 1.00 \\
\phantom{1}5000 & & 0.96 & 0.96 & 0.98 & 0.99 & 1.00 & 1.00 & 1.00 & 1.00 & 1.00 & 1.00 & 1.00 & 1.00 & 1.00 \\
10000 & & 0.97 & 0.97 & 0.98 & 0.99 & 1.00 & 1.00 & 1.00 & 1.00 & 1.00 & 1.00 & 1.00 & 1.00 & 1.00 \\  \\[-10pt] \hline
\end{tabular}
}
\end{center}
\vskip -20pt
\end{table}

\begin{figure}[t]
\begin{center}

\begin{subfigure}[t]{\textwidth}
\caption{Data-driven Estimates and UCBs}
\begin{center}
\vskip -8pt
\begin{subfigure}[t]{0.475\textwidth}
\begin{center}
\includegraphics[width=\linewidth]{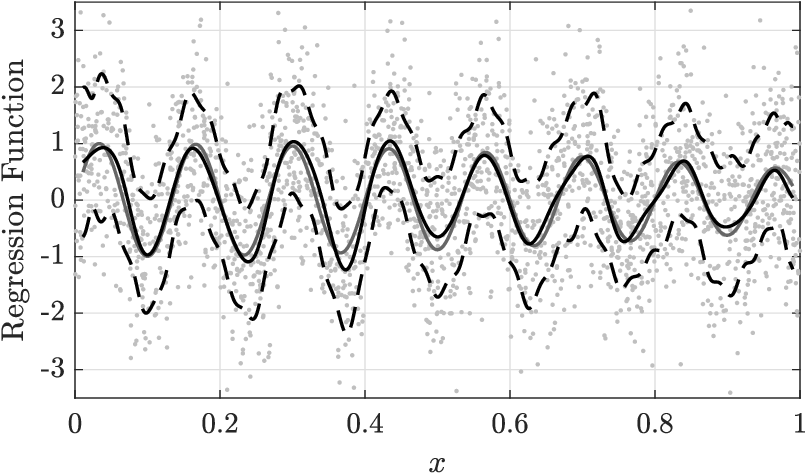}
\end{center}
\end{subfigure}
\begin{subfigure}[t]{0.475\textwidth}
\begin{center}
\includegraphics[width=\linewidth]{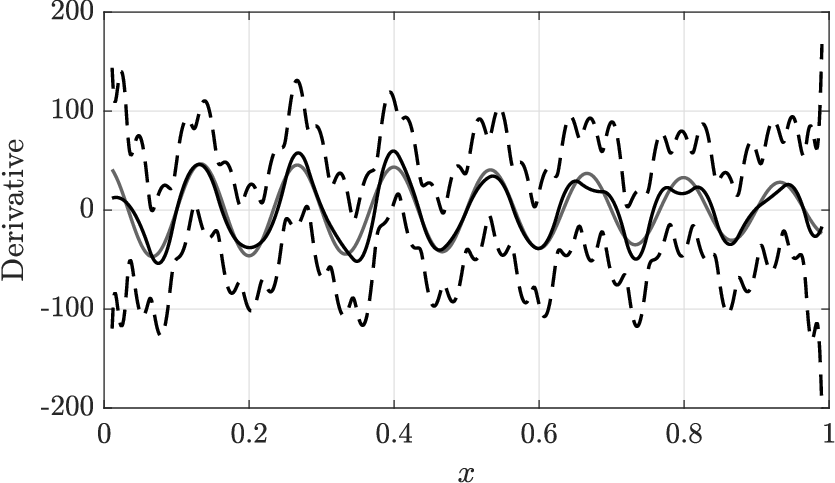}
\end{center}
\end{subfigure}

\end{center}
\end{subfigure}

\begin{subfigure}[t]{\textwidth}
\caption{Estimates and UCBs with $J = 67$}
\begin{center}
\vskip -8pt
\begin{subfigure}[t]{0.475\textwidth}
\begin{center}
\includegraphics[width=\linewidth]{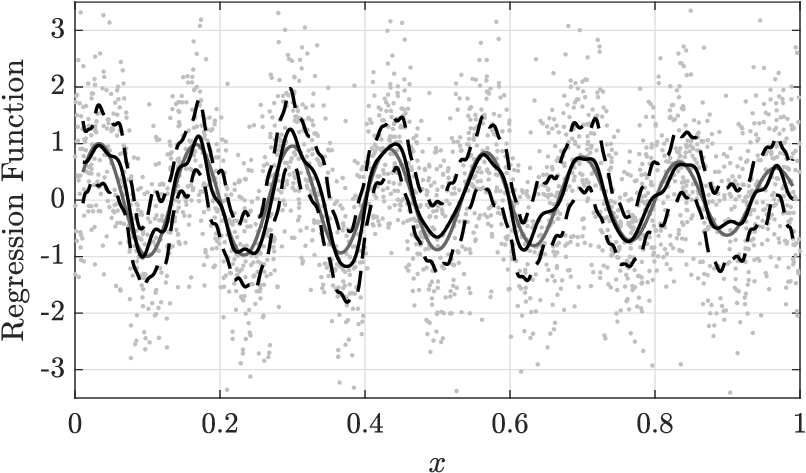}
\end{center}
\end{subfigure}
\begin{subfigure}[t]{0.475\textwidth}
\begin{center}
\includegraphics[width=\linewidth]{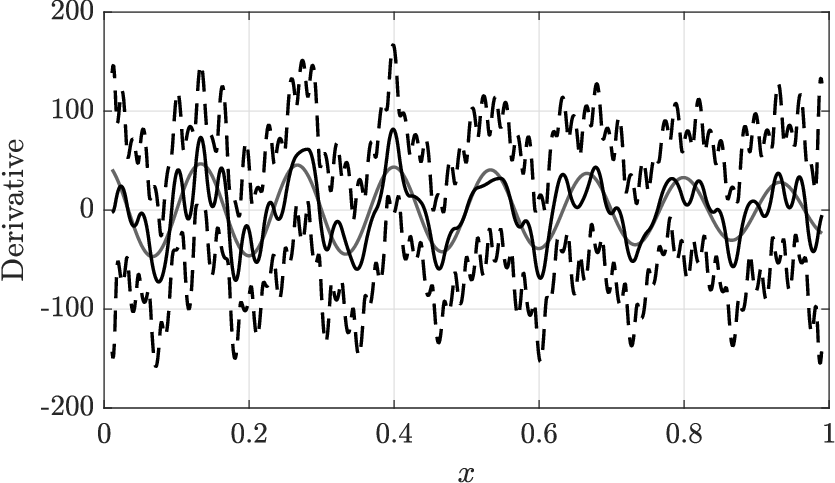}
\end{center}
\end{subfigure}

\end{center}
\end{subfigure}

\vskip -4pt

\caption{\label{fig:regression} Nonparametric regression design (\ref{eq:reg_mc}): Plots for a sample of size $n = 2500$. Left panels correspond to the conditional mean function, right panels correspond to its derivative. \emph{Note:} Solid grey lines are the true conditional mean function and its derivative; solid black lines are estimates, dashed black lines are 95\% UCBs.}
\end{center}
\vskip -20pt
\end{figure}

Figure \ref{fig:regression} plots our data-driven estimator $\hat h_{\tilde J}$ and 95\% UCBs for the conditional mean function for a sample of size 2500. In this sample, $\tilde J = 35$. The data-driven estimator well approximates the true conditional mean function $h_0$, which lies entirely within the 95\% UCBs, and the same is true for estimates and UCBs for the derivative of $h_0$. Deterministic-$J$ bands with $J = 67$ are of a similar width to our data-driven bands for this sample, even though they use a less conservative critical value which only accounts for sampling uncertainty. The estimator is also much wigglier with $J = 67$ than our data-driven estimator and does not approximate $h_0$ as well.

\section{Extensions}\label{sec:extensions}

So far we have assumed the structural function $h_0$ is a general $d$-variate function. As with many other nonparametric estimation problems, minimax rates deteriorate as $d$ increases. This so-called curse of dimensionality applies to any estimator of $h_0$. However, it can be circumvented by imposing additional structure on $h_0$ (when appropriate), such as additivity or partial linearity. In this section, we show how our data-driven procedures extend to additive and partially linear models.

\medskip

\paragraph{Additive Structural Functions.}

Consider first the additive structural function:
\[
 h_0(x) = c_0 + h_{10}(x_1) + \ldots + h_{d0}(x_d)
\]
where $x = (x_1,\ldots,x_d)'$. Here $c_0$ is a constant representing an ``intercept'' term and the $h_{i0}$ are suitably normalized for identifiability. In the context of nonparametric regression, \cite{Stone1985} showed that imposing additivity can yield estimators of $h_0$ that achieve the same (optimal) rate for general $d$ as for $d = 1$.

Our methods may be easily adapted to additive models as follows. We assume for sake of exposition that $X$ is bivariate $(d = 2)$. Let $\psi^J(x) = (1,\tilde \psi_1^J(x_1)',\tilde \psi_2^J(x_2)')'$ where for $i = 1,2$ we have $\tilde \psi_i^J(x_i) = (\tilde \psi_{J1}(x_i),\ldots,\tilde \psi_{JJ}(x_i))'$. Here $J$ represents the dimensions of sieves used to approximate both $h_{10}$ and $h_{20}$. The basis functions $\tilde \psi_{J1},\ldots,\tilde \psi_{JJ}$ are formed by setting $\tilde \psi_{Jj}(x_i) = \psi_{Jj}(x_i) - \int_0^1 \psi_{Jj}(v) \mr d v$ with $\psi_{J1}(x_1),\ldots,\psi_{JJ}(x_1)$ a univariate B-spline basis. We estimate $c_0$ and $c_J^i$, $i = 1,2$, by TSLS regression of $Y$ on $\psi^J(X)$ using $b^{K(J)}(W)$ as instruments:
\[
 \left( \begin{array}{c} \hat c_h \\ \hat c_J^1 \\ \hat c_J^2 \end{array} \right) = \left(\mf \Psi_J' \mf P_{K(J)}^{\phantom \prime} \mf \Psi_J^{\phantom \prime} \right)^{-} \mf \Psi_J' \mf P_{K(J)}^{\phantom \prime} \mf Y =\mf M_J \mf Y\,,
\]
where the notation is as in Section~\ref{s:procedure} but  with $\psi^J(x) = (1,\tilde \psi_1^J(x_1)',\tilde \psi_2^J(x_2)')'$. The estimator of $h_{i0}$ is $\hat h_{iJ}(x_i) =  (\psi^J_i(x_i))' \hat c_J^i$. Derivatives of $h_{i0}$ are 
estimated by differentiating $\hat h_{iJ}$.

Our data-driven choice of $J$ is implemented exactly as described in Section~\ref{sec:Jtilde} with $\psi^J(x) = (1,\tilde \psi_1^J(x_1)',\tilde \psi_2^J(x_2)')'$. Data-driven UCBs for $h_{10}$ are formed analogously to Section~\ref{sec:ucbs} with two small modifications. First, when computing the critical value $z_{1 - \alpha}^*$ in Step 4 of Procedure 2 we now use the sup-statistic
\[
 \sup_{(x_1,J) \in [0,1] \times \hat{\mc J}_{-}} \left| \frac{D_{1J}^*(x_1)}{\hat \sigma_{1J}(x_1)} \right|
\]
where $D_{1J}^*(x_1) = (0,\tilde \psi_1^J(x_1)',0_J')' \mf M_J \hat{\mf u}_J^*$ with $0_J$ a $J$-vector of zeros, and
\[
 \hat \sigma_{1J}^2(x)  =  (0,\tilde \psi_1^J(x_1)',0_J') \mf M_J^{\phantom \prime} \widehat{\mf U}_{J,J}^{\phantom \prime} \mf M_J' (0,\tilde \psi_1^J(x_1)',0_J')'\,.
\]
The 100$(1-\alpha)$\% UCB for $h_{10}$ is
\begin{equation*}
 C_n(x_1) = \bigg[ \hat{h}_{1\tilde{J}}(x_1) -  \mbox{cv}^*(x_1) \hat \sigma_{1\tilde J}(x_1) , ~  \hat{h}_{1\tilde{J}}(x_1) +  \mbox{cv}^*(x_1) \hat \sigma_{1\tilde J}(x_1) \bigg] \,
\end{equation*}
with
\[
 \mbox{cv}^*(x_1) =
 \begin{cases}
 z_{1-\alpha}^* + \hat A \theta^*_{1-\hat \alpha}  & \mbox{~if $\tilde J = \hat J$}, \\
 z_{1-\alpha}^* + \hat A \max\{\theta^*_{1-\hat \alpha}\,,\, \tilde {J}^{-\ul p} /\hat \sigma_{1\tilde J}(x_1) \} & \mbox{~if $\tilde J = \hat J_n$}
 \end{cases}
\]
where $\ul p$ is the minimal smoothness assumed for $h_{10}$ and $h_{20}$. UCBs for derivatives of $h_{10}$ are constructed analogously.

\medskip

\paragraph{Partially Linear Structural Functions.}

An alternative to additivity is the partially linear specification \citep{ai2003efficient}
\[
 h_0(x) = h_{10}(x_1) + x_2'\beta_0
\]
where $x$ is partitioned as $x = (x_1',x_2')'$ with $x_1$ of dimension $d_1 < d$, $h_{10}$ is an unknown function, and $\beta_0$ is an unknown vector of parameters. When $X$ is exogenous (so $W \equiv X$) this is the important partially linear regression model of \cite{Robinson1988}.

Our methods may be adapted to estimate and construct UCBs for $h_{10}$ as follows. First, we let $\psi^J(x) = (\psi^J_1(x_1)',x_2')'$ where $\psi^J_1(x_1) = (\psi_{J1}(x_1),\ldots,\psi_{JJ}(x_1))'$.\footnote{We assume without loss of generality that the $X_2$ variables have mean zero, which permits identification of $h_0$ and $\beta$. In practice these variables can be de-meaned.} We estimate $c_J$ and $\beta$ by TSLS regression of $Y$ on $\psi^J(X)$ using $b^{K(J)}(W)$ as instruments:
\[
 \left( \begin{array}{c} \hat c_J \\ \hat \beta \end{array} \right) = \left(\mf \Psi_J' \mf P_{K(J)}^{\phantom \prime} \mf \Psi_J^{\phantom \prime} \right)^{-} \mf \Psi_J' \mf P_{K(J)}^{\phantom \prime} \mf Y =\mf M_J \mf Y\,,
\]
where the notation is as in Section~\ref{s:procedure} but with $\psi^J(x) = (\psi^J_1(x_1)',x_2')'$. The estimator of $h_{10}$ is $\hat h_{1J}(x_1) =  (\psi^J_1(x_1))' \hat c_J$. Derivatives of $h_{10}$ are again estimated by differentiating $\hat h_{1J}$. When $X$ is exogenous, we simply take $w = x$ and $b^K(w) = \psi^J(x)$.

Our data-driven choice of $J$ is implemented analogously to Section~\ref{sec:Jtilde}, except we form the contrasts $D_J$, $D_{J}(x)-D_{J_2}(x)$, and $D_{J}^*(x)-D_{J_2}^*(x)$ and the variance terms $\hat \sigma_J^2(x)$ and $\hat \sigma_{J,J_2}^2(x)$ 
using $\psi^J_0(x_1) := (\psi^J_1(x_1)',0_{d_2})'$ in place of $\psi^J(x)$. As such, the $t$-statistics are functions of $x_1$ only and the supremums in the sup-statistics in Steps 2 and 3 of Procedure 1 only need to be computed over the support $\mc X_1$ of $x_1$. UCBs for $h_{10}$ are constructed analogously to Section~\ref{sec:ucbs}, where the contrast $D_J^*(x)$ and the variance term $\hat \sigma_{J}^2(x)$ are again formed using $\psi^J_0(x_1)$ in place of $\psi^J(x)$.
The 100$(1-\alpha)$\% UCB for $h_{10}$ is
\begin{equation*}
 C_n(x_1) = \bigg[ \hat{h}_{1\tilde{J}}(x_1) - \mbox{cv}^*(x_1) \hat \sigma_{1\tilde J}(x_1) , ~  \hat{h}_{1\tilde{J}}(x_1) + \mbox{cv}^*(x_1) \hat \sigma_{1\tilde J}(x_1) \bigg] \,
\end{equation*}
with
\[
 \mbox{cv}^*(x_1) =
 \begin{cases}
 z_{1-\alpha}^* + \hat A \theta^*_{1-\hat \alpha}  & \mbox{~if $\tilde J = \hat J$}, \\
 z_{1-\alpha}^* + \hat A \max\{\theta^*_{1-\hat \alpha}\,,\, \tilde {J}^{-\ul p/d_1} /\hat \sigma_{1\tilde J}(x_1) \} & \mbox{~if $\tilde J = \hat J_n$}
 \end{cases}
\]
where $\ul p$ is the minimal degree of smoothness assumed for $h_{10}$. UCBs for derivatives of $h_{10}$ are constructed analogously.

\section{Conclusion}\label{s:conclusion}

We have introduced data-driven procedures for estimation and inference on a nonparametric structural function $h_0$ and its derivatives using instrumental variables. Our data-driven choice of sieve dimension leads to estimators of $h_0$ and its derivatives that converge at the fastest possible (i.e., minimax) rate in sup-norm. Our data-driven uniform confidence bands (UCBs) for $h_0$ and its derivatives are shown to have coverage guarantees and contract at, or within a logarithmic factor of, the minimax rate. Both procedures have good finite sample performance in various simulation designs, including empirically-calibrated trade and Engel curve designs.
Our methods are simple to compute, and are applied to estimate and construct UCBs for the elasticity of the intensive margin of firm exports in a monopolistic competition model of international trade.

Aside from the extensions in Section~\ref{sec:extensions}, it would be straightforward to extend our methods to weakly dependent data, which is relevant for dynamic causal inference and reinforcement learning.
It would also be interesting to consider sup-norm rate-minimaxity jointly with respect to both $p$ and the degree of ill-posedness.

\appendix

\section{Nonparametric Regression}\label{sec:regression}

Here we specialize our data-driven procedures to nonparametric regression. The conditional mean function $h_0(x) = \E[Y|X = x]$ is estimated by
\[
\hat h_J(x) = (\psi^J(x))'\hat c_J~,~~ \hat c_J = \left(\mf \Psi_J' \mf \Psi_J^{\phantom \prime}\right)^- \mf \Psi_J' \mf Y\,.
\]
Notation is as in Section \ref{sec:Jtilde}, except now we set $\mf M_J = (\mf \Psi_J'  \mf \Psi_J^{\phantom \prime} )^{-}\mf \Psi_J'$.

\medskip

\paragraph{1.} Compute an upper truncation point $\hat J_{\max}$ of the index set as
\begin{equation} \label{eq:J_hat_max_regression}
 \hat{J}_{\max} = \min \bigg \{ J \in \T :   J \sqrt{\log J} \upsilon_n \leq 10 \sqrt n <  J^{+} \sqrt{\log J^{+}} \upsilon_n  \bigg \}
\end{equation}
with $\upsilon_n = \max\{1, (0.1 \log n)^4\}$, then compute $\hat{\mc J}$ as in (\ref{eq:index_set}) with this choice of $\hat J_{\max}$.

\medskip

\paragraph{2.} Let $\theta^*_{1-\hat \alpha}$ denote the $(1- \hat \alpha )$ quantile of (\ref{eq:sup-stat}) across independent draws of $(\varpi_i)_{i=1}^n$.
\medskip

\paragraph{3.} Take $\tilde J = \hat J$ for $\hat J$ defined in (\ref{eq:J_lepski}).

\medskip

Data-driven UCBs are also constructed analogously.

\medskip

\paragraph{4.} For UCBs for $h_0$, compute the critical value $z^*_{1-\alpha}$ as in (\ref{eq:z_star-UCB}). For UCBs for $\partial^a h_0$, compute the critical value $z^{a*}_{1-\alpha}$ as in (\ref{eq:z_star-UCB-derivative}).

\medskip

\paragraph{5.} The UCB for $h_0$ is
\[
 C_n(x) = \bigg[ \hat{h}_{\tilde{J}}(x) - \left( z_{1-\alpha}^* + \hat A \theta^*_{1-\hat \alpha} \right) \hat \sigma_{\tilde J}(x) , ~  \hat{h}_{\tilde{J}}(x) + \left( z_{1-\alpha}^* + \hat A \theta^*_{1-\hat \alpha} \right) \hat \sigma_{\tilde J}(x) \bigg] .
\]
The UCB for $\partial^a h_0$ is
\[
 C_n^a(x) = \bigg[ \partial^a \hat{h}_{\tilde{J}}(x) - \left( z^{a*}_{1-\alpha} + \hat A \theta^*_{1-\hat \alpha} \right) \hat \sigma_{\tilde J}^a(x) , ~  \partial^a \hat{h}_{\tilde{J}}(x) + \left( z^{a*}_{1-\alpha} + \hat A \theta^*_{1-\hat \alpha} \right) \hat \sigma^a_{\tilde J}(x) \bigg].
\]

\medskip

Theorem \ref{lepski2} and Corollary \ref{cor:lepski2} establish that $\tilde J$ leads to estimators of $h_0$ and its derivatives that attain the minimax sup-norm rates for nonparametric regression. Theoretical properties of the data-driven UCBs are established in Theorems~\ref{confmild} and \ref{confmild-derivative}.

\section{Additional Details for Section~\ref{s:application}}\label{ax:application}

\subsection{Basis Functions}

We construct basis functions the same way in both the simulations and empirical application. We use cubic B-splines ($r = 4$) to approximate $h_0$ and quartic B-splines ($r = 5$) to estimate the reduced-form. We also link the dimensions $J$ and $K(J)$ using $q = 2$.

As B-splines are supported on $[0,1]$ but $\tilde \pi_{ij}$ is negative, we transform $\tilde \pi_{ij}$ to $[0,1]$ using $\tilde \pi \mapsto \max\{0, \tilde \pi/10 + 1\}$. Under this transformation the very small fraction of observations for which $\tilde \pi_{ij} < -10$ or, equivalently, $\pi_{ij} < 0.005\%$, are truncated to zero (there were only four such observations in the empirical application). Similarly, we transform $z_{ij}$ to have support $[0,1]$ using its empirical CDF. The transformed $\tilde \pi_{ij}$ is not uniformly distributed on $[0,1]$ so we place interior knots at its empirical quantiles. The transformed $z_{ij}$ are uniformly distributed on $[0,1]$ so we place interior knots uniformly between $[0,1]$.

\subsection{Simulations}\label{ax:simulations}

\paragraph{DGP.}

Our first simulation design is based on \cite{HMT}. As in \cite{Melitz2003}, the only source of firm heterogeneity in their model is productivity. Hence, $r_{ij}(\omega) = e_{ij}(\omega)$, which is assumed to be lognormally distributed. The extensive margin is
\begin{equation}\label{eq:aag.log_eps.lognormal}
 \log \epsilon(\pi) = \mu + \sigma \sqrt 2 \mathrm{erf}^{-1}(1 - 2\pi),
\end{equation}
where $\mathrm{erf}(x) = \frac{2}{\sqrt \pi} \int_0^x e^{-\frac{1}{2}t^2} \, \mr d t$ is the error function and $\mathrm{erf}^{-1}$ is its inverse, and $\mu$ and $\sigma^2$ are the mean and variance of $\log e_{ij}$. The intensive margin function may be shown to be
\begin{equation}\label{eq:aag.log_rho.lognormal}
 \log \rho(\pi) = \mu + \frac{\sigma^2}{2} - \log(2 \pi) + \log \left( 1 + \mathrm{erf} \left( \frac{\sigma^2}{\sqrt 2} - \mathrm{erf}^{-1}(1 - 2\pi) \right) \right).
\end{equation}
Its elasticity is
\[
 \frac{d \log \rho(\pi)}{d \log \pi} = -1 + 2 \pi \frac{\exp \left(-\frac{\sigma^2}{\sqrt 2}\left(\frac{\sigma^2}{\sqrt 2} - 2 \mathrm{erf}^{-1}(1 - 2\pi) \right)\right)}{1 + \mathrm{erf} \left( \frac{\sigma^2}{\sqrt 2} - \mathrm{erf}^{-1}(1 - 2\pi) \right)}.
\]
Our second simulation design is based on the Pareto specification of \cite{Chaney}. In this design the intensive margin is $\log \rho(\pi) = \rho \log \pi$ and hence its elasticity is constant.

We generate data on $z_{ij}$ by sampling IID with replacement from the empirical distribution of $z_{ij}$. We then generate data on $\pi_{ij}$ and $\bar x_{ij}$ as follows. For the lognormal design, we estimate two partially linear IV models based on (\ref{eq:aag.log_eps}) and (\ref{eq:aag.log_rho}), namely
\[
 \begin{aligned}
 \log \epsilon(\pi_{ij})
 & = \beta_\epsilon z_{ij} + \delta_i^\epsilon + \zeta_j^\epsilon + e_{ij}^\epsilon , \\
 \log \bar x_{ij} - \log \rho(\pi_{ij}) & =  \beta_\rho z_{ij} + \delta_i^\rho + \zeta_j^\rho + e_{ij}^\rho.
 \end{aligned}
\]
In the first equation, we treat $\log \epsilon(\pi_{ij})$ as the dependent variable using the functional form (\ref{eq:aag.log_eps.lognormal}) with $\mu = -2$ and $\sigma = 1.2$. In the second, we treat $\log \bar x_{ij} - \log \rho(\pi_{ij})$ as the dependent variable using the functional form (\ref{eq:aag.log_rho.lognormal}). We compute the covariance matrix $\hat \Sigma$ of the residuals $(\hat e_{ij}^\epsilon, \hat e_{ij}^\rho)$.
We simulate $(e_{ij}^\epsilon,e_{ij}^\rho)$ as independent $N(0,\hat \Sigma)$ random vectors. Given $e_{ij}^\epsilon$ and $z_{ij}$, we set $\log \epsilon(\pi_{ij}) = 0.875 z_{ij} - 7 + e_{ij}^\epsilon$, then invert $\log \epsilon(\pi_{ij})$ using (\ref{eq:aag.log_eps.lognormal}) to obtain $\log \pi_{ij}$. This gives a distribution with support, mean, and variance roughly calibrated to the data used in the application. We then set $\bar x_{ij} = \log \rho(\pi_{ij}) - \tilde \sigma \kappa^\tau z_{ij} + \delta_i + \zeta_j + e_{ij}^\rho$ using the functional form (\ref{eq:aag.log_rho.lognormal}) for $\log \rho$, with $\delta_i = 0$, $\zeta_j = 0$, and with $\tilde \sigma = 2.9$ and $\kappa^\tau = 0.36$ as in AAG. 
We set exporter and importer FEs to zero for $\log \rho$ so that we can compare the effect of first-stage estimation of these FEs on the performance of our procedures.

We generate data for the Pareto design (for which the elasticity of $\rho$ is constant) as described above, except we use $\log \rho(\log \pi) = -0.23 \log \pi$ in place of (\ref{eq:aag.log_rho.lognormal}),  where the coefficient $-0.23$ matches AAG's estimate for the constant elasticity specification.\footnote{Note that we maintain the same DGP for $\pi$ as in the lognormal specification. While one could also generate $\pi$ using the Pareto assumption, this would change the joint distribution of $(\pi_{ij},z_{ij})$, and hence the instrument strength and degree of endogeneity. We keep the distribution fixed across designs so that any difference in results is attributable to the different structural functions $\log \rho$ only.}

\medskip

\paragraph{Simulation Results for the Log-normal Design without Fixed Effects.}

We first present in Tables~\ref{tab:trade.lognormal.h.nofe} and \ref{tab:trade.lognormal.d.nofe} results for estimating $\log \rho$ and the elasticity of $\rho$ in the log-normal design when we treat the FEs $\delta_i$ and $\zeta_j$ as zero. These results shut down any estimation error that may be introduced by first-stage estimation of the FEs. Overall, the results are very similar to those reported in Table~\ref{tab:trade.lognormal.deriv} with first-stage estimation of FEs: the sup-norm loss of the data-driven estimators of $\log \rho$ and the elasticity of $\rho$ are similar in magnitude to estimators with deterministic $J = 4$ or $J = 5$, and are several multiples smaller than those with larger $J$. Coverage of the fixed $J$ UCBs is generally too small when $J=4, 5$, whereas our data-driven UCBs deliver valid, albeit conservative, coverage. Our data-driven UCBs also demonstrate an improvement in terms of width relative to the deterministic-$J$ UCBs when $J$ is large enough (say $J=7, 11$) to ensure sufficient coverage. Rejection probabilities of a test of constant elasticity based on our data-driven UCBs for the elasticity of $\rho$ are also similar to those reported in Table~\ref{tab:trade.lognormal.deriv}. Figure~\ref{fig:trade.lognormal.d.nofe} presents plots of estimates and UCBs when we treat the FEs as zero using the same sample as Figure~\ref{fig:trade.lognormal.deriv} (where FEs were estimated in the first-stage). The estimates and UCBs reported in these figures are virtually identical, indicating  first-stage estimation of  FEs is innocuous.

\begin{table}[t]
\begin{center}
\caption{\label{tab:trade.lognormal.h.nofe} Simulation Results for Estimating $\log \rho$, Log-normal Design, no FEs}
{\small
\begin{tabular}{ccccccccccccc} \hline \hline \\[-10pt]
  & & \multicolumn{2}{c}{Data-driven}  & & \multicolumn{8}{c}{Deterministic} \\\cline{3-4} \cline{6-13} \\[-10pt]
  & & & & & \multicolumn{2}{c}{$J = 4$} & \multicolumn{2}{c}{$J = 5$} & \multicolumn{2}{c}{$J = 7$} & \multicolumn{2}{c}{$J = 11$} \\ \\[-10pt] \hline \\[-10pt]
  \multicolumn{13}{c}{Sup-norm Loss} \\ \\[-10pt]
$n$ & & mean & med. & & mean & med. & mean & med. & mean & med. & mean & med. \\ \\[-10pt]
 \phantom{0}761 & & 0.180 & 0.146 & & 0.166 & 0.142 & 0.184 & 0.159 & 0.361 & 0.302 & 0.670 & 0.609 \\
           1522 & & 0.120 & 0.099 & & 0.113 & 0.096 & 0.125 & 0.111 & 0.265 & 0.218 & 0.584 & 0.537 \\
           3044 & & 0.088 & 0.072 & & 0.080 & 0.069 & 0.087 & 0.080 & 0.196 & 0.163 & 0.510 & 0.468 \\
           6088 & & 0.068 & 0.053 & & 0.058 & 0.051 & 0.063 & 0.058 & 0.147 & 0.121 & 0.456 & 0.408 \\  \\[-10pt] \hline \\[-10pt]
  \multicolumn{13}{c}{UCB Coverage} \\ \\[-10pt]
 & & 90\% & 95\% & & 90\% & 95\% & 90\% & 95\% & 90\% & 95\% & 90\% & 95\% \\ \\[-10pt]
 \phantom{0}761 & & 0.992 & 0.996 & & 0.885 & 0.937 & 0.879 & 0.938 & 0.909 & 0.960 & 0.937 & 0.972 \\
           1522 & & 0.996 & 0.998 & & 0.898 & 0.940 & 0.893 & 0.944 & 0.915 & 0.955 & 0.964 & 0.985 \\
           3044 & & 0.998 & 1.000 & & 0.875 & 0.937 & 0.903 & 0.948 & 0.933 & 0.964 & 0.956 & 0.988 \\
           6088 & & 0.999 & 1.000 & & 0.864 & 0.936 & 0.880 & 0.949 & 0.913 & 0.958 & 0.951 & 0.984 \\  \\[-10pt] \hline \\[-10pt]
  & & & & & \multicolumn{8}{c}{95\% UCB Relative Width (Deterministic/Data-driven)} \\ \cline{6-13} \\[-10pt]
  & &  &  & & mean & med. & mean & med. & mean & med. & mean & med. \\ \\[-10pt]
 \phantom{0}761 & & & & & 0.660 & 0.675 & 0.693 & 0.695 & 1.576 & 1.422 & 2.606 & 2.457 \\
           1522 & & & & & 0.668 & 0.681 & 0.692 & 0.696 & 1.784 & 1.700 & 3.425 & 3.168 \\
           3044 & & & & & 0.667 & 0.682 & 0.683 & 0.692 & 1.904 & 1.855 & 4.310 & 4.029 \\
           6088 & & & & & 0.663 & 0.684 & 0.673 & 0.691 & 2.007 & 2.008 & 5.637 & 5.055 \\  \\[-10pt] \hline
\end{tabular}
}
\end{center}
\vskip -14pt
\end{table}

\begin{table}[t]
\begin{center}
\caption{\label{tab:trade.lognormal.d.nofe} Simulation Results for Estimating the Elasticity of $\rho$, Log-normal Design, no~FEs}
{\small
\begin{tabular}{ccccccccccccc} \hline \hline \\[-10pt]
  & & \multicolumn{2}{c}{Data-driven}  & & \multicolumn{8}{c}{Deterministic} \\\cline{3-4} \cline{6-13} \\[-10pt]
  & & & & & \multicolumn{2}{c}{$J = 4$} & \multicolumn{2}{c}{$J = 5$} & \multicolumn{2}{c}{$J = 7$} & \multicolumn{2}{c}{$J = 11$} \\ \\[-10pt] \hline \\[-10pt]
  \multicolumn{13}{c}{Sup-norm Loss} \\ \\[-10pt]
$n$ & & mean & med. & & mean & med. & mean & med. & mean & med. & mean & med. \\ \\[-10pt]
 \phantom{0}761 & & 0.238 & 0.179 & & 0.204 & 0.176 & 0.308 & 0.272 & 0.569 & 0.468 & 2.000 & 1.825 \\
           1522 & & 0.161 & 0.128 & & 0.141 & 0.124 & 0.213 & 0.187 & 0.374 & 0.334 & 1.785 & 1.614 \\
           3044 & & 0.131 & 0.094 & & 0.104 & 0.091 & 0.146 & 0.135 & 0.279 & 0.253 & 1.544 & 1.362 \\
           6088 & & 0.113 & 0.071 & & 0.075 & 0.068 & 0.104 & 0.096 & 0.200 & 0.179 & 1.361 & 1.236 \\ \\[-10pt] \hline \\[-10pt]
  \multicolumn{13}{c}{UCB Coverage} \\ \\[-10pt]
 & & 90\% & 95\% & & 90\% & 95\% & 90\% & 95\% & 90\% & 95\% & 90\% & 95\% \\ \\[-10pt]
 \phantom{0}761 & & 0.993 & 0.998 & & 0.895 & 0.939 & 0.880 & 0.931 & 0.894 & 0.941 & 0.939 & 0.974 \\
           1522 & & 0.995 & 0.998 & & 0.878 & 0.933 & 0.892 & 0.933 & 0.913 & 0.953 & 0.956 & 0.985 \\
           3044 & & 0.997 & 1.000 & & 0.846 & 0.918 & 0.891 & 0.941 & 0.918 & 0.958 & 0.957 & 0.984 \\
           6088 & & 0.993 & 0.994 & & 0.820 & 0.894 & 0.891 & 0.943 & 0.915 & 0.960 & 0.955 & 0.984 \\ \\[-10pt] \hline \\[-10pt]
  & & \multicolumn{2}{c}{Frequency} & & \multicolumn{8}{c}{95\% UCB Relative Width (Deterministic/Data-driven)} \\\cline{3-4} \cline{6-13} \\[-10pt]
  & & \multicolumn{2}{c}{reject} & & mean & med. & mean & med. & mean & med. & mean & med. \\ \\[-10pt]
 \phantom{0}761 & & \multicolumn{2}{c}{0.067} & & 0.633 & 0.651 & 0.936 & 0.932 & 1.760 & 1.570 & \phantom{1}6.464 & \phantom{1}6.140 \\
           1522 & & \multicolumn{2}{c}{0.304} & & 0.640 & 0.657 & 0.920 & 0.918 & 1.756 & 1.615 & \phantom{1}8.407 & \phantom{1}8.177 \\
           3044 & & \multicolumn{2}{c}{0.819} & & 0.642 & 0.659 & 0.895 & 0.903 & 1.752 & 1.623 & 10.406 & 10.141 \\
           6088 & & \multicolumn{2}{c}{0.966} & & 0.636 & 0.660 & 0.869 & 0.895 & 1.729 & 1.694 & 12.832 & 12.754 \\  \\[-10pt] \hline
\end{tabular}
}
\parbox{\textwidth}{\small \emph{Note:}
Column ``reject'' reports the proportion of simulations in which constant functions are excluded from our data-driven 95\% UCBs for the true elasticity.}
\end{center}
\vskip -14pt
\end{table}

\begin{figure}[h!]
\begin{center}

\begin{subfigure}[t]{\textwidth}
\caption{Data-driven Estimates and UCBs}
\begin{center}
\vskip -10pt
\begin{subfigure}[t]{0.49\textwidth}
\begin{center}
\includegraphics[width=\linewidth]{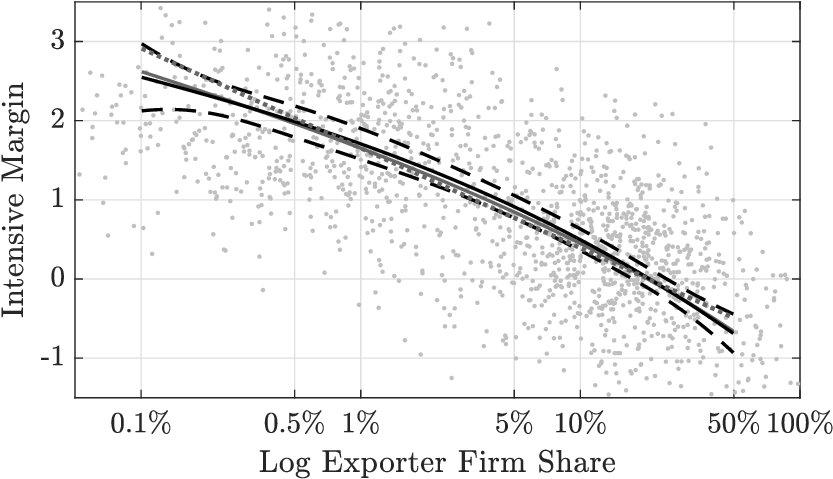}
\end{center}
\end{subfigure}
\begin{subfigure}[t]{0.49\textwidth}
\begin{center}
\includegraphics[width=\linewidth]{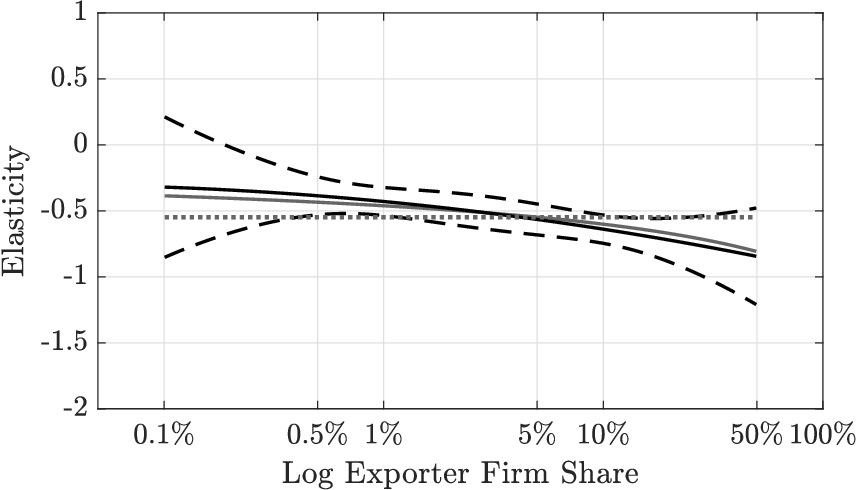}
\end{center}
\end{subfigure}

\end{center}
\end{subfigure}

\begin{subfigure}[t]{\textwidth}
\caption{Estimates and UCBs with $J=5$}
\begin{center}
\vskip -10pt
\begin{subfigure}[t]{0.49\textwidth}
\begin{center}
\includegraphics[width=\linewidth]{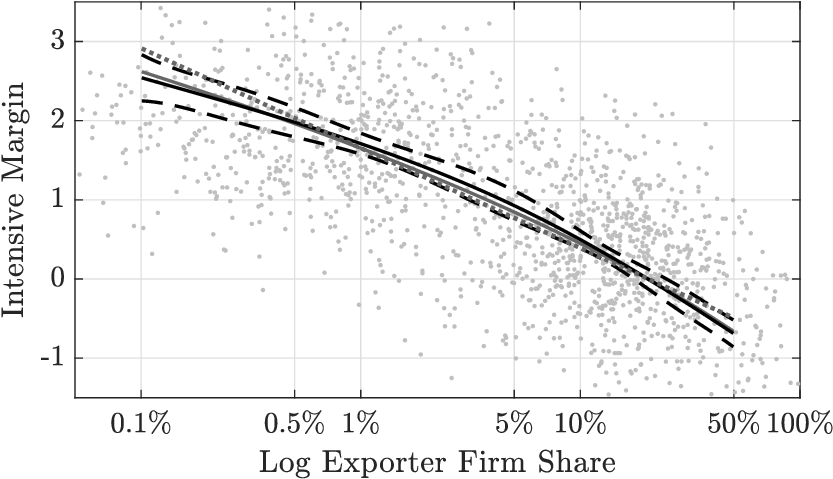}
\end{center}
\end{subfigure}
\begin{subfigure}[t]{0.49\textwidth}
\begin{center}
\includegraphics[width=\linewidth]{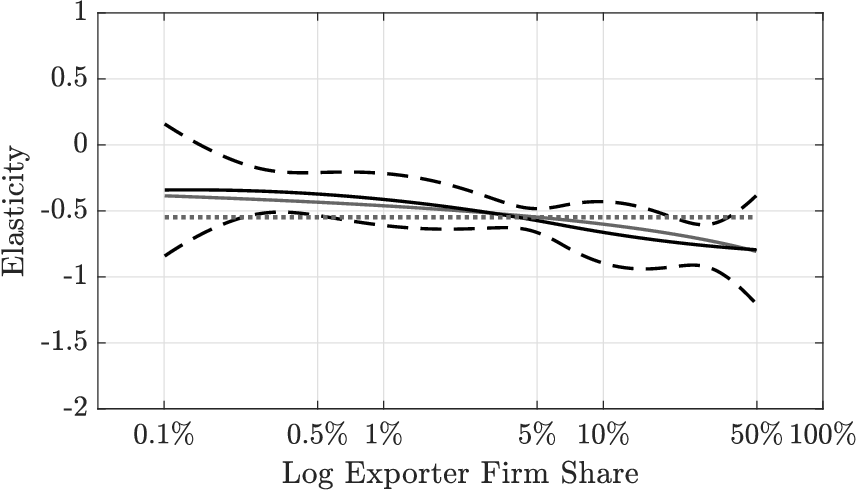}
\end{center}
\end{subfigure}

\end{center}
\end{subfigure}

\begin{subfigure}[t]{\textwidth}
\caption{Estimates and UCBs with $J = 7$}
\begin{center}
\vskip -10pt
\begin{subfigure}[t]{0.49\textwidth}
\begin{center}
\includegraphics[width=\linewidth]{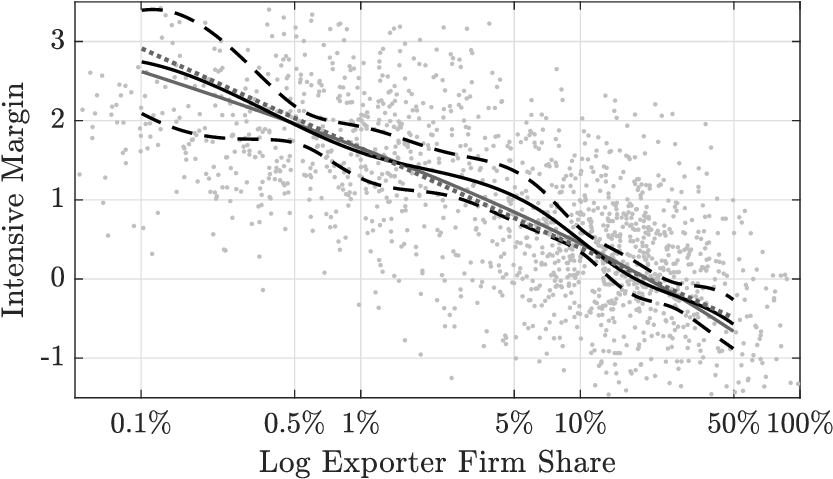}
\end{center}
\end{subfigure}
\begin{subfigure}[t]{0.49\textwidth}
\begin{center}
\includegraphics[width=\linewidth]{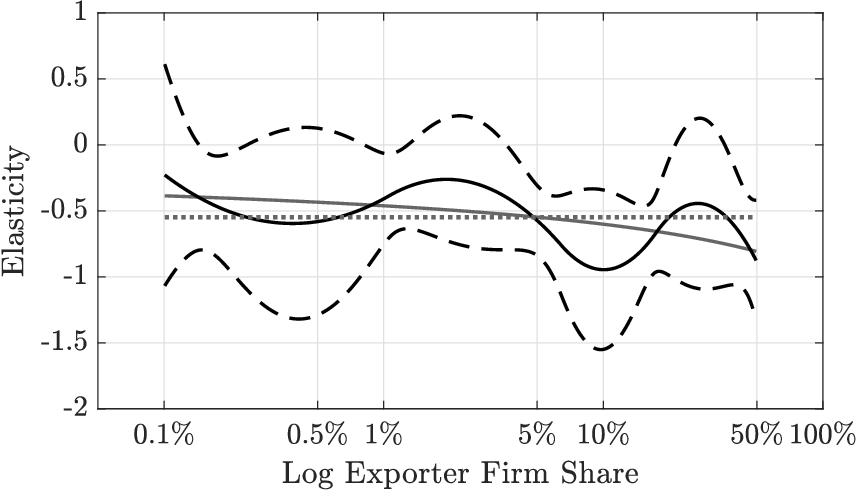}
\end{center}
\end{subfigure}

\end{center}
\end{subfigure}

\vskip -8pt

\caption{\label{fig:trade.lognormal.d.nofe} Log-normal design without fixed effects: Plots for a representative sample of size $1522$. Left panels correspond to the intensive margin, right panels correspond to its elasticity. \emph{Note:} Solid grey lines are the true curves; solid black lines are estimates; dashed black lines are 95\% UCBs; dotted grey lines are linear IV estimates.}
\end{center}
\vskip -20pt
\end{figure}

\medskip

\paragraph{Simulation Results for the Pareto Design.}

We now turn to the Pareto design in which  $\log \rho$ is linear and hence the elasticity of $\rho$ is constant. For brevity we just present in Table~\ref{tab:trade.pareto.deriv} the simulation results for estimating the elasticity of $\rho$. We adjust for first-stage estimation of exporter and importer FEs, as in the empirical application. The optimal choice is $J = 4$, which is the smallest dimension of a cubic B-spline basis. There is no bias with $J = 4$ because the basis functions span cubic functions. As can be seen, the maximal error in estimating the elasticity of $\rho$ using $\tilde J$ is very close to the estimator with fixed $J = 4$. Data-driven UCBs again demonstrate valid but conservative coverage for the elasticity. UCBs with fixed $J = 4$ demonstrate coverage close to (but still slightly under) nominal coverage with $n = 6088$. The UCBs with fixed $J=4$ are narrower (by about 36\%) than our data-driven UCBs as they do not account for potential approximation bias whereas our bands do. Of course, in a real data application the researcher doesn't know whether the true elasticity is constant, and therefore whether the UCBs with fixed $J = 4$ is sufficient to guarantee coverage. Figure~\ref{fig:trade.pareto.deriv} presents plots for a representative sample of size $1522$, again implementing our procedures as described in Section~\ref{s:application}. With our data-driven choice $\tilde{J}=4$, our nonparametric IV estimate of $\log \rho$ is very close to linear and our estimated elasticity is very close to the true, constant elasticity.

\begin{table}[t]
\begin{center}
\caption{\label{tab:trade.pareto.deriv} Simulation Results for Estimating the Elasticity of $\rho$, Pareto Design}
{\small
\begin{tabular}{ccccccccccccc} \hline \hline \\[-10pt]
  & & \multicolumn{2}{c}{Data-driven}  & & \multicolumn{8}{c}{Deterministic} \\\cline{3-4} \cline{6-13} \\[-10pt]
  & & & & & \multicolumn{2}{c}{$J = 4$} & \multicolumn{2}{c}{$J = 5$} & \multicolumn{2}{c}{$J = 7$} & \multicolumn{2}{c}{$J = 11$} \\ \\[-10pt] \hline \\[-10pt]
  \multicolumn{13}{c}{Sup-norm Loss} \\ \\[-10pt]
$n$ & & mean & med. & & mean & med. & mean & med. & mean & med. & mean & med. \\ \\[-10pt]
 \phantom{0}761 & & 0.228 & 0.164 & & 0.183 & 0.157 & 0.283 & 0.256 & 0.519 & 0.429 & 1.866 & 1.699 \\
           1522 & & 0.163 & 0.118 & & 0.125 & 0.115 & 0.193 & 0.172 & 0.343 & 0.305 & 1.620 & 1.469 \\
           3044 & & 0.125 & 0.085 & & 0.092 & 0.082 & 0.134 & 0.122 & 0.254 & 0.229 & 1.394 & 1.236 \\
           6088 & & 0.095 & 0.058 & & 0.061 & 0.056 & 0.093 & 0.085 & 0.180 & 0.161 & 1.212 & 1.092 \\ \\[-10pt] \hline \\[-10pt]
  \multicolumn{13}{c}{UCB Coverage} \\ \\[-10pt]
  & & 90\% & 95\% & & 90\% & 95\% & 90\% & 95\% & 90\% & 95\% & 90\% & 95\% \\ \\[-10pt]
 \phantom{0}761 & & 0.994 & 0.998 & & 0.884 & 0.934 & 0.854 & 0.916 & 0.877 & 0.926 & 0.906 & 0.958 \\
           1522 & & 0.993 & 0.997 & & 0.884 & 0.937 & 0.873 & 0.936 & 0.892 & 0.946 & 0.939 & 0.976 \\
           3044 & & 0.998 & 1.000 & & 0.878 & 0.937 & 0.872 & 0.932 & 0.891 & 0.949 & 0.940 & 0.975 \\
           6088 & & 0.994 & 0.998 & & 0.890 & 0.942 & 0.886 & 0.939 & 0.901 & 0.954 & 0.944 & 0.982 \\\\[-10pt] \hline \\[-10pt]
  & & & & & \multicolumn{8}{c}{95\% UCB Relative Width (Deterministic/Data-driven)} \\ \cline{6-13} \\[-10pt]
  & &  &  & & mean & med. & mean & med. & mean & med. & mean & med. \\ \\[-10pt]
 \phantom{0}761 & & & & & 0.626 & 0.650 & 0.928 & 0.934 & 1.760 & 1.553 & \phantom{1}6.439 & \phantom{1}6.149 \\
           1522 & & & & & 0.633 & 0.656 & 0.909 & 0.920 & 1.753 & 1.619 & \phantom{1}8.315 & \phantom{1}8.110 \\
           3044 & & & & & 0.637 & 0.658 & 0.890 & 0.906 & 1.754 & 1.642 & 10.373 & 10.128 \\
           6088 & & & & & 0.635 & 0.660 & 0.872 & 0.899 & 1.738 & 1.709 & 12.901 & 12.862 \\ \\[-10pt] \hline
\end{tabular}
}
\end{center}
\vskip -14pt
\end{table}

\begin{figure}[h!]
\begin{center}

\begin{subfigure}[t]{\textwidth}
\caption{Data-driven Estimates and UCBs}
\begin{center}
\vskip -10pt
\begin{subfigure}[t]{0.49\textwidth}
\begin{center}
\includegraphics[width=\linewidth]{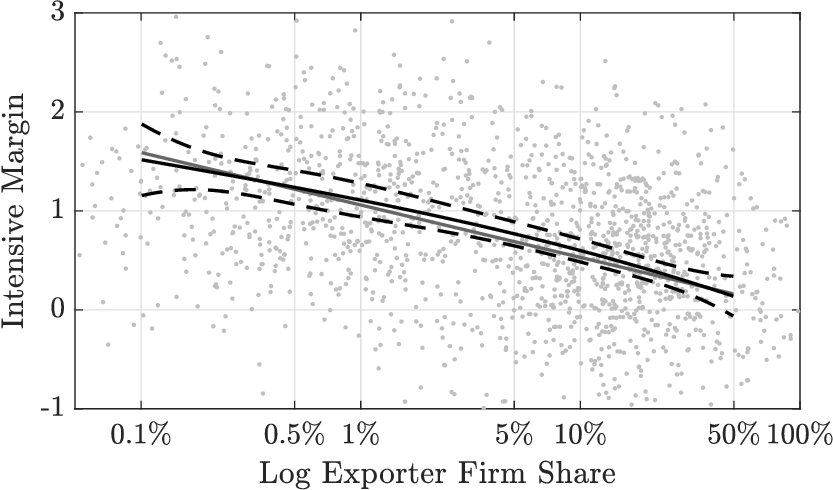}
\end{center}
\end{subfigure}
\begin{subfigure}[t]{0.49\textwidth}
\begin{center}
\includegraphics[width=\linewidth]{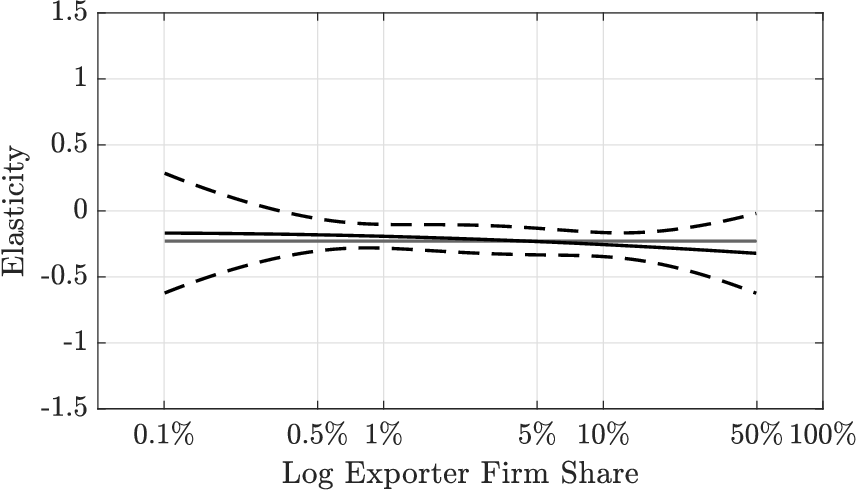}
\end{center}
\end{subfigure}

\end{center}
\end{subfigure}

\begin{subfigure}[t]{\textwidth}
\caption{Estimates and UCBs with $J = 5$}
\begin{center}
\vskip -10pt
\begin{subfigure}[t]{0.49\textwidth}
\begin{center}
\includegraphics[width=\linewidth]{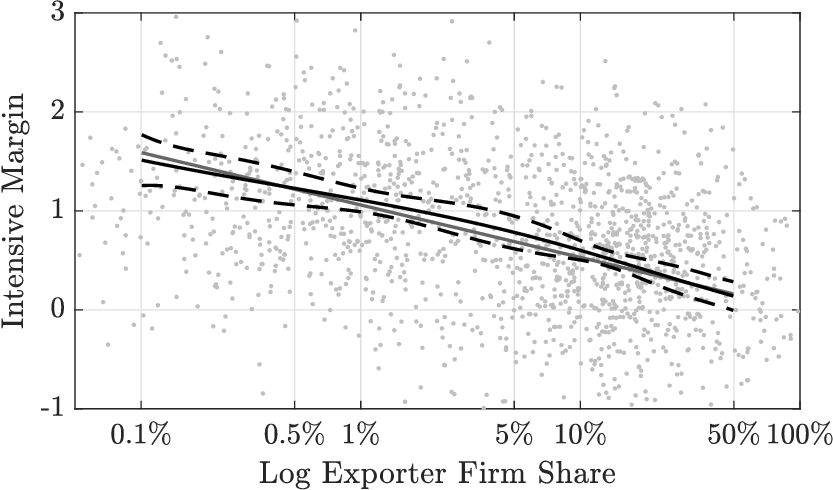}
\end{center}
\end{subfigure}
\begin{subfigure}[t]{0.49\textwidth}
\begin{center}
\includegraphics[width=\linewidth]{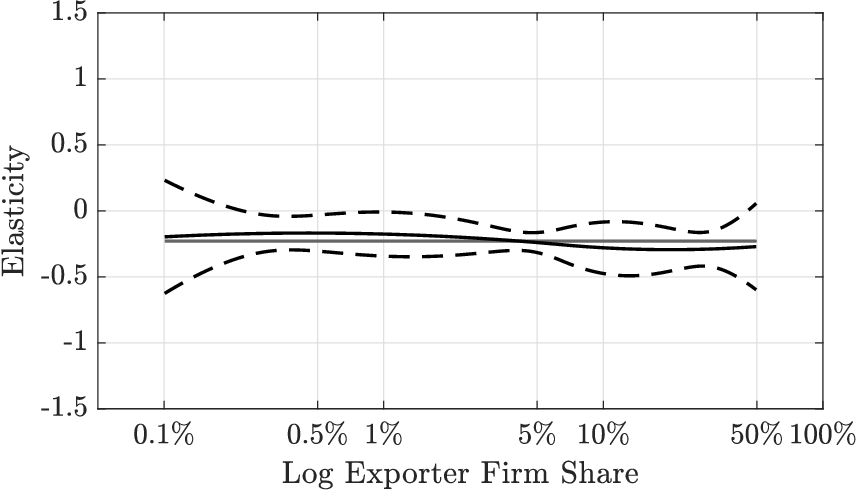}
\end{center}
\end{subfigure}

\end{center}
\end{subfigure}

\begin{subfigure}[t]{\textwidth}
\caption{Estimates and UCBs with $J = 7$}
\begin{center}
\vskip -10pt
\begin{subfigure}[t]{0.49\textwidth}
\begin{center}
\includegraphics[width=\linewidth]{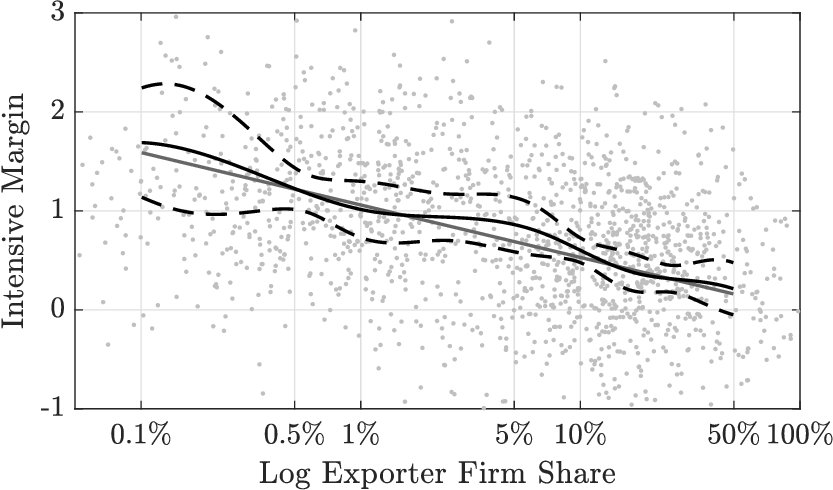}
\end{center}
\end{subfigure}
\begin{subfigure}[t]{0.49\textwidth}
\begin{center}
\includegraphics[width=\linewidth]{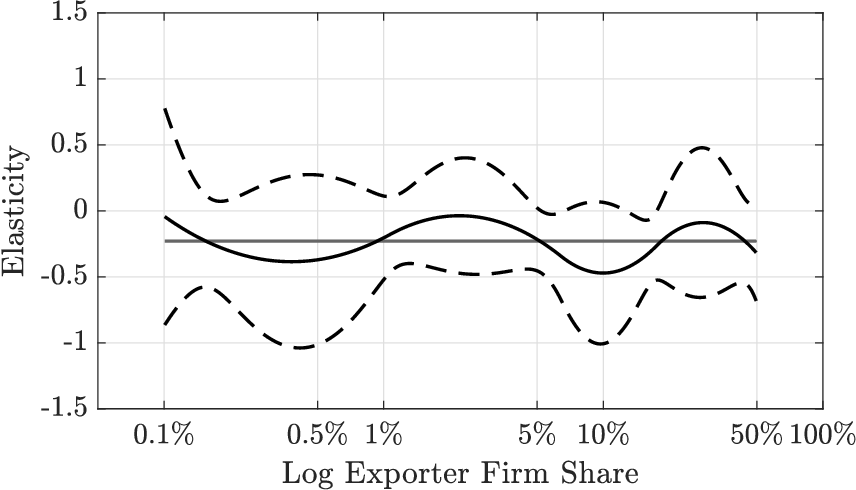}
\end{center}
\end{subfigure}

\end{center}
\end{subfigure}

\vskip -8pt

\caption{\label{fig:trade.pareto.deriv} Pareto design (with first-stage estimation of fixed effects): Plots for a representative sample of size $1522$. Left panels correspond to the intensive margin, right panels correspond to its elasticity. \emph{Note:} Solid grey lines are the true curves; solid black lines are estimates; dashed black lines are 95\% UCBs.}
\end{center}
\vskip -20pt
\end{figure}

\let\oldbibliography\thebibliography
\renewcommand{\thebibliography}[1]{\oldbibliography{#1}
\setlength{\itemsep}{0pt}}

{\singlespacing
\putbib
}

\end{bibunit}

\begin{bibunit}

\newpage
\clearpage
\pagenumbering{arabic}\renewcommand{\thepage}{\arabic{page}}

\begin{center}
{\Large Online Appendix to ``Adaptive Estimation and Uniform Confidence Bands for Nonparametric Structural Functions and Elasticities''}

\vskip 24pt

{\large Xiaohong Chen \quad \quad Timothy Christensen \quad \quad Sid Kankanala}

\end{center}

\vskip 8pt

\section{Additional Simulation: Engel Curves}\label{appsec:engel}

In this appendix we present additional simulation results for estimating a nonparametric structural function in an empirically calibrated Engel curve setting. The design is based on the British Family Expenditure Survey data used in \cite{blundell2007semi}. We draw household expenditure $X$ and household income $W$ from a bivariate normal density with correlation $\rho = 0.52$, which is the sample correlation of the expenditure and income data used in \cite{blundell2007semi}. We then transform $X$ and $W$ to have Uniform$[0,1]$ marginals using their respective inverse marginal CDFs. As a consequence, $X$ and $W$ are linked via a Gaussian copula and the design is severely ill-posed.\footnote{This follows from, e.g., \cite{Beare2010}, equation (3.3).} We then set $h_0(x) = \Phi(5 x - 2. 5)$ and set $u = h_0(X) - \E[h_0(X)|W] + v$ for $v \sim N(0,0.01)$. The implementation is the same as the other Monte Carlos from Section~\ref{s:mcs}. For each simulated data set we compute our data-driven estimator $\hat h_{\tilde J}$ and UCBs from (\ref{band}). We compare these with estimators and UCBs using deterministic choices of sieve dimensions for $J = 4$, $5$, $7$, and $11$ (the first few dimensions over which our procedure searches). We again use a cubic B-spline basis to approximate $h_0$ and a quartic B-spline basis for the reduced form.

Turning first to the simulation results presented in Table~\ref{tab:engel_mc}, we see that the average sup-norm loss of our data-driven estimator is similar to that of an estimator $\hat h_J$ for deterministic $J$ with $J = 4$ and several multiples smaller than that with $J = 5$, $7$, or $11$. This is to be expected, as the design is severely ill-posed and the true function is very smooth, so a very small choice of $J$ is appropriate. Of course, in practice the researcher does not know the degree of ill-posedness or the degree of smoothness of the structural function.

The second panel of Table~\ref{tab:engel_mc} shows our data-driven UCBs have valid, albeit conservative, coverage across all sample sizes. By contrast, undersmoothed UCBs with $J = 4$ and $J = 5$ under-cover for $n = 2500$, $5000$, and $10000$. Undersmoothed UCBs with $J = 7$ have valid but conservative coverage, but these are 40\% (with $n = 1250$) to 250\% (with $n = 10000$) wider than our data-driven UCBs. It is important to note that although the design is severely ill-posed, we are reporting coverage of our UCBs (\ref{band}). In each simulated data set we have $\hat J = \tilde J$ irrespective of the sample size $n$, so the critical value is effectively $z^*_{1-\alpha} + \hat A \theta^*_{1-\hat \alpha}$. While Theorem~\ref{confmild} does not formally establish coverage guarantees of this band in the severely ill-posed case, these simulation results show that the band nevertheless has good coverage in this empirically relevant design.

\begin{table}[t]
\begin{center}
\caption{\label{tab:engel_mc} Simulation Results for the Engel Curve Design.}
{\small
\begin{tabular}{ccccccccccccc} \hline \hline \\[-10pt]
  & & \multicolumn{2}{c}{Data-driven}  & & \multicolumn{8}{c}{Deterministic} \\\cline{3-4} \cline{6-13} \\[-10pt]
  & & & & & \multicolumn{2}{c}{$J = 4$} & \multicolumn{2}{c}{$J = 5$} & \multicolumn{2}{c}{$J = 7$} & \multicolumn{2}{c}{$J = 11$} \\ \\[-10pt] \hline \\[-10pt]
  \multicolumn{13}{c}{Sup-norm Loss} \\ \\[-10pt]
$n$ & & mean & med. & & mean & med. & mean & med. & mean & med. & mean & med. \\ \\[-10pt]
 \phantom{1}1250 & & 0.221 & 0.183 & & 0.218 & 0.180 & 0.337 & 0.287 & 0.450 & 0.400 & 0.558 & 0.488 \\ 
 \phantom{1}2500 & & 0.167 & 0.139 & & 0.164 & 0.138 & 0.285 & 0.240 & 0.417 & 0.369 & 0.526 & 0.468 \\ 
 \phantom{1}5000 & & 0.115 & 0.094 & & 0.113 & 0.094 & 0.233 & 0.197 & 0.361 & 0.318 & 0.484 & 0.432 \\ 
10000 & & 0.083 & 0.068 & & 0.080 & 0.068 & 0.173 & 0.148 & 0.322 & 0.299 & 0.448 & 0.414 \\ \\[-10pt] \hline \\[-10pt]
  \multicolumn{13}{c}{UCB Coverage} \\ \\[-10pt]
 & & 90\% & 95\% & & 90\% & 95\% & 90\% & 95\% & 90\% & 95\% & 90\% & 95\% \\ \\[-10pt]
 \phantom{1}1250 & & 0.998 & 0.999 & & 0.917 & 0.961 & 0.903 & 0.952 & 0.934 & 0.972 & 0.916 & 0.968 \\ 
\phantom{1}2500 & & 0.998 & 0.999 & & 0.868 & 0.931 & 0.867 & 0.943 & 0.950 & 0.980 & 0.941 & 0.982 \\ 
\phantom{1}5000 & & 0.998 & 0.999 & & 0.833 & 0.896 & 0.884 & 0.939 & 0.967 & 0.989 & 0.968 & 0.987 \\ 
10000 & & 0.991 & 0.994 & & 0.700 & 0.826 & 0.826 & 0.904 & 0.956 & 0.988 & 0.964 & 0.992 \\  \\[-10pt] \hline \\[-10pt]
  & & & & & \multicolumn{8}{c}{95\% UCB Relative Width (Deterministic/Data-driven)} \\ \cline{6-13} \\[-10pt]
  & &  &  & & mean & med. & mean & med. & mean & med. & mean & med. \\ \\[-10pt]
 \phantom{1}1250 & & & & & 0.653 & 0.658 & 1.056 & 0.978 & 1.431 & 1.351 & 1.798 & 1.718 \\ 
 \phantom{1}2500 & & & & & 0.655 & 0.661 & 1.213 & 1.120 & 1.797 & 1.692 & 2.372 & 2.270 \\ 
 \phantom{1}5000 & & & & & 0.656 & 0.661 & 1.458 & 1.331 & 2.219 & 2.107 & 3.082 & 2.991 \\ 
10000 & & & & & 0.658 & 0.664 & 1.577 & 1.478 & 2.712 & 2.574 & 3.962 & 3.792  \\\\[-10pt] \hline
\end{tabular}
}
\end{center}
\vskip -14pt
\end{table}

\begin{figure}[p]
\begin{center}

\begin{subfigure}[t]{\textwidth}
\caption{Data-driven Estimates and UCBs}
\begin{center}
\vskip -10pt
\begin{subfigure}[t]{0.45\textwidth}
\begin{center}
\includegraphics[width=\linewidth]{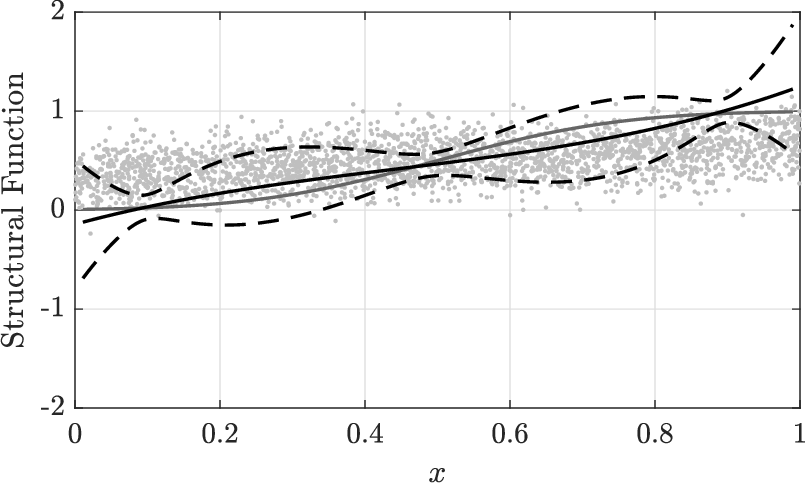}
\end{center}
\end{subfigure}
\begin{subfigure}[t]{0.45\textwidth}
\begin{center}
\includegraphics[width=\linewidth]{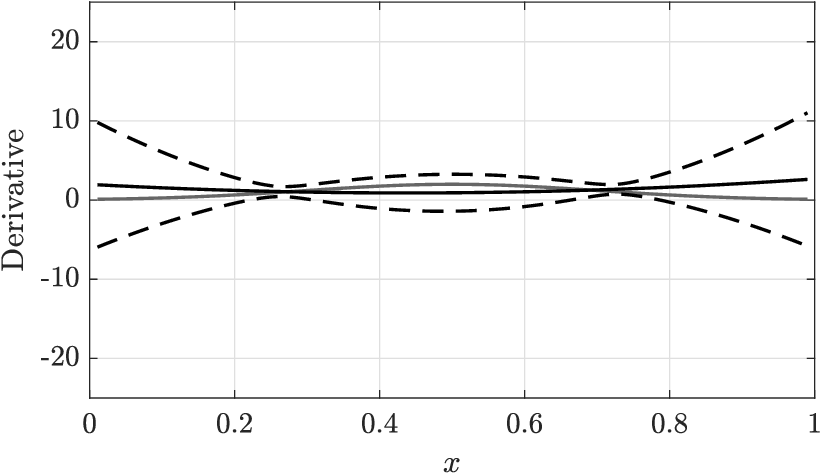}
\end{center}
\end{subfigure}
\end{center}
\end{subfigure}
\vskip -4pt

\begin{subfigure}[t]{\textwidth}
\caption{Estimates and UCBs with $J = 5$}
\begin{center}
\vskip -10pt
\begin{subfigure}[t]{0.45\textwidth}
\begin{center}
\includegraphics[width=\linewidth]{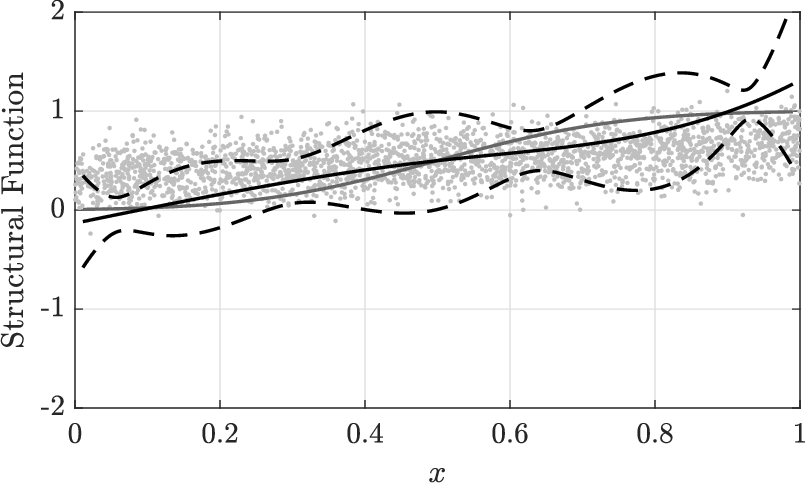}
\end{center}
\end{subfigure}
\begin{subfigure}[t]{0.45\textwidth}
\begin{center}
\includegraphics[width=\linewidth]{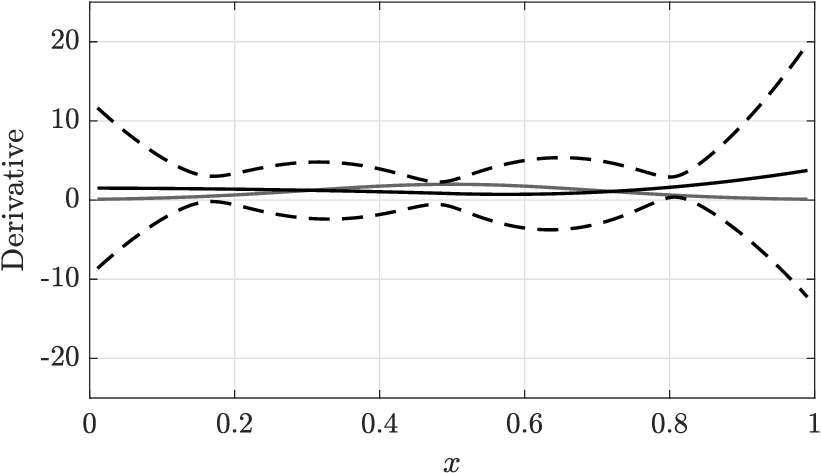}
\end{center}
\end{subfigure}
\end{center}
\end{subfigure}
\vskip -4pt

\begin{subfigure}[t]{\textwidth}
\caption{Estimates and UCBs with $J = 7$}
\begin{center}
\vskip -10pt
\begin{subfigure}[t]{0.45\textwidth}
\begin{center}
\includegraphics[width=\linewidth]{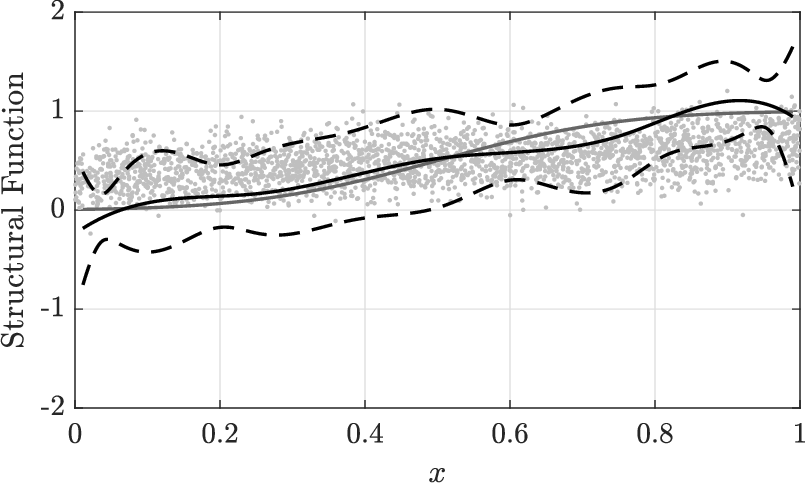}
\end{center}
\end{subfigure}
\begin{subfigure}[t]{0.45\textwidth}
\begin{center}
\includegraphics[width=\linewidth]{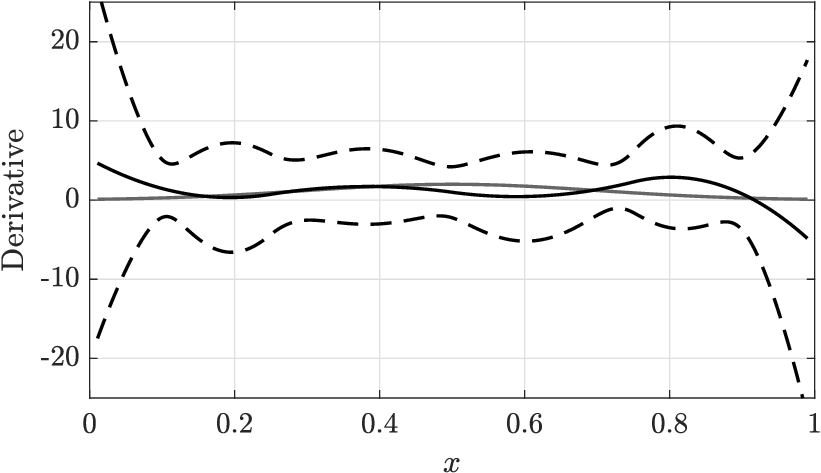}
\end{center}
\end{subfigure}
\end{center}
\end{subfigure}
\vskip -4pt

\begin{subfigure}[t]{\textwidth}
\caption{Data-driven Estimates and UCBs for the Conditional Mean of $Y$ given $X$}
\begin{center}
\vskip -10pt
\begin{subfigure}[t]{0.45\textwidth}
\begin{center}
\includegraphics[width=\linewidth]{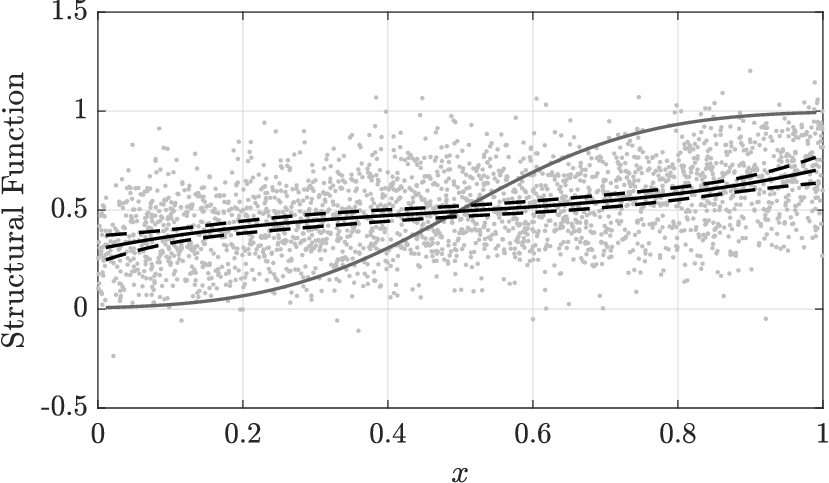}
\end{center}
\end{subfigure}
\begin{subfigure}[t]{0.45\textwidth}
\begin{center}
\includegraphics[width=\linewidth]{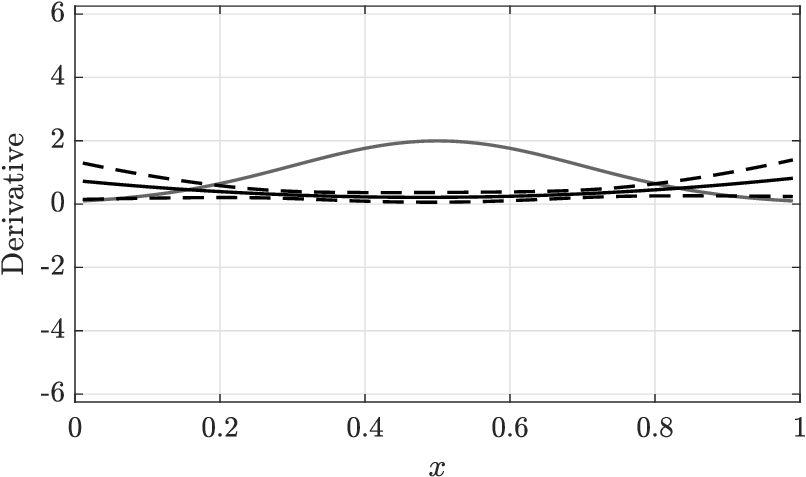}
\end{center}
\end{subfigure}
\end{center}
\end{subfigure}

\vskip -4pt

\caption{\label{fig:engel_mc} Engel curve design: Plots for a sample of size $n = 2500$. Left panels correspond to the structural function, right panels correspond to its derivative. \emph{Note:} Solid grey lines are the true structural function and derivative; solid black lines are estimates, dashed black lines are 95\% UCBs.}
\vskip -15pt
\end{center}
\end{figure}

Figure~\ref{fig:engel_mc} presents plots of data-driven estimates and UCBs for $h_0$ and its derivative for a sample of size 2500, alongside deterministic-$J$ estimates and UCBs. In this sample, $\tilde J = 4$ and our data-driven UCBs contain the true structural function. As with the other simulations, the data-driven bands are narrower and more accurately convey the shape of $h_0$ than the $J = 7$ bands, which are much more wiggly. Our bands are also slightly narrower than the $J = 5$ bands. Panel~(d) of Figure~\ref{fig:npiv_mc} also presents data-driven estimates and UCBs for the conditional mean of $Y$ given $X$. Evidently, the true structural function falls outside the UCBs for the conditional mean function over almost all of the support of $X$, again highlighting the importance of estimating $h_0$ using IV methods in this design.

\section{Basis Functions and H\"older Classes} \label{ax:basis}

Let $\Psi_J$ denote the closed linear subspace of $L^2_X$ spanned by a basis $\{\psi_{J1},\ldots,\psi_{JJ}\}$. We use the following notation for vectors and matrices formed from the basis functions
\begin{align*}
 \psi^J_x & = (\psi_{J1}(x),\ldots,\psi_{JJ}(x))' \,, &
 b^K_w & = (b_{K1}(w),\ldots,b_{KK}(w))' \,, \\
 \zeta_{\psi,J} & = \sup_{x \in [0,1]^d} \| G_{\psi,J}^{-1/2}  \psi^J_x \|_{\ell^2} \,, &
 \zeta_{b,J} & =  \sup_{w \in [0,1]^{d_w}}  \|  G_{b,J}^{-1/2} b^{K(J)}_w \|_{\ell^2} \,, \\
 {G}_{\psi,J} & = \E \big[ \psi^J_X  (\psi^J_X)'  \big] \,, &
 {G}_{b,J} & = \E \big[ b^{K(J)}_W (b^{K(J)}_W)' \big]  \,, \\
 {S}_{J} & = \E \big[ b^{K(J)}_W (\psi^J_X)' \big] \,, &
 {S}_{J}^o & = G_{b,J}^{-1/2} \E \big[ b^{K(J)}_W (\psi^J_X)' \big] G_{\psi,J}^{-1/2} \,.
\end{align*}
Let $s_J$ be the smallest singular value of $(G_{b,J})^{-1/2} S_J (G_{\psi,J})^{-1/2}$. By  Lemma A.1 of \cite{chen2018optimal}, under Assumptions \ref{a-data} and \ref{a-approx}(i) there is a finite positive constant $a_\tau$ such that
\begin{equation} \label{eq:eval_cc}
  a_\tau^{-1} s_J^{-1} \leq \tau_J \leq s_J^{-1}~~~\text{for all}~~J \in \T~.
\end{equation}

\subsection{B-splines}\label{ax:bspline}

The construction of univariate B-spline bases supported on $[0,1]$ follows Chapter 12.3 of \cite{devore1993constructive}. The basis is characterized by an \emph{order} $r \in \mb N$ (or \emph{degree} $r-1$) and a \emph{resolution level} $l \in \mb N \cup \{0\}$. Let $N_r$ denote the $r$-fold convolution of the indicator function of the unit interval, $N_r = \mathbbm{1}_{[0,1]} * \cdots * \mathbbm{1}_{[0,1]}$ ($r$-times). A dyadic\footnote{This basis is equivalent to a B-spline basis with interior knots at $2^{-l},\ldots,1-2^{-l}$. This knot placement ensures bases are nested across different $l$ (equivalently, $J$). For irregularly spaced data, interior knots can be placed at the $2^{-l},\ldots,1-2^{-l}$ quantiles of the distribution of $X$.} B-spline basis on $[0,1]$ with resolution level $l$ and order $r$ is
\[
 \psi_{J_1j}(x) = N_r(2^l x + r - j) \,, \quad j = 1,\ldots,2^l + r - 1 =: J_1\,.
\]
In the multivariate case we take tensor products of univariate bases. A B-spline basis supported on $[0,1]^d$ of order $r$ and resolution level $l$ has dimension $J = (2^l + r - 1)^d$. The set of possible sieve dimensions $J$  is therefore $\mc T = \{(2^l + r - 1)^d : l \in \mb N \cup \{0\}\}$.

We now review properties of B-spline bases that are used in the technical arguments below.
The following Lemma summarizes Lemma E.2 of \cite{chen2018optimal}.

\begin{lemma}\label{lem:spline}
Let Assumption \ref{a-data}(i) hold. Then for $\psi^{J}(x)$ formed from tensor product B-splines, there are constants $C_\psi,a_\zeta > 0$ depending only on $a_f$ such that
(i) $\sup_{x \in [0,1]^d } \|  \psi^J(x) \|_{\ell^1}  \leq C_\psi$;
(ii) $C_\psi^{-1} J^{-1} \leq \lambda_{\min}(G_{\psi,J}) \leq \lambda_{\max}(G_{\psi,J}) \leq C_\psi J^{-1}$;
(iii) $\sqrt J \leq \zeta_{\psi,J} \leq a_\zeta \sqrt{J} $.
\end{lemma}

\begin{corollary}
Let Assumption \ref{a-data}(ii) hold. Then for $b^{K(J)}(w)$ formed from tensor product B-splines and $J \leq K(J) \lesssim J$, there are constants $C_b,a_\zeta > 0$ depending only on $a_f$ such that
(i) $\sup_{w \in [0,1]^{d_w} } \|  b^{K(J)}(w) \|_{\ell^1}  \leq C_b$;
(ii) $C_b^{-1} J^{-1} \leq \lambda_{\min}(G_{b,J})) \leq \lambda_{\max}(G_{b,J})) \leq C_b J^{-1}$;
(iii) $\sqrt J \leq \zeta_{b,J} \leq a_\zeta \sqrt{J} $.
\end{corollary}

We also use some continuity properties of B-splines in the proofs. Note that $N_r(\cdot)$ is Lipschitz with $r = 2$ and $r-2$ times continuously differentiable when $r > 2$. Hence, $\|G_{\psi,J}^{-1/2} \big( [\psi^J(x_1)] - [\psi^J(x_2)] \big) \|_{\ell^2} \leq C J^\omega \| x_1 - x_2\|_{\ell^2}^{\omega'}$ holds for some positive constants $C,\omega,\omega'$. The B-spline basis also satisfies a Bernstein inequality (or inverse estimate): $\|\partial^a f\|_\infty \lesssim J^{|a|/d} \|f\|_\infty$ holds for any $f \in \Psi_J$ and multi-index $a$ with $|a| < r-1$.

\subsection{CDV Wavelets}

The construction of CDV wavelet bases supported on $[0,1]$ is reviewed in Appendix~E.2 of \cite{chen2018optimal} and follows \cite{cohen1993wavelets}; see also chapter 4.3.5 of \cite{gine2016mathematical}. The basis is characterized by an \emph{order} $N \in \mb N$. Let $L$ denote the smallest integer for which $2^{L} \geq 2N$. For each \emph{resolution level} $l \geq L$, there are a total of $2^l$ basis functions.
In the multivariate case we generate bases supported on $[0,1]^d$ by taking tensor products of univariate bases. The set of possible $J$  is therefore $\mc T = \{2^{ld} : l = L+1, L+2, \ldots\}$.

We say that the CDV wavelet basis is $S$-regular if it is $S$ times continuously differentiable. A $S$-regular basis can always be chosen by choosing the order $N$ such that $0.18(N-1) \geq S$ \cite[Theorem 4.2.10(e)]{gine2016mathematical}. The regularity $S$ of the basis for the endogenous variable $X$ should be chosen such that $S > \ol p$, where $\ol p$ is the maximal assumed degree of smoothness for $h_0$. Equivalently, our procedures deliver adaptivity over any smoothness range $[\ul p,\ol p]$ with $S > \ol p > \ul p > d/2$ when implemented with a $S$-regular CDV wavelet basis for $X$. As with choosing the order $r$ of B-splines, choosing $S$ is analogous to choosing the order of a kernel in kernel-based nonparametric estimation.

CDV wavelet bases for the $d_w$-dimensional instrumental variable $W$ are constructed similarly, using a basis of regularity $S+1$. Given the resolution level $l$ for the basis for $X$, the resolution level for the basis for $W$ is $l_w = \lceil (l + q) d/d_w \rceil$ for some $q \in \mb N$. Linking $l_w$ to $l$ in this manner again defines a mapping $K(J)$ between the two bases that satisfies $\lim_{J \to \infty} K(J)/J = c \in [1,\infty)$. As with B-splines, we recommend that $q$ should be the second- or third-smallest value for which $K(J) \geq J$ holds for all $J$.

We now review properties of CDV wavelet bases that are used in the proofs below. The following Lemma summarizes Lemma E.4 of \cite{chen2018optimal}.

\begin{lemma}\label{lem:wavelet}
Let Assumption \ref{a-data}(i) hold. Then with $\psi^{J}(x)$ formed from tensor product CDV wavelets, there are constants $C_\psi,a_\zeta > 0$ depending only on $a_f$ such that
(i) $\sup_{x \in [0,1]^d } \|  \psi^J_x \|_{\ell^1}  \leq C_\psi \sqrt J$;
(ii) $C_\psi^{-1} \leq \lambda_{\min}(G_{\psi,J}) \leq \lambda_{\max}(G_{\psi,J}) \leq C_\psi$;
(iii) $\sqrt J \leq \zeta_{\psi,J} \leq a_\zeta \sqrt{J} $.
\end{lemma}

\begin{corollary}
Let Assumption \ref{a-data}(ii) hold. Then with $b^{K(J)}(w)$ formed from tensor product CDV wavelets and $J \leq K(J) \lesssim J$, there are constants $C_b,a_\zeta > 0$ depending only on $a_f$ such that
(i) $\sup_{w \in [0,1]^{d_w} } \|  b^{K(J)}_w \|_{\ell^1}  \leq C_b \sqrt J$;
(ii) $C_b^{-1} \leq \lambda_{\min}(G_{b,J}) \leq \lambda_{\max}(G_{b,J}) \leq C_b $;
(iii) $\sqrt J \leq \zeta_{b,J} \leq a_\zeta \sqrt{J} $.
\end{corollary}

We also use some continuity properties of CDV wavelets in the proofs. As the Daubechies wavelet functions are $S$ times continuously differentiable on their supports, it follows by Lemma \ref{lem:wavelet}(ii) that the basis functions are H\"older continuous, in the sense that $\|G_{\psi,J}^{-1/2} \big( [\psi_{x_1}^J] - [\psi_{x_2}^J] \big) \|_{\ell^2} \leq C J^\omega \| x_1 - x_2\|_{\ell^2}^{\omega'}$ holds for some positive constants $C,\omega,\omega'$. This basis also satisfies a Bernstein inequality (or inverse estimate): $\|\partial^a f\|_\infty \lesssim J^{|a|/d} \|f\|_\infty$ holds for any $f \in \Psi_J$ and multi-index $a$ with $|a| < S$.

\subsection{H\"older Classes} \label{ax:besov}

Let $B^p_{\infty,\infty} = \{ h \in L^\infty([0,1]^d) : \|h\|_{B_{\infty,\infty}^p} < \infty\}$ denote the H\"older space of smoothness $p$ where $\|\cdot\|_{B_{\infty,\infty}^p}$ denotes the H\"older norm of smoothness $p > 0$ (see \cite{gine2016mathematical}, pp. 370-1), and let $B_{\infty,\infty}^{p}(M) = \{h \in B_{\infty,\infty}^{p}  : \|h \|_{B_{\infty,\infty}^{p}} \leq M\}$ denote the H\"older ball of smoothness $p$ and radius $M$. For $p \not \in \mb N$, we have $h \in B^p_{\infty,\infty}$ if and only if
\[
 \|h\|_{C^{\lfloor p \rfloor}} + \sum_{a : |a| = \lfloor p \rfloor} \sup_{\substack{x,y \in [0,1]^d: \\  x \neq y}} \frac{|\partial^a h(x) - \partial^a h(y)|}{|x - y|^{p - \lfloor p \rfloor}} < \infty\,,
\]
where
\[
 \|h\|_{C^{\lfloor p \rfloor}} = \|h\|_\infty + \sum_{|a| = \lfloor p \rfloor}\|\partial^a h\|_\infty \,.
\]
The space $B_{\infty,\infty}^{p}$ may equivalently be defined by the error in approximating a function using a linear B-spline basis (see \cite{devore1988interpolation} and \cite{devore1993constructive}). To do so, let $\Psi_J$ be a CDV wavelet space of regularity $S > p$ or
dyadic B-spline space of degree $r - 1 > p$ at resolution level $L_J$ that generates $J$. Let $d(h,\Psi_J) = \inf_{g \in \Psi_J} \|h-g\|_{\infty}$. We then have
\[
 h \in B_{\infty,\infty}^{p} \iff  \|h \|_\infty + \sup_{J : J \in \T}  J^{p/d} d(h,\Psi_J) < \infty \,,
\]
and, moreover, $\|h\|_{\infty} + \sup_{J : J \in \T} J^{p/d} d(h,\Psi_J)$ is equivalent to $\|h\|_{B_{\infty,\infty}^{p}}$; see, e.g., Theorem 12.3.3. of \cite{devore1993constructive} for the scalar case and Theorem 4.8 of \cite{devore1988interpolation} for the multivariate case. By Lebesgue's lemma \cite[p. 30]{devore1993constructive}, we have
\[
 d(h,\Psi_J) \leq \| h - \Pi_J h \|_{\infty} \leq (1+ \| \Pi_J \|_{\infty}) d(h,\Psi_J) \,,
\]
where $\| \Pi_J \|_{\infty} := \sup_{h : \|h\|_\infty \leq 1} \|\Pi_J h \|_\infty$ is the $L^\infty$ norm of the $L^2_X$ projection onto $\Psi_J$ (sometimes referred to as the Lebesgue constant). \cite{huang2003local} and \cite{chen2015optimal} established that $\|\Pi_J\|_\infty \lesssim 1$  under Assumption \ref{a-data}(i) when $\Psi_J$ is spanned by a (tensor product) B-spline or CDV wavelet basis, respectively. Hence,
\[
 h \in B_{\infty,\infty}^p \iff \|h \|_{\infty} + \sup_{J : J \in \T}  J^{p/d} \|h - \Pi_J h  \|_{\infty} < \infty \,,
\]
and $\|h\|_{\infty} +  \sup_{J : J \in \T}  J^{p/d} \|h - \Pi_J h \|_{\infty}$ is equivalent to $\|\cdot\|_{B_{\infty,\infty}^{p}}$.

\section{Technical Results and Proofs of Main Results}\label{ax:proofs}

In this Appendix we first introduce additional notation. We then present technical results and proofs of the main results from Sections~\ref{sec:4estimation} and~\ref{sec:4UCB}. We finally present technical results and the proofs of main results for Section~\ref{sec:4UCBde}.

\subsection{Notation}

By the discussion in Appendix \ref{ax:basis}, there are finite positive constants $a_\zeta$ and $a_b$ such that
\begin{align*}
 a_\zeta & \geq \zeta_{\psi,J}/\sqrt J \geq 1 \,, &
 a_\zeta & \geq \zeta_{b,J}/\sqrt {K(J)} \geq 1 \,, &
 a_b & \geq K(J)/J   \,.
\end{align*}
For any sequence $(Z_i)_{i=1}^n$ of random vectors and any function $g$, let $\E_n[g(Z)] = \frac{1}{n}\sum_{i=1}^n g(Z_i)$.
Estimators of the matrices defined at the beginning of Appendix \ref{ax:basis} and their orthogonalized versions are
\begin{align*}
 \widehat{G}_{\psi,J} & = \E_n \big[ \psi^J_X  (\psi^J_X)'  \big] \,, &
 \widehat{G}_{b,J} & = \E_n \big[ b^{K(J)}_W (b^{K(J)}_W)' \big] \,, \\
 \widehat{G}_{\psi,J}^o & = G_{\psi,J}^{-1/2} \E_n \big[ \psi^J_X  (\psi^J_X)'  \big] G_{\psi,J}^{-1/2} \,, &
 \widehat{G}_{b,J}^o & = G_{b,J}^{-1/2} \E_n \big[ b^{K(J)}_W (b^{K(J)}_W)' \big] G_{b,J}^{-1/2} \,, \\
 \widehat{S}_{J} & = \E_n \big[ b^{K(J)}_W (\psi^J_X)' \big] \,, &
 \widehat{S}_{J}^o & = G_{b,J}^{-1/2} \E_n \big[ b^{K(J)}_W (\psi^J_X)' \big] G_{\psi,J}^{-1/2} \,.
\end{align*}
Sieve variances and related terms are
\begin{align*}
 \|\hat{\sigma}_{x,J,J_2}  \|_{sd}^2 & \equiv n \hat \sigma^2_{J,J_2}(x) = \|\hat{\sigma}_{x,J} \|_{sd}^2  + \|\hat{\sigma}_{x,J_2} \|_{sd}^2  -2 \hat{\sigma}_{x,J,J_2} \,, &
 \|\hat{\sigma}_{x,J}\|^2_{sd} \equiv n \hat \sigma_J^2(x)  = \hat{\sigma}_{x,J,J} \,,\\
 \|{\sigma}_{x,J,J_2}  \|_{sd}^2 & = \|{\sigma}_{x,J} \|_{sd}^2  + \|{\sigma}_{x,J_2} \|_{sd}^2  -2 {\sigma}_{x,J,J_2} \,, &
 \|{\sigma}_{x,J}\|^2_{sd}  = {\sigma}_{x,J,J} \,,
\end{align*}
where
\begin{align*}
 \hat{\sigma}_{x,J,J_2} & \equiv n \tilde \sigma_{J,J_2}(x) =  \hat L_{J,x} \widehat{\Omega}_{J,J_2} (\hat L_{J_2,x})' \,, &
 \hat{L}_{J,x} & = [\psi_x^J]' [ \widehat{S}_J' \widehat{G}_{b,J}^{-1} \widehat{S}_J]^{-1} \widehat{S}_J' \widehat{G}_{b,J}^{-1} \,, \\
 {\sigma}_{x,J,J_2} & =  L_{J,x} {\Omega}_{J,J_2} (L_{J_2,x})' \,, &
 {L}_{J,x} & =  [\psi_x^J]' [ S_J'  G_{b,J}^{-1}  S_J]^{-1} S_J' G_{b,J}^{-1} \,,
\end{align*}
with $\hat \sigma^2_{J,J_2}(x)$ and $\tilde \sigma_{J,J_2}(x)$ given in (\ref{eq:hat-var-J2}), and
\begin{align*}
 \widehat{\Omega}_{J,J_2} & = \E_n \left[ \hat{u}_J \hat{u}_{J_2} b^{K(J)}_W b^{K(J_2)}_W \right]' \,, \quad \hat u_{i,J} = Y_i - \hat h_J(X_i) \,, &
 \widehat{\Omega}_J & = \widehat{\Omega}_{J,J} \,, \\
 {\Omega}_{J,J_2} & = \E \left[ {u}^2 b^{K(J)}_W b^{K(J_2)}_W \right]' \,, \quad u_{i} = Y_i - h_0(X_i) \,, &
 \Omega_J & = \Omega_{J,J}\,.
\end{align*}
Recall that $\Pi_J$ is the $L^2_X$ projection onto $\Psi_J$. We also define
\begin{align*}
 \Delta_J h_0 & = h_0 - \Pi_J h_0 \,, &
 \tilde h_J (x) & = \hat{L}_{J,x} \E_n[ b^{K(J)}_W h_0(X)] \,.
\end{align*}
For bootstrap and related processes, we use the notation
\begin{equation} \label{eq:z_star}
 \mathbb{Z}_n^*(x,J,J_2)  =  \frac{ 1}{\| \hat{\sigma}_{x,J,J_2} \|_{sd}} \left(\frac{1}{\sqrt n} \sum_{i=1}^n \bigg( \hat{L}_{J,x}  b^{K(J)}_{W_i} \hat u_{i,J} - \hat{L}_{J_2,x}  b^{K(J_2)}_{W_i} \hat u_{i,J_2} \bigg) \varpi_i  \right) \,,
\end{equation}
where $(\varpi_i)_{i=1}^n$ are IID $N(0,1)$ draws independent of the data, and
\begin{align}
 \mathbb{Z}_n^*(x,J) & \equiv \frac{D_J^*(x)}{\hat \sigma_J(x)} = \frac{1}{\|\hat{\sigma}_{x,J}\|_{sd}} \left(\frac{1}{\sqrt n} \sum_{i=1}^n \hat{L}_{J,x} b^{K(J)}_{W_i} \hat{u}_{i,J} \varpi_i \right) \,, \label{eq:Zstar} \\
 \widehat{\mathbb{Z}}_n (x,J) & = \frac{1}{\|\sigma_{x,J}\|_{sd}} \left(\frac{1}{\sqrt n} \sum_{i=1}^n {L}_{J,x} b^{K(J)}_{W_i} u_i \varpi_i \right) \,, \label{eq:Zhat} \\
 \mathbb Z_n(x,J) & = \frac{1}{\|\sigma_{x,J}\|_{sd}} \left(\frac{1}{\sqrt n} \sum_{i=1}^n {L}_{J,x} b^{K(J)}_{W_i} u_i \right) \,. \label{eq:Z}
\end{align}
The law of the processes $\mathbb{Z}_n^*(x,J)$  and $\widehat{{\mathbb Z}}_n(x,J)$ is determined from $(\varpi_i)_{i=1}^n$ conditional on the data $\mathcal{Z}^n:=(X_i,Y_i,W_i)_{i=1}^n$. We let $\mathbb P^*$ denote their probability measure (i.e., with respect to the $(\varpi_i)_{i=1}^n$ conditional on the data) and $\E^*$ denote expectation under $\mathbb P^*$.
We also shorten ``with $\mathbb{P}_{h_0}$ probability approaching $1$ (uniformly over $h_0 \in \mathcal{H})$'' to ``wpa1 $\mc H$-uniformly''. We write $\mc H^p = \mc H \cap B_{\infty,\infty}^p(M)$ and $\mc G^p = \mc G \cap B_{\infty,\infty}^p(M)$.

\subsection{Technical Results}

Here we present several technical results that are used in the proofs of the main results in Section~\ref{s:theory}. The proofs of these technical results are presented in our earlier working paper version \citep{cck}. The following Lemmas~\ref{lem:J0_mild} to~\ref{zorder} are labelled as Lemmas~D.1 to~D.7 in \cite{cck}, whereas the following Theorems~\ref{unifbiasvar} and~\ref{consistent} are labelled as Theorems~D.1 and~D.2 in \cite{cck}.

We first state two preliminary lemmas used in the proof of Theorem \ref{lepski2}. The first relates to resolution levels in the mildly ill-posed case. For any positive constant $R$, define
\begin{equation} \label{eq:J_bar_max}
 \bar J_{\max}(R) = \sup \bigg \{ J \in \T :   J \sqrt{\log J} \big[ (\log n)^4  \vee \tau_J \big] \leq R\sqrt{n}  \bigg \}\,.
\end{equation}
For $D > 0$ and $p \in [\ul p,\ol p]$, define
\begin{equation} \label{eq:J0_mild}
\begin{aligned}
 J_0(p,D) & = \sup \bigg \{ J \in \T : \tau_J \frac{\sqrt{J} \theta^*_{1-\hat \alpha}}{\sqrt{n}} \leq D J^{- \frac pd}    \bigg \} \,, \\
 J_0^+(p,D) & = \inf \{ J \in \T :  J > J_0(p,D)   \} \,.
\end{aligned}
\end{equation}

\begin{lemma}\label{lem:J0_mild}
Let Assumptions \ref{a-data}-\ref{a-var} hold and let $\tau_J \asymp J^{\varsigma/d}$ with $\varsigma \geq 0$. Then: with $\bar J_{\max}(R)$ as defined in (\ref{eq:J_bar_max}) for any $R >0$ and $J_0^+(p,D)$ as defined in (\ref{eq:J0_mild}) for any $D >0$, we have
\[
 \inf_{p \in [\ul p,\ol p]} \inf_{h_0 \in \mc H^p} \mathbb{P}_{h_0} ( J_0^+(p,D) < \bar{J}_{\max}(R) ) \to 1.
\]
\end{lemma}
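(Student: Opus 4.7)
The plan is to exploit the mildly ill-posed structure $\tau_J \asymp J^{\varsigma/d}$ to solve both $J_0^+(p,D)$ and $\bar J_{\max}(R)$ explicitly up to constants and logarithmic factors, then to compare the resulting exponents using the restriction $\ul p > d/2$.

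First, I would establish a high-probability upper bound of the form $\theta^*_{1-\hat \alpha} \leq C_\theta \sqrt{\log n}$, uniformly over $h_0 \in \mc H$. The bootstrap statistic in (\ref{eq:sup-stat}) is a supremum of a self-normalized Gaussian-multiplier process indexed over $\mc X \times \hat{\mc J} \times \hat{\mc J}$, with $|\hat{\mc J}|$ growing at most logarithmically in $n$ and $\mc X$ reducible to a polynomially-sized grid by the H\"older continuity of the basis (Appendix~\ref{ax:basis}). A standard Gaussian maximal inequality applied conditional on the data then delivers a quantile bound of order $\sqrt{\log n}$, provided the estimated variances $\hat \sigma_{J,J_2}^2(x)$ concentrate around their population counterparts and $\hat s_J \asymp \tau_J^{-1}$ uniformly over $J \in \hat{\mc J}$. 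Both concentration results follow from standard matrix concentration under Assumptions~\ref{a-data}--\ref{a-var}, wpa1 $\mc H$-uniformly.

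Second, substituting $\tau_J \asymp J^{\varsigma/d}$ into (\ref{eq:J0_mild}) and solving for $J$ shows that $J_0^+(p,D)$ is of order $\big(n/(\theta^*_{1-\hat\alpha})^2\big)^{d/(2(p+\varsigma)+d)}$, so combined with Step~1 we obtain
\[
 J_0^+(p,D) \,\leq\, C_1 \big(n/\log n\big)^{d/(2(p+\varsigma)+d)}
\]
wpa1 $\mc H^p$-uniformly. Similarly, substituting $\tau_J \asymp J^{\varsigma/d}$ into the definition of $\bar J_{\max}(R)$ (which balances $\tau_J J \sqrt{\log J}$ against a constant multiple of $\sqrt n$) yields
\[
 \bar J_{\max}(R) \,\geq\, C_2 \big(n/\log n\big)^{d/(2(d+\varsigma))}.
\]

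Finally, I would compare exponents: since $p \geq \ul p > d/2$, one has $2(p+\varsigma)+d > 2(d+\varsigma)$, and hence $d/(2(p+\varsigma)+d) < d/(2(d+\varsigma))$. Because the gap $\ul p - d/2 > 0$ is fixed, the strict inequality is uniform over $p \in [\ul p,\ol p]$, giving $J_0^+(p,D)/\bar J_{\max}(R) \to 0$ uniformly in $h_0 \in \mc H$ wpa1, which delivers the claim. The main obstacle is the uniform bootstrap quantile bound in Step~1: both $\hat \alpha$ and $\hat{\mc J}$ are data-dependent, so one must first control $\hat s_J$ uniformly over the (random) index set, then apply the multiplier concentration argument uniformly across those data realizations, while ensuring that the resulting $O(\sqrt{\log n})$ quantile bound holds simultaneously over all $h_0 \in \mc H$.
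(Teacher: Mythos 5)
There is a genuine gap: you have the direction of the required bootstrap-quantile bound backwards. By (\ref{eq:J0_mild}), $J_0(p,D)$ is the largest $J\in\T$ with $\tau_J J^{1/2+p/d}\,\theta^*_{1-\hat \alpha}\le D\sqrt n$, so $J_0(p,D)$ is \emph{decreasing} in $\theta^*_{1-\hat \alpha}$: to upper-bound $J_0^+(p,D)$, which is what this lemma requires, you must bound $\theta^*_{1-\hat \alpha}$ from \emph{below}. Your Step~1 establishes the upper bound $\theta^*_{1-\hat \alpha}\le C_\theta\sqrt{\log n}$, and substituting that into $J_0^+(p,D)\asymp \big(n/(\theta^*_{1-\hat\alpha})^2\big)^{d/(2(p+\varsigma)+d)}$ yields only $J_0^+(p,D)\gtrsim (n/\log n)^{d/(2(p+\varsigma)+d)}$ --- a lower bound --- so your displayed inequality $J_0^+(p,D)\le C_1(n/\log n)^{d/(2(p+\varsigma)+d)}$ does not follow from your Step~1. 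The paper instead uses the \emph{lower} bound of Lemma~\ref{lepquant}, $\theta^*_{1-\hat \alpha}\ge C_4\sqrt{\log \bar J_{\max}}$ wpa1 $\mc H$-uniformly (the easy direction: $\theta^*_{1-\hat\alpha}$ dominates the $(1-\hat\alpha)$ quantile of a single standard normal and $\hat\alpha\to 0$), uses $p\ge\ul p$ to replace the random, $p$-dependent $J_0(p,D)$ by a deterministic threshold $\tilde J_0$ defined with $\ul p$ and $\sqrt{\log\bar J_{\max}}$, and then shows $\tilde J_0^+<\bar J_{\max}(R)$ by contradiction. Consequently the ``main obstacle'' you flag --- the uniform upper quantile bound over data-dependent $\hat\alpha$ and $\hat{\mc J}$ --- is not what this lemma needs at all.

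Two further points. First, because $\theta^*_{1-\hat\alpha}$ is data-dependent, $J_0(p,D)$ is random, so before comparing it with the deterministic $\bar J_{\max}(R)$ you need the wpa1 reduction to a deterministic bound (the paper's $\tilde J_0$), holding uniformly over $h_0\in\mc H^p$ and $p\in[\ul p,\ol p]$; your sketch skips this. Second, your claimed bound $\bar J_{\max}(R)\gtrsim (n/\log n)^{d/(2(d+\varsigma))}$ ignores the $(\log n)^4$ term in (\ref{eq:J_bar_max}); when $\varsigma=0$ that term dominates $\tau_J$ and $\bar J_{\max}(R)\asymp \sqrt n\,(\log n)^{-9/2}$, not $(n/\log n)^{1/2}$. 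This only costs logarithmic factors and would not overturn the exponent comparison $d/(2(p+\varsigma)+d)<d/(2(d+\varsigma))$ driven by $\ul p>d/2$ --- which is indeed the same comparison underlying the paper's contradiction step --- but as written your Step~2 bound for $\bar J_{\max}(R)$ is incorrect in the $\varsigma=0$ case.
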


The second lemma relates to resolution levels in the severely ill-posed case. For $R > 0$ and $p \in [\ul p,\ol p]$, define
\begin{align}
 \bar J_{\max}^*(R) & = \sup \left \{ J \in \T :  \tau_J J \sqrt{\log J}  \leq R\sqrt{n}  \right \} \,, \label{eq:Jmax*}\\
  M_0(p,R) & = \sup \{ J \in \T :  \tau_J J^{\frac pd + \frac{1}{2}} \sqrt{\log J} \leq R \sqrt{n}  \} \,, \label{eq:M0} \\
  M_0^+(p,R) & = \inf \{ J \in \T :  J > M_0(p,R) \} \,. \notag
\end{align}
Note that $M_0(p,R)$ is (weakly) decreasing in $p$. In particular, as $\ol p/d+1/2 \geq \ul p/d + 1/2 > 1$, we have $\bar{J}_{\max}^*(R) \geq M_0(\ul p,R) \geq M_0(p,R) \geq M_0(\ol p,R)$ for each $R$ and each $p \in [\ul p,\ol p]$.

\begin{lemma}\label{lem:J0_severe}
Let $\tau_J \asymp \exp(C J^{\varsigma/d})$ for some $C,\varsigma > 0$. Then for any $R > 0$, the inequality  $M_0^+(\ol p,R) \geq J_{\max}^*(R)$ holds for all $n$ sufficiently large.
\end{lemma}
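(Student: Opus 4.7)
The plan is to argue by contradiction, exploiting two competing exponential factors: the exponential growth of $\tau_J \asymp \exp(CJ^{\varsigma/d})$ and the (asymptotically) dyadic spacing of $\T$. Set $J^* := \bar J_{\max}^*(R)$ and $M := M_0(\ol p, R)$. Since $\ol p > d/2$, we have $J^{\ol p/d + 1/2} \geq J$ for all $J \geq 1$, so the inequality defining $M$ is strictly stronger than that defining $J^*$, hence $M \leq J^*$. In the trivial case $M = J^*$, the conclusion $M_0^+(\ol p,R) > M = J^* = \bar J_{\max}^*(R)$ is immediate from the definition of $M_0^+$, and we are done.

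Suppose instead, for contradiction, that $M^+ := M_0^+(\ol p, R) < J^*$ for infinitely many $n$. Since $M^+$ and $J^*$ both belong to $\T$ and, for both the dyadic B-spline and CDV wavelet sieves of Appendix~\ref{ax:basis}, the ratio between consecutive elements of $\T$ tends to $2^d$ as the resolution level grows, there exists a constant $c > 1$ with $J^* \geq c\,M^+$ for all $n$ sufficiently large. Translating via $\tau_J \asymp \exp(CJ^{\varsigma/d})$ yields $\tau_{J^*} \gtrsim \tau_{M^+}^{\alpha}$ with $\alpha := c^{\varsigma/d} > 1$. Combining this with the two defining inequalities,
\[
 \tau_{M^+}(M^+)^{\ol p/d + 1/2}\sqrt{\log M^+} > R\sqrt n \quad \text{and} \quad \tau_{J^*} J^* \sqrt{\log J^*} \leq R\sqrt n,
\]
together with the bound $M^+ \leq J^*$, I would derive an inequality of the schematic form
\[
 (R\sqrt n)^{\alpha - 1}\, J^*\sqrt{\log J^*} \;\lesssim\; \bigl((J^*)^{\ol p/d + 1/2}\sqrt{\log J^*}\bigr)^{\alpha}.
\]

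The contradiction then comes from a size comparison: since $J^* \asymp (\log n)^{d/\varsigma}$ in the severely ill-posed regime, the right-hand side of the displayed inequality is only poly-logarithmic in $n$, whereas the left-hand side grows polynomially (of order $n^{(\alpha - 1)/2}$ up to log factors, with a strictly positive exponent). This forces the inequality to fail for all $n$ sufficiently large, yielding the desired contradiction. The main obstacle is careful bookkeeping of the grid geometry of $\T$: the ratio between consecutive elements equals $2^d$ exactly only in the CDV wavelet case and only asymptotically in the B-spline case, so one must restrict to $n$ large enough that $M^+$ lies in the regime where this ratio is bounded below by a fixed constant strictly exceeding $1$.
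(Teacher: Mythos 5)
Your proof is correct, but it takes a genuinely different route from the paper's. The paper argues directly at the level of resolution indices: with $\kappa = \ol p/d + 1/2$ it sets $x_l = \bigl(\tau_{2^{ld}}2^{ld}\sqrt{\log 2^{ld}}\bigr)^2$ and $y_l = \bigl(\tau_{2^{ld}}2^{l\kappa d}\sqrt{\log 2^{ld}}\bigr)^2$, notes $x_l < y_l$ (since $\kappa>1$) and, crucially, that severe ill-posedness forces $y_l/x_{l+1}\to 0$, so the thresholds interleave, $x_l<y_l<x_{l+1}$, for all large $l$; then whichever of $[x_l,y_l)$ or $[y_l,x_{l+1})$ contains $R^2n$, the corresponding resolution levels satisfy $L_{\max}^*(R)-1\leq L_0(R)\leq L_{\max}^*(R)$, which is the claim. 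You instead argue by contradiction: if $M_0^+(\ol p,R)<\bar J_{\max}^*(R)$, the grid geometry gives $\bar J_{\max}^*(R)\geq c\,M_0^+(\ol p,R)$ with a fixed $c>1$, hence $\tau_{\bar J_{\max}^*(R)}\gtrsim \tau_{M_0^+(\ol p,R)}^{\alpha}$ with $\alpha=c^{\varsigma/d}>1$, and combining the violated constraint at $M_0^+(\ol p,R)$ with the satisfied constraint at $\bar J_{\max}^*(R)$ yields exactly your displayed inequality, which pits a factor of order $n^{(\alpha-1)/2}$ against a quantity that is polylogarithmic in $n$ because $\bar J_{\max}^*(R)\lesssim(\log n)^{d/\varsigma}$; I verified that this derivation and the size comparison go through. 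Both arguments ultimately exploit the same structural fact—$\tau_J$ jumps super-polynomially between consecutive elements of $\T$—but the paper's comparison is local to adjacent grid points and purely about the deterministic sequences, whereas yours additionally uses the growth rate of $\bar J_{\max}^*(R)$; to keep the argument self-contained (the places in the paper where that rate is stated are downstream of this lemma), derive the needed upper bound in one line from $\exp\bigl(C(\bar J_{\max}^*(R))^{\varsigma/d}\bigr)\lesssim\tau_{\bar J_{\max}^*(R)}\leq R\sqrt n$ rather than citing it. Your caution about the grid ratio is unnecessary but harmless: for the B-spline grid the ratio of consecutive elements of $\T$ equals $\bigl((2^{l+1}+r-1)/(2^l+r-1)\bigr)^d\geq\bigl((r+1)/r\bigr)^d>1$ at every level, and for CDV wavelets it is exactly $2^d$.
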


\subsubsection{Uniform-in-$J$ Convergence Rates for $\hat h_J$}

Recall the definition of $\bar J_{\max}(R)$ from (\ref{eq:J_bar_max}) and that $\Delta_J h_0 = h_0 - \Pi_J h_0$.

\begin{theorem}\label{unifbiasvar}
Let Assumptions \ref{a-data}, \ref{a-residuals}(i), and \ref{a-approx} hold, and for any positive constant $R$ let $\bar J_{\max} \equiv \bar{J}_{\max} (R)$. Then: there exists a universal constant $C_{\ref{unifbiasvar}} > 0$ such that
\begin{align*}
 & (i) \; \; \;  \inf_{h_0 \in \mathcal{H}} \mathbb{P}_{h_0} \bigg(   \| \tilde{h}_J -  h_0  \|_{\infty} \leq C_{\ref{unifbiasvar}} \| \Delta_J h_0 \|_{\infty} \; \;  \; \forall \; \; J \in \T \cap [1, \bar J_{\max}] \bigg) \rightarrow 1 \, , \\ & (ii) \; \; \;
 \inf_{h_0 \in \mathcal{H}} \mathbb P_{h_0} \bigg(   \|\hat h_J - \tilde h_J\|_\infty \leq C_{\ref{unifbiasvar}} \tau_J \frac{\sqrt{J \log \bar{J}_{\max}}}{\sqrt{n}}  \; \; \; \forall \; \; J \in \T \cap [1, \bar J_{\max}] \bigg) \rightarrow 1 \,.
\end{align*}
\end{theorem}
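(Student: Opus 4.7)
The plan is to decompose both $\tilde h_J - h_0$ and $\hat h_J - \tilde h_J$ into pieces that can be controlled by combining: (a) population-level projection identities together with Assumption~\ref{a-approx}; (b) matrix concentration for the sample matrices $\widehat G_{\psi,J}^o$, $\widehat G_{b,J}^o$, and $\widehat S_J^o$ around their population counterparts; and (c) a union bound over $J \in \T \cap [1,\bar J_{\max}]$. The definition of $\bar J_{\max}$ with the extra $(\log n)^4$ buffer is engineered precisely so that all of the probabilistic events needed in steps (b)-(c) occur simultaneously wpa1 $\mc H$-uniformly, which is the non-routine aspect of the argument.

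For part (i), I would first note the population identity $Q_J h_0 - h_0 = Q_J \Delta_J h_0 - \Delta_J h_0$, so by Assumption~\ref{a-approx}(iii), $\|Q_J h_0 - h_0\|_\infty \leq (1+C_Q)\|\Delta_J h_0\|_\infty$. It remains to show $\|\tilde h_J - Q_J h_0\|_\infty \lesssim \|\Delta_J h_0\|_\infty$ uniformly in $J \leq \bar J_{\max}$. Writing $\tilde h_J - Q_J h_0$ in terms of the sample normal equations and using standard perturbation bounds for $[\widehat S_J'\widehat G_{b,J}^{-1}\widehat S_J]^{-1}$, this reduces to controlling (i) $\|\widehat G_{b,J}^o - I\|$ and $\|\widehat G_{\psi,J}^o - I\|$, (ii) $\|\widehat S_J^o - S_J^o\|$, and (iii) the sup-norm of the empirical projection of $\Delta_J h_0$ onto $B_{K(J)}$. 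Parts (i)-(ii) follow from matrix Bernstein inequalities applied at each fixed $J$; the operator norms are $O_{\mb P}(\sqrt{J\log J/n})$ up to $\bar J_{\max}$, which is $o(1)$ by construction. Part (iii) uses Assumption~\ref{a-approx}(ii)-(iii) together with properties of the Lebesgue constant (bounded for B-splines/CDV wavelets, see Appendix~\ref{ax:besov}) to transfer the $L^2$-stability condition into a sup-norm bound on the linearization error.

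For part (ii), one writes
\[
\hat h_J(x) - \tilde h_J(x) = \psi_x^{J\prime}[\widehat S_J'\widehat G_{b,J}^{-1}\widehat S_J]^{-1}\widehat S_J'\widehat G_{b,J}^{-1}\,\E_n\big[b^{K(J)}_W u\big]\,.
\]
Conditioning on the design, the matrices on the right can be replaced by their population analogues at a cost of a lower-order term, by the concentration bounds already established for part (i). The leading stochastic term is then of the form $\psi_x^{J\prime}[S_J'G_{b,J}^{-1}S_J]^{-1}S_J'G_{b,J}^{-1}\,\E_n[b^{K(J)}_W u]$, whose $\ell^2$ norm scales like $\tau_J\sqrt{J}$ uniformly in $x$ by Assumption~\ref{a-var}(i). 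A standard Bernstein-type maximal inequality (as in \cite{chen2018optimal}), applied conditionally on the covariates using Assumption~\ref{a-residuals}(i) and the envelope $\zeta_{b,J}\lesssim \sqrt J$, then yields a sup-norm bound of order $\tau_J\sqrt{J\log J/n}$ at each $J$.

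The main obstacle is promoting the pointwise-in-$J$ bounds above to hold uniformly over $J \in \T \cap [1,\bar J_{\max}]$. Here I would exploit the geometric spacing of $\T$ (dyadic for CDV wavelets, dyadic-in-resolution for B-splines), so $|\T \cap [1,\bar J_{\max}]| \lesssim \log \bar J_{\max} \lesssim \log n$. A crude union bound over this grid then inflates tail probabilities by a $\log n$ factor, which is absorbed cleanly by the $\sqrt{\log \bar J_{\max}}$ factor in the stated rate and by the $(\log n)^4$ buffer in $\bar J_{\max}$ that ensures the matrix-concentration remainders are $o(1)$ simultaneously across all $J$. The uniformity over $h_0 \in \mc H$ is straightforward because the relevant constants in Assumptions~\ref{a-data}-\ref{a-var} are uniform over $\mc H$, and the only data-dependent objects entering the bounds are the sample matrices (which do not depend on $h_0$) and the inner product $\E_n[b_W^{K(J)} h_0(X)]$, whose concentration around its population version is controlled uniformly by the H\"older-ball restriction $\mc H \subseteq B^{\ul p}_{\infty,\infty}(M)$.
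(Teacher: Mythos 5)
Your overall architecture matches the paper's: decompose $\tilde h_J - h_0$ around the population TSLS projection and absorb the deterministic pieces via Assumption \ref{a-approx}(ii)(iii), use uniform-in-$J$ concentration for $\widehat G^o_{b,J}$, $\widehat G^o_{\psi,J}$, $\widehat S^o_J$, and finish with a union bound over the geometrically spaced grid $\T \cap [1,\bar J_{\max}]$. But there is a genuine gap in your part (ii). You propose to bound the stochastic term by ``a standard Bernstein-type maximal inequality applied conditionally on the covariates using Assumption \ref{a-residuals}(i)''. Assumption \ref{a-residuals}(i) only bounds $\E[u^4|W]$; the summands $L_{J,x} b^{K(J)}_{W_i} u_i$ are therefore neither bounded nor sub-exponential, and Bernstein's inequality cannot be applied to them directly. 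The paper's proof necessarily truncates: it writes $u_i = u_i^+ + u_i^-$ at a level $M_n$ with $M_n^3 = \sqrt{n \bar J_{\max}/\log \bar J_{\max}}$, applies Bernstein coordinatewise to the truncated part (the supremum over $x$ is handled through the $\ell^1$-boundedness of the B-spline/wavelet basis, i.e.\ via $\|\hat c_J - \tilde c_J\|_{\ell^\infty}$, not through a chaining argument over $x$), and controls the tail part by Markov's inequality using $\E[|u|^4] < \infty$. Without this truncation step the claimed $\sqrt{\log \bar J_{\max}}$-type tails are not available under the stated moment condition.

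Relatedly, you misplace the role of the $(\log n)^4$ buffer in (\ref{eq:J_hat_max})/(\ref{eq:J_bar_max}): the matrix concentration you describe only needs $\tau_{\bar J_{\max}} \bar J_{\max} \sqrt{\log \bar J_{\max}/n} \lesssim 1$, which the $\tau_J$ part of the definition already delivers; the buffer is what guarantees $\bar J_{\max}^2 \log \bar J_{\max}/n \to 0$, which is exactly what the truncation argument requires (it makes $M_n \zeta_{b,J}\sqrt{\log \bar J_{\max}/n} \asymp (\bar J_{\max}^2 \log \bar J_{\max}/n)^{1/3} \to 0$ and kills the tail term). Two smaller points: (a) you invoke Assumption \ref{a-var}(i) to get $\sup_x \|\sigma_{x,J}\| \lesssim \tau_J \sqrt J$, but Assumption \ref{a-var} is not a hypothesis of this theorem; the needed bound follows instead from $\|(\psi^J_x)'(G_{b,J}^{-1/2} S_J)_l^-\|_{\ell^2} \leq \zeta_{\psi,J} s_J^{-1} \lesssim \sqrt J \, \tau_J$ using the basis properties and $s_J^{-1} \leq a_\tau \tau_J$; (b) in part (i) the perturbation error of $[\widehat S_J' \widehat G_{b,J}^{-1} \widehat S_J]^{-1}$ carries a factor $\tau_J^2$ and must be multiplied against $\|G_{b,J}^{-1/2} \E_n[b^{K(J)}_W \Delta_J h_0(X)]\|_{\ell^2} \lesssim (\tau_J^{-1} + \sqrt{\bar J_{\max}/n}) \|\Delta_J h_0\|_\infty$ (Bessel, Assumption \ref{a-approx}(ii), and a Talagrand bound); it is precisely the constraint $\tau_J J \sqrt{\log J} \leq R\sqrt n$ defining $\bar J_{\max}$ that makes this product $O(\|\Delta_J h_0\|_\infty)$, a scaling your sketch does not engage with.
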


\subsubsection{Uniform-in-$J$ Estimation of Sieve Variance Terms}

Recall the definition of $\bar J_{\max}(R)$ from (\ref{eq:J_bar_max}). In the remainder of this subsection, for any fixed $R > 0$, let $\bar J_{\max} \equiv \bar J_{\max}(R)$. Also let $J_{\min} \to \infty$ arbitrarily slowly. Given $\bar J_{\max}$ and $J_{\min}$, define
$\mc J_n = \{J \in \T : J_{\min} \leq J \leq \bar J_{\max}\}$,
\begin{equation} \label{eq:S-set}
\mc S_n = \{ (x,J,J_2) \in \mc X \times \mc J_n \times \mc J_n : J_2 > J\}
\end{equation}
and
\begin{equation} \label{eq:delta_n}
 \delta_n = \tau_{\bar{J}_{\max}} \sqrt{\frac{\bar{J}_{\max} \log \bar{J}_{\max}}{n}}  + \bigg(  \frac{\bar{J}_{\max}^2 \log \bar{J}_{\max}}{n}   \bigg)^{1/3}     + J_{\min}^{- \ul p/d} \,.
\end{equation}

\begin{lemma} \label{varest3}
Let Assumptions \ref{a-data}-\ref{a-var} hold. Then: there exists universal constants $C_{\ref{varest3}} > 0$ and $N_{\ref{varest3}} \in \mb N$ such that:
\begin{enumerate}[nosep]
\item[(i)] for every $x \in \mathcal{X}$ and $J ,J_2  \in \T$ with $J_2 > J \geq N_{\ref{varest3}}$, we have
\[
 C_{\ref{varest3}}^{-1} \| \sigma_{x,J_2}   \|_{sd} \leq  \| \sigma_{x,J,J_2}   \|_{sd} \leq C_{\ref{varest3}} \| \sigma_{x,J_2}   \|_{sd}\,;
\]
\item[(ii)] we have
\[
 \inf_{h_0 \in \mc H} \mathbb{P}_{h_0} \bigg( \sup_{(x,J,J_2) \in \mathcal{S}_n  } \left| \frac{\| \hat{\sigma}_{x,J,J_2}  \|_{sd}}{ \| \sigma_{x,J,J_2}   \|_{sd}} - 1   \right| \leq C_{\ref{varest3}}  \delta_n     \bigg) \rightarrow 1 \,.
\]
\end{enumerate}
\end{lemma}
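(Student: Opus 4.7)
My proof of Lemma \ref{varest3} proceeds in two steps corresponding to parts (i) and (ii). For part (i), I represent the cross term as $\sigma_{x,J,J_2} = \E[u^2 v_J(W) v_{J_2}(W)]$, where $v_J(W) = L_{J,x} b^{K(J)}_W$ is scalar-valued. The conditional Cauchy--Schwarz inequality gives $|\sigma_{x,J,J_2}| \leq \|\sigma_{x,J}\|_{sd}\|\sigma_{x,J_2}\|_{sd}$, and Assumption \ref{a-var}(ii) furnishes $\|\sigma_{x,J}\|_{sd} \leq \sqrt{\gamma}\|\sigma_{x,J_2}\|_{sd}$ for all $J \geq N_{\ref{varest3}}$ sufficiently large, uniformly in $x$. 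Combining the two yields
\[
 (1 - \sqrt{\gamma})^2 \|\sigma_{x,J_2}\|_{sd}^2 \leq \|\sigma_{x,J_2}\|_{sd}^2 - 2|\sigma_{x,J,J_2}| + \|\sigma_{x,J}\|_{sd}^2 \leq \|\sigma_{x,J,J_2}\|_{sd}^2 \leq 4 \|\sigma_{x,J_2}\|_{sd}^2,
\]
from which the two-sided bound in part (i) is immediate.

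For part (ii), I normalize throughout by $\|\sigma_{x,J_2}\|_{sd}^2$, which by part (i) is of the same order as $\|\sigma_{x,J,J_2}\|_{sd}^2$, and aim to show $|\|\hat\sigma_{x,J,J_2}\|_{sd}^2 - \|\sigma_{x,J,J_2}\|_{sd}^2| \lesssim \delta_n \|\sigma_{x,J_2}\|_{sd}^2$ uniformly over $\mc S_n$; the claim then follows from $|a/b - 1| \leq |a^2/b^2 - 1|$ when $a,b \geq 0$ and $a/b$ is bounded. Because $\|\hat\sigma_{x,J,J_2}\|_{sd}^2$ decomposes into $\|\hat\sigma_{x,J}\|_{sd}^2 + \|\hat\sigma_{x,J_2}\|_{sd}^2 - 2\hat\sigma_{x,J,J_2}$, I analyze each term separately and introduce the infeasible intermediates $\tilde\sigma_{x,J,J_2} = \hat L_{J,x} \tilde\Omega_{J,J_2} \hat L_{J_2,x}'$ and $\check\sigma_{x,J,J_2} = L_{J,x} \tilde\Omega_{J,J_2} L_{J_2,x}'$, where $\tilde\Omega_{J,J_2} = \E_n[u_i^2 b^{K(J)}_{W_i} b^{K(J_2)'}_{W_i}]$ employs the true residuals, so that
\[
 \hat\sigma_{x,J,J_2} - \sigma_{x,J,J_2} = [\hat\sigma_{x,J,J_2} - \tilde\sigma_{x,J,J_2}] + [\tilde\sigma_{x,J,J_2} - \check\sigma_{x,J,J_2}] + [\check\sigma_{x,J,J_2} - \sigma_{x,J,J_2}].
\]

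The first bracket isolates the error from replacing $u_i$ with $\hat u_{i,J}$. Expanding $\hat u_{i,J}\hat u_{i,J_2} - u_i^2 = -u_i[(\hat h_J - h_0)(X_i) + (\hat h_{J_2} - h_0)(X_i)] + (\hat h_J - h_0)(X_i)(\hat h_{J_2} - h_0)(X_i)$ and invoking Theorem \ref{unifbiasvar}, whose uniform sup-norm rate $\|\hat h_J - h_0\|_\infty \lesssim \|\Delta_J h_0\|_\infty + \tau_J\sqrt{J\log \bar J_{\max}/n}$ holds on $\T \cap [1,\bar J_{\max}]$, yields a contribution dominated by $\|\Delta_J h_0\|_\infty \lesssim J^{-\ul p/d} \leq J_{\min}^{-\ul p/d}$, supplying the $J_{\min}^{-\ul p/d}$ component of $\delta_n$. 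The second bracket involves $\hat L_{J,x} - L_{J,x}$ and is handled via matrix Bernstein concentration of $\widehat G_{\psi,J}$, $\widehat G_{b,J}$, and $\widehat S_J$ around their population counterparts, combined with the H\"older continuity of $L_{J,x}$ in $x$ (a standard property of B-spline and CDV wavelet bases reviewed in Appendix \ref{ax:basis}); an $\varepsilon$-net over $\mc X$ and a union bound over $\mc J_n$, whose cardinality is $O(\log n)$ by the dyadic structure of $\T$, together produce the $\tau_{\bar J_{\max}}\sqrt{\bar J_{\max}\log \bar J_{\max}/n}$ contribution. The third bracket is the concentration of $\tilde\Omega_{J,J_2}$ around $\Omega_{J,J_2}$ viewed as a bilinear form paired with $L_{J,x}$ and $L_{J_2,x}$.

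The main technical obstacle is this third bracket. Assumption \ref{a-residuals}(i) provides only conditional fourth moments of $u$, so one cannot directly apply matrix Bernstein, which would need a uniform bound on $u^2\|b^{K(J)}_W\|_{\ell^2}^2$. I therefore truncate at a data-independent threshold $u^2 \leq M_n$, apply matrix Bernstein to the truncated summands, and control the remainder via Markov's inequality using the fourth-moment hypothesis; optimizing the threshold $M_n$ produces the cube-root rate $(\bar J_{\max}^2 \log \bar J_{\max} /n)^{1/3}$. The $\bar J_{\max}^2$ inside the cube root reflects that the concentration must hold uniformly over pairs $(J,J_2) \in \mc J_n^2$, effectively requiring operator-norm control of $O(\bar J_{\max})\times O(\bar J_{\max})$ rectangular random matrices rather than of a single sequence. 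The overall supremum over $x \in \mc X$ and over $\mc J_n \times \mc J_n$ is finally assembled via the $\varepsilon$-net together with union bounds, at most a logarithmic cost that is absorbed into $\delta_n$.
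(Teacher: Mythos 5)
Your proposal is correct and follows essentially the same route as the paper: part (i) is the paper's reverse-triangle/Cauchy--Schwarz argument combined with Assumption \ref{a-var}(ii), and part (ii) reduces to the squared ratio via $|a-1|\leq|a^2-1|$ and then decomposes the error into the residual-estimation piece (controlled by Theorem \ref{unifbiasvar}, giving the $J_{\min}^{-\ul p/d}$ term), the estimated-projection piece (matrix concentration of $\widehat G_{\psi,J}$, $\widehat G_{b,J}$, $\widehat S_J$, giving the $\tau_{\bar J_{\max}}\sqrt{\bar J_{\max}\log\bar J_{\max}/n}$ term), and the truncated matrix-Bernstein-plus-Markov control of the second-moment matrix (giving the cube-root term), exactly the ingredients the paper packages into Lemmas \ref{lem-omega2}, \ref{lem-GSconvunif}, and \ref{lem-varest2}. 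The only cosmetic differences are that the paper obtains uniformity in $x$ for free from operator-norm bounds (no $\varepsilon$-net over $\mc X$ is needed, since $x$ enters only through vectors whose normalized norms are bounded), and the residual-replacement bracket carries the variance as well as the bias component, both of which are absorbed into $\delta_n$.
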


\begin{lemma}\label{lem-varest2}
Let Assumptions \ref{a-data}-\ref{a-approx} hold.  Then: there is a universal constant $C_{\ref{lem-varest2}} > 0$ such that
\[
 \inf_{h_0 \in \mc H}  \mathbb{P}_{h_0} \bigg( \sup_{(x,J,J_2) \in \mathcal{S}_n} \frac{ \left|  \hat{\sigma}_{x,J,J_2} - \sigma_{x,J,J_2}    \right|}{\| \sigma_{x,J}  \|_{sd}  \| \sigma_{x,J_2}  \|_{sd} }  \leq C_{\ref{lem-varest2}} \delta_n \bigg)  \rightarrow 1 \,.
\]
In particular,
\[
 \inf_{h_0 \in \mc H}  \mathbb{P}_{h_0} \bigg( \sup_{(x,J) \in \mathcal{X} \times \mathcal{J}_n}  \left|  \frac{\|\hat{\sigma}_{x,J}\|^2_{sd}}{\|\sigma_{x,J} \|^2_{sd}} - 1   \right|   \leq C_{\ref{lem-varest2}} \delta_n \bigg)  \rightarrow 1 \,.
\]
\end{lemma}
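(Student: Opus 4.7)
The plan is to decompose $\hat\sigma_{x,J,J_2} - \sigma_{x,J,J_2}$ into three parts, normalize each by $\|\sigma_{x,J}\|_{sd}\|\sigma_{x,J_2}\|_{sd}$, and bound each piece uniformly over $(x,J,J_2)\in\mathcal S_n$. The natural decomposition is
\[
 \hat\sigma_{x,J,J_2} - \sigma_{x,J,J_2}
  = (\hat L_{J,x} - L_{J,x})\,\widehat\Omega_{J,J_2}\,\hat L_{J_2,x}'
  + L_{J,x}(\widehat\Omega_{J,J_2} - \Omega_{J,J_2})\,\hat L_{J_2,x}'
  + L_{J,x}\,\Omega_{J,J_2}(\hat L_{J_2,x} - L_{J_2,x})'.
\]
For the first and third terms I would expand $\hat L_{J,x}-L_{J,x}$ through the matrix identity $\widehat A^{-1} - A^{-1} = -\widehat A^{-1}(\widehat A-A)A^{-1}$ applied to the design Gram matrices, after orthogonalizing by $G_{\psi,J}^{-1/2}$ and $G_{b,J}^{-1/2}$. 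Concentration of $\|\widehat G_{b,J}^o - I\|$, $\|\widehat G_{\psi,J}^o - I\|$, and $\|\widehat S_J^o - S_J^o\|$ in operator norm then follows from matrix Bernstein, using Lemma \ref{lem:spline}/\ref{lem:wavelet} together with Assumption \ref{a-approx}(i) and the bound $\tau_J \asymp s_J^{-1}$ from (\ref{eq:eval_cc}), yielding uniform rates of order $\tau_J\sqrt{J\log J/n}$ for $J\in\mathcal J_n$ after taking a union bound. Scaling by $\|\sigma_{x,J}\|_{sd} \asymp \tau_J\sqrt{J}$ via Assumption \ref{a-var}(i) and Assumption \ref{a-residuals} produces the first summand of $\delta_n$.

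The main work is on the middle term $L_{J,x}(\widehat\Omega_{J,J_2}-\Omega_{J,J_2})\hat L_{J_2,x}'$. I further split
\[
 \widehat\Omega_{J,J_2} - \Omega_{J,J_2}
  = (\mathbb E_n - \mathbb E)\bigl[u^2\, b_W^{K(J)} b_W^{K(J_2)\prime}\bigr]
   + \mathbb E_n\bigl[(\hat u_{i,J}\hat u_{i,J_2} - u_i^2)\, b_{W_i}^{K(J)} b_{W_i}^{K(J_2)\prime}\bigr].
\]
The first piece is an i.i.d.\ sum of bounded random matrices controlled uniformly in $J,J_2\le \bar J_{\max}$ by matrix Bernstein; after being sandwiched by $L_{J,x}$ and $\hat L_{J_2,x}$ and divided by $\|\sigma_{x,J}\|_{sd}\|\sigma_{x,J_2}\|_{sd}$ it yields the middle summand $(\bar J_{\max}^2\log\bar J_{\max}/n)^{1/3}$ of $\delta_n$ (the cube-root arises from balancing the crude sup bound on the basis against Bernstein's variance term when taking a union bound over $J,J_2,x$). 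For the second piece I use $\hat u_{i,J} - u_i = h_0(X_i) - \hat h_J(X_i)$ and expand
\[
 \hat u_{i,J}\hat u_{i,J_2} - u_i^2
  = (h_0 - \hat h_J)(X_i)(h_0 - \hat h_{J_2})(X_i)
  - u_i(h_0 - \hat h_J)(X_i) - u_i(h_0 - \hat h_{J_2})(X_i).
\]
Theorem \ref{unifbiasvar} supplies the uniform rate $\|\hat h_J - h_0\|_\infty \lesssim \|\Delta_J h_0\|_\infty + \tau_J\sqrt{J\log\bar J_{\max}/n}$ on $\mathcal H$-uniformly, so the quadratic remainder is negligible, the cross term involving $u_i$ is controlled by Bernstein applied conditionally on $\mathcal Z^n$ together with the uniform $\hat h_J$ rate, and the bias portion $\|\Delta_J h_0\|_\infty \lesssim J^{-\ul p/d}$ (from membership in $\mathcal H\subset B_{\infty,\infty}^{\ul p}(M)$) contributes the $J_{\min}^{-\ul p/d}$ summand of $\delta_n$ since $J\ge J_{\min}$ on $\mathcal J_n$.

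The hardest step will be managing the uniform control of the cross-resolution term with the \emph{cross} normalization $\|\sigma_{x,J}\|_{sd}\|\sigma_{x,J_2}\|_{sd}$ rather than the self-contrast scale $\|\sigma_{x,J,J_2}\|_{sd}$: one must keep track of the $\tau_{\max(J,J_2)}$ factor through Assumption \ref{a-var}(i) while simultaneously controlling matrix Bernstein tails uniformly over $J,J_2\in\mathcal J_n$, which is why the $\sqrt{\log\bar J_{\max}}$ factor enters and why the cube-root rate survives after taking the union bound over the doubly-indexed family. Once the main inequality is proved, the ``in particular'' statement is immediate: set $J_2 = J$, use $\sigma_{x,J,J} = \|\sigma_{x,J}\|_{sd}^2$ and $\hat\sigma_{x,J,J} = \|\hat\sigma_{x,J}\|_{sd}^2$, and note that
\[
 \left|\frac{\|\hat\sigma_{x,J}\|_{sd}^2}{\|\sigma_{x,J}\|_{sd}^2} - 1\right|
  = \frac{|\hat\sigma_{x,J,J} - \sigma_{x,J,J}|}{\|\sigma_{x,J}\|_{sd}^2},
\]
which is bounded by $C_{\ref{lem-varest2}}\delta_n$ as a direct consequence of the displayed bound with $J_2 = J$.
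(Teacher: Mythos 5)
Your route is essentially the paper's: the same telescoping of $\hat\sigma_{x,J,J_2}-\sigma_{x,J,J_2}$ into two ``estimated projection'' terms and one ``estimated $\Omega$'' term (the paper writes it with orthogonalized vectors $\hat\gamma_{x,J},\gamma_{x,J}$ and bounds $(\hat\gamma_{x,J}-\gamma_{x,J})'\Omega^o_{J,J_2}\hat\gamma_{x,J_2}$, $\gamma_{x,J}'\Omega^o_{J,J_2}(\hat\gamma_{x,J_2}-\gamma_{x,J_2})$, and $\hat\gamma_{x,J}'(\widehat\Omega^o_{J,J_2}-\Omega^o_{J,J_2})\hat\gamma_{x,J_2}$), the same concentration inputs for the Gram/cross-moment matrices (packaged in the paper as Lemmas~\ref{lem-consteval} and \ref{lem-GSconvunif}, which are exactly the matrix-Bernstein-plus-union-bound facts you would re-derive), and the same decomposition of $\widehat\Omega_{J,J_2}-\Omega_{J,J_2}$ via $\hat u_{i,J}\hat u_{i,J_2}-u_i^2$, with truncation plus matrix Bernstein giving the cube-root term and Theorem~\ref{unifbiasvar} plus the H\"older bias bound giving $\tau_{\bar J_{\max}}\sqrt{\bar J_{\max}\log\bar J_{\max}/n}+J_{\min}^{-\ul p/d}$. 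Your normalization bookkeeping for the first and third terms is fine: dividing by $\|\sigma_{x,J}\|_{sd}\asymp\tau_J\sqrt J$ absorbs the extra $\tau_J$ factor (the paper does this slightly more elegantly by inserting $(S_J^o)_l^- S_J^o=I$ so that Lemma~\ref{lem-GSconvunif}(ii) applies, but your cruder bound lands at the same $\tau_{\bar J_{\max}}\sqrt{(\bar J_{\max}\log\bar J_{\max})/n}$ rate), and the ``in particular'' step is fine once you note that none of the bounds actually uses $J_2>J$, so they hold on the diagonal $J_2=J$ as well.

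The one step that would fail as written is your treatment of the cross terms $n^{-1}\sum_i u_i(\hat u_{i,J}-u_i)\,b^{K(J)}_{W_i}(b^{K(J_2)}_{W_i})'$ by ``Bernstein applied conditionally on $\mathcal Z^n$.'' Conditional on the full sample $\mathcal Z^n$, the estimator $\hat h_J$ and hence the entire sum are deterministic, so there is nothing left to concentrate; and you cannot condition on $\hat h_J$ alone and treat the summands as independent, since $\hat h_J$ depends on every observation. The paper avoids any such conditioning: it bounds $|\hat u_{i,J}-u_i|\le\|\hat h_J-h_0\|_\infty$ and applies a pointwise Cauchy--Schwarz inequality in $i$, which reduces the cross term to $\|\hat h_J-h_0\|_\infty^{1/2}\|\hat h_{J_2}-h_0\|_\infty^{1/2}$ times the operator norm of the $(1+u_i^2)$-weighted empirical Gram matrix $n^{-1}\sum_i(1+u_i^2)G_{b,J}^{-1/2}b^{K(J)}_{W_i}(b^{K(J)}_{W_i})'G_{b,J}^{-1/2}$, and the latter is already controlled by Lemma~\ref{lem-consteval} together with the $u^2$-term bound you have established. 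With that substitution (and the same device for the quadratic remainder), your argument goes through and reproduces the three components of $\delta_n$.
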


\subsubsection{Uniform Consistency of $\hat J_{\max}$}

For the following lemma, recall $\hat J_{\max}$ from (\ref{eq:J_hat_max}) and $\bar J_{\max}(R)$ from (\ref{eq:J_bar_max}).

\begin{lemma}\label{lem:gridtest}
Let Assumptions \ref{a-data}-\ref{a-approx} hold. Then: replacing $10 \sqrt n$ with $M \sqrt n$ for any $M > 0$ in the definition of $\hat J_{\max}$ from (\ref{eq:J_hat_max}), there exists $R_1,R_2 > 0 $ which satisfy
\[
 \inf_{h_0 \in \mc H} \mathbb{P}_{h_0} \bigg( \bar{J}_{\max}(R_1) \leq  \hat{J}_{\max} \leq \bar{J}_{\max}(R_2) \bigg)  \to 1\,.
\]
\end{lemma}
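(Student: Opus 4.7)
The strategy is to control $\hat s_J^{-1}$ by $\tau_J$ up to universal multiplicative constants uniformly in $J$ over the range $\mathcal T \cap [1,\bar J_{\max}(R)]$, then translate this into the desired sandwich bounds for $\hat J_{\max}$. The $(\log n)^4$ floor inside $\bar J_{\max}$ is crucial here: it keeps $\bar J_{\max}(R) \lesssim \sqrt n/(\log n)^{4.5}$, so that matrix-concentration arguments are tight uniformly over the relevant $J$'s. Concretely, for any $R>0$ and $\epsilon \in (0,1)$ I would first prove that wpa1 $\mc H$-uniformly, $|\hat s_J/s_J-1|\leq \epsilon$ holds simultaneously for every $J\in \mathcal T \cap [1,\bar J_{\max}(R)]$. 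This follows by combining matrix Bernstein-type bounds for $\|\widehat G_{\psi,J}^o - I\|_{op}$, $\|\widehat G_{b,J}^o - I\|_{op}$, and $\|\widehat S_J^o - S_J^o\|_{op}$ (each of order $\sqrt{J\log J/n}$) with Weyl's inequality to convert operator-norm perturbation into perturbation of the smallest singular value. The critical bound $\tau_J \sqrt{J\log J/n} \leq R/\sqrt J$ holds for every $J\leq \bar J_{\max}(R)$, because $J\sqrt{\log J}\tau_J \leq R\sqrt n$ by definition of $\bar J_{\max}$; combined with $s_J \asymp \tau_J^{-1}$ from (B.1), this makes the relative error on $\hat s_J/s_J$ small. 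A union bound over the geometric grid $\mathcal T$ (of cardinality $O(\log n)$ up to $\bar J_{\max}(R)$) handles the uniformity in $J$.

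Conditioned on this concentration event, the sandwich is a short deterministic computation. For the lower bound I would take $R_1 := M/(2 a_\tau)$. For every $J \leq \bar J_{\max}(R_1)$,
\[
 \hat s_J^{-1} \leq 2 s_J^{-1} \leq 2 a_\tau \tau_J \leq 2 a_\tau \big[(\log n)^4 \vee \tau_J\big],
\]
so $J\sqrt{\log J}\hat s_J^{-1} \leq 2a_\tau R_1 \sqrt n = M\sqrt n$, hence $J \leq \hat J_{\max}$; specializing to $J = \bar J_{\max}(R_1)$ yields $\bar J_{\max}(R_1) \leq \hat J_{\max}$. For the upper bound I would split on the magnitude of $\tau_{\hat J_{\max}}$. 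If $\tau_{\hat J_{\max}} \geq (\log n)^4$, concentration gives $\hat s_{\hat J_{\max}}^{-1} \geq \tfrac{1}{2}\tau_{\hat J_{\max}} = \tfrac12 [(\log n)^4 \vee \tau_{\hat J_{\max}}]$, so $\hat J_{\max}\sqrt{\log \hat J_{\max}}[(\log n)^4 \vee \tau_{\hat J_{\max}}] \leq 2M\sqrt n$, i.e.\ $\hat J_{\max} \leq \bar J_{\max}(2M)$. If instead $\tau_{\hat J_{\max}} < (\log n)^4$, the specific growth of $\tau_J$ under either ill-posed regime forces $\hat J_{\max}$ to be at most polylogarithmic in $n$ (at most $(\log n)^{4d/\varsigma}$ in the mild case, much smaller in the severe case), whereas $\bar J_{\max}(R_2)$ diverges in $n$ for any fixed $R_2 > 0$, so the inequality $\hat J_{\max} \leq \bar J_{\max}(R_2)$ again holds trivially for $n$ large. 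Choosing $R_2 = \max\{2M,1\}$ thus handles both sub-cases simultaneously. A brief preliminary argument crudely bounding $\hat J_{\max}$ (so that the concentration event can be invoked at the random index $\hat J_{\max}$) is straightforward once Step~1 is in place.

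The main obstacle is the uniform-in-$J$ concentration of $\hat s_J$ around $s_J$ over an $n$-dependent range of resolution levels. This leans on Assumption~\ref{a-data}(i)(ii) for density regularity, on Assumption~\ref{a-approx}(i) to ensure that $\hat s_J^{-1}$ is the correct sample analog of $\tau_J$, and crucially on the slack provided by the $(\log n)^4$ factor inside $\bar J_{\max}$. Given this ingredient, the derivation of the sandwich is routine, the only subtlety being the case split in the upper bound, which is dictated by the mismatch between $\hat s_J^{-1}\asymp \tau_J$ and $(\log n)^4 \vee \tau_J$ and which requires invoking the specific polynomial or exponential growth of $\tau_J$ for the ``small'' sub-case.
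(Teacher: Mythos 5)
Your Step 1 (uniform concentration of $\hat s_J$) and your lower-bound argument essentially reproduce the paper's route (Lemmas \ref{lem-constsv}--\ref{lem-constsv-2} and part (i) of the paper's proof), and that half is fine. The genuine gap is in your upper bound. You split on the value of $\tau_{\hat J_{\max}}$ and, in the case $\tau_{\hat J_{\max}} \geq (\log n)^4$, invoke the multiplicative concentration $\hat s_{\hat J_{\max}}^{-1} \geq \tfrac12 \tau_{\hat J_{\max}}$ at the \emph{random} index $\hat J_{\max}$, deferring to a ``brief preliminary crude bound'' the task of placing $\hat J_{\max}$ inside the concentration range. That preliminary step is not straightforward: the only a priori bound is $\hat s_J \leq 1$, which gives $\hat J_{\max} \lesssim \sqrt n / \sqrt{\log n}$, far beyond the range over which $|\hat s_J/s_J - 1| \leq \epsilon$ can be established. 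Multiplicative control requires $\tau_J \sqrt{(J \log J)/n}$ small, and in the severely ill-posed regime $\tau_J$ at scale $J \asymp \sqrt n$ is exponential in a power of $n$, so the additive error $\sqrt{(J\log J)/n}$ dwarfs $s_J$ there; invoking concentration at $\hat J_{\max}$ is therefore circular. The paper avoids this entirely by arguing at the \emph{deterministic} index $\bar J_{\max}(R_2)$ and its successor $\bar J_{\max}(R_2)^+$: if $\hat J_{\max} > \bar J_{\max}(R_2)$, one of the two inequalities defining $\hat J_{\max}$ in (\ref{eq:J_hat_max}) must fail at $\bar J_{\max}(R_2)$, and each possibility is excluded using only the \emph{additive} bound $|\hat s_J - s_J| \lesssim \sqrt{(J\log J)/n}$ (valid whenever $(J\log J)/n \to 0$, no ill-posedness restriction), together with a dichotomy at $\bar J_{\max}(R_2)^+$ on whether $s_{\bar J_{\max}(R_2)^+}$ or the additive error dominates. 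Your plan has no analogue of that dichotomy, and it is exactly what is needed to handle the index just beyond the good range.

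Your other sub-case, $\tau_{\hat J_{\max}} < (\log n)^4$, also does not go through as stated: the claim that small $\tau$ forces $\hat J_{\max}$ to be polylogarithmic fails when $\tau_J$ is bounded (the nonparametric-regression case $\varsigma = 0$, which the lemma must cover), and even for small $\varsigma > 0$ the conclusion ``$\bar J_{\max}(R_2)$ diverges, so the bound holds trivially'' needs $\bar J_{\max}(R_2)$ to outgrow your polylog bound, which you do not verify. In the bounded-$\tau$ regime your argument yields at best $\hat J_{\max} \lesssim \sqrt n/\sqrt{\log n}$, which does not imply $\hat J_{\max} \leq \bar J_{\max}(R_2) \asymp \sqrt n/(\log n)^{4.5}$ given the $(\log n)^4$ floor inside (\ref{eq:J_bar_max}). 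The paper's proof sidesteps this because the comparison is carried out with the factor $(\log n)^4 \vee \hat s_J^{-1}$ appearing on both the $\hat J_{\max}$ and $\bar J_{\max}$ sides (cf.\ the $\upsilon_n$ term in the regression implementation of Appendix~\ref{sec:regression}), so no case analysis on the size of $\tau_J$ relative to $(\log n)^4$ is ever needed. To repair your proof you should (a) replace the argument at the random index by the paper's contradiction argument at $\bar J_{\max}(R_2)$ and $\bar J_{\max}(R_2)^+$ with the additive-error dichotomy, and (b) keep the $(\log n)^4$ term inside the defining inequalities rather than resolving it by cases on $\tau_{\hat J_{\max}}$.
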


\begin{remark} \label{rmk:J_max}
For any $R_2 \geq R_1 > 0$ there exists a finite positive constant $C$ for which
\[
 \bar{J}_{\max}(R_1) \leq \bar{J}_{\max}(R_2) \leq C \bar{J}_{\max}(R_1) \,.
\]
Lemma \ref{lem:gridtest} therefore provides an asymptotic rate of divergence for $\hat J_{\max}$.
\end{remark}

\subsubsection{Uniform-in-$J$ Bounds for the Bootstrap}

For the following Lemma, recall the critical value $\theta^*_{1-\hat \alpha}$ from Section \ref{sec:Jtilde}. 

\begin{lemma} \label{lepquant}
Let Assumptions \ref{a-data}-\ref{a-var} hold. Then: with $\bar J_{\max}(R)$ as defined in (\ref{eq:J_bar_max}) for any $R >0$, there exists constants $C_4,C_5 > 0$ for which
\[
 \inf_{h_0 \in \mathcal{H}} \mathbb{P}_{h_0} \bigg( C_4 \sqrt{\log \bar{J}_{\max}(R)} \leq  \theta^*_{1-\hat \alpha} \leq C_5 \sqrt{\log \bar{J}_{\max}(R)}  \bigg) \rightarrow 1.
\]
\end{lemma}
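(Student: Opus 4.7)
The plan is to sandwich the feasible bootstrap sup-statistic between two versions of an \emph{infeasible} sup-statistic built from true residuals $u_i$ and population sieve variances $\sigma_{x,J,J_2}$, and then apply standard Gaussian concentration/anti-concentration. The infeasible statistic is
\[
 \mathbb T_n \;=\; \sup_{(x,J,J_2)\in \mc S_n^o}\,
 \Bigg|\frac{\tfrac{1}{\sqrt n}\sum_{i=1}^n\bigl(L_{J,x}b^{K(J)}_{W_i}-L_{J_2,x}b^{K(J_2)}_{W_i}\bigr)u_i\,\varpi_i}{\|\sigma_{x,J,J_2}\|_{sd}}\Bigg|,
\]
where $\mc S_n^o$ is a deterministic index set sandwiching $\mc S_n$ obtained from Lemma~\ref{lem:gridtest}. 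Conditional on the data $\mc Z^n$, each coordinate is exactly $N(0,1)$-distributed since the multipliers $\varpi_i$ are i.i.d.\ $N(0,1)$, so $\mathbb T_n$ is the sup of a centered, conditionally Gaussian process with unit pointwise variance.

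First I would replace $\hat{\mc J}$ by the deterministic window $\{J\in\T:J_{\min}\le J\le \bar J_{\max}(R_2)\}\supseteq \hat{\mc J}\supseteq\{J\in\T:J_{\min}\le J\le \bar J_{\max}(R_1)\}$ via Lemma~\ref{lem:gridtest}. Next, on the event supplied by Lemma~\ref{varest3}, the denominators $\|\hat\sigma_{x,J,J_2}\|_{sd}/\|\sigma_{x,J,J_2}\|_{sd}=1+o_{\mb P}(1)$ uniformly over $\mc S_n^o$, so the denominator swap costs only a $(1+o(1))$ multiplicative factor. To swap $\hat u_{i,J}\to u_i$ in the numerator, I would use $\hat u_{i,J}=u_i-(\hat h_J-h_0)(X_i)$ together with Theorem~\ref{unifbiasvar} to bound $\|\hat h_J-h_0\|_\infty$ uniformly in $J\le \bar J_{\max}$. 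Combined with the multiplier-process deviation inequality (conditional on the data, the residual contribution is Gaussian with conditional variance bounded by $n^{-1}\sum_i(\hat L_{J,x}b^{K(J)}_{W_i})^2(\hat u_{i,J}-u_i)^2$), this remainder is $o_{\mb P^*}(1)$ uniformly in $(x,J,J_2)$, hence negligible compared with $\sqrt{\log\bar J_{\max}}$. The orthonormalized design matrices $\widehat G^o_{\psi,J},\widehat G^o_{b,J},\widehat S^o_J$ can be replaced by their population counterparts by standard matrix concentration (also invoked in Lemmas~\ref{varest3}, \ref{lem-varest2}) on the event $J\le \bar J_{\max}$.

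After these reductions, both the upper and lower bounds reduce to estimating the $(1-\hat\alpha)$ conditional quantile of $\mathbb T_n$. For the upper bound I would apply Borell--TIS (conditionally on $\mc Z^n$): it suffices to show $\E^*[\mathbb T_n]\lesssim \sqrt{\log\bar J_{\max}(R_2)}$. This follows from a union bound over the $O((\log \bar J_{\max})^2)$ pairs $(J,J_2)\in\hat{\mc J}^2$ combined with, for each fixed pair, the standard Dudley-type bound for the sup over $x\in\mc X$ of a unit-variance Gaussian process whose index set $\mc X=[0,1]^d$ admits polynomial covering number of order $(J_2/\epsilon)^{Cd}$ (using the Bernstein/H\"older continuity of the CDV wavelet or B-spline basis reviewed in Appendix~\ref{ax:basis}); this single-pair supremum is $O(\sqrt{\log J_2})=O(\sqrt{\log\bar J_{\max}})$. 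Concentration then gives $\theta^*_{1-\hat\alpha}\le \E^*[\mathbb T_n]+\sqrt{2\log(1/\hat\alpha)}\lesssim \sqrt{\log\bar J_{\max}(R_2)}$ on an event of $\mb P_{h_0}$-probability $\to 1$, using $\log(1/\hat\alpha)\asymp\log\hat J_{\max}$ from the definition of $\hat\alpha$.

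For the lower bound I would fix any single $(x_0,J_0,J_{0,2})\in \mc S_n^o$ and note that the marginal
$(D_{J_0}^*(x_0)-D_{J_{0,2}}^*(x_0))/\hat\sigma_{J_0,J_{0,2}}(x_0)$ is exactly $N(0,1)$ conditional on $\mc Z^n$, so its $(1-\hat\alpha)$ quantile is $\sqrt{2\log(1/\hat\alpha)}(1+o(1))$ by the standard Gaussian tail. Since $\theta^*_{1-\hat\alpha}$ dominates this marginal quantile and $\log(1/\hat\alpha)\asymp\log\bar J_{\max}(R_1)$, we obtain $\theta^*_{1-\hat\alpha}\ge C_4\sqrt{\log\bar J_{\max}(R_1)}$ wpa1 $\mc H$-uniformly. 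The main obstacle is the uniform-in-$(J,J_2,x)$ replacement of $\hat u_{i,J}$ by $u_i$ in the multiplier process, which requires combining the sup-norm rate from Theorem~\ref{unifbiasvar} with Gaussian concentration applied conditionally on the data; everything else is bookkeeping plus off-the-shelf Gaussian-process tools.
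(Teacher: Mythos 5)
Your proposal is correct and follows essentially the same route as the paper. The lower bound is identical to the paper's: a single coordinate of the studentized contrast, $(D_{J}^*(x)-D_{J_2}^*(x))/\hat\sigma_{J,J_2}(x)$, is exactly $N(0,1)$ under $\mathbb{P}^*$ (the feasible variance estimator is precisely the conditional variance of the multiplier sum), so $\theta^*_{1-\hat\alpha}\ge \Phi^{-1}(1-\hat\alpha)\asymp\sqrt{\log\bar J_{\max}}$ via the definition of $\hat\alpha$, Lemma~\ref{lem:gridtest}, and Remark~\ref{rmk:J_max}. For the upper bound the paper organizes the argument a bit differently: rather than swapping to an infeasible pair-indexed statistic and doing Dudley plus a union bound over the $O((\log\bar J_{\max})^2)$ pairs, it uses Lemma~\ref{varest3} and Assumption~\ref{a-var}(i) to bound $|\mathbb{Z}_n^*(x,J,J_2)|\le C(|\mathbb{Z}_n^*(x,J)|+|\mathbb{Z}_n^*(x,J_2)|)$ and then invokes the entropy/concentration bound for the feasible single-$J$ bootstrap process (Lemma~\ref{gausscons}(iii)); your pair-by-pair treatment is equally valid since the extra union bound costs only a $\sqrt{\log\log\bar J_{\max}}$ factor, and your feasible-to-infeasible swaps (residuals via Theorem~\ref{unifbiasvar}, variances via Lemma~\ref{varest3}, matrices via concentration) reproduce what Lemma~\ref{gausscons}(i)--(ii) does internally. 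One small inaccuracy to repair: your infeasible statistic $\mathbb{T}_n$, built with true $u_i$ but normalized by the population $\|\sigma_{x,J,J_2}\|_{sd}$, is conditionally Gaussian but does \emph{not} have exactly unit pointwise conditional variance --- that exactness holds only for the feasible statistic; its conditional variance is an empirical average that must be shown uniformly bounded (arguments of the type in Lemmas~\ref{lem-varest2} and~\ref{varlimit}) before Borell--TIS yields the $\sqrt{\log\bar J_{\max}(R_2)}$ bound. This is routine to patch with the lemmas you already cite and does not affect your lower bound, which correctly uses the feasible statistic.
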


The second is a companion result concerning the critical value $z^*_{1-\alpha}$ from Section~\ref{sec:ucbs}:

\begin{lemma} \label{zorder}
Let Assumptions \ref{a-data}-\ref{a-var} hold. 
Then: with $\bar J_{\max}(R)$ as defined in (\ref{eq:J_bar_max}) for any $R >0$, there exists a constant $C_{\ref{zorder}} > 0$ for which
\[
 \inf_{h_0 \in \mathcal{H}} \mathbb{P}_{h_0} \bigg( z_{1- \alpha}^* \leq C_{\ref{zorder}} \sqrt{\log \bar{J}_{\max}(R)}  \bigg) \rightarrow 1\,.
\]
\end{lemma}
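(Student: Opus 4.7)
The plan is to mirror the strategy used for the critical value $\theta^*_{1-\hat\alpha}$ in Lemma \ref{lepquant}, while observing that the present supremum is taken over the single-index set $\mc X \times \hat{\mc J}_{-}$ rather than the double-index set $\{(x,J,J_2) : J_2 > J\}$, so only the ``upper half'' of the Gaussian maximal inequality is needed. First I would localize the random index set $\hat{\mc J}_{-}$ to a deterministic one. By Lemma \ref{lem:gridtest} and Remark \ref{rmk:J_max}, there exists $R_2 > 0$ such that $\hat J_{\max} \leq \bar J_{\max}(R_2)$ wpa1 $\mc H$-uniformly, so (using also the lower bound $0.1(\log \hat J_{\max})^2$ in the definition of $\hat{\mc J}$) we have $\hat{\mc J}_{-} \subseteq \mc J_n := \{J \in \T : J_{\min} \leq J \leq \bar J_{\max}(R_2)\}$ wpa1 for some $J_{\min} \to \infty$. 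Hence, up to probability $o(1)$,
\[
 z^*_{1-\alpha} \leq \inf\left\{t : \mb P^*\!\left(\sup_{(x,J) \in \mc X \times \mc J_n} |\mb Z_n^*(x,J)| > t\right) \leq \alpha\right\}.
\]

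Next I would replace $\hat\sigma_J(x)$ by $\|\sigma_{x,J}\|_{sd}/\sqrt n$ in the denominator of $\mb Z_n^*(x,J)$ from (\ref{eq:Zstar}). Lemma \ref{lem-varest2} gives $\sup_{(x,J)\in\mc X\times\mc J_n}|\hat\sigma_J^2(x)/\sigma_J^2(x)-1| \leq C_{\ref{lem-varest2}}\delta_n = o_{\mb P}(1)$ uniformly over $\mc H$, so the ratio is bounded in $[1/2,2]$ wpa1. It thus suffices to control
\[
 \sup_{(x,J) \in \mc X \times \mc J_n} \left| \frac{1}{\|\sigma_{x,J}\|_{sd}} \cdot \frac{1}{\sqrt n} \sum_{i=1}^n \hat L_{J,x}\, b^{K(J)}_{W_i}\, \hat u_{i,J}\, \varpi_i \right|.
\]
Conditional on the data $\mc Z^n$, this is the supremum of a centered Gaussian process. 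Its conditional covariance is pinned down by the empirical second moments of $\hat L_{J,x} b^{K(J)}_{W_i}\hat u_{i,J}$, which, by the same uniform-in-$J$ estimates of $\hat L_{J,x}$, $\hat u_{i,J}$ and $\hat\sigma_J$ underlying Lemma \ref{lepquant} (via Theorem \ref{unifbiasvar} and Lemmas \ref{varest3}--\ref{lem-varest2}), are close to their population analogues uniformly over $\mc X \times \mc J_n$. In particular, the conditional variance at each $(x,J)$ is bounded by an absolute constant wpa1.

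For fixed $J$, the process $x \mapsto \mb Z_n^*(x,J)$ lives in the span of at most $K(J) \lesssim J$ basis functions, so a Gaussian maximal inequality yields $\E^*\!\sup_{x}|\mb Z_n^*(x,J)| \lesssim \sqrt{\log J}$ wpa1. Because the grid $\T$ increases geometrically, $|\mc J_n| = O(\log \bar J_{\max})$; a union bound over $J \in \mc J_n$ (equivalently, Dudley's entropy bound on the product space, since the discrete $J$-index contributes only a $\sqrt{\log\log\bar J_{\max}}$ factor) gives
\[
 \E^* \sup_{(x,J) \in \mc X \times \mc J_n} |\mb Z_n^*(x,J)| \lesssim \sqrt{\log \bar J_{\max}(R_2)} \quad \text{wpa1}.
\]
Applying Borell--TIS concentration to the Gaussian supremum then bounds its $(1-\alpha)$-quantile by its mean plus $\sqrt{2\log(1/\alpha)}$, still of order $\sqrt{\log \bar J_{\max}(R_2)}$. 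Using $\bar J_{\max}(R_2) \asymp \bar J_{\max}(R)$ (Remark \ref{rmk:J_max}) completes the argument.

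The main obstacle, exactly as in Lemma \ref{lepquant}, is the simultaneous replacement of $\hat L_{J,x}$, $\hat u_{i,J}$, and $\hat\sigma_J(x)$ by their population counterparts uniformly over the random set $\hat{\mc J}_{-}$; however, all the requisite uniform controls have been established for the more delicate double-indexed bound in Lemma \ref{lepquant}, so the present statement essentially follows by specialization and a Gaussian concentration step applied to the conditional law of the bootstrap process.
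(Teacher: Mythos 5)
Your proposal is correct and follows essentially the same route as the paper's proof: localize the random index set $\hat{\mc J}_-$ to a deterministic one via Lemma \ref{lem:gridtest} and Remark \ref{rmk:J_max}, then bound the supremum of the conditionally Gaussian bootstrap process using a polynomial covering-number bound (as in Lemma \ref{vc2}) together with Dudley's entropy bound and Gaussian concentration (Theorems 2.3.6 and 2.5.8 of Gin\'e--Nickl), which is also what underlies Lemma \ref{gausscons}(iii). One small caution: your per-$J$ claim that $\E^*\sup_x|\mathbb{Z}_n^*(x,J)|\lesssim\sqrt{\log J}$ ``because the process lives in the span of $K(J)\lesssim J$ basis functions'' is not justified by finite-dimensionality alone (that would only give order $\sqrt{J}$); it follows from the H\"older continuity of $x\mapsto\psi^J_x$, which yields the polynomial covering numbers driving the entropy integral---precisely the chaining argument you invoke parenthetically, so the conclusion stands.
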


\subsubsection{Uniform Consistency for the Bootstrap}

Recall $\bar J_{\max} \equiv \bar J_{\max}(R)$ from (\ref{eq:J_bar_max}) and $\mc J_n$ and $\mc S_n$ from (\ref{eq:S-set}).

\begin{theorem} \label{consistent}
Let Assumptions \ref{a-data}-\ref{a-var} hold and let $J_{\min} \asymp (\log \bar J_{\max})^2$. Then: there exists a sequence $\gamma_n \downarrow 0$ for which the following inequalities hold wpa1 $\mc H$-uniformly:
\begin{align*}
 (i) \quad & \sup_{s \in \R} \left| \mathbb{P}_{h_0} \bigg( \sup_{(x,J) \in \mathcal{X} \times \mathcal{J}_n } \left | \sqrt{n} \frac{\hat{h}_J(x) - \tilde{h}_J(x)}{\| \hat \sigma_{x,J} \|_{sd}} \right| \leq s \right)    - \mathbb{P}^* \bigg( \sup_{(x,J) \in \mathcal{X} \times \mathcal{J}_n} \left| \mathbb{Z}_n^*(x,J) \right|  \leq s  \bigg)  \bigg |  \leq \gamma_n \,, \\
 (ii) \quad & \sup_{s \in \R} \left| \mathbb{P}_{h_0} \bigg( \sup_{(x,J,J_2) \in \mathcal{S}_n} \bigg | \sqrt{n} \frac{\hat{h}_J(x) - \hat{h}_{J_2}(x) -( \tilde{h}_J(x) - \tilde{h}_{J_2}(x) )  }{\| \hat{\sigma}_{x,J,J_2} \|_{sd}} \right| \leq s \bigg) \\
 & \; \; \; \; \; \; \;  \; \; \; \; - \mathbb{P}^* \bigg( \sup_{(x,J,J_2) \in \mathcal{S}_n} \left| \mathbb{Z}_n^*(x,J,J_2) \right| \leq s \bigg) \bigg | \leq \gamma_n \,.
\end{align*}
\end{theorem}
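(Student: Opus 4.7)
The plan is to prove both (i) and (ii) through a CCK-style (\textit{cf.} \cite{chernozhukov2014anti}) Gaussian coupling and Gaussian multiplier comparison, working through the process indexed by the infinite set $\mathcal{X} \times \mathcal{J}_n$ (resp.\ $\mathcal{S}_n$). The key is to reduce every quantity to a studentized linear functional of $u_i$, couple that functional to a Gaussian process, then match it back to the bootstrap law. I will describe (i); (ii) is completely analogous with contrasts in place of single-index statistics.

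\textbf{Step 1 (linearization).} First I would write
\[
 \sqrt n\,\frac{\hat h_J(x) - \tilde h_J(x)}{\|\hat\sigma_{x,J}\|_{sd}}
 = \frac{\|\sigma_{x,J}\|_{sd}}{\|\hat\sigma_{x,J}\|_{sd}}\, \mathbb Z_n(x,J) + R_n(x,J),
\]
where $\mathbb Z_n(x,J)$ is as in \eqref{eq:Z} and $R_n(x,J)$ collects the error from replacing $L_{J,x}$ by $\hat L_{J,x}$, $G_{b,J}$ by $\widehat G_{b,J}$, and $S_J$ by $\widehat S_J$. Standard uniform bounds on $\|\widehat S_J^o - S_J^o\|$ and $\|\widehat G_{b,J}^o - I\|$ (of the Rudelson/matrix-Bernstein type used throughout the NPIV literature, valid uniformly over $J \in \mathcal J_n$ because $\bar J_{\max}\sqrt{\log \bar J_{\max}}/\sqrt n$ is controlled via \eqref{eq:J_hat_max_regression}) combined with Lemmas \ref{varest3}--\ref{lem-varest2} give $\sup_{(x,J)\in \mathcal X \times \mathcal J_n} |R_n(x,J)| = o_{\mathbb P}(\delta_n \sqrt{\log \bar J_{\max}})$ uniformly in $h_0 \in \mathcal H$, which is $o_{\mathbb P}((\log n)^{-1/2})$. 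The same lemmas also give $\|\sigma_{x,J}\|_{sd}/\|\hat\sigma_{x,J}\|_{sd} = 1 + o_{\mathbb P}(\delta_n)$.

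\textbf{Step 2 (Gaussian coupling for $\mathbb Z_n$).} Next I would apply a CCK high-dimensional CLT/Gaussian coupling to the process $\mathbb Z_n(\cdot,\cdot)$. Although $(x,J)$ ranges over an infinite set, the map $x \mapsto L_{J,x}b^{K(J)}_W/\|\sigma_{x,J}\|_{sd}$ is H\"older continuous in $x$ for fixed $J$ (by the basis continuity in Appendix \ref{ax:basis}) and the index $J$ varies over a $\log \bar J_{\max}$-cardinality grid. Discretizing $\mathcal X$ on a polynomial grid and controlling the fluctuation by a standard chaining bound reduces the problem to a sup over at most $\mathrm{poly}(n)$ coordinates. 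Assumption \ref{a-residuals} supplies the moment conditions, Lemma \ref{lem:spline} and the $\tau_J$ bound in \eqref{eq:eval_cc} supply the sub-exponential bracketing constants, and the maximal dimension $\bar J_{\max}$ satisfies $\bar J_{\max}^2 (\log \bar J_{\max})^c/n \to 0$. Theorem 2.1 of \cite{chernozhukov2014anti} therefore yields a Gaussian process $\mathbb G_n(x,J)$ with the same covariance as $\mathbb Z_n$ such that
\[
 \sup_{s\in\R}\Bigl|\mathbb P_{h_0}\bigl(\sup_{(x,J)} |\mathbb Z_n(x,J)| \leq s\bigr) - \mathbb P\bigl(\sup_{(x,J)}|\mathbb G_n(x,J)| \leq s\bigr)\Bigr| = o(1),
\]
uniformly over $h_0 \in \mathcal H$.

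\textbf{Step 3 (bootstrap $\to$ Gaussian).} On the bootstrap side, the process $\widehat{\mathbb Z}_n(x,J)$ defined in \eqref{eq:Zhat} is, conditional on the data, a mean-zero Gaussian process whose covariance is a sample analogue of that of $\mathbb Z_n$. Using the same moment/eigenvalue estimates as in Step 1 together with Lemma \ref{lem-varest2}, I would show that the conditional covariance of $\widehat{\mathbb Z}_n$ is uniformly close (in sup-norm over $(x,J)\times(x',J')$) to the covariance of $\mathbb G_n$. The Gaussian comparison inequality (Lemma 3.1 of \cite{chernozhukov2014anti}) together with the Gaussian anti-concentration inequality (Corollary 2.1 of the same) then give that $\sup_{(x,J)}|\widehat{\mathbb Z}_n|$ matches $\sup_{(x,J)}|\mathbb G_n|$ in Kolmogorov distance, wpa1.

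\textbf{Step 4 (swap residuals $\hat u_{i,J} \mapsto u_i$ and $\hat L \mapsto L$ in the bootstrap).} This is where I expect the main obstacle to lie. The difference $\mathbb Z_n^*(x,J) - \widehat{\mathbb Z}_n(x,J)$ consists of two pieces: (a) replacing $L_{J,x}$ by $\hat L_{J,x}$ in the linear form, and (b) replacing $u_i$ by $\hat u_{i,J} = u_i - (\hat h_J(X_i)-h_0(X_i))$. Both produce new multiplier processes conditional on data. Conditionally Gaussian, their sup-norms can be bounded by controlling their conditional variances uniformly over $(x,J)$ using Theorem \ref{unifbiasvar} (which controls $\|\hat h_J - h_0\|_\infty$ uniformly in $J \leq \bar J_{\max}$) together with the eigenvalue bounds on $\hat S_J^o$, $\widehat G_{b,J}^o$. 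The delicate part is that the approximation bias $\|\Delta_J h_0\|_\infty$ must be controlled relative to $\|\sigma_{x,J}\|_{sd}/\sqrt n$ uniformly over the whole range $J_{\min} \leq J \leq \bar J_{\max}$, including small $J$ where bias may dominate variance. The requirement $J_{\min} \asymp (\log \bar J_{\max})^2$ together with the Gaussian anti-concentration bound (which costs only a $\sqrt{\log \bar J_{\max}}$ factor) ensures that the resulting bound is still $o_{\mathbb P}((\log n)^{-1/2})$.

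\textbf{Step 5 (anti-concentration to absorb Step 1 remainders).} Finally I would apply Gaussian anti-concentration once more to transfer the $o_{\mathbb P}$ bounds from Step 1 into $\sup_s$ distances. Combining Steps 1--5 yields part (i) with an explicit rate $\gamma_n$ depending on $\delta_n$ and the coupling rate from CCK, which tends to zero uniformly in $h_0 \in \mathcal H$. Part (ii) follows by the identical argument applied to the contrast process $(\hat h_J(x)-\hat h_{J_2}(x)-\tilde h_J(x)+\tilde h_{J_2}(x))$, whose variance $\|\hat\sigma_{x,J,J_2}\|_{sd}^2$ is controlled above and below by $\|\sigma_{x,J,J_2}\|_{sd}^2$ via Lemma \ref{varest3}(i) and (ii). The principal additional care in (ii) is that $\mathcal S_n$ has cardinality $\mathrm{poly}(\bar J_{\max})$ in the $J$--component rather than $\log \bar J_{\max}$, but this only inflates the coupling rate by a $\log$ factor, which is absorbed into $\gamma_n$.
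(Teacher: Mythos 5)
Your plan follows essentially the same route as the paper's proof: linearize the studentized statistic around the infeasible process $\mathbb{Z}_n$ and absorb the remainder via the variance-estimation lemmas, couple $\sup|\mathbb{Z}_n|$ to a Gaussian supremum by a Chernozhukov--Chetverikov--Kato coupling over a VC-type class (the paper's Lemma \ref{varlimit}(i)), match the conditionally Gaussian multiplier process to the same Gaussian limit through a covariance-comparison argument (Lemma \ref{varlimit}(ii)), control $\mathbb{Z}_n^*-\widehat{\mathbb{Z}}_n$ by conditional-variance bounds (Lemma \ref{gausscons}), and finish with anti-concentration, exactly as the paper does, with part (ii) handled by the same contrast argument. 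The only blemishes are cosmetic (e.g., the $J$-component of $\mathcal{S}_n$ has cardinality of order $(\log \bar J_{\max})^2$, not polynomial in $\bar J_{\max}$), and they do not affect the argument.
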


\subsection{Proofs of Main Results in Sections~\ref{sec:4estimation} and~\ref{sec:4UCB}}

\begin{proof}[Proof of Theorem~\ref{lepski2}]
We first list some constants that will be used throughout the proof. Fix $R_2 > 0$ in the definition of $\bar J_{\max}(R_2)$ from (\ref{eq:J_bar_max}) sufficiently large so that by Lemma~\ref{lem:gridtest} we have $\inf_{h_0 \in \mc H} \mathbb{P}_{h_0} ( \hat{J}_{\max} \leq \bar{J}_{\max}(R_2)) \to 1$. Let $\bar J_{\max} \equiv \bar J_{\max}(R_2)$ for the remainder of the proof. By Theorem~\ref{unifbiasvar}(i), there exists $C_{\ref{unifbiasvar}} > 0 $ which satisfies
\begin{equation} \label{eq:lepski2_rate}
 \inf_{h_0 \in \mathcal{H}} \mathbb{P}_{h_0} \bigg( \| \tilde{h}_J - \Pi_J h_0  \|_{\infty} \leq C_{\ref{unifbiasvar}} \| \Pi_J h_0 - h_0  \|_{\infty} \; \; \; \;  \forall \; J \in [1, \bar{J}_{\max}] \cap \T \bigg) \rightarrow 1 .
\end{equation}
For our choice of sieves, there exists $B_2 > 0 $ which satisfies
\begin{equation} \label{eq:lepski2_bias}
 \sup_{p \in [\ul p, \ol p]} \sup_{h_0 \in \mc H^p} J^{\frac pd} \| \Pi_J h_0 - h_0  \|_{\infty} \leq B_2  \; \; \; \; \; \forall \; \; J \in \T\,.
\end{equation}
Let $\hat{\mc S} = \{(x,J,J_2) \in \mc X \times \hat{\mc J} \times \hat{\mc J} : J_2 > J\}$. Lemmas~\ref{varest3} and~\ref{lem:gridtest}, Assumption~\ref{a-var}(i), and the fact that $\delta_n \downarrow 0$ (cf. (\ref{eq:delta_n})) imply that there exists $C_2, C_3 > 0 $ which satisfy
\begin{equation} \label{eq:lepski2_sd}
 \inf_{h_0 \in \mathcal{H}}  \mathbb{P}_{h_0}  \bigg(     \sup_{(x,J,J_2) \in \hat{\mathcal{S}}}  \frac{ \tau_{J_2} \sqrt{J}_2   }{\| \hat{\sigma}_{x,J,J_2}  \|_{sd}} \leq C_3     \bigg) \rightarrow 1 \, , \;
 \inf_{h_0 \in \mathcal{H}} \mathbb{P}_{h_0} \bigg( \sup_{(x,J,J_2) \in \hat{\mathcal{S}}}  \frac{\| \hat{\sigma}_{x,J,J_2}  \|_{sd}}{\tau_{J_2} \sqrt{J_2}} \leq C_2        \bigg) \rightarrow 1\,.
\end{equation}
Additionally, by Lemma~\ref{lepquant} there exists constants $C_4, C_5 >  0 $ which satisfy
\begin{equation} \label{eq:lepski2_jmax}
 \inf_{h_0 \in \mathcal{H}} \mathbb{P}_{h_0} \bigg( C_4 \sqrt{\log \bar{J}_{\max}} \leq  \theta^*_{1-\hat \alpha} \leq C_5 \sqrt{\log \bar{J}_{\max}}     \bigg) \to 1.
\end{equation}

\underline{Part (i), step 1:} We verify that $\hat{J}$ achieves the optimal rate under mild ill-posedness. Note by the procedure in Appendix~\ref{sec:regression} this is sufficient for adaptivity of $\tilde J$ for nonparametric regression. Fix $\xi > 1$ ($\xi = 1.1$ in the main text). Choose $D > 0$ such that $2B_2(C_1+1)D^{-1} C_3 < (\xi -1)$. Recall $J_0(p,D)$ and $J_0^+(p,D)$ from (\ref{eq:J0_mild}); we drop dependence of these quantities on $(p,D)$ hereafter to simplify notation. By Lemma~\ref{lem:J0_mild}, $\inf_{p \in [\ul p,\ol p]} \inf_{h_0 \in \mc H^p} \mathbb{P}_{h_0} ( J_0^+ < \bar{J}_{\max} ) \to 1$. It then follows from Lemmas~\ref{lem:J0_mild} and~\ref{lem:gridtest} that $\inf_{p \in [\ul p,\ol p]} \inf_{h_0 \in \mc H^p} \mathbb{P}_{h_0}  ( J_0^+ < \hat{J}_{\max} ) \to 1$. We therefore assume for the remainder of the proof of part (i) that $J_0^+ < \hat{J}_{\max},\bar J_{\max}$.

By Lemma~\ref{lem:gridtest}, $\hat{\mc J} \subseteq \mc J_n := \{J \in \T : 0.1 (\log \bar J_{\max})^2 \leq J \leq \bar J_{\max}\}$ wpa1 $\mc H$-uniformly. Then for all $J \in \hat{\mc J}$ with $J > J_0^+$, by the triangle inequality, displays (\ref{eq:lepski2_rate}) and (\ref{eq:lepski2_bias}), and definition of $J_0$, we may deduce that
\begin{align*}
 & \left| \| \hat{h}_J - \hat{h}_{J_0^+}  \|_{\infty} - \| \hat{h}_{J} - \hat{h}_{J_0^+} - (  \tilde{h}_J - \tilde{h}_{J_0^+} ) \|_{\infty} \right| \\
 & \leq  \| \tilde{h}_J - \Pi_J h_0  \|_{\infty} + \| \tilde{h}_{J_0^+} - \Pi_{J_0^+} h_0 \|_{\infty} + \| \Pi_{J_0^+} h_0  - h_0 \|_{\infty} + \| \Pi_{J} h_0 - h_0   \|_{\infty} \\
 & \leq 2 B_2 (1 + C_1)  (J_0^+)^{-p/d} \\
 & \leq 2 B_2 (1 + C_1) D^{-1}  \theta^*_{1-\hat \alpha} \tau_{J_0^+} \sqrt{J_0^+ / n}
\end{align*}
wpa1 uniformly for $h_0 \in \mc H^p$ and $p \in [\ul p,\ol p]$. By (\ref{eq:lepski2_sd}), we have that for all $J \in \hat{\mc J}$ with $J > J_0^+$
\[
 \tau_{J_0^+} \sqrt{J_0^+} \leq \tau_{J} \sqrt{J} \leq C_3 \| \hat{\sigma}_{x,J_0^+,J}    \|_{sd} \quad \quad \forall \quad x \in \mathcal{X}
\]
wpa1 uniformly for $h_0 \in \mc H^p$ and $p \in [\ul p,\ol p]$. Combining the preceding two inequalities and using the definition of $D$, we obtain that for all $J \in \hat{\mc J}$ with $J > J_0^+$,
\[
 \sup_{x \in \mathcal{X}} \sqrt{n} \frac{| \hat{h}_J(x) - \hat{h}_{J_0^+}(x) |}{\| \hat{\sigma}_{x,J_0^+,J} \|_{sd}}
 \leq \sup_{x \in \mathcal{X}}  \sqrt{n} \frac{| \hat{h}_{J}(x) - \hat{h}_{J_0^+}(x) - ( \tilde{h}_J(x) - \tilde{h}_{J_0^+}(x) ) |}{\| \hat{\sigma}_{x,J_0^+,J} \|_{sd}}  + (\xi-1) \theta^*_{1-\hat \alpha}
\]
wpa1 uniformly for $h_0 \in \mc H^p$ and $p \in [\ul p,\ol p]$.
It follows by definition of $\hat J$ that
\begin{align}
 & \sup_{p \in [\ul p,\ol p]} \sup_{h_0 \in \mc H^p} \mathbb{P}_{h_0} \big( \hat{J} >  J_0^+ \big) \notag \\
 & \leq \sup_{p \in [\ul p,\ol p]}  \sup_{h_0 \in \mc H^p} \mathbb{P}_{h_0} \bigg(  \sup_{J \in \hat{\mc J} : J >  J_0^+} \sup_{x \in \mathcal{X}} \frac{ \sqrt{n} | \hat{h}_{J_0^+} (x) - \hat{h}_{J} (x) |}{\| \hat{\sigma}_{x,J_0^+,J}  \|_{sd} }  > \xi \theta^*_{1-\hat \alpha}  \bigg) \notag \\
 & \leq  \sup_{h_0 \in \mathcal{H} } \mathbb{P}_{h_0} \bigg( \sup_{(x,J,J_2) \in \hat{\mc S}} \frac{\sqrt{n} | \hat{h}_{J}(x) - \hat{h}_{J_2}(x) - ( \tilde{h}_{J} (x) - \tilde{h}_{J_2} (x) )  |}{\| \hat{\sigma}_{x,J,J_2}  \|_{sd}}  \  > \theta^*_{1-\hat \alpha})     \bigg) + o(1)  \,.  \label{eq:lepski2_mild_prob}
\end{align}

To control the r.h.s. probability in (\ref{eq:lepski2_mild_prob}), let $\hat{\mc J}(\tilde J) = \{J \in \mc T : 0.1 (\log \tilde J)^2 \leq J \leq \tilde J\}$, $\hat{\mc S}(\tilde J) = \{(x,J,J_2) \in \mc X \times \hat{\mc J}(\tilde J) \times \hat{\mc J}(\tilde J) : J_2 > J\}$, and $\theta^*_{1-\hat \alpha; \tilde J}$ denote the $(1- 0.5 \wedge \sqrt{(\log \tilde J)/\tilde J})$ quantile of $\sup_{(x,J,J_2) \in \hat{\mathcal{S}}(\tilde J)} \left| \mathbb{Z}_n^*(x,J,J_2) \right|$. Then by Lemma~\ref{lem:gridtest} and Theorem~\ref{consistent}(ii), we have
\begin{align}
 & \sup_{h_0 \in \mathcal{H} } \mathbb{P}_{h_0} \bigg( \sup_{(x,J,J_2) \in \hat{\mc S}} \frac{\sqrt{n} | \hat{h}_{J}(x) - \hat{h}_{J_2}(x) - ( \tilde{h}_{J} (x) - \tilde{h}_{J_2} (x) )  |}{\| \hat{\sigma}_{x,J,J_2}  \|_{sd}}  \  > \theta^*_{1-\hat \alpha}  \bigg) \notag \\
 & \leq \sup_{h_0 \in \mathcal{H} } \sum_{\tilde J \in \mc T : \tilde J = \bar J_{\max}(R_1)}^{ \bar J_{\max}(R_2)}  \mathbb{P}_{h_0} \bigg( \sup_{(x,J,J_2) \in \hat{\mc S}(\tilde J)} \frac{\sqrt{n} | \hat{h}_{J}(x) - \hat{h}_{J_2}(x) - ( \tilde{h}_{J} (x) - \tilde{h}_{J_2} (x) )  |}{\| \hat{\sigma}_{x,J,J_2}  \|_{sd}}  \  > \theta^*_{1-\hat \alpha;\tilde J}  \bigg) \notag \\
 & \leq \sum_{\tilde J \in \mc T : \tilde J = \bar J_{\max}(R_1)}^{ \bar J_{\max}(R_2)}  \left( \sqrt{(\log \tilde J)/\tilde J} + \gamma_n  + o(1) \right) \to 0 \,, \label{eq:lepski2_mild_prob_2}
\end{align}
where the final line holds for all $n$ large, because $\bar J_{\max}(R_1) \to \infty$, $\gamma_n \downarrow 0$, and, by our choice of sieve and Remark~\ref{rmk:J_max}, for some constant $C > 0$ we have
\begin{align*}
 \#\{J \in \mc T : \bar J_{\max}(R_1) \leq J \leq \bar J_{\max}(R_2)\}
 & \leq \#\{J \in \mc T : \bar J_{\max}(R_1) \leq J \leq C \bar J_{\max}(R_1)\} \\
 & \leq \#\{l \in \mb N : \bar J_{\max}(R_1) \leq 2^{ld} \leq C \bar J_{\max}(R_1)\} \leq C \,.
\end{align*}
In view of (\ref{eq:lepski2_mild_prob}), this proves $\hat{J} \leq J_0^+ $ wpa1 uniformly for $h_0 \in \mc H^p$ and $p \in [\ul p,\ol p]$.

Whenever $\hat{J} \leq J_0^+ < \hat J_{\max},\bar J_{\max}$, it follows by definition of $\hat J$ and display (\ref{eq:lepski2_sd}) that wpa1 uniformly for $h_0 \in \mc H^p$ and $p \in [\ul p,\ol p]$, we have
\begin{align*}
 \| \hat{h}_{\hat{J}} -  h_0 \|_{\infty}
 & \leq \| \hat{h}_{\hat{J}} - \hat{h}_{J_0^+} \|_{\infty} + \| \hat{h}_{J_0^+} - h_0 \|_{\infty} \\
 & \leq  C_2 \xi  \theta^*_{1-\hat \alpha} \tau_{J_0^+} \sqrt{J_0^+ / n} + \|\hat{h}_{J_0^+} - \tilde{h}_{J_0^+} \|_{\infty} + \| \tilde{h}_{J_0^+} - h_0 \|_{\infty}.
\end{align*}
Then by Theorem~\ref{unifbiasvar}, definition of $J_0^+$, and the lower bound on $\theta^*_{1-\hat \alpha}$ in display (\ref{eq:lepski2_jmax}), we may deduce that there exists a constant $C > 0$ for which
\[
 \inf_{p \in [\ul p,\ol p]}   \inf_{h_0 \in \mathcal{H}^p}  \mathbb{P}_{h_0} \bigg( \| \hat{h}_{\hat{J}} - h_0 \|_{\infty}  \leq  C \theta^*_{1-\hat \alpha} \tau_{J_0^+} \sqrt{J_0^+ / n} \bigg) \to 1.
\]
As the model is mildly ill-posed, there exists a constant $C' > 0 $ for which $ \tau_{J_0^+} \sqrt{J_0^+} \leq C' \tau_{J_0} \sqrt{J_0}$. It then follows by definition of $J_0$  that
\begin{equation} \label{eq:lepski2_jhat_mild}
 \inf_{p \in [\ul p,\ol p]}  \inf_{h_0 \in \mc H^p} \mathbb{P}_{h_0} \left(  \| \hat{h}_{\hat{J}} -  h_0   \|_{\infty}  \leq  CC' D J_0^{- p/d}  \right) \rightarrow 1.
\end{equation}
By the upper bound on $\theta^*_{1-\hat \alpha}$ in display (\ref{eq:lepski2_jmax}) and because $\sqrt{\log \bar{J}_{\max}} \asymp \sqrt{\log n}$ (as the model is mildly ill-posed), there exists a constant $E > 0 $ such that by defining
\[
 J_n^*(p,E) = \sup \left\{   J \in \T : \tau_J \sqrt{(J \log n) / n} \leq E J^{-p/d} \right\}
\]
we have $\inf_{p \in [\ul p,\ol p]}   \inf_{h_0 \in \mc H^p}  \big( J_n^*(p,E) \leq J_0(p,D)  \big) \to 1$. Hence, as $\tau_J \asymp J^{\varsigma/d}$ we have $J_n^*(p,E) \asymp (n/\log n)^{d/(2(p + \varsigma) + d)}$. The desired result now follows from (\ref{eq:lepski2_jhat_mild}).

\underline{Part (i), step 2:} We verify that $\tilde J$ achieves the optimal rate under mild ill-posedness. By step 1, we have $ \inf_{p \in [\ul p,\ol p]} \inf_{h_0 \in \mc H^p} \mathbb{P}_{h_0} ( \hat{J} \leq J_0^+ ) \to 1$. If we can show that $ \hat J_n > J_0^+  $ wpa1 $\mc H$-uniformly, then $\tilde{J} = \hat{J}$ wpa1 $\mc H$-uniformly and the result follows by step 1.

By the lower bound on $\theta^*_{1-\hat \alpha}$ in display (\ref{eq:lepski2_jmax}) and the fact that $\sqrt{\log \bar{J}_{\max}} \asymp \sqrt{\log n}$ (as the model is mildly ill-posed), we may deduce that there exists a constant $E' > 0$ such that $\inf_{p \in [\ul p,\ol p]}   \inf_{h_0 \in \mc H^p}  \big( J_n^\dagger(p,E') \geq J_0^+(p,D)  \big) \to 1$
where
\[
 J_n^\dagger(p,E') = \inf \left\{   J \in \T : \tau_J \sqrt{(J \log n) / n} > E' J^{-p/d} \right\}.
\]
But note that $\max_{p \in [\ul p,\ol p]} J_0^\dagger(p,E') = J_0^\dagger(\ul p,E')$. The result now follows by Lemma~\ref{lem:gridtest}, noting that $\bar{J}_{\max}(R_1) /J_0^\dagger(\ul p,E') \to \infty$ when the model is mildly ill-posed because $\ul p > d/2$.

\underline{Part (ii), step 1:} We verify that $\hat J_n$ achieves the optimal rate under severe ill-posedness. To simplify notation we assume a CDV wavelet basis, though a similar argument applies (albeit with more complicated notation) for B-splines. Note that when the model is severely ill-posed, for any $R > 0$ we have $n^{\beta} \lesssim \tau_{\bar{J}_{\max}(R)}$ for some $\beta > 0$ and so $\tau_{\bar{J}_{\max}(R)} > (\log n)^4 $ for all sufficiently large $n$. Therefore $\bar J_{\max}(R) = \bar J_{\max}^*(R)$ for all $n$ sufficiently large, where $\bar J_{\max}^*(R)$ is defined in (\ref{eq:Jmax*}). By Theorem~\ref{unifbiasvar}, Lemma~\ref{lem:gridtest}, and Remark~\ref{rmk:J_max}, we may deduce that there exist constants $D,D' > 0 $ for which
\begin{align*}
 \| \hat{h}_{\hat J_n} - h_0 \|_{\infty} & \leq \| \hat{h}_{\hat J_n} - \tilde{h}_{\hat J_n}    \|_{\infty} + \| \tilde{h}_{\hat J_n} - h_0 \|_{\infty} \\
 & \leq D \bigg( (2^{-d}\bar J_{\max}^*(R_1))^{- \frac pd}  +  \tau_{2^{-d}\bar J_{\max}^*(R_2)} \sqrt{2^{-d}\bar J_{\max}^*(R_2) \log(2^{-d}\bar J_{\max}^*(R_2)) / n}    \bigg) \\
 & \leq D' \bigg( (2^{-d}\bar J_{\max}^*(R_2))^{- \frac pd}  +  \tau_{2^{-d}\bar J_{\max}^*(R_2)}  \sqrt{2^{-d}\bar J_{\max}^*(R_2) \log(2^{-d}\bar J_{\max}^*(R_2)) / n}    \bigg)
\end{align*}
wpa1 uniformly over $ \mc H^p$ and $p \in [\ul p,\ol p]$.

Recall the definition of $M_0(p,R_2)$ from (\ref{eq:M0}). By Lemma~\ref{lem:J0_severe}, for all $p \in [\ul p,\ol p]$ we have that $M_0(p,R_2) \geq M_0(\ol p,R_2) \geq 2^{-d} J_{\max}^*(R_2)$ holds for all $n$ sufficiently large, in which case by definition of $M_0(p,R_2)$ we must have
\[
  \tau_{2^{-d}\bar J_{\max}^*(R_2)}  \sqrt{2^{-d}\bar J_{\max}^*(R_2) \log(2^{-d}\bar J_{\max}^*(R_2)) / n}  \leq R_2 (2^{-d}\bar J_{\max}^*(R_2))^{- \frac pd} \,.
\]
Combining the preceding two inequalities then yields
\[
 \| \hat{h}_{\hat J_n} - h_0 \|_{\infty} \leq D'(1 + R_2) 2^p (\bar J_{\max}^*(R_2))^{- \frac pd}
\]
wpa1 uniformly over $ \mc H^p$ and $p \in [\ul p,\ol p]$.

It remains to show $ (\log n)^{d/\varsigma} \lesssim \bar J_{\max}^*(R_2)$ when $\tau_J \asymp \exp(C J^{\varsigma/d})$ for $C,\varsigma > 0$. Suppose $\liminf_{n \to \infty} \bar J_{\max}^*(R_2) / (\log n)^{d/\varsigma} = 0$. Then along a subsequence $\{n_k\}_{k\geq1}$ we have $ \bar{J}_{\max}^*(R_2) =  (2^{-\varsigma} C^{-1} u_{n_k} \log n_k)^{d/\varsigma} $ for some sequence $u_{n_k}  \downarrow 0$. Then $2^d \bar J_{\max}^*(R_2) \in \mc T$ satisfies
\[
 \tau_{2^d \bar J_{\max}^*(R_2)} 2^d \bar J_{\max}^*(R_2) \sqrt{\log (2^d \bar J_{\max}^*(R_2)) / n_k} \lesssim n_k^{u_{n_k} - \frac{1}{2}} (\log n_k)^{d/ \varsigma}  \sqrt{\log \log n_k} \xrightarrow[k \rightarrow \infty]{} 0\,,
\]
thereby contradicting the definition of $\bar J_{\max}^*(R_2)$ from (\ref{eq:Jmax*}) for all sufficiently large $k$.

\underline{Part (ii), step 2:} We verify that $\tilde{J}$ achieves the optimal rate under severe ill-posedness. For any constant $D > 0$, by definition of $\tilde J$ we have
\begin{align*}
 & \sup_{p \in [\ul p,\ol p]} \sup_{h_0 \in \mc H^p} \mathbb{P}_{h_0} \big( \| \hat {h}_{\tilde J} - h_0 \|_{\infty} > D (\log n)^{-p / \varsigma} \big)  \\
 & \leq \sup_{p \in [\ul p,\ol p]} \sup_{h_0 \in \mc H^p} \mathbb{P}_{h_0} \big( \| \hat{h}_{\hat J} - h_0 \|_{\infty} > D (\log n)^{-p / \varsigma} \mbox{ and } \hat{J} < \hat J_n \big) \\
 & \quad \quad + \sup_{p \in [\ul p,\ol p]} \sup_{h_0 \in \mc H^p} \mathbb{P}_{h_0} \big(  \| \hat{h}_{\hat J_n} - h_0    \|_{\infty} > D (\log n)^{-p / \varsigma}  \big)  .
\end{align*}
By part (ii), step 1, the constant $D$ can be chosen sufficiently large so that the second term on the r.h.s. is $o(1)$. For the first term, note that $\|\hat{h}_{\hat {J}} - h_0  \|_{\infty} \leq \| \hat{h}_{\hat{J}} -  \hat{h}_{\hat J_n}  \|_{\infty} + \| \hat{h}_{\hat J_n} - h_0  \|_{\infty}$, so  it suffices to show that there exists a constant $D > 0 $ for which
\[
 \sup_{p \in [\ul p,\ol p]} \sup_{h_0 \in \mc H^p} \mathbb{P}_{h_0} \big(  \| \hat{h}_{\hat{J}} - \hat{h}_{\hat J_n} \|_{\infty} > D (\log n)^{-p / \varsigma} \mbox{ and } \hat{J} < \hat J_n   \big) \rightarrow 0\,.
\]
But by definition of $\hat J$ and displays (\ref{eq:lepski2_sd}) and (\ref{eq:lepski2_jmax}), we have
\begin{align*}
 & \sup_{p \in [\ul p,\ol p]} \sup_{h_0 \in \mc H^p} \mathbb{P}_{h_0} \big(  \| \hat{h}_{\hat{J}} - \hat{h}_{\hat J_n} \|_{\infty} > D (\log n)^{-p / \varsigma} \mbox{ and } \hat{J} < \hat J_n   \big) \\
 & \leq   \sup_{p \in [\ul p,\ol p]} \sup_{h_0 \in \mc H^p} \mathbb{P}_{h_0} \left(  \xi C_2 \theta^*_{1-\hat \alpha} \tau_{\hat J_n} \sqrt{\hat J_n / n} > D (\log n)^{-p / \varsigma} \right) + o(1) \\
 & \leq   \sup_{p \in [\ul p,\ol p]}  \mathbbm{1} \left[ \xi C_2 C_5 \tau_{2^{-d}\bar J_{\max}^*(R_2)}  \sqrt{2^{-d}\bar J_{\max}^*(R_2) \log(2^{-d} \bar J_{\max}^*(R_2)) / n} > D (\log n)^{-p / \varsigma} \right] + o(1) \,.
\end{align*}
By step 1, we have $\tau_{2^{-d}\bar J_{\max}^*(R_2)}  \sqrt{2^{-d}\bar J_{\max}^*(R_2) \log(2^{-d} \bar J_{\max}^*(R_2)) / n} \lesssim (\log n)^{-p/\varsigma}$ uniformly for $p \in[\ul p,\ol p]$, so the constant $D$ can be chosen sufficiently large that the indicator function on the r.h.s. is zero uniformly for $p \in [\ul p,\ol p]$ for all $n$ sufficiently large.
\end{proof}

\begin{proof}[Proof of Corollary~\ref{cor:lepski2}]
\underline{Part (i):} Recall $J_0^+ \equiv J_0(p,D)^+$ from (\ref{eq:J0_mild}). We have
\[
 \| \partial^a \hat{h}_{\hat{J}} - \partial^a  h_0 \|_{\infty}
 \leq \| \partial^a  \hat{h}_{\hat{J}} - \partial^a  \hat{h}_{J_0^+} \|_{\infty} +  \| \partial^a \hat {h}_{J_0^+} - \partial^a \tilde h_{J_0^+} \|_{\infty}  + \| \partial^a \tilde {h}_{J_0^+} - \partial^a  h_0 \|_{\infty} \,.
\]
As $\hat{J} \leq J_0^+ < \hat J_{\max},\bar J_{\max}$ holds wpa1 uniformly for $h_0 \in \mc H^p$ and $p \in [\ul p,\ol p]$, by part (i), step 1 of the proof of Theorem~\ref{lepski2}, we may appeal to a Bernstein inequality (or inverse estimate) for our choice of basis to write
\[
 \| \partial^a \hat{h}_{\hat{J}} - \partial^a  h_0 \|_{\infty}
 \lesssim (J_0^+)^{|a|/d} \left( \| \hat{h}_{\hat{J}} - \hat{h}_{J_0^+} \|_{\infty} +  \| \hat {h}_{J_0^+} - \tilde h_{J_0^+} \|_{\infty} \right) + \| \partial^a \tilde {h}_{J_0^+} - \partial^a  h_0 \|_{\infty} \,.
\]
By similar arguments to the proof of Corollary 3.1 of \cite{chen2018optimal}, we may also deduce  $\| \partial^a \tilde {h}_{J_0^+} - \partial^a  h_0 \|_{\infty} \lesssim (J_0^+)^{(|a|-p)/d}$ and so
\[
 \| \partial^a \hat{h}_{\hat{J}} - \partial^a  h_0 \|_{\infty}
 \lesssim (J_0^+)^{|a|/d} \left( \| \hat{h}_{\hat{J}} - \hat{h}_{J_0^+} \|_{\infty} +  \| \hat {h}_{J_0^+} - \tilde h_{J_0^+} \|_{\infty} + (J_0^+)^{-p/d} \right)  \,.
\]
It now follows by similar arguments to part (i), step 1 of the proof of Theorem~\ref{lepski2} and definition of $J_0$ that there exists a constant $C > 0$ for which
\[
 \inf_{p \in [\ul p,\ol p]}  \inf_{h_0 \in \mc H^p} \mathbb{P}_{h_0} \left(  \| \partial^a \hat{h}_{\hat{J}} - \partial^a h_0  \|_{\infty}  \leq  C J_0^{(|a|- p)/d}  \right) \rightarrow 1.
\]
The result follows from noting, as in the proof of part (i), step 1 of the proof of Theorem~\ref{lepski2}, that
\[
 \inf_{p \in [\ul p,\ol p]}   \inf_{h_0 \in \mc H^p}  \mathbb{P}_{h_0}  \big( J_n^*(p,E) \leq J_0(p,D)  \big) \to 1\,,
\] where $J_n^*(p,E) \asymp (n/\log n)^{d/(2(p + \varsigma) + d)}$, and by part (i), step 2 of the proof of Theorem~\ref{lepski2} (which shows that $\tilde J = \hat J$ wpa1 $\mc H$-uniformly).

\underline{Part (ii):} Recall $\bar J_{\max}^*(R)$ from (\ref{eq:Jmax*}) and $\hat J_n$ from the definition of $\tilde J$. By similar arguments to part (ii), step 1 of the proof of Theorem~\ref{lepski2}, and the proof of Corollary 3.1 of \cite{chen2018optimal}, we may deduce
\begin{align*}
 & \|\partial^a \hat{h}_{\hat J_n} - \partial^a h_0 \|_{\infty} \\
 & \lesssim  (\bar J_{\max}^*(R_2))^{\frac{|a|}{d}} \bigg( (2^{-d}\bar J_{\max}^*(R_2))^{- \frac pd}  +  \tau_{2^{-d}\bar J_{\max}^*(R_2)}  \sqrt{2^{-d}\bar J_{\max}^*(R_2) \log(2^{-d}\bar J_{\max}^*(R_2)) / n}    \bigg)
\end{align*}
wpa1 uniformly over $ \mc H^p$ and $p \in [\ul p,\ol p]$. Hence, by part (ii), step 1 of the proof of Theorem~\ref{lepski2},
\[
 \|\partial^a \hat{h}_{\hat J_n} - \partial^a h_0 \|_{\infty} \lesssim (\log n)^{(|a|-p)/d}
\]
wpa1 uniformly over $\mc H^p$ and $p \in [\ul p,\ol p]$.

By similar arguments to part (ii), step 2 of the proof of Theorem~\ref{lepski2}, it suffices to show that there exists a constant $C > 0$ for which
\[
 \sup_{p \in [\ul p,\ol p]} \sup_{h_0 \in \mc H^p} \mathbb{P}_{h_0} \big(  \| \partial^a \hat{h}_{\hat{J}} - \partial^a  \hat{h}_{\hat J_n} \|_{\infty} > C (\log n)^{(|a|-p) / \varsigma} \mbox{ and } \hat{J} < \hat J_n   \big) \rightarrow 0\,.
\]
But for any $\hat J \leq \hat J_n$ by a Bernstein inequality (or inverse estimate) for our choice of basis, 
\[
  \| \partial^a \hat{h}_{\hat{J}} - \partial^a  \hat{h}_{\hat J_n} \|_{\infty} \lesssim (\hat J_n)^{|a|/d} \|  \hat{h}_{\hat{J}} - \hat{h}_{\hat J_n} \|_{\infty} \lesssim (\bar J_{\max}^*(R_2))^{|a|/d} \|  \hat{h}_{\hat{J}} - \hat{h}_{\hat J_n} \|_{\infty}
\]
wpa1 uniformly over $\mc H^p$ and $p \in [\ul p,\ol p]$, where the second inequality is because $\hat J_n \leq \hat J_{\max} \leq \bar J_{\max}(R_2)$ wpa1 $\mc H$-uniformly by Lemma~\ref{lem:gridtest} and because $\bar J_{\max}(R_2) = \bar J_{\max}^*(R_2)$ for all $n$ sufficiently large. But note by severe ill-posedness and definition of $\bar J_{\max}^*(R_2)$, we have that $C (\bar J_{\max}^*(R_2))^{\varsigma/d} \asymp \log \tau_{\bar J_{\max}^*(R_2)} \leq \log(R_2 \sqrt n) \asymp \log n$, and so $\bar J_{\max}^*(R_2) \lesssim (\log n)^{d/\varsigma}$. The result now follows by part (ii), step 2 of the proof of Theorem~\ref{lepski2}.
\end{proof}

\begin{proof}[Proof of Theorem~\ref{confmild}]
In some of what follows, we use the fact that the sieve dimensions for CDV wavelet bases are linked via $J^+ = 2^d J$ for $J \in \T$. We do so for notational convenience; a similar argument (with more complicated notation) applies for B-splines.

\underline{Part (i), step 1:} By part (i), step 2 of the proof of Theorem~\ref{lepski2}, we have $\hat{J} = \tilde{J}$ wpa1 $\mc H$-uniformly. It therefore suffices to prove the result for the band
\[
 C_n(x) = \bigg[ \hat{h}_{\hat{J}}(x) - \left( z_{1-\alpha}^* + \hat A \theta^*_{1-\hat \alpha} \right ) \hat \sigma_{\hat J}(x) , ~  \hat{h}_{\hat{J}}(x) + \left( z_{1-\alpha}^* + \hat A \theta^*_{1-\hat \alpha} \right) \hat \sigma_{\hat J}(x) \bigg],
\]
(cf. (\ref{band})). Note by Appendix~\ref{sec:regression} this implies the result holds for our UCBs for nonparametric regression as well. Fix $R_2 > 0$ in the definition of $\bar J_{\max}(R_2)$ from (\ref{eq:J_bar_max}) sufficiently large so that by Lemma~\ref{lem:gridtest} we have $\inf_{h_0 \in \mc H} \mathbb{P}_{h_0} ( \hat{J}_{\max} \leq \bar{J}_{\max}(R_2)) \to 1$. Let $\bar J_{\max} \equiv \bar J_{\max}(R_2)$ for the remainder of the proof. Recall the constants $C_{\ref{unifbiasvar}}$ from (\ref{eq:lepski2_rate}), $\ul B$ and $\ol B$ from the discussion preceding the statement of this theorem, and $C_4$ and $C_5$ from (\ref{eq:lepski2_jmax}). Also note that by Lemmas~\ref{varest3} and~\ref{lem:gridtest}, Assumption~\ref{a-var}(i), and the fact that $\delta_n \downarrow 0$ (cf. (\ref{eq:delta_n})) imply that there exists $C_2, C_3 > 0 $ which satisfy
\begin{equation} \label{eq:confmild_sd}
 \inf_{h_0 \in \mathcal{H}}  \mathbb{P}_{h_0}  \bigg(  \sup_{(x,J) \in \mc X \times \hat{\mc J}} \frac{ \tau_{J} \sqrt{J} }{\| \hat{\sigma}_{x,J} \|_{sd}} \leq C_3     \bigg) \rightarrow 1 \, , \;
 \inf_{h_0 \in \mathcal{H}} \mathbb{P}_{h_0} \bigg( \sup_{(x,J) \in \mc X \times \hat{\mc J}} \frac{\| \hat{\sigma}_{x,J} \|_{sd}}{\tau_{J} \sqrt{J}} \leq C_2 \bigg) \rightarrow 1\,.
\end{equation}

Let $v =  \inf_{J \in \T} (1+ \| \Pi_J \|_{\infty})^{-1} > 0$, where $\| \Pi_J \|_{\infty} \lesssim 1$ is the Lebesgue constant for $\Psi_J$ (see Appendix~\ref{ax:besov}). Choose $\beta \in (0,1) $ and $E > 0 $ such that $(v \ul B \beta^{-\ul p/d} - (C_{\ref{unifbiasvar}} + 1) \ol B) > 0 $ and $E^{-1} (v \ul B \beta^{-\ul p/d} - (C_{\ref{unifbiasvar}} + 1) \ol B) > C_2 (\xi+1)$, where $\xi>1 $ ($\xi = 1.1$ in the main text).

Define $J_0(p,E)$ as in (\ref{eq:J0_mild}). Part (i), step 1 of the proof of Theorem~\ref{lepski2} implies that $J_0(p,E) \gtrsim (n/\log n)^{d/(2(p + \varsigma) + d)}$. By Lemma~\ref{lem:gridtest} and mild ill-posedness, for any constant $C>0$ we have $J_0(p,E) / (\log \hat J_{\max})^2 \geq C$ wpa1 uniformly for $h_0 \in \mc H^p$ and $p \in [\ul p,\ol p]$. Hence, $\inf \{ J \in \T : J \geq \beta J_0(p,E) \} > \log \hat J_{\max}$ wpa1 uniformly for $h_0 \in \mc H^p$ and $p \in [\ul p,\ol p]$.

Fix any $J \in \hat{\mc J}$ with $J < \beta J_0(p,E)$ (this is justified wpa1 uniformly for $h_0 \in \mc H^p$ and $p \in [\ul p,\ol p]$ by the preceding paragraph) and note (dropping dependence of $J_0$ on $(p,E)$)
\begin{align*}
 \| \hat{h}_J - \hat{h}_{J_0} \|_{\infty} & =  \| \hat{h}_J - \hat{h}_{J_0} - \tilde{h}_J + \tilde{h}_{J} - \tilde{h}_{J_0} + \tilde{h}_{J_0} - h_0 + h_0  \|_\infty \\
  & \geq \| \tilde{h}_J - h_0   \|_\infty -   \| \tilde{h}_{J_0} - h_0 \|_\infty - \| \hat{h}_J - \tilde{h}_J - (\hat{h}_{J_0} - \tilde{h}_{J_0})   \|_\infty .
\end{align*}
For a given $h_0 \in \mc G^p$, let $h_{0,J} \in \arg \min_{h \in \Psi_J} \|h - h_0  \|_{\infty}$. Recall $\ul J$ from the definition of $\mc G^p$ and note that $\inf\{J : J \in \hat{\mc J}\} \geq \ul J$ holds wpa1 $\mc H$-uniformly by Lemma~\ref{lem:gridtest}. Recalling the Lebesgue constant $\|\Pi_J\|_\infty$ from Appendix~\ref{ax:besov}, we may then deduce
\[
 \| \tilde{h}_J - h_0   \|_\infty \geq \| h_{0,J} - h_0   \|_\infty \geq (1+ \|   \Pi_J \|_\infty)^{-1} \| h_0 - \Pi_J h_0   \|_\infty \geq v \ul B J^{-p/d}\,,
\]
for all $J \in \hat{\mc J}$ wpa1, uniformly for all $h_0 \in \mc G^p$ and all $p \in [\ul p, \ol p]$.
It follows by (\ref{eq:lepski2_rate}) and the discussion preceding the statement of this theorem that
\begin{align*}
 \| \hat{h}_J - \hat{h}_{J_0} \|_{\infty} & \geq v \ul B J^{-p/d} - (C_{\ref{unifbiasvar}}+1) \ol B J_0^{-p/d} - \| \hat{h}_J - \tilde{h}_J - (\hat{h}_{J_0} - \tilde{h}_{J_0}) \|_\infty \\
 & \geq (v \ul B \beta^{-\ul p/d} -  (C_{\ref{unifbiasvar}}+1) \ol B) J_0^{-p/d} - \| \hat{h}_J - \tilde{h}_J - (\hat{h}_{J_0} - \tilde{h}_{J_0}) \|_\infty \\
 & > C_2(\xi + 1) \tau_{J_0} \frac{\sqrt{J_0} \theta^*_{1-\hat \alpha}}{\sqrt{n}} - \| \hat{h}_J - \tilde{h}_J - (\hat{h}_{J_0} - \tilde{h}_{J_0}) \|_\infty  \, ,
\end{align*}
where the second line uses $J < \beta J_0$ and the third uses definition of $E$ and $J_0(p,E)$. It now follows by the preceding display and (\ref{eq:confmild_sd}) that
\begin{align*}
 & \sup_{p \in [\ul p, \ol p]}  \sup_{h_0 \in \mathcal{G}^p} \mathbb{P}_{h_0} \big( \hat{J} < \beta J_0(p,E) \big) \\
 & \leq  \sup_{p \in [\ul p, \ol p]} \sup_{h_0 \in \mathcal{G}^p} \mathbb{P}_{h_0} \bigg(  \inf_{J \in \hat{\mc J} : J < \beta J_0} \sup_{x \in \mathcal{X}}  \frac{\sqrt{n} | \hat{h}_J(x) - \hat{h}_{J_0}(x) |}{ \| \hat{\sigma}_{x,J,J_0} \|_{sd}} \leq \xi \theta^*_{1-\hat \alpha} \bigg) \\
 & \leq  \sup_{p \in [\ul p, \ol p]} \sup_{h_0 \in \mathcal{G}^p} \mathbb{P}_{h_0} \bigg( \sup_{(x,J,J_2) \in \hat{\mc S}} \frac{\sqrt{n} | \hat{h}_{J}(x) - \hat{h}_{J_2} (x) - ( \tilde{h}_{J} (x) - \tilde{h}_{J_2} (x) )|}{\| \hat{\sigma}_{x,J,J_2} \|_{sd}}  > \theta^*_{1-\hat \alpha} \bigg) + o(1) \\
 & \leq  \sup_{h_0 \in \mathcal{H}} \mathbb{P}_{h_0} \bigg( \sup_{(x,J,J_2) \in \hat{\mc S}} \frac{\sqrt{n} | \hat{h}_{J}(x) - \hat{h}_{J_2} (x) - ( \tilde{h}_{J} (x) - \tilde{h}_{J_2} (x) )|}{\| \hat{\sigma}_{x,J,J_2} \|_{sd}}  > \theta^*_{1-\hat \alpha} \bigg) + o(1) \to 0 \,,
\end{align*}
where the final line is by (\ref{eq:lepski2_mild_prob_2}).

\underline{Part (i), step 2:} Recall $J_0^+(p,D)$ from part (i), step 1 of the proof of Theorem~\ref{lepski2}. By the previous step of this proof and part (i), step 1 of the proof of Theorem~\ref{lepski2}, we have
\begin{equation}\label{eq:confmild_1}
 \inf \limits_{p \in [\ul p, \ol p]} \inf_{h_0 \in \mathcal{G}^p} \mathbb{P}_{h_0} \big( \beta J_0(p,E) \leq \hat{J} \leq J_0^+(p,D) \big)  \to 1\,.
\end{equation}
Therefore, by (\ref{eq:lepski2_rate}), (\ref{eq:confmild_sd}), (\ref{eq:confmild_1}), and definition of $\ol B$, for every $h_0 \in \mc G^p$ and $x \in \mathcal{X}$ we have
\begin{multline*}
 \frac{ | \tilde{h}_{\hat{J}}(x) - h_0(x)|}{ \| \hat{\sigma}_{x,\hat{J}} \|_{sd}}
  \leq (C_{\ref{unifbiasvar}}+1) C_3 \ol B \frac{\hat J^{-p/d}}{\tau_{\hat J} \sqrt{\hat J}}
  \leq (C_{\ref{unifbiasvar}}+1) C_3 \ol B \beta^{- \ol p/d} 2^p \frac{(2^d J_0(p,E))^{-p/d}}{ \tau_{\lceil \beta J_0(p,E) \rceil} \sqrt{\beta J_0(p,E)} } \,,
\end{multline*}
wpa1 uniformly for $h_0 \in \mc G^p$ and $p \in [\ul p,\ol p]$ and $x \in \mathcal{X}$, where $\tau_{\lceil \beta J_0(p,E) \rceil}$ denotes the ill-posedness at resolution level $\inf\{J \in \T : J \geq \beta J_0(p,E)\}$. It now follows from definition of $2^d J_0(p,E) \equiv J_0^+(p,E)$ from (\ref{eq:J0_mild}) that whenever the preceding inequality holds, we have
\begin{align*}
 \sup_{x \in \mathcal{X}} \sqrt{n} \frac{ | \tilde{h}_{\hat{J}}(x) - h_0(x) |}{ \| \hat{\sigma}_{x,\hat{J}} \|_{sd}}
 \leq  C_3(C_{\ref{unifbiasvar}}+1) \ol B \beta^{- \ol p/d - 1/2} 2^{\ol p + d/2} E^{-1} \frac{\tau_{2^d J_0(p,E)}}{\tau_{\lceil \beta J_0(p,E) \rceil}} \theta^*_{1-\hat \alpha} < A_0 \theta^*_{1-\hat \alpha} \,,
\end{align*}
where the final inequality holds uniformly for $h_0 \in \mc G^p$ and $p \in [\ul p,\ol p]$ for a constant $A_0 > 0$ because $\sup_{J \in \T} \tau_{2^d J}/\tau_{\lceil \beta J \rceil} < \infty$ by virtue of mild ill-posedness. Hence for any $A \geq A_0$,
\begin{align*}
 & \inf_{h_0 \in \mathcal{G}} \mathbb{P}_{h_0} \left(  h_0(x) \in C_n(x,A) \; \; \forall \; \; x  \in \mathcal{X} \right) \\
 & \geq \inf_{p \in [\ul p,\ol p]} \inf_{h_0 \in \mathcal{G}^p} \mathbb{P}_{h_0} \bigg( \sup_{x \in \mathcal{X}} \sqrt{n} \frac{ | \hat{h}_{\hat{J}}(x) - h_0(x) | }{\| \hat{\sigma}_{x,\hat{J}} \|_{sd}} \leq z_{1 - \alpha}^* + A \theta^*_{1-\hat \alpha} \bigg) + o(1) \\
 & \geq \inf_{p \in [\ul p,\ol p]} \inf_{h_0 \in \mathcal{G}^p} \mathbb{P}_{h_0} \bigg( \sup_{x \in \mathcal{X}} \sqrt{n} \frac{ | \hat{h}_{\hat{J}}(x) - \tilde{h}_{\hat{J}}(x) |}{\|  \hat{\sigma}_{x,\hat{J}} \|_{sd}} \leq z_{1 - \alpha}^* \bigg) + o(1) \\
 & \geq \inf_{p \in [\ul p,\ol p]} \inf_{h_0 \in \mathcal{G}^p} \mathbb{P}_{h_0} \bigg( \sup_{(x,J) \in \mathcal{X} \times \ul{\mc J}_n} \sqrt{n} \frac{ | \hat{h}_{J}(x) - \tilde{h}_{J}(x) |}{\|  \hat{\sigma}_{x,J} \|_{sd}} \leq z_{1 - \alpha}^* \bigg) + o(1) \,,
\end{align*}
where the final line is because $\hat J \in \ul{\mc J}_n := \{J \in \T : 0.1 (\log \bar J_{\max}(R_2))^2 \leq J \leq \bar J_{\max}^-(R_1)\}$ with $\bar J_{\max}^-(R_1) = \sup\{J \in \mc T : J < \bar J_{\max}(R_1)\}$ and $\hat{\mc J}_- \supseteq \ul{\mc J}_n$ both hold wpa1 uniformly for $h_0 \in \mc G^p$ and $p \in [\ul p,\ol p]$; the former holds by (\ref{eq:confmild_1}) and Lemma~\ref{lem:J0_mild} and the latter holds by Lemma~\ref{lem:gridtest} and the fact that $\hat J = \tilde J$ wpa1 $\mc H$-uniformly. Let $\ul z_{1-\alpha}^*$ denote the $1-\alpha$ quantile of $\sup_{(x,J) \in \mathcal{X} \times \ul{\mc J}_n} \left| \mathbb{Z}_n^*(x,J) \right|$. As $\ul z_{1-\alpha}^* \leq z_{1-\alpha}^*$ must hold whenever $\hat{\mc J} \supseteq \ul{\mc J}_n$, we therefore have
\begin{align*}
 & \inf_{h_0 \in \mathcal{G}} \mathbb{P}_{h_0} \left(  h_0(x) \in C_n(x,A) \; \; \forall \; \; x  \in \mathcal{X} \right) \\
 & \geq \inf_{p \in [\ul p,\ol p]} \inf_{h_0 \in \mathcal{G}^p} \mathbb{P}_{h_0} \bigg( \sup_{(x,J) \in \mathcal{X} \times \ul{\mc J}_n} \sqrt{n} \frac{ | \hat{h}_{J}(x) - \tilde{h}_{J}(x) |}{\|  \hat{\sigma}_{x,J} \|_{sd}} \leq \ul z_{1 - \alpha}^* \bigg) + o(1)  = (1-\alpha) + o(1) \,,
\end{align*}
where the last equality follows from Theorem~\ref{consistent}(i) and the definition of $\ul z_{1-\alpha}^*$.

\underline{Part (ii):} By Lemmas~\ref{lem-varest2}, \ref{lepquant}, and~\ref{zorder} and Assumption~\ref{a-var}(i), we have
\[
 \sup_{x \in \mathcal{X}} |C_n(x,A)	| \lesssim (1+A)\tau_{\hat J} \sqrt{(\hat J \log \bar{J}_{\max}) / n}
\]
wpa1 $\mc H$-uniformly. Then by (\ref{eq:confmild_1}) with $J_0 = J_0(p,D)$ and $\bar A = 1 + A$, we have that
\[
 \sup_{x \in \mathcal{X}} |C_n(x,A)|
 \lesssim \bar A \tau_{J_0^+} \sqrt{(J_0^+ \log \bar{J}_{\max}) / n}
 \lesssim \bar A \tau_{J_0} \sqrt{(J_0 \log \bar{J}_{\max}) / n}
 \lesssim \bar A \frac{\sqrt{\log \bar{J}_{\max}}}{\theta^*_{1-\hat \alpha}} J_0^{-p/d}
\]
holds wpa1 uniformly for $h_0 \in \mc G^p$ and $p \in [\ul p,\ol p]$ and for all $A > 0$, where the second inequality follows from the fact that the model is mildly ill-posed and the third is by definition (\ref{eq:J0_mild}). It follows by Lemma~\ref{lepquant} that there is a constant $C > 0$ (independent of $A$) for which
\[
 \inf_{p \in [\ul p,\ol p]} \inf_{h_0 \in \mc G^p} \mathbb{P}_{h_0} \left( \sup_{x \in \mathcal{X}} |C_n(x,A)| \leq C (1+A) (J_0(p,D))^{-p/d}    \right) \rightarrow 1 \,.
\]
The result now follows from part (i), step 2 of the proof of Theorem~\ref{lepski2}, which shows that  $\inf_{p \in [\ul p,\ol p]}   \inf_{h_0 \in \mc H^p}  \big( J_n^*(p,E) \leq J_0(p,D)  \big) \to 1$ with $J_n^*(p,E) \asymp (n/\log n)^{d/(2(p + \varsigma) + d)}$.
\end{proof}

\begin{proof}[Proof of Theorem~\ref{confsevere}]
In some of what follows, we use the fact that the sieve dimensions for CDV wavelet bases are linked via $J^+ = 2^d J$ for $J \in \T$. A similar argument (with more complicated notation) applies for B-spline bases.

\underline{Part (ii):} First note by Lemma \ref{lem:gridtest} and the fact that $\bar J_{\max}(R) = \bar J_{\max}^*(R)$ (see (\ref{eq:Jmax*})) holds for any $R > 0$ for all $n$ sufficiently large (see part (ii), step 1 of the proof of Theorem \ref{lepski2}), we have that $J_{\max}^*(R_1) \leq \hat J_{\max} \leq J_{\max}^*(R_2)$ wpa1 $\mc H$-uniformly.

Recall $M_0(p,R_2)$ from (\ref{eq:M0}).  By Lemma \ref{lem:J0_severe}, for all $p \in [\ul p,\ol p]$ we have that $M_0(p,R_2) \geq M_0(\ol p,R_2) \geq 2^{-d} J_{\max}^*(R_2)$ holds for all $n$ sufficiently large. Then by Lemmas \ref{lem-varest2}, \ref{lepquant}, and \ref{zorder} and Assumption \ref{a-var}(i), there exist constants $C,C' > 0$ for which
\[
 \sup_{x \in \mathcal{X}} |C_n(x,A)| \leq  C(1+A) \tau_{\tilde{J}} \sqrt{(\tilde{J} \log ( \bar{J}_{\max}^*(R_2) ) ) / n} + A \tilde{J}^{-\ul p/d} \leq C'(1+A)(J_{\max}^*(R_2))^{-p/d} + A \tilde J^{-\ul p/d}
\]
holds wpa1 uniformly for $h_0 \in \mc H^p$ and $p \in [\ul p,\ol p]$, where the second inequality is by definition of $M_0(p,R_2)$. The proofs of Theorem \ref{lepski2} and Corollary \ref{cor:lepski2} show that $\bar{J}_{\max}^*(R_2) \asymp (\log n)^{d/\varsigma}$ in the severely ill-posed case. Therefore, it suffices to show that there is a constant $c > 0$ for which $\hat{J} \geq c (\log n)^{d/\varsigma}$ holds wpa1 uniformly for $h_0 \in \mc G^p$ and $p \in [\ul p,\ol p]$.

Recall $\beta$ and $E$ from the proof of Theorem \ref{confmild} and $J_0(p,E)$ from (\ref{eq:J0_mild}). By similar arguments to Lemma \ref{lem:J0_severe}, we may deduce that $\inf \{ J \in \T : J \geq \beta J_0(p,E) \} > \log \hat J_{\max}$ wpa1 uniformly for $h_0 \in \mc H^p$ and $p \in [\ul p,\ol p]$. It then follows by the same argument as part (i), step 1 of the proof of Theorem \ref{confmild} that $\hat{J} \geq \beta J_0(p,E)$ holds wpa1 uniformly for $h_0 \in \mc G^p$ and $p \in [\ul p,\ol p]$. But by Lemma \ref{lepquant} and the fact that $\log \bar J_{\max}^*(R_2) \asymp \log \log n$ for severely ill-posed models, it follows that there is a constant $C'' > 0$ for which, by defining
\[
 J^*(p,C'') = \sup \bigg \{ J \in \T : \tau_J \sqrt{(J \log \log  n) / n} \leq C'' J^{-p/d}  \bigg \} \,,
\]
we have $ \inf_{p \in [\ul p,\ol p]} \inf_{h_0 \in \mc H^p} \mathbb{P}_{h_0} ( J_0(p,E) \geq  J^*(p,C'') ) \to 1$. Finally, we may deduce by a similar argument to part (ii), step 1 of the proof of Theorem \ref{lepski2} that $J^*(p,C'') \gtrsim (\log n)^{d/\varsigma}$ for all $p \in [\ul p, \ol p]$, which establishes the desired behavior of $\hat J$.

\underline{Part (i):} By Theorem \ref{unifbiasvar} and Lemma \ref{lem:gridtest}, there exists a constant $A_0 > 0 $ for which
\[
 | \hat{h}_{\tilde{J}}(x) - h_0(x)| \leq | \hat{h}_{\tilde{J}}(x) - \tilde{h}_{\tilde{J}}(x) | + A_0 \tilde{J}^{-\ul p/d}
\]
holds for all $x \in \mc X$ wpa1 $\mc H$-uniformly. Then for any $A \geq A_0$, we have
\[
 \inf_{h_0 \in \mathcal{G} } \mathbb{P}_{h_0} \big( h_0(x) \in C_n(x,A) \; \; \forall \; x \in \mathcal{X} \big)
 \geq \inf_{h_0 \in \mathcal{G}} \mathbb{P}_{h_0} \bigg( \sup_{x \in \mathcal{X}} \left|\sqrt{n} \frac{ \hat{h}_{\tilde{J}}(x) - \tilde{h}_{\tilde{J}}(x)}{\| \hat \sigma_{x,\tilde{J}} \|_{sd}} \right| \leq z_{1- \alpha}^* \bigg) + o(1) \,.
\]

Suppose that $J_{\max}^*(R_2) \geq 2^{2d} J_{\max}^*(R_1) \in \T$. Then by definition of $\bar J_{\max}^*(R)$ and Remark \ref{rmk:J_max}, we have
\begin{equation}\label{eq:confsevere_1}
 \frac{\tau_{J_{\max}^*(R_2)}}{\tau_{2^{2d} J_{\max}^*(R_1)} } \asymp \frac{\tau_{J_{\max}^*(R_2)}}{\tau_{2^{2d} J_{\max}^*(R_1)} } \frac{J_{\max}^*(R_2) \sqrt{\log J_{\max}^*(R_2)}}{2^{2d} J_{\max}^*(R_1) \sqrt{\log J_{\max}^*(R_1)}} \leq \frac{R_2}{R_1} \,.
\end{equation}
But note that if $J_{\max}^*(R_2) \geq 2^{2d} J_{\max}^*(R_1)$ then by severe ill-posedness we have
\[
 \frac{\tau_{J_{\max}^*(R_2)}}{\tau_{2^d J_{\max}^*(R_1)} } \geq \frac{\tau_{2^{2d} J_{\max}^*(R_1)}}{\tau_{2^d J_{\max}^*(R_1)} } \asymp e^{C(( 2^{2d} J_{\max}^*(R_1))^{\varsigma/d} - (2^d J_{\max}^*(R_1))^{\varsigma/d})} = e^{C 2^\varsigma (2^{\varsigma } - 1) (J_{\max}^*(R_1))^{\varsigma/d}} \to +\infty\,,
\]
which contradicts (\ref{eq:confsevere_1}). Therefore, $\bar J_{\max}^*(R_1) \in \{2^{-d}\bar J_{\max}^*(R_2),\bar J_{\max}^*(R_2)\}$ holds for all $n$ sufficiently large, from which it follows by Lemma \ref{lem:gridtest} that $\hat J_{\max} \in \{2^{-d}\bar J_{\max}^*(R_2),\bar J_{\max}^*(R_2)\}$ wpa1 $\mc H$-uniformly. Therefore, $\tilde J \leq 2^{-d} \bar J_{\max}^*(R_2)$ holds wpa1 $\mc H$-uniformly. But by part (ii) we also have that $\tilde J \geq c \bar J_{\max}^*(R_2)$ holds for a sufficiently small  $c > 0$ wpa1 uniformly  $h_0 \in \mc G^p$ and $p \in [\ul p,\ol p]$. Therefore,  $\tilde J \in \ul{\mc J}_n := \{J \in \T : c \bar J_{\max}^*(R_2) \leq J \leq 2^{-d} \bar J_{\max}(R_2)\}$ and $\hat{\mc J} \supseteq \ul{\mc J}_n$ both hold wpa1 uniformly for $h_0 \in \mc G^p$ and $p \in [\ul p,\ol p]$.

Let $\ul z_{1-\alpha}^*$ denote the $1-\alpha$ quantile of $\sup_{(x,J) \in \mathcal{X} \times \ul{\mc J}_n} \left| \mathbb{Z}_n^*(x,J) \right|$. As $\ul z_{1-\alpha}^* \leq z_{1-\alpha}^*$ must hold whenever $\hat{\mc J} \supseteq \ul{\mc J}_n$, we therefore have
\begin{align*}
 & \inf_{h_0 \in \mathcal{G}} \mathbb{P}_{h_0} \left(  h_0(x) \in C_n(x,A) \; \; \forall \; \; x  \in \mathcal{X} \right) \\
 & \geq \inf_{p \in [\ul p,\ol p]} \inf_{h_0 \in \mathcal{G}^p} \mathbb{P}_{h_0} \bigg( \sup_{(x,J) \in \mathcal{X} \times \ul{\mc J}_n} \sqrt{n} \frac{ | \hat{h}_{J}(x) - \tilde{h}_{J}(x) |}{\|  \hat{\sigma}_{x,J} \|_{sd}} \leq \ul z_{1 - \alpha}^* \bigg) + o(1) = (1-\alpha) + o(1) \,,
\end{align*}
where the last equality follows from Theorem \ref{consistent}(i) and the definition of $\ul z_{1-\alpha}^*$.
\end{proof}

\subsection{Supplemental Results: UCBs for Derivatives}

Here we present supplemental results for the proofs of Theorems~\ref{confmild-derivative} and~\ref{confsevere-derivative}. Throughout this subsection, for any fixed $R > 0$, let $\bar J_{\max} \equiv \bar J_{\max}(R)$. Also let $J_{\min} \to \infty$ as $n \to \infty$ with $J_{\min} \leq \bar J_{\max}$. Define  $\mc J_n = \{J \in \T : J_{\min} \leq J \leq \bar J_{\max}\}$. Also recall $\delta_n$ from (\ref{eq:delta_n}). We introduce the bootstrap process for the derivatives:
\[
 \mathbb{Z}_n^{a*}(x,J)\equiv \frac{D_J^{a*}(x)}{\hat \sigma_J^{a}(x)}= \frac{1}{\|\hat{\sigma}_{x,J}^a\|_{sd}} \left(\frac{1}{\sqrt n} \sum_{i=1}^n \hat{L}_{J,x}^a b^{K(J)}_{W_i} \hat{u}_{i,J} \varpi_i \right) ,
\]
where $\|\hat{\sigma}_{x,J}^a\|^2_{sd} \equiv n\hat{\sigma}_{J}^{a2}(x) = \hat{L}_{J,x}^a  \widehat{\Omega}_{J,J} (\hat{L}_{J,x}^a )'$ and $\hat{L}_{J,x}^a  =  (\partial^a \psi_x^J)' [\widehat S_J' \widehat G_{b,J}^{-1} \widehat S_J ]^{-1} \widehat S_J' \widehat G_{b,J}^{-1}$ with $\partial^a \psi^J_x$ denoting the derivative applied element-wise: $\partial^a \psi^J_x = (\partial^a \psi_{J1}(x),\ldots,\partial^a \psi_{JJ}(x))'$. Proofs of these supplemental results are presented in our earlier working paper version \cite{cck}, where they are labelled as Lemmas E.12, E.13, and E.14, respectively.

\begin{lemma}\label{lem:var-derivative}
Let Assumptions~\ref{a-data}-\ref{a-approx} hold.  Then: there is a universal constant $C_{\ref{lem:var-derivative}} > 0$ such that
\[
 \inf_{h_0 \in \mc H}  \mathbb{P}_{h_0} \bigg( \sup_{(x,J) \in \mathcal{X} \times \mathcal{J}_n}  \left|  \frac{\|\hat{\sigma}_{x,J}^a\|^2_{sd}}{\|\sigma_{x,J}^a \|^2_{sd}} - 1   \right|   \leq C_{\ref{lem-varest2}} \delta_n \bigg)  \rightarrow 1 \,.
\]
\end{lemma}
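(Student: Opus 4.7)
My plan is to mirror the proof of Lemma~\ref{lem-varest2} essentially line for line, substituting the sieve vector $\psi^J_x$ by $\partial^a \psi^J_x$ throughout and invoking Assumption~\ref{a-var}(iii) wherever the argument uses a lower bound on the sieve standard deviation. First I would introduce the derivative analogues
\begin{align*}
 \hat{\gamma}_{x,J}^a & = (\widehat{G}_{b,J}^o)^{-1} \widehat{S}_J^o [(\widehat{S}_J^o)' (\widehat{G}_{b,J}^o)^{-1} \widehat{S}_J^o]^{-1} G_{\psi,J}^{-1/2} (\partial^a \psi_x^J) \,, \\
 \gamma_{x,J}^a & = S_J^o [(S_J^o)' S_J^o]^{-1} G_{\psi,J}^{-1/2} (\partial^a \psi_x^J) \,,
\end{align*}
so that $\|\hat\sigma_{x,J}^a\|_{sd}^2 = (\hat\gamma_{x,J}^a)' \widehat\Omega_J^o \hat\gamma_{x,J}^a$ and $\|\sigma_{x,J}^a\|_{sd}^2 = (\gamma_{x,J}^a)' \Omega_J^o \gamma_{x,J}^a$. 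A direct computation using $S_J^o = G_{b,J}^{-1/2} S_J G_{\psi,J}^{-1/2}$ gives the identity $\|\gamma_{x,J}^a\|_{\ell^2} = \|\sigma_{x,J}^a\|$, and combined with Assumption~\ref{a-residuals} this yields $\|\gamma_{x,J}^a\|_{\ell^2} \asymp \|\sigma_{x,J}^a\|_{sd}$ uniformly in $(x,J)$. This is what allows the denominators in the proof of Lemma~\ref{lem-varest2} to cancel in exactly the same way here.

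Next I would verify that $\|\hat\gamma_{x,J}^a - \gamma_{x,J}^a\|_{\ell^2}/\|\sigma_{x,J}^a\|_{sd}$ is uniformly $O(\delta_n)$. The manipulation leading to (\ref{lem-varest2-1}) carries over verbatim: factoring $(G_{\psi,J}^{-1/2}\partial^a \psi_x^J)'(S_J^o)_l^- S_J^o$ out of the difference yields
\[
 \frac{\|\hat\gamma_{x,J}^a - \gamma_{x,J}^a\|_{\ell^2}}{\|\sigma_{x,J}^a\|_{sd}} \lesssim \bigl\| S_J^o \bigl\{((\widehat G_{b,J}^o)^{-1/2}\widehat S_J^o)_l^-(\widehat G_{b,J}^o)^{-1/2} - (S_J^o)_l^-\bigr\} \bigr\|_{\ell^2} ,
\]
which is $O(\delta_n)$ wpa1 $\mc H$-uniformly by Lemma~\ref{lem-GSconvunif}(ii) together with $\tau_{\bar J_{\max}}\sqrt{\bar J_{\max}(\log \bar J_{\max})/n}\to 0$. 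This immediately implies that $\|\hat\gamma_{x,J}^a\|_{\ell^2}/\|\sigma_{x,J}^a\|_{sd}$ stays within any fixed tolerance $c > 0$ of $1$ uniformly over $\mc X \times \mc J_n$ wpa1 $\mc H$-uniformly.

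Finally, using $|1-a|\leq|1-a^2|$ for $a\geq 0$, I would decompose
\begin{align*}
 (\hat\sigma_{x,J}^a)^2 - (\sigma_{x,J}^a)^2 & = (\hat\gamma_{x,J}^a - \gamma_{x,J}^a)'\Omega_J^o \hat\gamma_{x,J}^a + (\gamma_{x,J}^a)'\Omega_J^o(\hat\gamma_{x,J}^a - \gamma_{x,J}^a) \\
 & \quad + (\hat\gamma_{x,J}^a)'(\widehat\Omega_J^o - \Omega_J^o)\hat\gamma_{x,J}^a
\end{align*}
and control each of the three pieces, divided by $\|\sigma_{x,J}^a\|_{sd}^2$, exactly as in the proof of Lemma~\ref{lem-varest2}: the first two via Cauchy--Schwarz, the uniform operator-norm bound $\|\Omega_J^o\|_{\ell^2}\leq \sigma^2$ from Assumption~\ref{a-residuals}, and the first-step $O(\delta_n)$ estimate; the third via Lemma~\ref{lem-omega2} together with the uniform bound on $\|\hat\gamma_{x,J}^a\|_{\ell^2}/\|\sigma_{x,J}^a\|_{sd}$. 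Summing the three $O(\delta_n)$ contributions yields the claim. I do not anticipate a genuine obstacle distinct from Lemma~\ref{lem-varest2}: the only substantive role of Assumption~\ref{a-var}(iii) is to ensure that $\|\sigma_{x,J}^a\|_{sd}$ does not degenerate uniformly over $\mc X \times \mc J_n$, which is what keeps the ratio bounds above uniform in $J$.
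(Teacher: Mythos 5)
Your proposal is correct and is essentially the paper's own proof: the paper simply notes that the argument of Lemma~\ref{lem-varest2} goes through verbatim after replacing $\hat\gamma_{x,J}$ and $\gamma_{x,J}$ by exactly the derivative analogues $\hat\gamma_{x,J}^a$ and $\gamma_{x,J}^a$ you define, using the identity $\|\gamma_{x,J}^a\|_{\ell^2}=\|\sigma_{x,J}^a\|$ and Assumption~\ref{a-residuals} for the norm equivalence. The only cosmetic difference is that your appeal to Assumption~\ref{a-var}(iii) is not actually needed (nor assumed in the lemma's statement), since the self-normalization by $\|\sigma_{x,J}^a\|_{sd}$ makes the argument go through exactly as in Lemma~\ref{lem-varest2} without any lower bound of that form.
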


\begin{lemma} \label{zorder-derivative}
Let Assumptions~\ref{a-data}-\ref{a-var} hold. For a given $\alpha \in (0,1)$, let $z_{1- \alpha}^{a*}$ denote the $1-\alpha$ quantile of $\sup_{ (x,J) \in \mathcal{X} \times \hat{\mc J} } | \mathbb{Z}_n^{a*}(x,J) |$. Then: with $\bar J_{\max}(R)$ as defined in (\ref{eq:J_bar_max}) for any $R >0$, there exists a constant $C_{\ref{zorder-derivative}} > 0$ for which
\[
 \inf_{h_0 \in \mathcal{H}} \mathbb{P}_{h_0} \bigg( z_{1- \alpha}^{a*} \leq C_{\ref{zorder-derivative}} \sqrt{\log \bar{J}_{\max}(R)}  \bigg) \rightarrow 1\,.
\]
\end{lemma}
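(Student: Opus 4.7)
The plan is to mirror the proof of Lemma~\ref{zorder}, substituting the derivative-based quantities throughout, and to verify that Assumption~\ref{a-var}(iii) (continued) provides the right lower bound on $\|\sigma^a_{x,J}\|$ needed to normalize the increments of the sieve process for $\partial^a h_0$.

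First I would reduce to a deterministic index set. By Lemma~\ref{lem:gridtest} together with Remark~\ref{rmk:J_max}, there are constants $R_1,R_2 > 0$ so that $\hat{\mc J}_- \subseteq \ol{\mc J}_n := \{J \in \T : 0.1(\log \bar J_{\max}(R_1))^2 \leq J \leq \bar J_{\max}(R_2)\}$ wpa1 $\mc H$-uniformly, and hence $z_{1-\alpha}^{a*} \leq \ol z_{1-\alpha}^{a*}$, the $(1-\alpha)$ quantile of $\sup_{(x,J) \in \mc X \times \ol{\mc J}_n}|\mathbb Z_n^{a*}(x,J)|$, wpa1 $\mc H$-uniformly.

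Second, conditional on the data, $(x,J) \mapsto \mathbb Z_n^{a*}(x,J)$ is sub-Gaussian (in fact Gaussian) with $\E^*[\mathbb Z_n^{a*}(x,J)^2] = 1 + o_{\mb P}(1)$ uniformly in $(x,J) \in \mc X \times \ol{\mc J}_n$ by Lemma~\ref{lem:var-derivative}. Equip $\mc X \times \ol{\mc J}_n$ with the pseudometric $\varphi_n^a((x_1,J_1),(x_2,J_2)) = \|\mathbb Z_n^{a*}(x_1,J_1) - \mathbb Z_n^{a*}(x_2,J_2)\|_{L^2(\mathbb P^*)}$. I need to show a polynomial covering bound
\begin{equation*}
 N(\mc X \times \ol{\mc J}_n,\varphi_n^a,\epsilon) \leq D\left(\frac{\bar J_{\max}(R_2)}{\epsilon}\right)^v
\end{equation*}
for some $D,v > 0$, wpa1 $\mc H$-uniformly. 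Summing covering numbers across $J \in \ol{\mc J}_n$ (whose cardinality is $\lesssim \log \bar J_{\max}$), this reduces to bounding, for each fixed $J$, the increments in $x$. Adapting the steps in Lemma~\ref{vc2}, one writes
\begin{equation*}
 \varphi_n^a((x_1,J),(x_2,J)) \;\lesssim\; \frac{\|G_{\psi,J}^{-1/2}(\partial^a\psi^J_{x_1} - \partial^a\psi^J_{x_2})\|_{\ell^2}\,\tau_J}{\|\sigma^a_{x_1,J}\|_{sd} \vee \|\sigma^a_{x_2,J}\|_{sd}}\,,
\end{equation*}
then uses (i) Hölder continuity of $\partial^a\psi^J$, which gives $\|G_{\psi,J}^{-1/2}(\partial^a\psi^J_{x_1} - \partial^a\psi^J_{x_2})\|_{\ell^2} \lesssim J^{\omega + |a|/d}\|x_1-x_2\|_{\ell^2}^{\omega'}$ via a Bernstein inequality for the sieve (Appendix~\ref{ax:basis}), and (ii) the lower bound $\|\sigma^a_{x,J}\|_{sd} \gtrsim \tau_J J^{1/2 + |a|/d}$ from Assumption~\ref{a-var}(iii) (continued) combined with Assumption~\ref{a-residuals}. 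The factors of $\tau_J$ and $J^{|a|/d}$ cancel and one obtains $\varphi_n^a((x_1,J),(x_2,J)) \lesssim J^{\omega - 1/2}\|x_1-x_2\|_{\ell^2}^{\omega'}$, which yields the claimed polynomial covering number for each fixed $J$ using compactness of $\mc X$.

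Third, with these covering numbers in hand I would apply Theorem~2.3.6 of \cite{gine2016mathematical} to get $\E^*[\sup_{(x,J)\in\mc X\times\ol{\mc J}_n}|\mathbb Z_n^{a*}(x,J)|] \lesssim \sqrt{\log \bar J_{\max}(R_2)}$, and then Theorem~2.5.8 (Borell--TIS) to deduce that the supremum exceeds a constant multiple of $\sqrt{\log \bar J_{\max}(R_2)}$ with probability no greater than $\bar J_{\max}(R_2)^{-1}$ under $\mathbb P^*$, wpa1 $\mc H$-uniformly. Taking this constant large enough and invoking Remark~\ref{rmk:J_max} to replace $\bar J_{\max}(R_2)$ by $\bar J_{\max}(R)$ up to constants yields the desired bound $\ol z_{1-\alpha}^{a*} \leq C_{\ref{zorder-derivative}}\sqrt{\log \bar J_{\max}(R)}$ wpa1 $\mc H$-uniformly, and hence the same bound for $z_{1-\alpha}^{a*}$.

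The main obstacle is the increment bound for $\varphi_n^a$: one must be careful to use the \emph{lower} bound on $\|\sigma^a_{x,J}\|_{sd}$ from Assumption~\ref{a-var}(iii) (continued) to cancel the extra $J^{|a|/d}$ picked up from the derivative in the Bernstein inequality, so that the resulting covering number remains polynomial in $\bar J_{\max}$ with an exponent independent of $|a|$. Everything else is a direct adaptation of Lemma~\ref{zorder}.
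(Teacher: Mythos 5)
Your proposal is correct and follows essentially the same route as the paper: the paper's proof simply invokes the argument of Lemma~\ref{zorder} (reduction to a deterministic index set via Lemma~\ref{lem:gridtest}, a polynomial covering-number bound, then Theorems~2.3.6 and~2.5.8 of \cite{gine2016mathematical}), noting only that $\partial^a \psi^J_x$ is H\"older continuous when $|a| < \ul p$. Your additional observation that the $J^{|a|/d}$ factor from the derivative basis is cancelled by the lower bound in Assumption~\ref{a-var}(iii) is exactly the point that makes the adaptation go through.
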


\begin{lemma} \label{consistent-derivative}
Let Assumptions~\ref{a-data}-\ref{a-var} hold and let $J_{\min} \asymp (\log \bar J_{\max})^2$. Then: there exists a sequence $\gamma_n \downarrow 0$ for which
\[
 \sup_{s \in \R} \left| \mathbb{P}_{h_0} \bigg( \sup_{(x,J) \in \mathcal{X} \times \mathcal{J}_n } \left | \sqrt{n} \frac{\partial^a \hat{h}_J(x) - \partial^a  \tilde{h}_J(x)}{\| \hat \sigma_{x,J}^a \|_{sd}} \right| \leq s \right)    - \mathbb{P}^* \bigg( \sup_{(x,J) \in \mathcal{X} \times \mathcal{J}_n} \left| \mathbb{Z}_n^{a*}(x,J) \right|  \leq s  \bigg)  \bigg |  \leq \gamma_n
\]
holds wpa1 $\mc H$-uniformly.
\end{lemma}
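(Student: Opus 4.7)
The plan is to mirror the proof of Theorem \ref{consistent}(i), replacing the function-value process by its $\partial^a$-counterpart at every step. Define the empirical and bootstrap processes
\[
 \mathbb{Z}_n^a(x,J) = \frac{1}{\|\sigma_{x,J}^a\|_{sd}}\, \frac{1}{\sqrt n} \sum_{i=1}^n L_{J,x}^a b^{K(J)}_{W_i} u_i, \quad \widehat{\mathbb{Z}}_n^a(x,J) = \frac{1}{\|\sigma_{x,J}^a\|_{sd}}\, \frac{1}{\sqrt n} \sum_{i=1}^n L_{J,x}^a b^{K(J)}_{W_i} u_i \varpi_i,
\]
where $L_{J,x}^a = (\partial^a \psi^J_x)'[S_J' G_{b,J}^{-1} S_J]^{-1} S_J' G_{b,J}^{-1}$, together with the bootstrap process $\mathbb{Z}_n^{a*}$ already defined in the text. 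I will prove the result in two steps.

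In Step 1, I will establish the uniform strong approximation
\[
 \sup_{(x,J) \in \mathcal{X} \times \mathcal{J}_n} \left| \sqrt n\, \frac{\partial^a \hat h_J(x) - \partial^a \tilde h_J(x)}{\|\hat\sigma_{x,J}^a\|_{sd}} - \mathbb{Z}_n^a(x,J) \right| \lesssim \delta_n \sqrt{\log \bar J_{\max}}
\]
wpa1 $\mathcal H$-uniformly. This splits exactly as in the proof of Theorem \ref{consistent}(i) into a term $T_{1,x,J}^a$ coming from estimation of the sieve matrices and a term $T_{2,x,J}^a$ coming from estimation of the sieve variance. Because $\|[G_{\psi,J}^{-1/2}\partial^a\psi_x^J]' (S_J^o)_l^-\|_{\ell^2} \asymp \|\sigma_{x,J}^a\|_{sd}$ by Assumption \ref{a-var}(iii) (the analogue of the role Assumption \ref{a-var}(i) played before), the argument bounding $T_{1,x,J}$ via Lemmas \ref{lem-BuL2unif} and \ref{lem-GSconvunif} carries through verbatim; $T_{2,x,J}^a$ is controlled using Lemma \ref{lem:var-derivative} in place of Lemma \ref{lem-varest2}, again combined with the maximal inequality for the Gaussian process with the derivative covariance kernel.

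In Step 2, I will apply the Gaussian approximation machinery of Lemma \ref{varlimit} to the function class
\[
 \mathcal F_n^a = \big\{ g : g(w,u) = \pm \|\sigma_{x,J}^a\|_{sd}^{-1} L_{J,x}^a b^{K(J)}_w u,\ (x,J) \in \mathcal X \times \mathcal J_n \big\} \cup \{0\}.
\]
The VC-type covering bounds of Lemmas \ref{vc} and \ref{vc2} transfer with envelope $F^a(w,u) = C \bar J_{\max}^{1/2 + |a|/d}|u|$, using Assumption \ref{a-var}(iii) in place of (i), the H\"older continuity of $\partial^a \psi^J$ for $|a| < \ul p$ (which holds for our choice of B-spline or CDV wavelet bases), and the Bernstein-type inverse estimate $\|\partial^a f\|_\infty \lesssim J^{|a|/d}\|f\|_\infty$ from Appendix \ref{ax:basis}. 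This yields an approximating Gaussian variable $\bar B_n^a$ with $\|\bar B_n^a\|_{L^\infty(\mathbb P_{h_0})} \lesssim \sqrt{\log \bar J_{\max}}$ and the same quantitative coupling rates $\Upsilon_n$ and $\beta_n, \gamma_n$ as in Lemma \ref{varlimit}(i)--(ii). Combining with Step 1, the Kolmogorov-distance bound of Lemma 2.3 of \cite{chernozhukov2014gaussian}, and the anti-concentration bound \cite[Corollary 2.1]{chernozhukov2014anti} gives
\[
 \sup_{s \in \mathbb R} \left| \mathbb P_{h_0}\!\left(\sup_{(x,J)}\left|\sqrt n \tfrac{\partial^a\hat h_J - \partial^a\tilde h_J}{\|\hat\sigma_{x,J}^a\|_{sd}}\right| \leq s\right) - \mathbb P^*\!\left(\sup_{(x,J)}|\widehat{\mathbb Z}_n^a(x,J)| \leq s\right) \right| \lesssim \gamma_n,
\]
and a final application of the derivative-analogue of Lemma \ref{gausscons}(i) replaces $\widehat{\mathbb Z}_n^a$ by $\mathbb Z_n^{a*}$ at an additional cost of $\delta_n \log \bar J_{\max}$.

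The main obstacle is verifying that the VC-type covering bounds in the derivative class $\mathcal F_n^a$ and the Gaussian anti-concentration bound remain quantitatively valid with the inflated envelope $F^a \asymp \bar J_{\max}^{1/2+|a|/d}|u|$; in particular, one must check that $\delta_n$ absorbs the extra factors of $\bar J_{\max}^{|a|/d}$ in the maximal-inequality and Talagrand steps. This is where the requirement $J_{\min} \asymp (\log \bar J_{\max})^2$ and $2\ul p > d$ is used together with $|a| < \ul p$, which ensures $\delta_n \sqrt{\log \bar J_{\max}} \to 0$ even after inflation of the envelope, exactly as in the undersmoothing-free analysis of Theorem \ref{consistent}(i).
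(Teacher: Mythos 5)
Your overall architecture is exactly the paper's route: mirror the proof of Theorem~\ref{consistent}(i) with the self-normalized derivative class $\mathcal F_n^a$, handle the variance-estimation term via Lemma~\ref{lem:var-derivative}, transfer the VC/covering bounds of Lemmas~\ref{vc} and~\ref{vc2} using H\"older continuity of $\partial^a\psi^J_x$, build the analogue of Lemma~\ref{varlimit}, and finish with the comparison and anti-concentration inequalities plus the derivative analogue of Lemma~\ref{gausscons}. The genuine gap is your envelope in Step~2. The members of $\mathcal F_n^a$ are normalized by $\|\sigma^a_{x,J}\|_{sd}$, and Assumption~\ref{a-var}(iii) gives the \emph{lower} bound $\|\sigma^a_{x,J}\| \gtrsim \tau_J J^{1/2+|a|/d}$, which exactly offsets the growth $\sup_{x}\|G_{\psi,J}^{-1/2}\partial^a\psi^J_x\|_{\ell^2} \lesssim J^{1/2+|a|/d}$ in the numerator; combined with $\|(S_J^o)_l^-\|_{\ell^2} \lesssim \tau_J$, the bounded eigenvalues of $\Omega_J^o$, and $\sup_w\|G_{b,J}^{-1/2}b^{K(J)}_w\|_{\ell^2} \lesssim \sqrt{K(J)}$, one gets $|g(w,u)| \lesssim \sqrt{\bar J_{\max}}\,|u|$ for every $g \in \mathcal F_n^a$. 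So the correct envelope is the \emph{same} $F(w,u)=C\sqrt{\bar J_{\max}}|u|$ as for $\mathcal F_n$, which is precisely why the coupling rates $\Upsilon_n$, $\beta_n$, $\gamma_n$ of Lemma~\ref{varlimit} carry over unchanged; your $F^a(w,u)=C\bar J_{\max}^{1/2+|a|/d}|u|$ throws away the cancellation coming from the normalization.

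This matters because your proposed repair does not work. The Gaussian-coupling error in the analogue of Lemma~\ref{varlimit} is driven by the envelope (through the sup/third-moment bound $b_n$ in Corollary 2.2 of \cite{chernozhukov2014gaussian} and through the truncation and Talagrand steps in part (ii)), so with the inflated envelope it would pick up polynomial factors of $\bar J_{\max}^{|a|/d}$, e.g. terms of order $\bar J_{\max}^{1+2|a|/d}(\log n)^c/n$. Since $\bar J_{\max}$ can grow like $\sqrt n$ up to logarithms in the mildly ill-posed case, these terms need not vanish (take $d=1$, $|a|=1$). The conditions you invoke---$J_{\min}\asymp(\log\bar J_{\max})^2$, $2\ul p>d$, $|a|<\ul p$---only guarantee $\delta_n\sqrt{\log\bar J_{\max}}\to 0$, i.e., they control the bias and variance-estimation contribution, not the envelope-driven coupling error, so the claim that ``$\delta_n$ absorbs the extra factors of $\bar J_{\max}^{|a|/d}$'' is not a checkable detail but is false as stated. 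Once you use the correct envelope $C\sqrt{\bar J_{\max}}|u|$ (which is exactly where Assumption~\ref{a-var}(iii) enters), the rest of your outline goes through as in the paper.
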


\subsection{Proofs of Theorems~\ref{confmild-derivative} and~\ref{confsevere-derivative} on UCBs for Derivatives}

\begin{proof}[Proof of Theorem~\ref{confmild-derivative}]
The proof follows similar arguments to the proof of Theorem~\ref{confmild}. Here we state the necessary modifications.

\underline{Part (i), step 1:} Identical to part (i), step 1 of the proof of Theorem~\ref{confmild}.

\underline{Part (i), step 2:} Note that by Theorem~\ref{unifbiasvar} and a similar argument to the proof of Corollary 3.1 of \cite{chen2018optimal}, we have
\[
 \inf_{h_0 \in \mathcal{H}} \mathbb{P}_{h_0} \bigg( \| \partial^a \tilde{h}_J - \partial^a h_0  \|_{\infty} \leq C_6 J^{(|a|-p)/d} \; \; \; \;  \forall \; J \in [1, \bar{J}_{\max}] \cap \T \bigg) \rightarrow 1
\]
for some constant $C_6 > 0$. Moreover, by Lemma~\ref{lem:var-derivative} and Assumption~\ref{a-var}(iii) there is a constant $C_7 > 0$ for which
\[
 \inf_{h_0 \in \mathcal{H}}  \mathbb{P}_{h_0}  \bigg(  \sup_{(x,J) \in \mc X \times \hat{\mc J}} \frac{ \tau_{J} J^{1/2 + |a|/d} }{\| \hat{\sigma}_{x,J}^{a} \|_{sd}} \leq C_7     \bigg) \rightarrow 1 \,.
\]
It now follows by (\ref{eq:confmild_1}) that
\[
 \frac{ | \partial^a \tilde{h}_{\hat{J}}(x) - \partial^a h_0(x)|}{ \| \hat{\sigma}_{x,\hat{J}}^a \|_{sd}}
  \leq C_6 C_7 \frac{\hat J^{-p/d}}{ \tau_{\hat J} \sqrt{\hat J}}
  \leq C_6 C_7 \beta^{- \ol p/d} 2^{\ol p} \frac{(2^d J_0(p,E))^{-p/d}}{ \tau_{\lceil \beta J_0(p,E) \rceil} \sqrt{\beta J_0(p,E)} } \,,
\]
wpa1 uniformly for $h_0 \in \mc G^p$ and $p \in [\ul p,\ol p]$ and $x \in \mathcal{X}$. The remainder of the proof of this part now follows by identical arguments to part (i), step 2 of the proof of Theorem~\ref{confmild}, using Lemma~\ref{consistent-derivative} in place of Theorem~\ref{consistent}(i).

\underline{Part (ii):} By Lemma~\ref{lepquant}, Lemmas~\ref{lem:var-derivative} and~\ref{zorder-derivative} and Assumption~\ref{a-var}(iii), we have
\[
 \sup_{x \in \mathcal{X}} |C_n^a(x,A)	| \lesssim  (1+A)\tau_{\hat J} \hat J^{1/2 + |a|/d} \sqrt{(\log \bar{J}_{\max}) / n}
\]
wpa1 $\mc H$-uniformly. Then by display (\ref{eq:confmild_1}), with $J_0 = J_0(p,D)$  we have that
\begin{align*}
 \sup_{x \in \mathcal{X}} |C_n^a(x,A)|
 & \lesssim  (1+A)\tau_{J_0^+} (J_0^+)^{1/2 + |a|/d} \sqrt{(\log \bar J_{\max}) / n} \\
 & \lesssim  (1+A)\tau_{J_0} J_0^{|a|/d} \sqrt{(J_0 \log \bar{J}_{\max}) / n}
 \lesssim (1+A)\frac{\sqrt{\log \bar{J}_{\max}}}{\theta^*_{1-\hat \alpha}} J_0^{(|a|-p)/d}
\end{align*}
holds wpa1 uniformly for $h_0 \in \mc G^p$ and $p \in [\ul p,\ol p]$, where the second inequality follows from the fact that the model is mildly ill-posed and the third is by definition (\ref{eq:J0_mild}). The result now follows by similar arguments to part (ii) of the proof of Theorem~\ref{confmild}.
\end{proof}

\begin{proof}[Proof of Theorem \ref{confsevere-derivative}]
The proof follows similar arguments to the proof of Theorem~\ref{confsevere}. Here we state the necessary modifications.

\underline{Part (i):} By Lemma \ref{lem:gridtest}, Theorem~\ref{unifbiasvar}, and similar arguments to the proof of Corollary 3.1 of \cite{chen2018optimal}, there exists a constant $A_0 > 0 $ for which
\[
 |\partial^a \hat{h}_{\tilde{J}}(x) - \partial^a h_0(x)| \leq | \partial^a \hat{h}_{\tilde{J}}(x) - \partial^a \tilde{h}_{\tilde{J}}(x) | + A_0 \tilde{J}^{(|a|-\ul p)/d}
\]
holds for all $x \in \mc X$ wpa1 $\mc H$-uniformly. Then for any $A \geq A_0$, we have
\[
 \inf_{h_0 \in \mathcal{G} } \mathbb{P}_{h_0} \big( \partial^a h_0(x) \in C_n^a(x,A) \; \; \forall \; x \in \mathcal{X} \big)
 \geq \inf_{h_0 \in \mathcal{G}} \mathbb{P}_{h_0} \bigg( \sup_{x \in \mathcal{X}} \left|\sqrt{n} \frac{ \partial^a \hat{h}_{\tilde{J}}(x) - \partial^a \tilde{h}_{\tilde{J}}(x)}{\| \hat \sigma_{x,\tilde{J}}^a \|_{sd}} \right| \leq z_{1- \alpha}^{a*} \bigg) + o(1) \,.
\]
The remainder of the proof now follows similarly to the proof of Theorem \ref{confsevere}, using Lemma \ref{consistent-derivative} in place of Theorem \ref{consistent}(i).

\underline{Part (ii):} By Lemmas~\ref{lem:J0_severe}, \ref{lepquant}, \ref{lem:var-derivative}, and~\ref{zorder-derivative} and Assumption \ref{a-var}(iii), there exist constants $C,C' > 0$ for which
\begin{align*}
 \sup_{x \in \mathcal{X}} |C_n^a(x,A)|
 & \leq  C (1 + A) \tau_{\tilde{J}} \tilde{J}^{1/2 + |a|/d}  \sqrt{\log ( \bar{J}_{\max}^*(R_2) ) / n} + A \tilde{J}^{(|a|-\ul p)/d} \\
 & \leq C' (1 + A)(J_{\max}^*(R_2))^{(|a|-p)/d} + A \tilde J^{(|a|-\ul p)/d}
\end{align*}
holds wpa1 uniformly for $h_0 \in \mc H^p$ and $p \in [\ul p,\ol p]$. The remainder of the proof now follows similarly to the proof of Theorem \ref{confsevere}.
\end{proof}

\let\oldbibliography\thebibliography
\renewcommand{\thebibliography}[1]{\oldbibliography{#1}
\setlength{\itemsep}{2pt}}

{\singlespacing
\putbib
}

\end{bibunit}

\end{document}